\documentclass[thm-restate,a4paper,backref]{lipics-v2021}
\pdfoutput=1
\usepackage{style}
\usepackage{shortcuts}
\usepackage[shortcuts]{extdash}
\usepackage[capitalize]{cleveref}
\crefname{claim}{Claim}{Claims}
\Crefname{claim}{Claim}{Claims}
\crefname{observation}{Observation}{Observations}
\Crefname{observation}{Observation}{Observations}
\renewcommand\epsilon{\varepsilon}

\nolinenumbers
\hideLIPIcs

\title{Negative-Weight Single-Source Shortest Paths in Near-Linear Time: Now Faster!}
\titlerunning{Negative-Weight Single-Source Shortest Paths in Near-Linear Time: Now Faster!}
\author{Karl Bringmann}
    {Saarland University and Max Planck Institute for Informatics,\\Saarland Informatics Campus, Saarbrücken, Germany}{}{}{}
\author{Alejandro Cassis}
    {Saarland University and Max Planck Institute for Informatics,\\Saarland Informatics Campus, Saarbrücken, Germany}{}{}{}
\author{Nick Fischer}
    {Weizmann Institute of Science, Rehovot, Israel}{}{}{Parts of this work were done while the author was at Saarland University and Max Planck Institute for Informatics. This work is part of the project CONJEXITY that has received funding from the European Research Council (ERC) under the European Union's Horizon Europe research and innovation programme (grant agreement No.~101078482).}
\authorrunning{K. Bringmann, A. Cassis, N. Fischer}
\Copyright{Karl Bringmann, Alejandro Cassis, Nick Fischer}
\ccsdesc[500]{Theory of computation~Graph algorithms~Shortest paths}
\keywords{Shortest paths, Low Diameter Decomposition, Drift analysis}
\funding{This work is part of the project TIPEA that has received funding from the European Research Council (ERC) under the European Unions Horizon 2020 research and innovation programme (grant agreement No.~850979).}
\acknowledgements{We thank Danupon Nanongkai for many helpful discussions on the topic.}

\begin{document}
\maketitle

\begin{abstract}
In this work we revisit the fundamental Single-Source Shortest Paths (SSSP) problem with possibly negative edge weights. A recent breakthrough result by Bernstein, Nanongkai and Wulff-Nilsen established a near-linear $O(m \log^8(n) \log(W))$-time algorithm for negative-weight SSSP, where $W$ is an upper bound on the magnitude of the smallest negative-weight edge. In this work we improve the running time to $O(m \log^2(n) \log(nW) \log\log n)$, which is an improvement by nearly six log-factors. Some of these log-factors are easy to shave (e.g.\ replacing the priority queue used in Dijkstra's algorithm), while others are significantly more involved (e.g.\ to find negative cycles we design an algorithm reminiscent of noisy binary search and analyze it with drift analysis).

As side results, we obtain an algorithm to compute the minimum cycle mean in the same running time as well as a new construction for computing Low-Diameter Decompositions in directed graphs.
\end{abstract}

\section{Introduction} \label{sec:introduction}
For centuries, the quest to compute shortest paths has captivated the minds of mathematicians and computer scientists alike. In this work, we revisit the Single-Source Shortest Paths (SSSP) problem with possibly negative edge weights: Given a directed weighted graph $G$ and a designated source vertex $s$, compute the distances from $s$ to all other vertices in~$G$. This is possibly the most fundamental weighted graph problem with wide-ranging applications in computer science, including routing, data networks, artificial intelligence, planning, and operations research.

While it is well known for almost 60 years that SSSP with nonnegative edge weights can be solved in near-linear time by Dijkstra's algorithm~\cite{Dijkstra59,Williams64,FredmanT84}, the case with negative weights has a more intriguing history: The Bellman-Ford algorithm was developed in the~50's~\cite{Shimbel55,Ford56,Bellman58,Moore59} and runs in time $\Order(mn)$, and this time bound remained the state of the art for a long time. Starting in the 80's, the \emph{scaling technique} was developed and lead to time $\Order(m \sqrt n \log W)$~\cite{Gabow83,GabowT89,Goldberg95}; here and throughout,~$W$~is the magnitude of the smallest negative edge weight in the graph.\footnote{Strictly speaking, $W \geq 0$ is the smallest number such that all edge weights satisfy $w(e) \geq -W$. By slight abuse of notation we typically write $\Order(\log W)$ to express $\Order(\max\set{1, \log W})$.} Other papers focused on specialized graph classes, leading e.g.\ to near-linear time algorithms for planar directed graphs~\cite{LiptonRT79,KleinRRS94,FakcharoenpholR01,KleinMW09}, and improved algorithms for dense graphs with small weights~\cite{Sankowski05}.

An alternative approach is to model SSSP as a minimum-cost flow problem.\footnote{To model SSSP as a minimum-cost flow problem, interpret each edge $e$ with weight $w(e)$ as an edge with infinite capacity and cost $w(e)$. Moreover, add an artificial sink vertex $t$ to the graph, and add unit-capacity cost-$0$ edges from all vertices $v$ to $t$. Then any minimum-cost flow routing $n$ units from $s$ to $t$ corresponds exactly to a shortest path tree in the original graph (assuming that it does not contain a negative-weight cycle).} In the last decade, a combination of convex optimization techniques and dynamic algorithms have resulted in a series of advancements in minimum-cost flow computations~\cite{CohenMSV17,AxiotisMV20,BrandLNPSS0W20,BrandLLSS0W21} and thus also for negative-weight SSSP, with running times $\widetilde\Order(m^{10/7})$~\cite{CohenMSV17}, $\widetilde\Order(m^{4/3})$~\cite{AxiotisMV20} and~\makebox{$\widetilde\Order(m + n^{3/2})$}~\cite{BrandLNPSS0W20}. This line of research recently culminated in an almost-linear $m^{1+\order(1)}$-time algorithm by Chen, Kyng, Liu, Peng, Probst Gutenberg and Sachdeva~\cite{ChenKLPGS22}.

Finally, at the same time as the breakthrough in computing minimum-cost flows, Bernstein, Nanongkai and Wulff-Nilsen~\cite{BernsteinNW22} found an astonishing \emph{near-linear} $\Order(m \log^8(n) \log(W))$-time algorithm for negative-weight SSSP. We will refer to their algorithm as the \emph{BNW algorithm}. The BNW algorithm is combinatorial and arguably simple, and thus a satisfying answer to the coarse-grained complexity of the negative-weight SSSP problem. However, the story does not end here. In this work, we press further and investigate the following question which was left open by Bernstein et al.~\cite{BernsteinNW22}:

\bigskip\noindent
\parbox{\textwidth}{\centering\emph{Can we further improve the number of log-factors \\ in the running time of negative-weight SSSP?}}

\bigskip
For comparison, the \emph{nonnegative}-weights SSSP problem underwent a long series of lower-order improvements in the last century~\cite{Dijkstra59,FredmanT84,FredmanW93,FredmanW94,Thorup00,Raman96,Raman97,Boas77,BoasKZ77,MehlhornN90,AhujaMOT90,CherkasskyGS99,Thorup03}, including improvements by log-factors or even just loglog-factors.\footnote{In these papers, the Dijkstra running time $\Order(m + n \log n)$ was improved to the current state of the art~$\Order(m + n \log\log \min\{n, C\})$~\cite{Thorup03}, where~$C$ is the largest weight in the graph.} In the same spirit, we initiate the fine-grained study of lower-order factors for negative-weight shortest paths.

\subsection{Our Results}
In our main result we make significant progress on our driving question, and improve the BNW algorithm by nearly six log-factors:

\begin{restatable}[Negative-Weight SSSP]{theorem}{thmssspornegativecycle} \label{thm:sssp-or-negative-cycle}
There is a Las Vegas algorithm which, given a directed graph $G$ and a source node $s$, either computes a shortest path tree from $s$ or finds a negative cycle in $G$, running in time $\Order((m + n \log \log n) \log^2 n \log(nW))$ with high probability (and in expectation).
\end{restatable}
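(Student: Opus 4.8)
The plan is to follow the recursive scaling framework of Bernstein, Nanongkai and Wulff-Nilsen~\cite{BernsteinNW22}, but to replace each of its log-factor-wasteful ingredients. At the top level, after multiplying all weights by $2n$ and rounding (so that the later rounding inside the recursion costs at most $1$ on any path; this is also the source of the extra $n$ inside $\log(nW)$) and adding a dummy super-source with weight-$0$ edges to every vertex, it suffices to maintain a price function $\phi$ such that the graph $G_\phi$ reweighted by $w_\phi(u,v)=w(u,v)+\phi(u)-\phi(v)$ has most negative edge weight $-2B$, and to repeatedly halve $B$ until $G_\phi$ has only nonnegative weights, at which point a single run of Dijkstra in $G_\phi$ yields the shortest-path tree of $G$. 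Each halving is one call to a subroutine $\textsf{ScaleDown}$. If the bookkeeping ever becomes inconsistent --- say a tentative distance drops below the trivial lower bound valid in negative-cycle-free graphs, or $\textsf{ScaleDown}$ violates its postcondition --- we conclude that $G$ has a negative cycle and switch to the cycle-finding mode below.

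The subroutine $\textsf{ScaleDown}$ is itself recursive over a ``shallowness'' parameter $\Delta$ that halves at each of its $\Order(\log n)$ levels. At one level we: (i)~form the nonnegatively-reweighted graph that rounds every negative edge up to weight $0$; (ii)~run a Low-Diameter Decomposition, deleting a random edge set $E_{\mathrm{rem}}$ so that every strongly connected component of the remainder has small weak diameter while each edge $e$ is deleted with probability $\Order(w_{\geq 0}(e)\cdot\mathrm{polylog}(n)/B)$; (iii)~recurse with parameter $\Delta/2$ on each component to obtain a price function taming its internal negative edges; and (iv)~contract the components, topologically sort the resulting DAG, and clear the few surviving negative edges of $E_{\mathrm{rem}}$ by a Bellman--Ford/Dijkstra pass over the layered graph. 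Composing the price functions across all levels produces the one that halves $B$. The expected total size of $E_{\mathrm{rem}}$ over the whole recursion must stay near-linear, which is exactly what forces the diameter parameter to be tied to $B$ and the LDD quality to be good.

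Three changes account for the roughly six log-factors saved over~\cite{BernsteinNW22}. First, every Dijkstra call is run with a fast integer priority queue in the word-RAM model, costing $\Order(m+n\log\log n)$ per call; this produces the $n\log\log n$ term and is otherwise routine. Second, we give a streamlined Low-Diameter Decomposition for directed graphs running in $\Order((m+n\log\log n)\log n)$ time --- essentially $\Order(\log n)$ carving rounds, each an SSSP computation --- and losing only a single logarithm in the deletion probability; since $\textsf{ScaleDown}$ has $\Order(\log n)$ levels and the outer loop runs $\Order(\log(nW))$ times, this already yields the claimed $\Order((m+n\log\log n)\log^2 n\log(nW))$ bound in the negative-cycle-free case. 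Third, and most importantly, we rework the negative-cycle handling: instead of guessing the magnitude of the most negative cycle and doubling it --- which would cost a further $\log$-factor --- we phrase the search for the correct scale as a \emph{noisy} binary search, where each probe is a time-truncated run of the randomized machinery above and hence returns a comparison that is correct only with constant probability, and we analyze the induced random process on the search state by \emph{drift analysis}, showing that it reaches the target scale within a poly-logarithmic number of probes with high probability (and in expectation), so that the cycle-finding cost is dominated by that of the main algorithm. Carried to its limit, the same search also pins down the exact minimum cycle mean; and once the scale is known, one extra SSSP run exposes an actual negative cycle --- as a vertex whose distance estimate still strictly decreases --- which we return.

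The technical heart, and the main obstacle, is the noisy-search/drift-analysis step. We must design the probe so that its error is a one-sided timeout whose probability is controlled independently across iterations, exhibit a potential on the search state whose expected decrease per probe is bounded away from~$0$, and then convert this into both a high-probability and an expected running-time guarantee, all while the probes are Las Vegas subroutines whose truncation probabilities must be folded in without inflating the log-count. A secondary but genuine difficulty is the accounting: each of the $\Order(\log n)$ levels of $\textsf{ScaleDown}$ may cost at most $\Order(m+n\log\log n)$ times a single $\log n$, so the LDD, the DAG pass, the composition of price functions, and the clean-up of $E_{\mathrm{rem}}$ must each avoid a spurious extra logarithm, and the $\Order(\log(nW))$ scaling rounds must build incrementally on the running price function rather than restart from scratch.
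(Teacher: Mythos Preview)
Your outline captures the right overall architecture and correctly names the three log-savers (Thorup's priority queue, a sharper decomposition with $\Order(\log n)$ overhead instead of $\Order(\log^2 n)$, and the drift-analysis replacement for naive boosting in the negative-cycle search), so at the level of ``where do the logs go'' your accounting matches the paper.

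The main divergence is in step~(ii). You phrase the inner decomposition as a \emph{Low-Diameter Decomposition} with per-edge deletion probability $\Order(w_{\geq 0}(e)\log n / B)$ and a weak-diameter guarantee on the SCCs. The paper does \emph{not} obtain such an LDD; in fact its best general directed LDD (\cref{thm:strong-ldd}) has overhead $\Order(\log^3 n)$, and the BNW weak LDD it inherits has $\Order(\log^2 n)$. What the paper does instead is abandon the diameter guarantee altogether and prove a bespoke decomposition (\cref{lem:decomposition}) valid only for \emph{restricted} graphs: the Progress condition is not ``each SCC has small diameter'' but the dichotomy ``each SCC is either a constant factor smaller in vertex count \emph{or} has $\kappa$ halved'', and the Sparse Hitting bound is stated for shortest paths ($\Ex|P\cap S|=\Order(\log n)$) rather than per edge. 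The single-log hitting bound crucially exploits that in a restricted graph any shortest $s$--$v$ path has $w_{G_{\ge 0}}(P)\le \kappa$, which is where the minimum-cycle-mean-$\ge 1$ hypothesis enters. So your step~(ii) as written is not known to be achievable; the paper's workaround is to weaken what the decomposition promises and to tailor the recursion (on the product $|V|\cdot\kappa$ rather than on a diameter/shallowness parameter $\Delta$) to that weaker promise.

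Your treatment of the negative-cycle phase is essentially right: the paper also reduces to computing the threshold $M^*$, probes with a constant-error \textsc{TestScale}, and uses a multiplicative drift argument (\cref{lem:negative-drift,thm:drift-tailbound}) on a carefully chosen potential to avoid the $\log n$ boosting loss. One point you gloss over is that the paper also merges boosting with scaling in the \emph{no-negative-cycle} path (\cref{thm:sssp}): the $\Order(\log(nW))$ scaling iterations are themselves enough independent repetitions to amplify the constant success probability of each \textsc{Scale} call, so no separate $\log n$ amplification is needed there either.
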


\smallskip
We obtain this result by optimizing the BNW algorithm, pushing it to its limits. Aaron Bernstein remarked in a presentation of their result that ``something like $\log^5$ is inherent to [their] current framework''.\footnote{\url{https://youtu.be/Bpw3yqWT_d0?t=3721}} It is thus surprising that we obtain such a dramatic improvement to nearly three log-factors within the same broader framework.
Despite this speed-up, our algorithm is still modular and simple in its core. In \cref{sec:introduction:sec:technical-contribution} we discuss the technical similarities and differences between our algorithm and the BNW algorithm in detail.

Recall that computing shortest paths is only reasonable in graphs without negative cycles (as otherwise two nodes are possibly connected by a path of arbitrarily small negative weight). In light of this, we solve the negative-weight SSSP problem in its strongest possible form in \cref{thm:sssp-or-negative-cycle}: The algorithm either returns a shortest path tree, or returns a negative cycle as a certificate that no shortest path tree exists. In fact, the subproblem of detecting negative cycles has received considerable attention on its own in the literature (see e.g.\ the survey~\cite{CherkasskyG99}).

In the presence of negative cycles, a natural alternative to finding one such cycle is to compute all distances in the graph anyway (where some of the distances are $-\infty$ or~$\infty$). This task can be solved in the same running time:

\begin{restatable}[Negative-Weight Single-Source Distances]{theorem}{thmalldistances} \label{thm:all-distances}
There is a Las Vegas algorithm, which, given a directed graph $G$ and a source $s \in V(G)$, computes the distances from $s$ to all other vertices in the graph (these distances are possibly $-\infty$ or $\infty$), running in time $\Order((m + n \log \log n) \log^2 n \log(nW))$ with high probability (and in expectation).
\end{restatable}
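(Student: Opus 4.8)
The plan is to reduce \cref{thm:all-distances} to \cref{thm:sssp-or-negative-cycle}. The distances fall into three classes: $\infty$ for vertices unreachable from $s$; $-\infty$ for vertices that are both reachable from $s$ and reachable from some negative cycle; and a finite value otherwise. First I would compute, by a linear-time graph search, the set $R$ of vertices reachable from $s$: these are exactly the vertices whose distance is not $\infty$, so from now on we may work inside the induced subgraph $G[R]$, in which $s$ reaches every vertex.

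Next, to identify the vertices at distance $-\infty$, observe that a vertex $v \in R$ satisfies $d_G(s,v) = -\infty$ precisely when $v$ is reachable from a negative cycle (such a cycle then lies in $R$ and is automatically reachable from $s$). Since every cycle is contained in a single strongly connected component (SCC), I would compute the SCCs of $G[R]$ and, for each SCC $S$, run the algorithm of \cref{thm:sssp-or-negative-cycle} on the induced subgraph $G[S]$ from an arbitrary vertex of $S$; as $G[S]$ is strongly connected, this call returns a negative cycle if and only if $S$ contains one. Let $Z$ be the union of all SCCs that do, and let $N$ be the set of vertices reachable from $Z$ (one further graph search). Then $N$ equals $\set{v \in R : d_G(s,v) = -\infty}$. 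If $s \in N$ then $d_G(s,s) = -\infty$, and therefore every vertex of $R$ is at distance $-\infty$ and every other vertex at distance $\infty$, so we are done; otherwise we continue.

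For the finite distances, set $G' = G[R \setminus N]$ and run the algorithm of \cref{thm:sssp-or-negative-cycle} on $(G', s)$. The facts that make this correct, and that I would verify, are: (i) $N$ is closed under forward reachability within $G[R]$, so $s$ reaches every vertex of $G'$ and, crucially, every walk from $s$ to any $v \notin N$ stays inside $R \setminus N$; (ii) consequently $d_{G'}(s,v) = d_G(s,v)$ for all $v \in R \setminus N$, and this value is finite; and (iii) $G'$ contains no negative cycle, since such a cycle would lie in an SCC of $G'$, which is contained in an SCC of $G[R]$, which would then have been placed into $Z$ and removed. Hence the algorithm returns a shortest path tree of $G'$, and reporting its distances together with $-\infty$ on $N$ and $\infty$ on $V(G) \setminus R$ solves the problem.

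Apart from $\Order(m+n)$ time for the graph searches and the SCC decomposition, the only work is the calls to the algorithm of \cref{thm:sssp-or-negative-cycle} on the vertex-disjoint subgraphs $G[S]$ and on $G'$. Since these subgraphs are vertex-disjoint their total size is $\Order(m+n)$, so by linearity of expectation the total expected running time is $\Order((m + n\log\log n)\log^2 n \log(nW))$, and the resulting algorithm is Las Vegas because every subcall is and all post-processing is deterministic. I expect the one genuinely delicate point to be the high-probability bound: because some SCCs may be tiny, one cannot simply union-bound the per-call guarantees, and one must instead argue the concentration of the aggregate running time more carefully (for instance by exploiting that the running time of the algorithm of \cref{thm:sssp-or-negative-cycle} concentrates around its expectation, or by dispatching very small components to a trivial fallback). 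The rest is a routine check of the reachability-closure claims used above.
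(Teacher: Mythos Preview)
Your proposal is correct and follows essentially the same approach as the paper's proof: remove unreachable vertices, compute SCCs and test each for a negative cycle via \cref{thm:sssp-or-negative-cycle}, propagate $-\infty$ forward, and run \cref{thm:sssp-or-negative-cycle} once more on the remainder. Your observation about the high-probability bound over many small SCCs is a valid subtlety that the paper's own proof also leaves implicit.
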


Owing to the countless practical applications of shortest paths problems, it is an important question to ask whether there is a negative-weights shortest paths algorithm that has a competitive implementation. The typical practical implementation in competitive programming uses optimized variants of Bellman-Ford's algorithm, such as the ``Shortest Path Faster Algorithm''~\cite{Moore59,Wikipedia23} (see also~\cite{CherkasskyGGTW08} for an experimental evaluation of other more sophisticated variants of Bellman-Ford). However, it is easy to find instances for which these algorithms require time $\Omega(mn)$. It would be exciting if, after decades of competitive programming, there finally was an efficient implementation to deal with these instances. With its nine log-factors, the BNW algorithm does not qualify as a practical candidate. We believe that our work paves the way for a comparably fast implementation.

\medskip
In addition to our main result, we make progress on two closely related problems: Computing the minimum cycle mean, and low-diameter decompositions in directed graphs. We describe these results in the following sections.

\subsubsection{Minimum Cycle Mean}
In a directed weighted graph, the \emph{mean} of a cycle $C$ is defined as the ratio $\bar w(C) = w(C) / |C|$ where $w(C)$ is the total weight of $C$. The \emph{Minimum Cycle Mean} problem is to compute, in a given directed weighted graph, the minimum mean across all cycles, $\min_C \bar w(C)$. This is a central problem in the context of network flows~\cite{AhujaMO93}, with applications to verification and reactive systems analysis~\cite{ChatterjeeHKLR14}.

There is a large body of literature on computing the Minimum Cycle Mean. In 1987, Karp~\cite{Karp78} established an $\Order(mn)$-time algorithm, which is the fastest strongly polynomial time algorithm to this date. In terms of weakly polynomial algorithms, Lawler observed that the problem is equivalent to detecting negative cycles, up to a factor $\Order(\log(nW))$~\cite{Lawler66,Lawler76}. Indeed, note that one direction is trivial: The graph has a negative cycle if and only if the minimum cycle mean is negative. For the other direction, he provided a reduction to detecting negative cycles on $\Order(\log(nW))$ graphs with modified rational edge weights. Thus, following Lawler's observation, any negative-weight SSSP algorithm can be turned into a Minimum Cycle Mean algorithm in a black-box way with running time overhead $\Order(\log(nW))$.

There are also results specific to Minimum Cycle Mean computations: Orlin and Ahuja~\cite{OrlinA92} designed an algorithm in time $\Order(m \sqrt n \log(nW))$ (improving over the baseline $\Order(m \sqrt n \log^2(nW))$ which follows from the SSSP algorithms by~\cite{Gabow83,GabowT89,Goldberg95}). For the special case of dense graphs with $0$-$1$-weights, an $\Order(n^2)$-time algorithm is known~\cite{ButkovicC92}. Finally, in terms of approximation algorithms it is known how to compute a $(1+\varepsilon)$\=/approximation in time $\widetilde\Order(n^\omega \log(W) / \varepsilon)$~\cite{ChatterjeeHKLR14}.

As for negative-weight SSSP, all these algorithms are dominated by the recent BNW algorithm: By Lawler's observation, their algorithm computes the minimum cycle mean in time $\Order(m \log^8(n) \log^2(nW))$. In fact, it is implicit in their work that the running time can be reduced to $\Order(m \log^8(n) \log(nW))$. Our contribution is again that we reduce the number of log-factors from nine to nearly three:

\begin{restatable}[Minimum Cycle Mean]{theorem}{thmmincyclemean}\label{thm:mincyclemean}
There is a Las Vegas algorithm, which given a directed graph $G$ finds a cycle $C$ with minimum mean weight $\bar w(C) = \min_{C'} \bar w(C')$, running in time~$\Order((m + n \log \log n) \log^2 n \log(nW))$ with high probability (and in expectation).
\end{restatable}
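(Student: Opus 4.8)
The plan is to reduce Minimum Cycle Mean to negative-cycle detection via Lawler's classical observation and then invoke \cref{thm:sssp-or-negative-cycle} as a black box. Recall that the minimum cycle mean $\mu^\star = \min_{C} \bar w(C)$ is a rational number of the form $p/q$ with $|p| \le nW$ and $1 \le q \le n$; in particular the distinct candidate values are well-separated (any two differ by at least $1/n^2$). For a fixed threshold $\theta$, consider the graph $G_\theta$ obtained from $G$ by subtracting $\theta$ from every edge weight. Then every cycle $C$ satisfies $w_{G_\theta}(C) = w(C) - \theta |C| = |C|\,(\bar w(C) - \theta)$, so $G_\theta$ contains a cycle of nonpositive weight iff $\mu^\star \le \theta$, and a negative cycle iff $\mu^\star < \theta$. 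Hence we can decide the predicate ``$\mu^\star < \theta$'' by running the negative-cycle detector of \cref{thm:sssp-or-negative-cycle} on $G_\theta$ (after adding a dummy source $s$ with $0$-weight edges to all vertices, which creates no new cycles): if it returns a negative cycle, then $\mu^\star < \theta$; if it returns a shortest path tree, then $\mu^\star \ge \theta$.

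Next I would binary-search for $\mu^\star$ over the $O(n \log(nW))$ candidate rationals. Clearing denominators, it suffices to search over integer thresholds $\theta = a/n!$ — or, more cleanly, to scale all weights by $n$ and binary-search the numerator $p$ over the integer range $[-n^2W,\, n^2 W]$, which has length $O(n^2 W)$ and hence $O(\log(nW))$ steps. Each step runs the detector once on a graph whose weights have magnitude $O(n^2 W)$, costing $O((m + n\log\log n)\log^2 n \log(n \cdot n^2 W)) = O((m + n\log\log n)\log^2 n \log(nW))$ time since $\log(n^3 W) = O(\log(nW))$. Multiplying by the $O(\log(nW))$ iterations would give an extra $\log$-factor, so to hit the claimed bound I instead observe that the binary search can be amortized: as in the implicit improvement attributed to BNW in the excerpt, one reuses the price function (potential) found in one iteration to \emph{warm-start} the next, so that the total work across all $O(\log(nW))$ iterations telescopes and is dominated by a single call, yielding $O((m + n\log\log n)\log^2 n \log(nW))$ overall. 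Once the search pins down $\mu^\star = p^\star/q^\star$, the graph $G_{\mu^\star}$ has a cycle of weight exactly $0$ but no negative cycle; I recover an actual minimizing cycle $C$ by computing a shortest path tree in $G_{\mu^\star}$, contracting the zero-weight edges (equivalently, examining the tight edges with respect to the final price function), and extracting any cycle among them — standard linear-time bookkeeping.

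The main obstacle is the warm-starting/amortization argument: naively the binary search costs an extra $\Theta(\log(nW))$ factor, and shaving it requires arguing that the potentials computed for threshold $\theta$ remain a good starting point for threshold $\theta'$ (so that the SSSP subroutine's internal scaling loop has fewer phases to perform, the savings summing in a telescoping fashion). This is precisely the kind of reduction that is ``implicit in [BNW]'' per the introduction, and I expect the cleanest route is to phrase the whole Lawler search as one scaling process on the numerator bits of $\mu^\star$, interleaved with the scaling process already inside \cref{thm:sssp-or-negative-cycle}, rather than as $\log(nW)$ independent black-box calls. The remaining steps — bounding the denominators, the $1/n^2$ separation, the weight-magnitude bookkeeping, and extracting the witness cycle — are routine. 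Everything is Las Vegas because the only randomness is inherited from \cref{thm:sssp-or-negative-cycle}, and the high-probability/expected running time bound is preserved by the (at most logarithmically many, or telescoped) invocations together with a standard restart-on-timeout wrapper.
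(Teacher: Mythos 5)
Your reduction outline (Lawler-style search over thresholds, the $1/n^2$ separation between distinct cycle means, clearing denominators by scaling weights by a polynomial in $n$, and extracting a witness cycle from the zero-weight edges after the final shift) matches the paper's proof in spirit: the paper also scales the weights (to $w_H(e) = n^2 w_G(e) - n^3 L$, which additionally forces the minimum cycle mean to be non-positive), uses \cref{prop:difference-means} for the separation, and recovers the cycle by calling $\textsc{FindNegCycle}$ on $H^{+M^*-1}$. However, there is a genuine gap at exactly the point you flag as "the main obstacle": the naive binary search costs $\Order(\log(nW))$ calls to the $\Order((m+n\log\log n)\log^2 n\log(nW))$-time detector, and your proposed fix --- warm-starting each iteration with the potentials from the previous one so that the work "telescopes" --- is asserted, not proven. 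It is also not clear it can work as stated: when the binary search moves the threshold in the direction where the graph \emph{does} contain a negative cycle, no potential function can make the weights non-negative, so there is nothing to warm-start from, and the scaling tester available in this framework (\cref{lem:scale-test}) is inherently one-sided. This one-sidedness is precisely why a clean binary search does not suffice.

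The paper closes this gap not by amortizing a binary search but by invoking the $\textsc{Threshold}$ abstraction (\cref{def:threshold}): the quantity you are searching for is exactly $-\mu^*(H)$ rounded up (\cref{lem:mincyclemean-characterization} and \cref{proof:min-cyclemean:claim:bound-meancycle}), and \cref{lem:threshold} computes it with a \emph{single} call running in $\Order(T_{\TRestrictedSSSP}(m,n)\log(nW))$ total time. That lemma's algorithm (\textsc{FastThreshold}) is the adaptive, noisy-binary-search-like procedure analyzed via a drift function in \cref{sec:neg-cycle:sec:fast-threshold}, and it is the technical heart of the speed-up you are trying to reproduce by warm-starting. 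So your proposal is correct up to this step, but the step you defer is not "routine bookkeeping" --- it is the main content, and without it (or without citing \cref{lem:threshold} and \cref{lem:threshold-lasvegas} as black boxes) the proof only establishes the weaker bound $\Order((m+n\log\log n)\log^2 n\log^2(nW))$.
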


\subsubsection{Directed Low-Diameter Decompositions}
A crucial ingredient to the BNW algorithm is a Low-Diameter Decomposition (LDD) in directed graphs. Our SSSP algorithm differs in that regard from the BNW algorithm, and does not explicitly use LDDs. Nevertheless, as a side result of this work we improve the best known LDD in directed graphs.

LDDs have been first studied by Awerbuch almost 40 years ago~\cite{Awerbuch85} and have ever since found several applications, mostly for undirected graphs and mostly in distributed, parallel and dynamic settings~\cite{AwerbuchGLP89,AwerbuchP92,AwerbuchBCP92,LinialS93,Bartal96,BlellochGKMPT14,MillerPX13,PachockiRSTW18,ForsterG19,ChechikZ20,BernsteinGW20,ForsterGV21,BernsteinNW22}. The precise definitions in these works mostly differ, but the common idea is to select a small subset of edges $S$ such that after removing all edges in $S$ from the graph, the remaining graph has (strongly) connected components with bounded diameter.

For directed graphs, we distinguish two types of LDDs: \emph{Weak} LDDs ensure that for every strongly connected component $C$ in the graph $G \setminus S$, the diameter of $C$ \emph{in the original graph} is bounded. A \emph{strong} LDD exhibits the stronger property that the diameter of $C$ in the graph $G \setminus S$ is bounded.

\begin{definition}[Directed Low-Diameter Decomposition] \label{def:ldd}
A \emph{weak Low-Diameter Decomposition} with overhead $\rho$ is a Las Vegas algorithm that, given a directed graph $G$ with nonnegative edge weights $w$ and a parameter $D > 0$, computes an edge set $S \subseteq E(G)$ with the following properties:
\smallskip
\begin{itemize}
    \item \emph{Sparse Hitting:} For any edge $e \in E$, $\Pr(e \in S) \leq \Order(\frac{w(e)}D \cdot \rho + \frac{1}{\poly(n)})$.
    \item \emph{Weak Diameter:} Every SCC $C$ in $G \setminus S$ has weak diameter at most $D$ \\(that is, for any two vertices $u, v \in C$, we have $\dist_G(u, v) \leq D$).
\end{itemize}
\smallskip
We say that the Low-Diameter Decomposition is \emph{strong} if it additionally satisfies the following stronger property:
\smallskip
\begin{itemize}
    \item \emph{Strong Diameter:} Every SCC $C$ in $G \setminus S$ has diameter at most~$D$\\(that is, for any two vertices $u, v \in C$, we have $\dist_{G \setminus S}(u, v) \leq D$).
\end{itemize}
\end{definition}

For directed graphs, the state-of-the-art \emph{weak} LDD was developed by Bernstein, Nanongkai and Wulff-Nilsen~\cite{BernsteinNW22} as a tool for their shortest paths algorithm. Their result is a weak LDD with polylogarithmic overhead~$\Order(\log^2 n)$ running in near-linear time $\Order(m \log^2 n + n \log^2 n \log\log n)$. In terms of \emph{strong} LDDs, no comparable result is known. While it is not hard to adapt their algorithm to compute a strong LDD, this augmentation suffers from a slower running time $\Omega(nm)$. Our contribution is designing the first strong LDD computable in near-linear time, with only slightly worse overhead $\Order(\log^3 n)$:

\begin{restatable}[Strong Low-Diameter Decomposition]{theorem}{thmstrongldd} \label{thm:strong-ldd}
There is a strong Low-Diameter Decomposition with overhead $\Order(\log^3 n)$, computable in time $\Order((m + n \log \log n) \log^2 n)$ with high probability (and in expectation).
\end{restatable}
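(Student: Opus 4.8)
The plan is a recursive ball-carving procedure in the spirit of the weak LDD of~\cite{BernsteinNW22}, with two modifications that upgrade the weak diameter guarantee to a strong one: all distances are measured in the \emph{current} recursive subgraph rather than in the original graph, and we recurse also into the ``heavy'' leftover part. On a subgraph $G'$ with $n'$ vertices we first draw $\Theta(\log n)$ random centers and run truncated Dijkstra forward and backward from each up to radius $D/4$; by Chernoff bounds this lets us classify every vertex $v$ as \emph{out-light} or \emph{out-heavy} (and symmetrically \emph{in-light} or \emph{in-heavy}) so that, with high probability, ``out-heavy'' implies $|B^{\mathrm{out}}_{G'}(v,D/4)| > \tfrac34 n'$ and ``out-light'' implies $|B^{\mathrm{out}}_{G'}(v,D/4)| < \tfrac45 n'$ (here $B^{\mathrm{out}}_{G'}(v,r) := \set{u : \dist_{G'}(v,u) \le r}$, and $B^{\mathrm{in}}$ is defined symmetrically). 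If no vertex is light we return $S = \emptyset$: any out-ball and in-ball of size $> n'/2$ intersect, so $\dist_{G'}(u,v) \le D/2$ for all $u,v$ and $G'$ has strong diameter $\le D$ (we verify this cheaply with one extra pair of Dijkstra calls from an arbitrary vertex, restarting on the rare failure, which makes the algorithm Las Vegas). Otherwise we carve the light vertices: repeatedly pick an uncaptured out-light vertex $v$, draw a radius $r$ from a geometric distribution of mean $\Theta(D/\log n)$ capped at $D/4$, form the out-ball $B$ of radius $r$ around $v$ in the current (shrinking) graph, add all edges \emph{leaving} $B$ to $S$, delete $B$, and recurse on $G'[B]$; once every out-light vertex is captured we run the symmetric process with \emph{in}-balls (now adding the edges \emph{entering} each ball to $S$) for the in-light vertices; finally we recurse on the leftover set $H$ of out-heavy, in-heavy vertices.

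Because we always cut every edge on the relevant side of a carved ball, no SCC of $G' \setminus S$ crosses a ball boundary, so each SCC of the final output lies inside a single carved ball or inside $H$; the diameter guarantee then composes, since we only recurse on induced subgraphs and cut crossing edges, and for every piece it holds either by induction (carved balls) or from the base case (all-heavy pieces). The one subtle point is $H$: having cut the edges from the out-balls to $H$ and the edges into the in-balls, in $G' \setminus S$ the distances between vertices of $H$ are exactly their distances inside $G'[H]$, so we must genuinely recurse on $G'[H]$. The key observation keeping this cheap is that if $|H| > \tfrac34 n'$ then every $v \in H$ satisfies $|B^{\mathrm{out}}_{G'[H]}(v,D/4)| \ge |B^{\mathrm{out}}_{G'}(v,D/4)| - (n'-|H|) > \tfrac34 n' - \tfrac14 n' = \tfrac12 n' \ge \tfrac12 |H|$ (and likewise for in-balls), so $G'[H]$ is itself a base case; and if $|H| \le \tfrac34 n'$ the recursion into $G'[H]$ shrinks the vertex count by a constant factor. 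Together with the bound $|B| < \tfrac45 n'$ for each carved ball (which uses $r \le D/4$ and the out-light bound), every recursive call runs on at most a constant fraction of $n'$ vertices, so the recursion has depth $O(\log n)$. Since the recursion nodes at a fixed level partition $V$, the $O(\log n)$ Dijkstra calls per node sum to $O((m+n\log\log n)\log n)$ per level, using a priority queue with $O(\log\log n)$-time operations (the carving Dijkstras run on disjoint vertex sets and fit the same budget); hence $O((m+n\log\log n)\log^2 n)$ in total.

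For the sparse hitting property, an edge $e$ can be cut only at the unique recursion level where its endpoints first land in different recursive subgraphs, and there only by the boundary of the first carved ball reaching an endpoint of $e$; a standard sequential ball-growing analysis with geometric radii of parameter $\Theta(\log n / D)$ shows this happens with probability $O(\tfrac{w(e)}{D}\log^2 n) + 1/\poly(n)$, the extra logarithmic factor arising from forcing the radius below $D/4$ with high probability. Summing the $1/\poly(n)$ terms over the $O(\log n)$ recursion levels and the $\poly(n)$ recursion nodes yields the overhead $\rho = O(\log^3 n)$ and overall error probability $1/\poly(n)$.

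I expect the main obstacle to be exactly the treatment of the heavy leftover $H$: unlike in a weak LDD we cannot simply declare $H$ a cluster (its strong diameter in $G' \setminus S$ can be large), yet recursing on $H$ must not blow up the recursion depth. Choosing the light/heavy thresholds so that ``heavy'' is strong enough to make the case $|H| > \tfrac34 n'$ a genuine base case, while ``light'' remains weak enough that every carved ball loses a constant fraction of its vertices, is where the argument has to be pinned down carefully; the remaining ingredients, namely the ball-intersection fact for the base case, the composition of the diameter guarantee, and the sequential ball-growing hitting analysis, are essentially standard.
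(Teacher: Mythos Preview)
Your plan diverges from the paper in a genuine way, and the divergence is exactly where it breaks. The paper does \emph{not} recurse on the heavy leftover $H$; instead, as soon as a single heavy vertex $v$ exists, it forms a set $C\supseteq B^{\mathrm{out}}(v,R)\cap B^{\mathrm{in}}(v,R)$ with $|C|>n'/2$, shows $G'[C]$ has strong diameter at most $4R$, \emph{contracts} $C$ to a super-vertex, and recurses on $G'/C$ with the reduced parameter $D-4R$. Paths through the super-vertex are later uncontracted at cost $\le 4R$, which the reduced budget absorbs. This is how the paper gets a strong diameter guarantee while keeping the recursion depth $O(\log n)$.

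Your alternative---carve all light vertices, then recurse on the induced subgraph $G'[H]$ of heavy leftovers---relies on the inequality
\[
|B^{\mathrm{out}}_{G'[H]}(v,D/4)| \ \ge\ |B^{\mathrm{out}}_{G'}(v,D/4)| - (n'-|H|),
\]
and this inequality is false. Deleting a vertex can disconnect many shortest paths at once: on a directed path $v_1\to v_2\to\cdots\to v_{n'}$ with weight-$0$ edges, removing $v_2$ shrinks $B^{\mathrm{out}}(v_1,D/4)$ from $n'$ to $1$. The point is that balls in an induced subgraph are controlled by distances \emph{through that subgraph}, and ``heavy in $G'$'' says nothing about distances in $G'[H]$. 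Consequently you cannot conclude that $G'[H]$ is a base case when $|H|>\tfrac34 n'$, and since $|H|$ can be as large as $n'-1$ (a single light vertex with a tiny carved ball), your recursion on $H$ has no depth bound. This is the real obstacle you anticipated, and carving-then-inducing does not get around it; the paper's contraction trick (which keeps all vertices of $C$ available for routing, compressed into one node) is what makes the strong-diameter recursion go through.
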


\subsection{Technical Overview} \label{sec:introduction:sec:technical-contribution}
Our algorithm is inspired by BNW algorithm and follows its general framework, but differs in many aspects. In this section we give a detailed comparison.

\subsubsection{The Framework}
The presentation of our algorithm is modular: We will first focus on the SSSP problem on a restricted class of graphs (to which we will simply refer as \emph{restricted} graphs, see the next \cref{def:restricted}). In the second step we demonstrate how to obtain our results for SSSP on general graphs, for finding negative cycles, and for computing the minimum cycle mean by reducing to the restricted problem in a black-box manner.

\begin{restatable}[Restricted Graphs]{definition}{defrestricted} \label{def:restricted}
An edge-weighted directed graph $G = (V, E, w)$ with a designated source vertex $s \in V$ is \emph{restricted} if it satisfies:
\begin{itemize}
\item The edge weights are integral and at least $-1$.
\item The minimum cycle mean is at least $1$.
\item The source $s$ is connected to every other vertex by an edge of weight $0$.
\end{itemize}
\end{restatable}

\medskip
In particular, note that restricted graphs do not contain negative cycles, and therefore it is always possible to compute a shortest path tree. The \emph{Restricted SSSP} problem is to compute a shortest path tree in a given restricted graph $G$. We write $T_{\textsc{RSSSP}}(m, n)$ for the optimal running time of a Restricted SSSP algorithm with error probability $\frac12$, say.

\subsubsection{Improvement 1: Faster Restricted SSSP via Better Decompositions}
Bernstein et al.~\cite{BernsteinNW22} proved that $T_{\TRestrictedSSSP}(m, n) = \Order(m \log^5 n)$. Our first contribution is that we shave nearly three log-factors and improve this bound to~$T_{\TRestrictedSSSP}(m, n) = \Order((m + n \log \log n) \log^2 n)$ (see~\cref{thm:restricted-sssp}).

At a high level, the idea of the BNW algorithm is to decompose the graph by finding a subset of edges $S$ suitable for the following two subtasks: (1) We can recursively compute shortest paths in the graph $G \setminus S$ obtained by removing the edges in $S$, and thereby make enough progress to incur in total only a small polylogarithmic overhead in the running time. And~(2), given the outcome of the recursive call, we can efficiently ``add back'' the edges from $S$ to obtain a correct shortest path tree for $G$. For the latter task, the crucial property is that $S$ intersects every shortest path in $G$ at most~$\Order(\log n)$ times (in expectation), as then a simple generalization of Dijkstra's and Bellman-Ford's algorithm can adjust the shortest path tree in near-linear time (see~\cref{lem:lazy-dijkstra}).

For our result, we keep the implementation of step (2) mostly intact, except that we use a faster implementation of Dijkstra's algorithm due to Thorup~\cite{Thorup03} (see~\cref{lem:lazy-dijkstra}). The most significant difference takes place in step (1), where we change how the algorithm selects~$S$. Specifically, Bernstein et al.\ used a directed Low-Diameter Decomposition to implement the decomposition. We are following the same thought, but derive a more efficient and direct decomposition scheme. To this end, we define the following key parameter:

\begin{definition} \label{def:kappa}
Let $G$ be a restricted graph with designated source $s$. We define $\kappa(G)$ as the maximum number of negative edges (that is, edges of weight exactly~$-1$) in any simple path~$P$ which starts at $s$ and has nonpositive weight $w(P) \leq 0$.
\end{definition}

\noindent
Our new decomposition can be stated as follows.

\begin{restatable}[Decomposition]{lemma}{lemdecomposition} \label{lem:decomposition}
Let $G$ be a restricted graph with source vertex $s \in V(G)$ and $\kappa \geq \kappa(G)$. There is a randomized algorithm $\textsc{Decompose}(G, \kappa)$ running in expected time $\Order((m + n \log \log n) \log n)$ that computes an edge set $S \subseteq E(G)$ such that:
\medskip
\begin{enumerate}
    \item \emph{Progress:} With high probability, for any strongly connected component $C$ in $G \setminus S$, we have (i) $|C| \leq \frac34 |V(G)|$ or (ii)~$\kappa(G[C \cup \set s]) \leq \frac\kappa2$.
    \item \emph{Sparse Hitting:} For any shortest $s$-$v$-path $P$ in $G$, we have $\Ex(|P \cap S|) \leq \Order(\log n)$.
\end{enumerate}
\end{restatable}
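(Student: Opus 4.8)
The plan is to obtain the whole decomposition from a single run of a ball-growing procedure — essentially the non-recursive core of the directed low-diameter decomposition of~\cite{BernsteinNW22} — applied to the rounded weights $w_{\ge 0}(e) := \max\set{0, w(e)}$, which are nonnegative, integral, and satisfy $w_{\ge 0}(e) \ge w(e)$ edgewise. Since $G$ has minimum cycle mean at least~$1$, every cycle $Z$ satisfies $w(Z) \ge |Z|$, hence so does every closed walk (decompose it into simple cycles), and therefore also $w_{\ge 0}(Z) \ge w(Z) \ge |Z|$. I first establish the following reduction: \emph{if $C \subseteq V(G)$ is such that $G[C]$ is strongly connected and any two of its vertices are at $w_{\ge 0}$-distance at most $d$ inside $G[C]$, then $\kappa(G[C \cup \set s]) \le d + 2$.} To see this, let $P$ be a simple path from $s$ inside $G[C \cup \set s]$ with $w(P) \le 0$, say $P = (s, v_0, v_1, \dots, v_k)$ with $k \ge 1$ (otherwise $P$ has at most one edge and $n^-(P) \le 1$, where $n^-(\cdot)$ counts negative edges). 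Then $P' := (v_0, \dots, v_k)$ is a simple path inside $G[C]$ with $w(P') = w(P) - w(s, v_0) \le 1$ (edge weights are $\ge -1$) and $n^-(P') \ge n^-(P) - 1$ (the edge $(s, v_0)$ is at most one negative edge). Concatenating $P'$ with a $w_{\ge 0}$-shortest $v_k$-$v_0$-path $\pi$ inside $G[C]$, so that $w(\pi) \le w_{\ge 0}(\pi) \le d$, yields a closed walk $Z := P' \cdot \pi$ with $d + 1 \ge w(P') + w(\pi) = w(Z) \ge |Z| \ge |P'| \ge n^-(P') \ge n^-(P) - 1$, i.e.\ $n^-(P) \le d + 2$.

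\textbf{The decomposition.}
Pick $R = \Theta(\kappa)$ small enough that $2R + 2 \le \frac\kappa2$, and $p := \Theta(\log n / \kappa)$ (capped at $\frac12$); for $\kappa$ below a fixed constant the statement is handled directly. Maintain an induced subgraph $G'$ of $G$, initially $G' = G$, and repeat: as long as some vertex $v$ of $G'$ has out-ball $B^{\mathrm{out}}_{G'}(v, R)$ (the set of vertices reachable from $v$ in $G'$ by a path of $w_{\ge 0}$-weight at most $R$) or in-ball $B^{\mathrm{in}}_{G'}(v, R)$ of size at most $\frac34 |V(G')|$, pick such a $v$ — say with the out-ball small, the in-ball case being symmetric — sample $r \sim \mathrm{Geom}(p)$ truncated at $R$, add to $S$ every edge of $G'$ leaving $B := B^{\mathrm{out}}_{G'}(v, r)$, and delete the vertices of $B$ from $G'$. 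Since for each removed ball all edges leaving it (resp.\ entering it) are cut, no SCC of $G \setminus S$ spans a removed ball and anything else; hence every SCC of $G \setminus S$ lies inside a single removed ball — and is then of size at most $\frac34 |V(G)|$, so~(i) holds — or inside the final~$G'$. This is the point where we improve over~\cite{BernsteinNW22}: we do \emph{not} recurse into removed balls to shrink their diameter, since~(i) holds for everything inside them regardless. When the loop ends, every radius-$R$ in- and out-ball inside $G'$ has more than $\frac34 |V(G')|$ vertices, so for any $u, w \in V(G')$ the out-ball of $u$ and the in-ball of $w$ intersect, which yields a $u$-$w$-path of $w_{\ge 0}$-weight at most $2R$ using only edges of $G'$; as $G'$ carries no edge of $S$, it follows that the final $G'$ is strongly connected with $w_{\ge 0}$-diameter at most $2R$, is a single SCC of $G \setminus S$, and (by the reduction above) satisfies $\kappa(G[V(G') \cup \set s]) \le 2R + 2 \le \frac\kappa2$, i.e.~(ii). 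Because of the truncation of $r$, \emph{Progress} holds deterministically.

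\textbf{Sparse Hitting and running time.}
Fix a shortest $s$-$v$-path $P$ in $G$; then $\Ex(|P \cap S|) = \sum_{e \in P} \Pr(e \in S)$. In one step an edge $e = (x,y)$ enters $S$ only if the sampled radius $r$ separates $x$ from $y$ inside the grown ball, which forces $r$ into an interval of length $w_{\ge 0}(e)$; since a geometric variable with parameter $p$ has point masses at most $p$ (and truncation only lowers them), a single step adds at most $w_{\ge 0}(e) \cdot p$ to $\Pr(e \in S)$, and — accounting for how often $e$ can still be exposed (both endpoints present and within $w_{\ge 0}$-radius $R$ of the current pivot), exactly as in the ball-growing analysis of~\cite{BernsteinNW22} — one gets $\Pr(e \in S) \le \Order(w_{\ge 0}(e) \cdot p) + 1/\poly(n) = \Order(w_{\ge 0}(e) \log n / \kappa) + 1/\poly(n)$. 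Summing over $e \in P$ and using that $P$ is a shortest path, hence $w(P) = \dist_G(s, v) \le 0$ and thus $w_{\ge 0}(P) = w(P) + n^-(P) \le n^-(P) \le \kappa(G) \le \kappa$ by~\cref{def:kappa}, gives $\Ex(|P \cap S|) \le \Order(\log n) + |P| / \poly(n) = \Order(\log n)$. For the running time, each ball is grown by a bounded-radius Dijkstra run, implemented with Thorup's priority queue~\cite{Thorup03} for the $n \log\log n$ term; amortizing the explorations across the loop and pivoting at a random vertex (so that balls too large to carve are hit rarely), exactly as in~\cite{BernsteinNW22}, gives expected time $\Order((m + n \log\log n) \log n)$ — one logarithmic factor below~\cite{BernsteinNW22}, precisely because we drop the recursion into carved balls.

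\textbf{Main obstacle.}
The conceptually new ingredient is the reduction in the first paragraph: rounding negative edges to zero converts ``small \emph{strong} diameter together with minimum cycle mean $\ge 1$'' into ``$\kappa$ halves'', which is a short argument once the rounding is fixed. The technically heaviest part is the combined analysis of the ball-growing loop — simultaneously bounding how often a fixed edge is exposed to a ball boundary and the total exploration cost when balls grow from random pivots and only ``small-side'' balls are carved — which is exactly where the machinery of~\cite{BernsteinNW22} has to be reused, and where, by never recursing into carved balls, we save a logarithmic factor.
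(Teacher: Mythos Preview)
Your overall approach is essentially the paper's: round to $w_{\ge 0}$, carve balls with geometrically-sampled radii around ``light'' vertices, and bound $\kappa$ on the residual component via a closed-walk argument against the minimum-cycle-mean assumption. Your reduction in the first paragraph is the same idea as the paper's \cref{lem:progress}.

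There is, however, a real mismatch between your algorithm description and your running-time analysis. You carve as long as some vertex of the current $G'$ has a radius-$R$ ball of size at most $\tfrac34|V(G')|$---a \emph{dynamic} threshold---and your reduction explicitly requires \emph{strong} diameter of $G[C]$ (you take $\pi$ ``inside $G[C]$''). But the running-time argument you defer to~\cite{BernsteinNW22} relies on a \emph{static} heavy/light classification computed once on the original graph by sampling; ``pivoting at a random vertex'' is not how BNW (or this paper) finds light vertices, and with a dynamic threshold you have no efficient way to certify that \emph{no} light vertex remains in $G'$ without re-exploring all of it. If instead you switch to static classification to get the running time, then at termination you only know that every remaining vertex has large balls in the \emph{original} $G$, which yields only \emph{weak} diameter (the short $v_k$-$v_0$ path may leave $C$), and your reduction as stated no longer applies.

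The fix is short and is exactly what the paper does: your reduction does not actually need $\pi \subseteq G[C]$. The bound $w(Z) \ge |Z|$ holds for any closed walk in the full restricted graph $G$, so you may let $\pi$ be any $v_k$-$v_0$ path in $G$ of $w_{\ge 0}$-length at most $d$. With this relaxation to weak diameter, a single static classification (the paper's \cref{lem:heavy-light}) suffices: carve around statically-light vertices, stop when none remain, and the $\Order((m+n\log\log n)\log n)$ bound follows because each vertex participates in at most one carved ball.
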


\medskip
The sparse hitting property is exactly what we need for (2). With the progress condition, we ensure that $|V(G)| \cdot \kappa(G)$ reduces by a constant factor when recurring on the strongly connected components of $G \setminus S$. The recursion tree therefore reaches depth at most~\makebox{$\Order(\log(n \cdot \kappa(G))) = \Order(\log n)$}. In summary, with this new idea we can compute shortest paths in restricted graphs in time $\Order((m + n \log \log n) \log^2 n)$.

\subsubsection{Improvement 2: Faster Scaling}
It remains to lift our Restricted SSSP algorithm to the general SSSP problem at the expense of at most one more log-factor $\log(nW)$. In comparison, the BNW algorithm spends four log-factors $\Order(\log^3 n \log W)$ here. As a warm-up, we assume that the given graph is promised not to contain a negative cycle.

\subparagraph{Warm-Up: From Restricted Graphs to Graphs without Negative Cycles}
This task is a prime example amenable to the \emph{scaling technique} from the 80's~\cite{Gabow83,GabowT89,Goldberg95}: By rounding the weights in the given graph $G$ from $w(e)$ to $\lceil \frac{3 w(e)}{W+1} \rceil + 1$ we ensure that (i) all weights are at least $-1$ and (ii) the minimum cycle mean is at least $1$, and thus we turn $G$ into a restricted graph $H$ (see \cref{lem:scale}). We compute the shortest paths in $H$ and use the computed distances (by means of a \emph{potential function}) to augment the weights in the original graph $G$. If $G$ has smallest weight $-W$, in this way we can obtain an \emph{equivalent} graph $G'$ with smallest weight $- \frac34 W$, where equivalence is defined as follows:

\begin{restatable}[Equivalent Graphs]{definition}{defequivalentgraphs} \label{def:equivalent-graphs}
We say that two graphs $G, G'$ over the same set of vertices and edges are \emph{equivalent} if~(1)~any shortest path in $G$ is also a shortest path in $G'$ and vice versa, and~(2) for any cycle $C$, $w_G(C) = w_{G'}(C)$.
\end{restatable}

\medskip
Hence, by (1) we continue to compute shortest paths in $G'$. At first glance it seems that repeating this scaling step incurs only a factor $\log W$ to the running time, but for subtle reasons the overhead is actually $\log(nW)$. Another issue is that the Restricted SSSP algorithm errs with constant probability. The easy fix loses another $\log n$ factor due to boosting (this is how Bernstein et al.\ obtain their algorithm \textsc{SPMonteCarlo}, see~\cite[Theorem~7.1]{BernsteinNW22}). Fortunately, we can ``merge'' the scaling and boosting steps to reduce the overhead to $\log(nW)$ in total, see~\cref{thm:sssp}.

\subparagraph{From Restricted Graphs to Arbitrary Graphs}
What if $G$ contains a negative cycle? In this case, our goal is to find and return one such negative cycle. Besides the obvious advantage that it makes the output more informative, this also allows us to strengthen the algorithm from Monte Carlo to Las Vegas, since both a shortest path tree and a negative cycle serve as \emph{certificates} that can be efficiently tested. Using the scaling technique as before, we can easily \emph{detect} whether a given graph contains a negative cycle in time $\Order(T_{\TRestrictedSSSP}(m, n) \cdot \log (nW))$ (even with high probability, see~\cref{cor:detect-negcycle}), but we cannot \emph{find} such a cycle.

We give an efficient reduction from finding negative cycles to Restricted SSSP with overhead $\Order(\log(nW))$. This reduction is the technically most involved part of our paper. In the following paragraphs we attempt to give a glimpse into the main ideas.

\subparagraph{A Noisy-Binary-Search-Style Problem}
For the rest of this overview, we phrase our core challenge as abstract as possible, and omit further context.
We will use the following notation: given a directed graph $G$ and an integer $M$, we write $G^{+M}$ to denote
the graph obtained by adding $M$ to every edge weight of $G$.
Consider the following task:

\begin{definition}[Threshold]
Given a weighted graph $G$, compute the smallest integer $M^* \geq 0$ such that the graph $G^{+M^*}$, which is obtained from $G$ by adding $M^*$ to all edge weights, does not contain a negative cycle.
\end{definition}

Our goal is to solve the Threshold problem in time $\Order(T_{\TRestrictedSSSP}(m, n) \log (nW))$ (from this it follows that we can find negative cycles in the same time, see~\cref{lem:reduction-to-threshold}). As a tool, we are allowed to use the following lemma as a black-box (which can be proven similarly to the warm-up case):

\begin{lemma}[Informal \cref{lem:scale-test}] \label{lem:informal-tester}
There is an $\Order(T_{\TRestrictedSSSP}(m, n))$-time algorithm that, given a graph $G$ with minimum weight~$-W$, either returns an equivalent graph $G'$ with minimum weight $-\frac34 W$, or returns~$\textsc{NegativeCycle}$. If $G$ does not contain a negative cycle, then the algorithm returns $\textsc{NegativeCycle}$ with error probability at most $0.01$.
\end{lemma}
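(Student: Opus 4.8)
The plan is to mimic the warm-up reduction almost verbatim and wrap it in a verification step that turns it into a tester with one-sided error. Throughout I would assume that $W$ is larger than a suitable absolute constant (the remaining, uninteresting, case of constant $W$ can be handled separately by $\Order(1)$ calls to the restricted-graph algorithm after a crude rescaling) and, as in the warm-up, that the weights of $G$ are integers.

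First, exactly as in \cref{lem:scale}, I would build the graph $H$ on the same vertices and edges as $G$ with rounded weights $w_H(e) := \lceil \frac{3 w_G(e)}{W+1}\rceil + 1$, and let $\widehat H$ be $H$ augmented by a fresh source $s$ joined to every original vertex by a weight-$0$ edge. As in the warm-up, $w_H$ is integral with $w_H(e) \ge -1$, and every cycle $C$ satisfies $w_H(C) \ge \frac{3 w_G(C)}{W+1} + |C|$; hence, if $G$ contains no negative cycle, then $w_H(C) \ge |C|$ for all cycles and $\widehat H$ is a restricted graph in the sense of \cref{def:restricted}. (If $G$ does contain a negative cycle, $\widehat H$ may or may not be restricted, but as explained below this does not matter, since in that case the lemma imposes no requirement on the output.) This step costs $\Order(m+n)$ time.

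Next I would run the restricted-SSSP algorithm on $\widehat H$ and \emph{verify} its output: check that it is a valid out-tree rooted at $s$ whose labels $\widehat d$ satisfy $\widehat d(u) + w_{\widehat H}(u,v) \ge \widehat d(v)$ for every edge $(u,v)$, with equality on tree edges. The point is that this check passes only if $\widehat d(\cdot) = \dist_{\widehat H}(s,\cdot)$, and it can never pass when $\widehat H$ has a negative cycle, since then no feasible potential exists. If the check fails I would rerun; after $\lceil \log_2 100 \rceil = 7$ consecutive failures I would return $\textsc{NegativeCycle}$. When $G$ has no negative cycle, $\widehat H$ is restricted, so each run succeeds with probability at least $\frac12$, and the probability of erroneously returning $\textsc{NegativeCycle}$ is at most $2^{-7} \le 0.01$, as required. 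Only $\Order(1)$ runs and $\Order(m+n)$-time verifications are performed, so the running time is $\Order(T_{\TRestrictedSSSP}(m,n))$.

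Otherwise some run passed verification, giving correct distances $\widehat d$, which satisfy $\widehat d(v) - \widehat d(u) \le w_{\widehat H}(u,v) = w_H(u,v) \le \frac{3 w_G(u,v)}{W+1} + 2$ on every original edge. I would then define the integral potential $\phi(v) := \lfloor \frac{W+1}{3}\,\widehat d(v)\rfloor$ and return the graph $G'$ with integral weights $w_{G'}(u,v) := w_G(u,v) + \phi(u) - \phi(v)$. As a potential transformation this preserves every cycle weight and shifts every $u$-$v$ path weight by the same additive constant $\phi(u)-\phi(v)$, so $G'$ and $G$ are equivalent in the sense of \cref{def:equivalent-graphs}; and substituting the displayed inequality yields
\[
  w_{G'}(u,v) \;\ge\; w_G(u,v) + \frac{W+1}{3}\bigl(-\tfrac{3 w_G(u,v)}{W+1} - 2\bigr) - 1 \;=\; -\tfrac{2}{3}(W+1) - 1 \;\ge\; -\tfrac34 W,
\]
where the last step uses that $W$ exceeds an absolute constant, so $G'$ is an equivalent graph with minimum weight at least $-\tfrac34 W$; forming it costs $\Order(m+n)$ time, so the total running time is $\Order(T_{\TRestrictedSSSP}(m,n))$. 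The error is one-sided by construction: $\textsc{NegativeCycle}$ is returned only after repeated verification failures, and whenever a graph $G'$ is returned it is genuinely equivalent to $G$ with the claimed bound, regardless of whether $G$ has a negative cycle; the sole error event is the restricted-SSSP algorithm failing on all $\Order(1)$ attempts, which has probability at most $0.01$ whenever $\widehat H$ is restricted, in particular whenever $G$ has no negative cycle. I expect the only slightly delicate point to be this rounding-versus-integrality bookkeeping — the constant $3$ in the rounding rule is exactly what leaves enough slack that, after the $\lfloor\cdot\rfloor$ in $\phi$, the natural bound $-\tfrac23(W+1)$ still beats $-\tfrac34 W$, which is why $W$ must exceed a constant and the small-$W$ case is treated separately; everything else parallels the warm-up.
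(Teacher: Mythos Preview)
Your proposal is correct and follows essentially the same approach as the paper. The paper's formal version (\cref{lem:scale-test}) likewise repeats the one-step scaling procedure (\cref{lem:scale}) a constant $\Order(\log(1/\delta))$ number of times and outputs \textsc{Fail}/\textsc{NegativeCycle} only if all attempts fail; the rounding constants differ slightly (the paper uses $w_H(e)=\lceil w_G(e)/W\rceil+1$ and $\phi(v)=W\cdot\dist_H(s,v)$, avoiding the floor) but the argument is the same. One small point worth noting: your explicit verification of the returned tree makes the reduction work with any black-box Restricted SSSP solver, whereas the paper implicitly relies on the specific guarantee (remarked after \cref{thm:restricted-sssp}) that its Restricted SSSP algorithm, whenever it does not return \textsc{Fail}, outputs a correct tree even on non-restricted inputs---so your version is arguably the more careful black-box statement.
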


Morally, \cref{lem:informal-tester} provides a test whether a given graph $G$ contains a negative cycle. A natural idea is therefore to find $M^*$ by binary search, using \cref{lem:informal-tester} as the tester. However, note that this tester is \emph{one-sided}: If $G$ contains a negative cycle, then the tester is not obliged to detect one. Fortunately, we can turn the tester into a win-win algorithm to compute $M^*$.

We first describe our Threshold algorithm in an idealized setting where we assume that the tester from \cref{lem:informal-tester} has error probability~$0$. We let $d = \frac15 W$, and run the tester on the graph $G^{+d}$. There are two cases:
\begin{itemize}
    \smallskip
    \item\emph{The tester returns $\textsc{NegativeCycle}$:} In the idealized setting we can assume that $G^{+d}$ indeed contains a negative cycle. We therefore compute the threshold of $G^{+d}$ recursively, and return that value plus $d$. Note that the minimum weight of $G^{+d}$ is at least $-W + d = -\frac45 W$.
    \item\emph{The tester returns an equivalent graph $G'$:} In this case, we recursively compute and return the threshold value of $(G')^{-d}$. Note that the graphs $G$ and $(G')^{-d}$ share the same threshold value, as by \cref{def:equivalent-graphs} we have $w_{(G')^{-d}}(C) = w_{G'}(C) - d = w_{G^{+d}}(C) - d = w_G(C)$ for any cycle $C$. Moreover, since $G^{+d}$ has smallest weight $-\frac45 W$, the equivalent graph $G'$ has smallest weight at least $-\frac 34 \cdot \frac 45 W = -\frac35 W$ by \cref{lem:informal-tester}. Therefore, $(G')^{-d}$ has smallest weight at least $-\frac35 W - d = -\frac45 W$.
    \smallskip
\end{itemize}
In both cases, we recursively compute the threshold of a graph with smallest weight at least~$-\frac45 W$. Therefore, the recursion reaches depth~$\Order(\log W)$ until we have reduced the graph to constant minimum weight and the problem becomes easy.

\smallskip
The above algorithm works in the idealized setting, but what about the unidealized setting, where the tester can err with constant probability? We could of course first boost the tester to succeed with high probability. In combination with the above algorithm, this would solve the Threshold problem in time $\Order(T_{\TRestrictedSSSP}(m, n) \log (nW) \log n)$.

However, the true complexity of this task lies in avoiding the naive boosting. By precisely understanding the unidealized setting with constant error probability, we improve the running time for Threshold to $\Order(T_{\TRestrictedSSSP}(m, n) \log (nW))$.
To this end, it seems that one could apply the technique of \emph{noisy binary search} (see e.g.~\cite{Pelc89,FeigeRPU94,Pelc02}). Unfortunately, the known results do not seem applicable to our situation, as \cref{lem:informal-tester} only provides a one-sided tester.
Our solution to this final challenge is an innovative combination of the algorithm sketched above with ideas from noisy binary search. The analysis makes use of \emph{drift analysis} (see e.g.~\cite{Lengler17}), which involves defining a suitable \emph{drift function} (a quantity which in expectation decreases by a constant factor in each step and is zero if and only if we found the optimal value $M^*$) and an application of a \emph{drift theorem} (see~\cref{thm:drift-tailbound}) to prove that the drift function rapidly approaches zero.

\subsection{Summary of Log Shaves}
Finally, to ease the comparison with the BNW algorithm, we compactly list where exactly we shave the nearly six log-factors. We start with the improvements in the Restricted SSSP algorithm:
\begin{itemize}
    \smallskip
    \item We use Thorup's priority queue~\cite{Thorup03} to speed up Dijkstra's algorithm, see~\cref{lem:lazy-dijkstra}. This reduces the cost of one log-factor to a loglog-factor.
    \smallskip
    \item The Sparse Hitting property of our decomposition scheme (\cref{lem:decomposition}) incurs only an $\Order(\log n)$ overhead in comparison to the $\Order(\log^2 n)$ overhead due to the Low-Diameter Decomposition in the BNW algorithm.
    \smallskip
    \item The Progress property of our decomposition scheme (\cref{lem:decomposition}) ensures that the recursion depth of our Restricted SSSP algorithm is just $\Order(\log n)$. The analogous recursion depth in the BNW algorithm is $\Order(\log^2 n)$ (depth $\log n$ for reducing the number of nodes $n$ times depth $\log n$ for reducing $\max_v \eta_G(v)$).
    \smallskip
\end{itemize}
Next, we summarize the log-factors shaved in the scaling step:
\begin{itemize}
    \smallskip
    \item The BNW algorithm amplifies the success probability of the Restricted SSSP algorithm by repeating it $\Order(\log n)$ times. We combine the boosting with the scaling steps which saves this log-factor.
    \smallskip
    \item We improve the overall reduction from finding a negative cycle to Restricted SSSP.
    In particular, we give an implementation of \textsc{Threshold} which is faster by two log-factors (see~\cref{lem:reduction-to-threshold,lem:threshold}). This is where we use an involved analysis via a drift function.
\end{itemize}

\subsection{Open Problems}
Our work leaves open several interesting questions. Can our algorithm for negative-weight Single-Source Shortest Paths be improved further? Specifically:
\smallskip
\begin{enumerate}
\item \emph{Can the number of $\log n$ factors be improved further?}\\In our algorithm, we suffer three log-factors because of (i) the scaling technique ($\log(nW)$) to reduce to restricted graphs, and on restricted graphs (ii) the inherent $\log n$ overhead of the graph decomposition and (iii) the recursion depth $\log n$ to progressively reduce $\kappa(G)$, all of which seem unavoidable. We therefore believe that it is hard to improve upon our algorithm without substantially changing the framework.
\smallskip
\item\emph{Can the loss due to the scaling technique be reduced from $\log(nW)$ to $\log W$?}\\The classical scaling technique, as a reduction to graphs with weights at least $-1$, requires only $\log W$ iterations~\cite{Goldberg95}. But in our setting, due to the stronger conditions for \emph{restricted} graphs (and due to the boosting), we need $\log(nW)$ iterations. Can we do better?
\smallskip
\item\emph{Can the $\log W$ factor be removed from the running time altogether?}\\That is, is there a \emph{strongly polynomial} algorithm in near-linear time? In terms of non-scaling algorithms, the Bellman-Ford algorithm remains state of the art with running time~$\Order(nm)$. This question has been asked repeatedly and appears to be very hard.
\smallskip
\item \emph{Can the algorithm be derandomized?}\\The fastest deterministic algorithm for negative-weight SSSP remains the $\Order(m \sqrt n \log(W))$ algorithm by~\cite{Gabow83,GabowT89,Goldberg95} and it is open to find a near-linear-time algorithm.
\end{enumerate}

\subsection{Outline}
This paper is structured as follows. In \cref{sec:preliminaries} we give some formal preliminaries.
In \cref{sec:restricted-sssp} we present our algorithm for negative-weight SSSP in restricted graphs.
In \cref{sec:sssp-without-neg-cycles} we extend the algorithm from the previous section to work on general graphs without negative cycles.
In \cref{sec:neg-cycle} we remove this assumption and strengthen the algorithm to find negative cycles without worsening the running time.
In \cref{sec:mean-cycle} we give our result for computing the minimum cycle mean.
Finally, in \cref{sec:ldd} we give our improved results for Low-Diameter Decompositions in directed graphs.

\section{Preliminaries} \label{sec:preliminaries}

We write $[n] = \set{1, \dots, n}$ and $\Olog(T) = T \cdot (\log T)^{\Order(1)}$. An event occurs \emph{with high probability} if it occurs with probability $1 - 1/n^c$ for an arbitrarily large constant $c$ (here, $n$ is the number of vertices in the input graph). Unless further specified, our algorithms are Monte Carlo algorithms that succeed with high probability.

\subparagraph{Directed Graphs}
Throughout we consider directed edge-weighted graphs $G = (V, E, w)$. Here $V = V(G)$ is the set of vertices and~\makebox{$E = E(G) \subseteq V(G)^2$} is the set of edges. All edge weights are integers, denoted by $w(e) = w(u, v)$ for~\makebox{$e = (u, v) \in E(G)$}. We typically set~\makebox{$n = |V(G)|$} and~\makebox{$m = |E(G)|$}. We write $G[C]$ to denote the \emph{induced subgraph} with vertices $C \subseteq V(G)$ and write $G \setminus S$ to denote the graph $G$ after deleting all edges in $S \subseteq E$. We write $\deg(v)$ for the (out-)degree of $v$, that is, the number of edges starting from $v$.

A \emph{strongly connected component (SCC)} is a maximal set of vertices $C \subseteq V(G)$ in which all pairs are reachable from each other. It is known that every directed graph can be decomposed into a collection of SCCs, and the graph obtained by compressing the SCCs into single nodes is acyclic. It is also known that the SCCs can be computed in linear time:

\begin{lemma}[Strongly Connected Components, \cite{Tarjan72}] \label{lem:scc}
In any directed graph $G$, the strongly connected components can be identified in time $\Order(n + m)$.
\end{lemma}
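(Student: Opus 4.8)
The plan is to prove this via Tarjan's classical single-pass depth-first search. First I would run a DFS on $G$, restarting from an arbitrary unvisited vertex whenever the current search tree is exhausted, so that every vertex and every edge is explored exactly once. During the search I maintain for each vertex $v$ a \emph{discovery index} $\mathrm{idx}(v)$ (a counter incremented at each first visit), an auxiliary \emph{low-link} value $\mathrm{low}(v)$, and a global stack $S$ onto which $v$ is pushed when it is first discovered. The value $\mathrm{low}(v)$ is maintained to equal the minimum of $\mathrm{idx}(v)$, of $\mathrm{low}(u)$ over all tree-children $u$ of $v$, and of $\mathrm{idx}(u)$ over all non-tree edges $(v,u)$ with $u$ currently on $S$; this is computed in $\Order(\deg(v))$ time as the recursive call for $v$ returns. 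When the call at $v$ finishes and $\mathrm{low}(v) = \mathrm{idx}(v)$, I pop vertices off $S$ down to and including $v$ and output the popped set as one SCC.

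The correctness argument has two parts. First I would establish the invariant that at every moment $S$ contains exactly the already-discovered vertices not yet assigned to a completed SCC, in increasing order of discovery index, and that any two vertices on $S$ remain mutually reachable through vertices currently on $S$ (so each maximal ``segment'' of $S$ lives inside a single SCC). Second, and this is the heart of the matter, I would prove that $\mathrm{low}(v) = \mathrm{idx}(v)$ holds precisely when $v$ is the \emph{root} of its SCC, i.e.\ the first vertex of that SCC visited by the DFS: if $v$ reaches an ancestor or an on-stack vertex of smaller index, that vertex lies in the same SCC (DFS on a digraph guarantees a return path, since an open call sits above $v$), so $v$ is not the root; conversely, if $v$ is not the root then the root is a proper ancestor reachable from $v$ and still on $S$, forcing $\mathrm{low}(v) < \mathrm{idx}(v)$. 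Combined with the stack invariant, popping at an SCC-root $v$ removes exactly the vertices of $v$'s SCC. A key supporting fact, and the reason we only consult on-stack endpoints, is that any forward or cross edge $(v,u)$ with $u\notin S$ points into an already-completed SCC and hence cannot affect $\mathrm{low}(v)$.

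For the running time, the DFS visits each vertex once and scans each outgoing edge once, for $\Order(n+m)$ total; each vertex is pushed onto $S$ exactly once and popped exactly once, contributing $\Order(n)$; and maintaining the $\mathrm{low}$ values is folded into this per-vertex and per-edge work (each update is an $\Order(1)$ comparison on the word RAM). Hence the whole procedure runs in time $\Order(n+m)$. The main obstacle is the second correctness claim — pinning down exactly why $\mathrm{low}(v)=\mathrm{idx}(v)$ characterizes SCC-roots, and why the popped segment of $S$ is neither too small nor too large — which rests on the structural properties of DFS on directed graphs (the classification of edges into tree, back, forward, and cross, together with the reachability/white-path properties); everything else is routine bookkeeping. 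As an alternative, one could instead run Kosaraju's two-pass DFS (one pass on $G$ to order vertices by finish time, one pass on the reverse graph $G^{\mathrm{T}}$ in that order), which is also $\Order(n+m)$ and whose correctness proof is arguably shorter, but Tarjan's single pass is what the cited reference provides.
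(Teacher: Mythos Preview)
Your sketch of Tarjan's algorithm is correct and complete as a proof outline. However, the paper does not actually prove this lemma: it is stated in the preliminaries as a classical fact with a citation to Tarjan's 1972 paper and no accompanying proof. So your proposal goes well beyond what the paper provides; there is nothing to compare it against beyond the bare citation, and what you wrote is exactly the argument that citation points to.
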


For a set of edges $S$ (such as a path or a cycle), we write $w(S) = \sum_{e \in S} w(e)$. A negative cycle is a cycle $C$ with $w(C) < 0$.
For vertices $u, v$, we write $\dist_G(u, v)$ for the length of the shortest $u$-$v$-path. If there is a negative-weight cycle in some $u$-$v$-path, we set $\dist_G(u,v) = -\infty$, and if there is no path from $u$ to $v$ we set $\dist_G(u, v) = \infty$.

\begin{definition}[Balls]\label{def:balls}
    For a vertex $v$, and a nonnegative integer $r$, we denote the out-ball centered at $v$ with radius $r$ by~\makebox{$\Bout_G(v, r) = \set{u \in V(G) : \dist_G(v, u) \leq r}$}.
    Similarly, we denote the in-ball centered at $v$ with radius $r$ by \makebox{$\Bin_G(v, r) = \set{u \in V(G) : \dist_G(u, v) \leq r}$}.
    Further, we write $\partial\Bout_G(v, r) = \set{(u,w) \in E : u \in \Bout(v, r) \land w \notin \Bout(v, r)}$ to denote the boundary edges of an out-ball,
    and $\partial\Bin_G(v, r) = \set{(u,w) \in E : u \notin \Bin(v, r) \land w \in \Bin(v, r)}$ for an in-ball.
\end{definition}

\noindent In all these notations, we occasionally drop the subscript $G$ if it is clear from context.
\smallskip

The \emph{Single-Source Shortest Paths (SSSP)} problem is to compute the distances $\dist_G(s, v)$ for a designated source vertex $s \in V(G)$ to all other vertices $v \in V(G)$. When $G$ does not contain negative cycles, this is equivalent to compute a \emph{shortest path tree} from $s$ (that is, a tree in which every $s$-to-$v$ path is a shortest path in $G$). For graphs with \emph{nonnegative} edge weights, Dijkstra's classical algorithm solves the SSSP problem in near-linear time. We use the following result by Thorup, which replaces the $\log n$ overhead by $\log\log n$ (in the RAM model, see the paragraph on the machine model below).

\begin{restatable}[Dijkstra's Algorithm, \cite{Dijkstra59,Thorup03}]{lemma}{lemdijkstra} \label{lem:dijkstra}
In any directed graph $G$ with nonnegative edge weights, the SSSP problem can be solved in time $\Order(m + n \log\log n)$.
\end{restatable}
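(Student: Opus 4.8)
The plan is to run the textbook Dijkstra algorithm and simply swap out its priority queue for Thorup's integer priority queue~\cite{Thorup03}. First I would recall the structure of Dijkstra's algorithm on a directed graph with nonnegative (integer) weights: it maintains a priority queue of vertices keyed by tentative distances, repeatedly extracts the vertex $u$ of minimum key, finalizes $\dist_G(s,u)$, and relaxes each out-edge of $u$ (performing a decrease-key on the head vertex when a shorter tentative path is discovered). Over the whole execution this amounts to at most $n$ insertions, $n$ delete-min operations, and $m$ decrease-key operations; moreover the sequence of extracted keys is nondecreasing, so a monotone priority queue is sufficient if needed.

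The crux is then the performance of the priority queue. I would invoke Thorup's data structure~\cite{Thorup03}, which on integer keys in the word RAM supports insert and decrease-key in $\Order(1)$ amortized time and delete-min in $\Order(\log\log n)$ amortized time (a variant even achieves $\Order(\log\log C)$ where $C$ is the largest key, but $\Order(\log\log n)$ is all we need). All tentative distances are nonnegative integers bounded by $nC$, so every key occupies $\Order(1)$ machine words under the standard assumption that a word holds $\Omega(\log(nC))$ bits. Plugging the operation counts above into these per-operation costs gives total time $\Order(n\cdot 1 + m\cdot 1 + n\cdot\log\log n) = \Order(m + n\log\log n)$, as claimed. (Equivalently, this is precisely the single-source shortest paths bound established by Thorup~\cite{Thorup03}, so one may alternatively cite it as a black box.)

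I do not expect a genuine obstacle here: the lemma is essentially a black-box instantiation of Thorup's priority queue inside Dijkstra's algorithm. The only point that needs a word of care is the bit-length of the keys — the queue must handle distance values up to $nC$, hence the dependence on the word-RAM model flagged in the statement — but once the standard word-size assumption is in place, every priority-queue operation behaves as stated and the running time bound follows immediately.
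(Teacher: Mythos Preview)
Your proposal is correct and matches the intended argument: the paper does not give its own proof of this lemma but simply cites it as a known result from~\cite{Thorup03}, and your sketch---plugging Thorup's integer priority queue (constant-time \textsc{Insert}/\textsc{DecreaseKey}, $\Order(\log\log n)$-time \textsc{Delete}) into the standard Dijkstra loop---is exactly the underlying justification (cf.\ \cref{lem:thorup} in the appendix).
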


\begin{lemma}[Bellman-Ford's Algorithm, \cite{Shimbel55,Ford56,Bellman58,Moore59}] \label{lem:bellman-ford}
In any directed graph $G$, the SSSP problem can be solved in time $\Order(m n)$.
\end{lemma}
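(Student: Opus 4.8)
The plan is the textbook analysis of the Bellman--Ford algorithm, extended to also report the $-\infty$ distances required by the definition of SSSP. First I would set up the algorithm: maintain tentative distances $d(v)$ for all $v \in V$, initialized to $d(s) = 0$ and $d(v) = \infty$ for $v \neq s$, and perform $n-1$ rounds, where in each round we \emph{relax} every edge $(u,v) \in E$, i.e.\ update $d(v) \leftarrow \min\set{d(v),\, d(u) + w(u,v)}$. Each round iterates over the edge list and costs $\Order(m + n)$ time, so the $n-1$ rounds cost $\Order(mn)$ in total (under the standing assumption $m \ge n-1$; otherwise restrict first to the vertices reachable from $s$, which costs $\Order(m+n)$).

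For correctness when $G$ has no negative cycle, I would prove by induction on $i$ the invariant that after $i$ rounds, $d(v)$ is at most the minimum weight of any $s$-$v$-walk using at most $i$ edges, and equals the weight of such a walk (or is $\infty$ if no $s$-$v$-walk exists). The base case $i = 0$ is immediate, and the inductive step follows because relaxing the last edge of an optimal $(\le i)$-edge walk to $v$, starting from the value $d$ already holds for the penultimate vertex (an optimal $(\le i-1)$-edge walk by the inductive hypothesis), establishes the bound. Since $G$ has no negative cycle, every minimum-weight $s$-$v$-walk can be short-cut to a simple $s$-$v$-path of at most $n-1$ edges without increasing its weight; hence after $n-1$ rounds $d(v) = \dist_G(s,v)$ for all $v$, with unreachable vertices correctly retaining $d(v) = \infty$.

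To handle negative cycles and the resulting $-\infty$ distances, I would perform one additional relaxation round and let $R \subseteq V$ be the set of vertices $v$ whose estimate $d(v)$ strictly decreases during this round (equivalently, for which some edge $(u,v)$ still satisfies $d(u) + w(u,v) < d(v)$ after round $n-1$). Summing the inequalities $d(v) - d(u) \le w(u,v)$ around any negative cycle $C$ that is reachable from $s$ — all of whose vertices then have finite $d$ — yields $0 \le w(C) < 0$ unless some edge of $C$ is relaxable, so every such cycle contains a vertex of $R$; conversely, every vertex of $R$ is reachable from $s$ and lies on a path emanating from some negative cycle. Therefore the set of vertices reachable from $R$ in $G$ is \emph{exactly} $\set{v : \dist_G(s,v) = -\infty}$, while for every other vertex $v$ the invariant above still gives $d(v) = \dist_G(s,v)$ (with $d(v) = \infty$ encoding unreachability). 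I would thus run a single forward graph search from $R$ in time $\Order(m+n)$, output $-\infty$ for every reached vertex, and output $d(v)$ for all remaining vertices.

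The only point requiring care is this $-\infty$ bookkeeping: one must check both that round $n$ exposes \emph{every} negative cycle reachable from $s$ (so that $R$ hits all of them and the propagation reaches all of $\set{v : \dist_G(s,v) = -\infty}$) and that a vertex not reachable from any negative cycle has all of its $s$-walks short-cut to simple $s$-paths, so that its estimate is already exact and must not be overwritten. Both facts are standard; everything else — the per-round walk-length invariant and the $\Order(mn)$ running time from $\Order(n)$ relaxation passes of cost $\Order(m+n)$ each plus an $\Order(m+n)$ propagation step — is routine.
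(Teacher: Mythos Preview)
Your proof is correct and follows the standard textbook argument for Bellman--Ford, including the $-\infty$ propagation step. The paper itself does not prove this lemma---it is stated in the preliminaries as a classical result with citations to~\cite{Shimbel55,Ford56,Bellman58,Moore59}---so there is nothing to compare against; your write-up would serve perfectly well as a self-contained proof if one were desired.
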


\subparagraph{Potentials}
Let $G$ be a directed graph. We refer to functions $\phi : V(G) \to \Int$ as \emph{potential functions}. We write $G_\phi$ for the graph obtained from $G$ by changing the edge weights to~$w_\phi(u, v) = w(u, v) + \phi(u) - \phi(v)$.

\defequivalentgraphs*

\begin{lemma}[Johnson's Trick, \cite{Johnson77}] \label{lem:johnson}
Let $G$ be a directed graph, and let $\phi$ be an arbitrary potential function. Then $w_\phi(P) = w(P) + \phi(u) - \phi(v)$ for any $u$-$v$-path $P$, and~$w_\phi(C) = w(C)$ for any cycle $C$. It follows that $G$ and $G_\phi$ are equivalent.
\end{lemma}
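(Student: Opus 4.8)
The plan is to reduce both identities to a single telescoping computation and then read off equivalence from the definitions. First I would fix a $u$-$v$-path $P = (v_0, v_1, \dots, v_k)$ with $v_0 = u$ and $v_k = v$, and expand
\[
w_\phi(P) = \sum_{i=1}^{k} w_\phi(v_{i-1}, v_i) = \sum_{i=1}^{k} \bigl( w(v_{i-1}, v_i) + \phi(v_{i-1}) - \phi(v_i) \bigr).
\]
The weight terms sum to $w(P)$, while the potential terms telescope to $\phi(v_0) - \phi(v_k) = \phi(u) - \phi(v)$, which gives the claimed identity $w_\phi(P) = w(P) + \phi(u) - \phi(v)$. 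Specializing to a cycle $C$ we have $u = v$, so the correction $\phi(u) - \phi(v)$ vanishes and $w_\phi(C) = w(C)$.

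For the equivalence of $G$ and $G_\phi$ in the sense of \cref{def:equivalent-graphs}, condition~(2) is exactly the cycle identity just established. For condition~(1), I would observe that the path identity shows that, for any \emph{fixed} pair of endpoints $u, v$, passing from $G$ to $G_\phi$ shifts the weight of every $u$-$v$-path by the same additive constant $\phi(u) - \phi(v)$. Consequently the relative order of the weights of $u$-$v$-paths is unchanged, so a $u$-$v$-path has minimum weight in $G$ if and only if it has minimum weight in $G_\phi$; since this holds for every choice of $u, v$, it is precisely condition~(1).

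There is no real obstacle here: the statement is classical and the proof is a two-line telescoping argument. The only point that warrants a moment of care is that the additive shift in the path identity depends on the endpoints $u$ and $v$ and on nothing else, which is exactly what makes it harmless for shortest paths (the shift is constant within each fixed source--target pair) while it may change individual path weights globally.
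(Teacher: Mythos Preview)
Your proof is correct and is exactly the standard telescoping argument for Johnson's trick. The paper itself does not give a proof of this lemma at all---it simply states it with a citation to Johnson~\cite{Johnson77}---so there is nothing to compare against beyond noting that your argument is the classical one.
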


\begin{lemma}[\cite{Johnson77}]\label{cor:johnson-non}
Let $G$ be a directed graph without negative cycles and let $s \in V$ be a source vertex that can reach every other node. Then, for the potential $\phi$ defined as $\phi(v) = \dist_G(s,v)$, it holds that $w_{\phi}(e) \geq 0$ for all edges $e \in E$.
\end{lemma}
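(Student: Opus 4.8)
The plan is to derive the claim directly from the triangle inequality for shortest-path distances, with the two hypotheses serving only to guarantee that the potential $\phi$ takes finite integer values.

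First I would check that $\phi(v) = \dist_G(s,v)$ is well defined and finite for every $v \in V$. Since $s$ can reach every other node, there is at least one $s$-$v$-walk, so $\dist_G(s,v) < \infty$. Since $G$ has no negative cycle, no $s$-$v$-walk can be "shortened without bound'' by inserting repetitions of a negative cycle; more precisely, any $s$-$v$-walk can be shortcut to a simple $s$-$v$-path of weight no larger, and there are only finitely many simple paths, so $\dist_G(s,v) > -\infty$. Hence $\phi(v) \in \Int$.

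Next I would fix an arbitrary edge $e = (u,v) \in E$. Take a shortest $s$-$u$-path $P$ (which exists and has weight $\phi(u)$ by the previous step) and append the edge $e$; this yields an $s$-$v$-walk of weight $\phi(u) + w(e)$. By the definition of $\dist_G(s,v)$ as a minimum over all such walks, $\phi(v) = \dist_G(s,v) \le \phi(u) + w(e)$. Rearranging and using the definition $w_\phi(u,v) = w(u,v) + \phi(u) - \phi(v)$ gives $w_\phi(e) = w(e) + \phi(u) - \phi(v) \ge 0$, which is exactly the claim.

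The "main obstacle'', to the extent there is one, is purely the first step: without the no-negative-cycle assumption we could have $\phi(v) = -\infty$, and without reachability from $s$ we could have $\phi(v) = +\infty$, and in either case the manipulation $\phi(v) \le \phi(u) + w(e)$ would be vacuous or ill-formed (involving $\infty - \infty$). Once finiteness is in place, the remainder is a one-line application of the triangle inequality, and one could alternatively cite \cref{lem:johnson} to note that $G_\phi$ is equivalent to $G$ while the edge weights $w_\phi$ are precisely the "reduced costs'' that this lemma certifies to be nonnegative.
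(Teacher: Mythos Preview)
Your proof is correct and is exactly the standard triangle-inequality argument. The paper does not actually provide a proof of this lemma---it is stated with a citation to~\cite{Johnson77} and left unproved---so there is nothing further to compare.
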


\subparagraph{Machine Model}
We work in the standard word RAM model with word size $\Theta(\log n + \log M)$, where~$n$ is the number of vertices and $M$ is an upper bound on the largest edge weight in absolute value. That is, we assume that we can store vertex identifiers and edge weights in a single machine word, and perform basic operations in unit time.
\section{SSSP on Restricted Graphs} \label{sec:restricted-sssp}

In this section we give an efficient algorithm for SSSP on restricted graphs (recall~\cref{def:restricted}). Specifically, we prove the following theorem:

\begin{theorem}[Restricted SSSP] \label{thm:restricted-sssp}
In a restricted graph $G$ with source vertex $s \in V(G)$, we can compute a shortest path tree from $s$ in time $\Order((m + n \log \log n) \log^2 n)$ with constant error probability $\frac12$. (If the algorithm does not succeed, it returns \textsc{Fail}.)
\end{theorem}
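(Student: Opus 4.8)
We establish \cref{thm:restricted-sssp} by a recursive algorithm $\textsc{RSSSP}(G,\kappa)$ which, given a restricted graph $G$ with source $s$ and an integer $\kappa \ge \kappa(G)$, returns a shortest path tree from $s$; at the top level we invoke it with $\kappa = |V(G)|$, which is valid since any simple path has at most $|V(G)|-1$ edges. The base case is when $G$ has no negative edge (tested by a linear scan): then we simply run Thorup's algorithm (\cref{lem:dijkstra}) in time $\Order(m + n\log\log n)$. In the recursive step we call $\textsc{Decompose}(G,\kappa)$ (\cref{lem:decomposition}) to obtain $S \subseteq E(G)$, compute the SCCs $C_1,\dots,C_k$ of $G\setminus S$ (\cref{lem:scc}), and recursively solve $\textsc{RSSSP}(G[C_i\cup\{s\}],\kappa_i)$ for each $i$, where $\kappa_i = \kappa$ if $|C_i| \le \frac34|V(G)|$ and $\kappa_i = \lfloor\kappa/2\rfloor$ otherwise. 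Each $G[C_i\cup\{s\}]$ is again restricted (its cycles are cycles of $G$, its weights are inherited, and $s$ keeps its weight-$0$ out-edges), and by the Progress property $\kappa_i \ge \kappa(G[C_i\cup\{s\}])$ with high probability, so the invariant $\kappa_i \ge \kappa(\cdot)$ is preserved.

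Let $\phi(v) = \dist_{G[C_i\cup\{s\}]}(s,v)$ for $v \in C_i$ denote the recursively computed distances. We build a potential $\Psi$ making every edge \emph{not} in $S$ nonnegative in $G_\Psi$. By \cref{lem:johnson}, every edge with both endpoints in a common $C_i$ already has $w_\phi$-weight at least $0$; the remaining non-$S$ edges run between distinct components, and they go forward in the DAG obtained by contracting the $C_i$. We therefore contract this DAG, give each contracted edge the minimum $w_\phi$-weight among the $G\setminus S$ edges it represents, and compute a potential $\rho$ on the components as the shortest distances (in linear time, the graph being a DAG) from a virtual super-source joined to all components by weight-$0$ edges. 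Then $\Psi(v) = \phi(v) + \rho(C_i)$ for $v \in C_i$ makes all non-$S$ edges of $G_\Psi$ nonnegative, while edges of $S$ may remain arbitrarily negative. We feed $G_\Psi$ to \cref{lem:lazy-dijkstra}: a shortest path tree of $G_\Psi$ is also one of $G$, and by the Sparse Hitting property every shortest $s$-$v$ path of $G$ meets $S$ only $\Order(\log n)$ times in expectation, so the tree is computed in expected time $\Order((m + n\log\log n)\log n)$.

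For the running time, consider the potential $\mu(G,\kappa) = |V(G)|\cdot(\kappa+1)$. By Progress, recursing into $C_i$ replaces $\mu$ by either $|C_i|(\kappa+1) \le \frac34\mu$ or $|C_i|(\lfloor\kappa/2\rfloor+1) \le \frac78\mu$ (the latter using $\kappa \ge 1$), so $\mu$ shrinks by a constant factor per recursion level; since $\mu = n(n+1) = \Order(n^2)$ at the root, the recursion has depth $\Order(\log n)$ (and in particular eventually $\kappa=0$, which forces no negative edges and hence the base case). At any fixed recursion level the components partition $V(G)$ and the added source edges contribute $\Order(n)$ extra edges overall, so all subproblems at one level have total size $\Order(m + n)$ and cost $\Order((m + n\log\log n)\log n)$ in expectation for the $\textsc{Decompose}$ and \cref{lem:lazy-dijkstra} calls. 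Summing over the $\Order(\log n)$ levels yields an expected running time of $\Order((m + n\log\log n)\log^2 n)$.

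Finally, to meet the ``constant error probability, otherwise $\textsc{Fail}$'' form: the only sources of error are the high-probability Progress events (union-bounded over the $\poly(n)$ recursion nodes) and the expectation-only time bounds. We therefore run the whole recursion under a global time budget equal to a suitable constant times the expected bound above, returning $\textsc{Fail}$ if it is exceeded (probability at most, say, $\frac14$ by Markov's inequality), and we verify the returned tree in $\Order(m)$ time, returning $\textsc{Fail}$ if it is not a valid shortest path tree. This outputs a correct tree with probability at least $\frac12$. \textbf{The main obstacle} I expect is exactly this bookkeeping: it is essential that Progress reduces $|V(G)|$ and $\kappa$ \emph{simultaneously}, so that the recursion depth is a single $\Order(\log n)$ rather than $\Order(\log^2 n)$, and the expectation-only guarantees of \cref{lem:decomposition} and \cref{lem:lazy-dijkstra} must be combined carefully with the high-probability Progress event and the desired running-time/error guarantee (handled by the time-budget-plus-verification wrapper). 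A secondary subtlety is that the potential must be patched along the contracted DAG, since the within-component distances $\phi$ alone leave the forward crossing edges possibly negative.
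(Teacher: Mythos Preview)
Your proof is correct and follows essentially the same approach as the paper: recursive decomposition via \cref{lem:decomposition}, within-component potentials from the recursive calls, a DAG-level potential fix (the paper cites \cref{lem:dag-edges}; you give an equivalent shortest-path-based construction), a final call to \cref{lem:lazy-dijkstra}, and the Monte Carlo guarantee via a global time budget. The only cosmetic differences are your base case (``no negative edges'' instead of $\kappa\le 2$) and the added verification step; neither changes the argument or the bound.
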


We develop this algorithm in two steps: First, we prove our decomposition scheme for restricted graphs (\cref{sec:restricted-sssp:sec:decomposition}) and then we use the decomposition scheme to build an SSSP algorithm for restricted graphs (\cref{sec:restricted-sssp:sec:main-alg}).

\subsection{Decomposition for Restricted Graphs} \label{sec:restricted-sssp:sec:decomposition}
In this section, we prove the decomposition lemma:

\lemdecomposition*

\medskip
For the proof, we introduce some notation. Let $G_{\ge 0}$ denote the graph obtained by replacing negative edge weights by $0$ in the graph $G$.
A vertex $v$ is \emph{out-heavy} if $|\Bout_{G_{\ge 0}}(v, \frac\kappa4)| > \frac {n}2$ and \emph{out-light} if $|\Bout_{G_{\ge 0}}(v, \frac\kappa4)| \leq \frac{3n}4$. Note that there can be vertices which are both out-heavy and out-light. We similarly define \emph{in-light} and \emph{in-heavy} vertices with ``$\Bin_{G_{\ge 0}}$'' in place of ``$\Bout_{G_{\ge 0}}$''.

\begin{lemma}[Heavy-Light Classification] \label{lem:heavy-light}
There is an algorithm which, given a directed graph~$G$, labels every vertex correctly as either in-light or in-heavy (vertices which are both in-light and in-heavy may receive either label). The algorithm runs in time $\Order((m + n \log \log n) \log n)$ and succeeds with high probability.
\end{lemma}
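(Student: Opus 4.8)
The plan is to reduce the classification to \emph{estimating}, for every vertex $v$, the size of the in-ball $\Bin_{G_{\ge 0}}(v, \kappa/4)$ up to a small constant multiplicative error. This suffices because the two thresholds $\frac n2$ and $\frac{3n}4$ that define ``in-heavy'' and ``in-light'' differ by a constant factor, so any vertex is in-heavy or in-light (or both), and we only need to decide on which side of roughly $\frac58 n$ the ball size lies. The key observation is that $G_{\ge 0}$ has nonnegative edge weights and that $u \in \Bin_{G_{\ge 0}}(v, r)$ if and only if $v \in \Bout_{G_{\ge 0}}(u, r)$; hence a single run of Dijkstra's algorithm from a source $u$ in $G_{\ge 0}$ (via \cref{lem:dijkstra}) tells us, for \emph{every} $v$ simultaneously, whether $u$ belongs to the in-ball of $v$ of radius $r := \lfloor\kappa/4\rfloor$.

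Concretely, first I would sample a set $R \subseteq V(G)$ by including each vertex independently with probability $p = \frac{c\log n}{n}$ for a sufficiently large constant $c$; a Chernoff bound gives $|R| = \Order(\log n)$ with high probability (and we simply restart in the rare failure event). For each $u \in R$, run Dijkstra's algorithm from $u$ in $G_{\ge 0}$ and mark every vertex $v$ with $\dist_{G_{\ge 0}}(u,v) \le r$. Accumulating these marks over all $u \in R$ yields, for every vertex $v$, the count $X_v := |R \cap \Bin_{G_{\ge 0}}(v, r)|$. Finally, label $v$ as in-heavy if $X_v > \frac58 c\log n$, and as in-light otherwise.

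For correctness, fix a vertex $v$ and set $b := |\Bin_{G_{\ge 0}}(v, r)|$. Since the events $u \in R$ are independent across $u$, the variable $X_v$ is a sum of independent Bernoulli variables with mean $\Ex(X_v) = p\,b$. If $v$ is not in-light then $b > \frac{3n}4$, so $\Ex(X_v) > \frac34 c\log n$, and a Chernoff bound (with $c$ large) gives $X_v > \frac58 c\log n$ with probability $1 - 1/\poly(n)$; symmetrically, if $v$ is not in-heavy then $b \le \frac n2$, so $\Ex(X_v) \le \frac12 c\log n$ and $X_v \le \frac58 c\log n$ with probability $1 - 1/\poly(n)$. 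A vertex that is simultaneously in-heavy and in-light (i.e.\ $\frac n2 < b \le \frac{3n}4$) may receive either label, so any outcome is correct for it. A union bound over the $n$ vertices shows the whole labelling is correct with high probability. For the running time: computing $G_{\ge 0}$ and $R$ and reading off the labels takes $\Order(n+m)$; we perform $|R| = \Order(\log n)$ runs of Dijkstra's algorithm, each in time $\Order(m + n\log\log n)$ by \cref{lem:dijkstra}, plus $\Order(n)$ bookkeeping per run; in total $\Order((m + n\log\log n)\log n)$.

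The only mildly delicate point is calibrating the sampling rate $p$ and the decision threshold so that the constant multiplicative gap between $\frac n2$ and $\frac{3n}4$ translates, through Chernoff, into a $1-1/\poly(n)$ per-vertex error probability that survives the union bound over all $n$ vertices. This forces us to sample $\Theta(\log n)$ vertices (rather than a constant number), which is precisely the source of the $\log n$ overhead in the stated running time; everything else is routine.
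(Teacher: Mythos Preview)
Your proposal is correct and is essentially the paper's own approach: the paper likewise samples $\Theta(\log n)$ vertices, runs Dijkstra (with Thorup's queue) from each to determine membership in the relevant balls of all vertices simultaneously, and uses Chernoff to show the resulting counts separate the two thresholds with high probability (see \cref{lem:approx-ball-size} and \cref{lem:ldd-heavy-light}). The only cosmetic differences are that the paper samples a fixed number of vertices with replacement rather than Bernoulli sampling, and phrases the guarantee as an additive $\varepsilon n$ estimate of ball sizes before thresholding.
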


Note that by applying this lemma to the graph $G^{rev}$ obtained by flipping the edge orientations, we can similarly classify vertices into out-light and out-heavy. We omit the proof for now as it follows easily from \cref{lem:approx-ball-size} which we state and prove in \cref{sec:ldd}.

We are ready to state the decomposition algorithm: First, label each vertex as out-light or out-heavy and as in-light or in-heavy using the previous lemma. Then, as long as $G$ contains a vertex $v$ which is labeled out-light or in-light (say it is out-light), we will carve out a ball around $v$. To this end, we sample a radius $r$ from the geometric distribution~$\Geom(20 \log n / \kappa)$, we cut the edges $\partial\Bout_{G_{\ge 0}}(v, r)$ (that is, the set of edges leaving $\Bout_{G_{\ge 0}}(v, r)$) and we remove all vertices in $\Bout_{G_{\ge 0}}(v, r)$ from the graph. We summarize the procedure in \cref{alg:decompose}. In what follows, we prove correctness of this algorithm.

\begin{algorithm}[t]
\caption{The graph decomposition. This algorithm $\textsc{Decompose}(G, \kappa)$ computes a subset of edges $S \subseteq E(G)$ satisfying the properties in \cref{lem:decomposition}.} \label{alg:decompose}
\smallskip
\begin{algorithmic}[1]
\Procedure{Decompose}{$G, \kappa$}
    \State Let $L^{in} \subseteq V(G)$ be the vertices labeled as in-light by \cref{lem:heavy-light} on $G$
    \State Let $L^{out} \subseteq V(G)$ be the vertices labeled as out-light by \cref{lem:heavy-light} on $G^{rev}$
    \State $S \gets \emptyset$
    \While{there is $v \in V(G) \cap L^{out}$} \label{alg:decompose:line:out-loop}
        \State Sample $r \sim \Geom(20 \log n / \kappa)$
        \State $S \gets S \cup \partial\Bout_{G_{\ge 0}}(v, r)$ \label{alg:decompose:line:out-cut}
        \State $G \gets G \setminus \Bout_{G_{\ge 0}}(v, r)$
    \EndWhile
    \smallskip
    \While{there is $v \in V(G) \cap L^{in}$} \label{alg:decompose:line:in-loop}
        \State Sample $r \sim \Geom(20 \log n / \kappa)$
        \State $S \gets S \cup \partial \Bin_{G_{\ge 0}}(v, r)$ \label{alg:decompose:line:in-cut}
        \State $G \gets G \setminus \Bin_{G_{\ge 0}}(v, r)$
    \EndWhile
    \State \Return $S$
\EndProcedure
\end{algorithmic}
\end{algorithm}

\begin{lemma}[Sparse Hitting of \cref{alg:decompose}] \label{lem:sparse-hitting}
Let $P$ be a shortest $s$-$v$-path in $G$ and let $S$ be the output of $\textsc{Decompose}(G, \kappa)$. Then $\Ex(|P \cap S|) \leq \Order(\log n)$.
\end{lemma}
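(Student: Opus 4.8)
The plan is to fix a shortest $s$-$v$-path $P = (s = u_0, u_1, \dots, u_\ell = v)$ and bound the expected number of edges of $P$ that land in $S$. An edge of $P$ enters $S$ only when some ball $\Bout_{G_{\ge 0}}(v', r)$ (or $\Bin_{G_{\ge 0}}(v', r)$) is carved out and the edge happens to lie on the boundary of that ball. The key quantity to control is how many \emph{vertices} of $P$ can possibly lie inside any ball of $G_{\ge 0}$-radius roughly $\kappa/4$ plus the geometric slack. Since $G$ is restricted, every negative edge has weight $-1$ and every other edge has weight $\ge 0$; the path $P$ is a shortest $s$-path, hence in particular has nonpositive weight on each prefix (actually $w(P) \le 0$ follows from the source edge of weight $0$ into $v$), so by definition of $\kappa(G) \le \kappa$ the path $P$ contains at most $\kappa$ negative edges. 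Consequently the $G_{\ge 0}$-length of $P$ (which only counts the nonnegative edges, since negative edges become $0$) equals the total weight of $P$ with the at most $\kappa$ unit decrements added back, so the $G_{\ge 0}$-distance between the first and last vertex of any subpath of $P$ is at most the number of $\ge 0$-weight edges on it. The crucial structural fact I would isolate as the first step: any out-ball or in-ball of radius $r$ in $G_{\ge 0}$ intersects $P$ in a contiguous-ish stretch whose number of edges is at most $O(r + \kappa)$ — more precisely, if $u_i, u_j \in \Bout_{G_{\ge 0}}(v', r)$ with $i < j$, then $\dist_{G_{\ge 0}}(u_i, u_j) \le$ (number of nonnegative edges among $u_i, \dots, u_j$), which combined with the $\le \kappa$ negative edges bounds the segment length.

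The second step is the standard ball-carving boundary argument, now with the geometric radius. When we process a light vertex $v'$ and sample $r \sim \Geom(20 \log n / \kappa)$, an edge $(u_i, u_{i+1})$ of $P$ lands in $\partial\Bout_{G_{\ge 0}}(v', r)$ only if $u_i$ is in the ball and $u_{i+1}$ is not, i.e.\ $r$ falls exactly into the ``annulus'' at the $G_{\ge 0}$-distance of $u_i$ from $v'$. By the memorylessness of the geometric distribution, conditioned on the ball reaching $u_i$ at all, the probability that it stops precisely at the next positive-weight increment (thereby cutting $(u_i,u_{i+1})$) is at most $1 - e^{-p} \le p$ where $p = 20\log n/\kappa$; and an edge of weight $0$ can never be a boundary edge in this sense (it doesn't change the $G_{\ge 0}$-distance), so only the $\ge 0$-weight edges — of which $P$ has at most $\ell$, but more usefully the segment inside any one ball contributes at most $O(r)$ of them in expectation — are at risk. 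Summing: the expected number of edges of $P$ cut by a single carving step is $O(p \cdot (\text{expected } G_{\ge 0}\text{-length of } P \cap \text{ball}))$, and since across the whole run the balls carved are vertex-disjoint, the total $G_{\ge 0}$-length of all segments $P \cap (\text{ball})$ that consist of positive edges telescopes to at most the $G_{\ge 0}$-length of $P$, which is $\le w(P) + \kappa \le \kappa$ (using $w(P) \le 0$). Therefore $\Ex(|P \cap S|) \le O(p \cdot \kappa) = O(\log n)$.

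More carefully, to make the telescoping rigorous I would argue per-vertex rather than per-ball: for each positive-weight edge $e = (u_i, u_{i+1})$ of $P$, let it be ``exposed'' in the unique carving step (if any) whose ball contains $u_i$; conditioned on reaching $u_i$, the geometric radius overshoots the increment of $w_{\ge 0}(e)$ with probability $\le 1 - (1-p)^{w_{\ge 0}(e)} \le p \cdot w_{\ge 0}(e)$ by a union bound over the $w_{\ge 0}(e)$ unit steps (and the negative edges, contributing $0$, are never cut as boundary edges of a $G_{\ge 0}$-ball). Summing over all edges of $P$ gives $\Ex(|P \cap S|) \le p \sum_{e \in P} w_{\ge 0}(e) = p \cdot w_{G_{\ge 0}}(P) \le p \cdot (w(P) + \kappa) \le p \kappa = 20 \log n = O(\log n)$, where we used that negative edges contribute $-1$ each in $w(P)$ but $0$ in $w_{G_{\ge 0}}(P)$, there are at most $\kappa$ of them, and $w(P) \le 0$.

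The main obstacle I anticipate is the bookkeeping around the case analysis of out-balls versus in-balls and the fact that a boundary edge of an \emph{in}-ball points the ``wrong'' way relative to $P$: for $\partial\Bin_{G_{\ge 0}}(v', r)$ an edge $(u_i, u_{i+1})$ is cut when $u_{i+1}$ is in the in-ball and $u_i$ is not, so the memorylessness argument must be run from the $v$-end of $P$ backward, and one must be careful that the geometric radius again only ``sees'' the positive-weight edges. A second subtlety is ensuring the union bound $1-(1-p)^{k} \le pk$ is applied to the right quantity (the $G_{\ge 0}$-weight, not the number of edges) and that edges already removed in an earlier carving step are simply not counted — which is fine since once $u_i$ is removed the edge $e$ is gone from the current graph and can never be added to $S$ afterwards. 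Neither issue is deep, but both need the careful observation that $w_{\ge 0}$ is exactly what the geometric radius measures, and that is the conceptual heart of the bound.
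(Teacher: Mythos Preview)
Your proposal is correct and, in its final ``more carefully'' paragraph, takes essentially the same approach as the paper: bound $\Pr(e\in S)\le p\cdot w_{G_{\ge 0}}(e)$ per edge via the memorylessness of the geometric radius, then sum over $P$ using $w_{G_{\ge 0}}(P)\le w_G(P)+\kappa\le\kappa$. The structural digression in your first paragraph (about contiguous stretches of $P$ inside a ball) is not needed and the paper does not use it; your third paragraph is the clean argument and matches the paper almost line for line, including the observation that weight-$0$ edges in $G_{\ge 0}$ are never boundary edges and the symmetric treatment of in-balls.
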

\begin{proof}
Focus on any edge $e = (x, y) \in E(G)$. We analyze the probability that $e \in S$. We first analyze the probability of $e$ being included into $S$ in \cref{alg:decompose:line:out-cut} (and the same analysis applies to the case where the edge is included in \cref{alg:decompose:line:in-cut}). Focus on any iteration of the loop in \cref{alg:decompose:line:out-loop} for some out-light vertex $v$. There are three options:
\smallskip
\begin{itemize}
\item $x, y \not\in \Bout_{G_{\ge 0}}(v, r)$: The edge $e$ is not touched in this iteration. It might or might not be included in later iterations.
\item $x \in \Bout_{G_{\ge 0}}(v, r)$ and $y \not\in \Bout_{G_{\ge 0}}(v, r)$: The edge $e$ is contained in $\partial \Bout_{G_{\ge 0}}(v, r)$ and thus definitely included into $S$.
\item $y \in \Bout_{G_{\ge 0}}(v, r)$: The edge $e$ is definitely not included into $S$. Indeed, $e \not\in \partial \Bout_{G_{\ge 0}}(v, r)$, so we do not include $e$ into $S$ in this iteration. Moreover, as we remove $y$ from $G$ after this iteration, we will never consider the edge $e$ again.
\end{itemize}
\smallskip
Recall that the radius $r$ is sampled from the geometric distribution $\Geom(p)$ for $p := 20 \log n/\kappa$. Therefore, we have that
\begin{gather*}
    \Pr(e \in S) \leq \max_{v \in V} \; \Pr_{r \sim \Geom(p)}(y \not\in \Bout_{G_{\ge 0}}(v, r) \mid x \in \Bout_{G_{\ge 0}}(v, r)) \\
    \qquad= \max_{v \in V} \; \Pr_{r \sim \Geom(p)}(r < \dist_{G_{\ge 0}}(v, y) \mid r \geq \dist_{G_{\ge 0}}(v, x)) \\
    \qquad\leq \max_{v \in V} \; \Pr_{r \sim \Geom(p)}(r < \dist_{G_{\ge 0}}(v, x) + w_{G_{\ge 0}}(e) \mid r \geq \dist_{G_{\ge 0}}(v, x))
\intertext{By the memoryless property of geometric distributions, we may replace $r$ by the (nonnegative) random variable $r' := r-\dist_{G_{\ge 0}}(v, x)$:}
    \qquad= \max_{v \in V} \; \Pr_{r' \sim \Geom(p)}(r' < w_{G_{\ge 0}}(e)) \\
    \qquad= \Pr_{r' \sim \Geom(p)}(r' < w_{G_{\ge 0}}(e)) \\
    \qquad\leq p \cdot w_{G_{\ge 0}}(e).
\end{gather*}
The last inequality follows since we can interpret $r' \sim \Geom(p)$ as the number of coin tosses until we obtain heads, where each toss is independent and lands heads with probability $p$. Thus, by a union bound, $\Pr(r' < w_{G_{\ge 0}}(e))$ is upper bounded by the probability that at least one of $w_{G_{\ge 0}}(e)$ coin tosses lands heads.

Now consider a shortest $s$-$v$-path $P$ in $G$. Recall that $w_G(P) \leq 0$, since $G$ is a restricted graph. Hence, $P$ contains at most $\kappa(G) \leq \kappa$ edges with negative weight (i.e., with weight exactly $-1$). It follows that $w_{G_{\ge 0}}(P) \leq \kappa$ and thus finally:
\begin{equation*}
    \Ex(|P \cap S|) = \sum_{e \in P} \Pr(e \in S) = \sum_{e \in P} p \cdot w_{G_{\ge 0}}(e) \leq p \cdot w_{G_{\ge 0}}(P) \leq p \kappa = \Order(\log n). \qedhere
\end{equation*}
\end{proof}

In what follows, we will need the following lemma.

\begin{lemma} \label{lem:closed-walk}
Let $G$ be a directed graph. Then $\min_{C} \bar w(C) = \min_Z \bar w(Z)$ where $C$ ranges over all cycles and $Z$ ranges over all \emph{closed walks} in $G$.
\end{lemma}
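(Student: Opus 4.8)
The plan is to prove two inequalities. One direction is immediate: every cycle is in particular a closed walk, so $\min_Z \bar w(Z) \leq \min_C \bar w(C)$ (where the minimum over $Z$ is understood over closed walks of positive length, which is fine since edge-free closed walks are vacuous). For the reverse inequality, I would take a closed walk $Z$ achieving the minimum mean weight $\bar w(Z) = w(Z)/|Z|$ and show that it can be decomposed into cycles whose averaged mean cannot all exceed $\bar w(Z)$; hence one of those cycles $C$ satisfies $\bar w(C) \leq \bar w(Z)$, giving $\min_C \bar w(C) \leq \min_Z \bar w(Z)$.

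Concretely, the key step is the following standard \emph{cycle decomposition} of a closed walk. Given a closed walk $Z = (v_0, v_1, \dots, v_k = v_0)$, walk along it and maintain the set of visited vertices; the first time a vertex repeats, we have isolated a (simple) cycle $C_1$, which we excise from $Z$, leaving a strictly shorter closed walk; iterating, we obtain cycles $C_1, \dots, C_t$ (each a genuine cycle, i.e.\ of length $\geq 1$) with $\sum_i w(C_i) = w(Z)$ and $\sum_i |C_i| = |Z|$. (One should note the multigraph convention: a "closed walk" may traverse an edge or revisit a vertex multiple times, and the excised pieces are closed sub-walks, which one further decomposes recursively; ultimately the atoms are simple cycles. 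It is cleanest to phrase this as: any closed walk in which every edge-length is $\geq 1$ edge-disjointly decomposes into simple cycles, proved by induction on $|Z|$ — find any repeated vertex, split off the loop between its two occurrences, recurse on both pieces.) Then
\begin{equation*}
    \min_i \bar w(C_i) = \min_i \frac{w(C_i)}{|C_i|} \leq \frac{\sum_i w(C_i)}{\sum_i |C_i|} = \frac{w(Z)}{|Z|} = \bar w(Z),
\end{equation*}
where the inequality is the elementary "mediant" fact that a weighted average of the ratios $w(C_i)/|C_i|$ with weights $|C_i| > 0$ is at least the minimum ratio. Choosing $C$ to be the minimizing $C_i$ completes the argument.

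The only real subtlety — and the step I would be most careful about — is the base case and the bookkeeping in the decomposition when $Z$ is not already simple: one must make sure that excising a subcycle always leaves a valid (nonempty or empty) closed walk and that the recursion terminates, and that degenerate length-$0$ pieces never arise (they can't, since a repeated vertex with no edge between its two occurrences would mean the two occurrences coincide in the walk). Everything else is routine. I would state the decomposition as a one-line inductive claim and then chain the two displayed inequalities; the lemma follows.
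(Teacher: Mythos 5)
Your proposal is correct. The direction $\min_Z \bar w(Z) \leq \min_C \bar w(C)$ is trivial, and your decomposition of an arbitrary closed walk into simple cycles with $\sum_i w(C_i) = w(Z)$, $\sum_i |C_i| = |Z|$, followed by the mediant inequality $\min_i w(C_i)/|C_i| \leq \bigl(\sum_i w(C_i)\bigr)/\bigl(\sum_i |C_i|\bigr)$, is a valid and standard argument; note that it in fact shows $\bar w(Z) \geq \min_C \bar w(C)$ for \emph{every} closed walk $Z$, so you never need to worry about whether a minimizing $Z$ exists. The paper takes a slightly different, more extremal route that sidesteps exactly the bookkeeping you flag as the delicate part: it picks a closed walk $Z$ attaining $z = \min_Z \bar w(Z)$ with the \emph{fewest edges}, and if $Z$ is not already a cycle it splits $Z$ just once at a repeated vertex into two closed walks $Z_1, Z_2$; minimality forces $\bar w(Z_1), \bar w(Z_2) > z$, and additivity of weight and length then gives $z|Z| = w(Z_1) + w(Z_2) > z|Z_1| + z|Z_2| = z|Z|$, a contradiction. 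So the paper needs only a single split plus a minimal-counterexample argument, whereas you carry out the full recursive decomposition and an averaging inequality; your version is more constructive (it actually exhibits the cycle $C$ with $\bar w(C) \leq \bar w(Z)$), at the cost of the termination/degeneracy checks you correctly identify, all of which do go through.
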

\begin{proof}
Write $c = \min_C \bar w(C)$ and $z = \min_Z \bar w(Z)$. It suffices to prove that $c \leq z$. Take the closed walk $Z$ witnessing $z$ with the minimum number of edges. If $Z$ is a cycle, then we clearly have $c \leq z$. Otherwise, $Z$ must revisit at least one vertex and can therefore be split into two closed walks $Z_1, Z_2$. By the minimality of $Z$ we have $\bar w(Z_1), \bar w(Z_2) > z$. But note that
\begin{equation*}
    z \cdot |Z| = w(Z) = w(Z_1) + w(Z_2) > z \cdot |Z_1| + z \cdot |Z_2| = z \cdot |Z|,
\end{equation*}
a contradiction.
\end{proof}

\begin{lemma}[Progress of \cref{alg:decompose}] \label{lem:progress}
Let $S$ be the output of $\textsc{Decompose}(G, \kappa)$. Then, with high probability, any strongly connected component $C$ in $G \setminus S$ satisfies (i) $|C| \leq \frac34 |V(G)|$ or (ii)~$\kappa(G[C]) \leq \frac\kappa2$.
\end{lemma}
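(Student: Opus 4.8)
The plan is to classify each strongly connected component $C$ of $G \setminus S$ as one of two types and treat each separately. Write $n = |V(G)|$ and let $B_1, \dots, B_q$ be the balls carved by the algorithm in processing order (first the out-balls of the first loop, then the in-balls of the second); recall that $B_i$ is carved out of $G \setminus (B_1 \cup \dots \cup B_{i-1})$, so the $B_i$ are pairwise disjoint, and let $R = V(G) \setminus (B_1 \cup \dots \cup B_q)$ be the vertices that are never removed. I would condition on two high-probability events: (a) every sampled radius satisfies $r \le \kappa/4$, which follows from a union bound over the at most $n$ iterations since $\Pr_{r \sim \Geom(20 \log n/\kappa)}(r > \kappa/4) \le n^{-\Omega(1)}$; and (b) the heavy-light classification of \cref{lem:heavy-light} is correct. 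Then every SCC $C$ either meets some $B_i$ (call it Type~1) or satisfies $C \subseteq R$ (Type~2), and the goal is to show Type~1 forces alternative~(i) and Type~2 forces alternative~(ii).

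For a Type~1 component I would take $i$ minimal with $C \cap B_i \ne \emptyset$, so that $C$ avoids $B_1, \dots, B_{i-1}$; say $B_i$ is an out-ball around $v_i$ (the in-ball case being symmetric). The edges cut when carving $B_i$ are precisely the edges of $G \setminus (B_1 \cup \dots \cup B_{i-1})$ leaving $B_i$, so in $G \setminus S$ every out-edge of a vertex of $B_i$ re-enters $B_i \cup B_1 \cup \dots \cup B_{i-1}$. Since $C$ is strongly connected in $G \setminus S$ and avoids the earlier balls, a short induction on reachability inside $C$ shows $C \subseteq B_i$. Because deleting vertices only increases distances and $r_i \le \kappa/4$ on event~(a), $B_i \subseteq \Bout_{G_{\ge 0}}(v_i, \kappa/4)$; and since $v_i \in L^{out}$ it is genuinely out-light on event~(b), so $|\Bout_{G_{\ge 0}}(v_i, \kappa/4)| \le \frac34 n$. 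Chaining these inclusions gives $|C| \le \frac34 |V(G)|$, i.e.\ alternative~(i).

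For a Type~2 component, note that a vertex of $R$ is present at the end of both loops, hence lies in neither $L^{out}$ nor $L^{in}$ (otherwise the corresponding loop would not have terminated), so it is labeled out-heavy and in-heavy and — on event~(b) — is genuinely both. I would then argue by contradiction: suppose $\kappa(G[C]) > \kappa/2$, witnessed by a simple path $P$ inside $G[C]$ from $x$ to $y$ with $w_G(P) \le 0$ and more than $\kappa/2$ negative edges, so $|P| > \kappa/2$. As $x$ is in-heavy and $y$ is out-heavy, the balls $\Bin_{G_{\ge 0}}(x, \kappa/4)$ and $\Bout_{G_{\ge 0}}(y, \kappa/4)$ each contain more than $n/2$ vertices and therefore share some vertex $z$. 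Splicing a shortest $G_{\ge 0}$-path from $z$ to $x$, then $P$, then a shortest $G_{\ge 0}$-path from $y$ to $z$ produces a closed walk $Z$ in $G$. Since every $G$-weight is at most the corresponding $G_{\ge 0}$-weight, the two spliced pieces have $w_G$-weight at most $\kappa/4$ each, so $w_G(Z) \le \kappa/4 + 0 + \kappa/4 = \kappa/2$, while $|Z| \ge |P| > \kappa/2$. But $G$ is restricted, so by \cref{lem:closed-walk} together with the minimum-cycle-mean bound $\bar w(Z) \ge 1$, i.e.\ $w_G(Z) \ge |Z| > \kappa/2$ — a contradiction (up to rounding of $\kappa/2$ and $\kappa/4$, which is harmless since all quantities are integral). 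Hence $\kappa(G[C]) \le \kappa/2$, which is alternative~(ii).

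I expect the genuinely delicate step to be Type~2: the key move is to ``close up'' a hypothetical deep path $P$ into a cheap closed walk by exploiting the forced overlap of the in-ball of its start and the out-ball of its end, and then to invoke the minimum-cycle-mean bound via \cref{lem:closed-walk}. The main points to be careful about are (1)~that the balls live in $G_{\ge 0}$ whereas the cycle-mean argument lives in $G$, bridged by the observation that $G$-weights never exceed $G_{\ge 0}$-weights, and (2)~keeping track of which graph each $B_i$ was carved from, handled by monotonicity of distances under vertex deletion; the probabilistic bound on the radii and the reachability bookkeeping in Type~1 should be routine.
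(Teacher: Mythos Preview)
Your proposal is correct and follows essentially the same approach as the paper: condition on the heavy/light classification and the radius bound $r \le \kappa/4$, show that each carved ball has size at most $\tfrac34 n$ via the light property, and for the residual component close up a hypothetical long nonpositive path into a closed walk of mean weight below $1$ using that both endpoints are in- and out-heavy, contradicting the restricted-graph assumption via \cref{lem:closed-walk}. The only cosmetic difference is that the paper first peels the weight-$0$ edge $(s,u)$ off the witnessing path in $G[C\cup\{s\}]$ (matching the statement of \cref{lem:decomposition}) before running your overlap argument, whereas you work directly with a path inside $G[C]$; the closing-up step and the arithmetic are identical.
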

\begin{proof}
Throughout, condition on the event that the heavy-light classification was successful (which happens with high probability). Observe that whenever we carve out a ball $\Bout_{G_{\ge 0}}(v, r)$ and include its outgoing edges~$\partial\Bout_{G_{\ge 0}}(v, r)$ into $S$, then any two vertices $x \in \Bout_{G_{\ge 0}}(v, r)$ and $y \not\in \Bout_{G_{\ge 0}}(v, r)$ cannot be part of the same strongly connected component in $G \setminus S$ (as there is no path from $x$ to $y$). The same argument applies to $\Bin_{G_{\ge 0}}(v, r)$.

Therefore, there are only two types of strongly connected components: (i) Those contained in $\Bout_{G_{\ge 0}}(v, r)$ or $\Bin_{G_{\ge 0}}(v, r)$, and (ii) those in the remaining graph after it no longer contains light vertices. We argue that each component of type (i) satisfies that $|C| \leq \frac34 |V(G)|$ (with high probability) and that each component of type (ii) satisfies $\kappa(G[C]) \leq \frac\kappa2$.

In case (i) we have $|C| \leq |\Bout_{G_{\ge 0}}(v, r)|$. Since $v$ is out-light, it follows that $|C| \leq \frac34 |V(G)|$ whenever $r \leq \frac\kappa4$. This event happens with high probability as:
\begin{equation*}
    \Pr_{r \sim \Geom(20 \log n / \kappa)}\left(r > \frac\kappa4\right)
        \leq \left(1 - \frac{20 \log n}{\kappa}\right)^{\frac\kappa4}
        \leq \exp(-5 \log n) \leq n^{-5}.
\end{equation*}
The number of iterations is bounded by $n$, thus by a union bound we never have $r > \frac\kappa4$ with probability at least $1 - n^{-4}$.
A similar argument applies if we carve $\Bin_{G_{\ge 0}}(v, r)$ when $v$ is in-light.

Next, focus on case (ii). Let $C$ be a strongly connected component in the remaining graph~$G$ after carving out all balls centered at light vertices. Suppose that $\kappa(G[C]) > \frac\kappa2$. We will construct a closed walk $Z$ in $G$ with mean weight $\bar w(Z) < 1$, contradicting the assumption that $G$ is restricted by \cref{lem:closed-walk}. Let~$P$ be the $s$-$v$-path in $G[C \cup \set s]$ of nonpositive weight witnessing the largest number of negative edges (i.e., the path that witnesses $\kappa(G[C \cup \set s])$), and let $u$ be the first vertex (after $s$) on that path $P$. Let~$P_1$ be the $u$-$v$-path obtained by removing the $s$-$u$-edge from $P$. Since the $s$-$u$-edge has weight~$0$, we have that $w(P_1) \leq 0$ and that~$P_1$ contains more than $\frac\kappa2$ negative-weight edges. Since $u, v$ are both out-heavy and in-heavy vertices in the original graph $G$, we have that \makebox{$|\Bout_{G_{\ge 0}}(v, \frac\kappa4)|, |\Bin_{G_{\ge 0}}(u, \frac\kappa4)| > \frac n2$}. It follows that these two balls must intersect and that there exists a $v$-$u$-path $P_2$ of length~\makebox{$w(P_2) \leq \frac\kappa4 + \frac{\kappa}{4} = \frac{\kappa}{2}$}. Combining $P_1$ and $P_2$, we obtain a closed walk $Z$ with total weight $w(Z) \leq \frac\kappa2$ containing more than $\frac\kappa2$ (negative-weight) edges. It follows that $\bar w(Z) < 1$ yielding the claimed contradiction.
\end{proof}

\begin{proof}[Proof of \cref{lem:decomposition}]
The correctness is immediate by the previous lemmas: \cref{lem:progress} proves the progress property, and \cref{lem:sparse-hitting} the sparse hitting property. Next, we analyze the running time. Computing the heavy-light classification takes time $\Order((m + n \log \log n) \log n)$ due to~\cref{lem:heavy-light}. Sampling each radius~$r$ from the geometric distribution $\Geom(20 \log n / \kappa)$ runs in expected constant time in the word RAM with word size $\Omega(\log n)$~\cite{BringmannF13}, so the overhead for sampling the radii is $\Order(n)$ in expectation. To compute the balls we use Dijkstra's algorithm. Using Thorup's priority queue~\cite{Thorup03}, each vertex explored in Dijkstra's takes time $\Order(\log\log n)$ and each edge time $\Order(1)$. Since every vertex contained in some ball is removed from subsequent iterations, a vertex participates in at most one ball. Note that a naive implementation of this would reinitialize the priority queue and distance array at each iteration of the while-loop. To avoid this, we initialize the priority queue and array of distances once, before the execution of the while-loops. Then, at the end of an iteration of the while-loop we reinitialize them in time proportional to the removed vertices and edges (this is the same approach as in the BNW algorithm~\cite{BernsteinNW22}). Thus, the overall time to compute all the balls is indeed $\Order(m + n\log\log n)$.
\end{proof}

\subsection{Proof of \texorpdfstring{\cref{thm:restricted-sssp}}{Theorem \ref{thm:restricted-sssp}}} \label{sec:restricted-sssp:sec:main-alg}
With the graph decomposition in hand, we can present our full algorithm for Restricted SSSP. The overall structure closely follows the BNW algorithm (see~\cite[Algorithm 1]{BernsteinNW22}).

We start with the following crucial definition.

\begin{definition} \label{def:eta}
    Let $G$ be a directed graph with a designated source vertex $s$.
    For any vertex $v \in V(G)$, we denote by $\eta_G(v)$ the smallest number of \emph{negative-weight} edges in any shortest $s$-$v$-path.
\end{definition}

The next proposition captures the relationship between the parameters $\kappa(G)$ and $\eta_G(\cdot)$ when $G$ is restricted (see~\cref{def:kappa,def:eta}).
\begin{proposition}\label{prop:bound-eta}
    Let $G$ be a restricted graph with source vertex $s$.
    Then, for every vertex $v \in V$ it holds that $\eta_G(v) \leq \kappa(G)$.
\end{proposition}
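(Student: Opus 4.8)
The plan is to exhibit, for an arbitrary vertex $v$, a single shortest $s$-$v$-path whose number of negative-weight edges is at most $\kappa(G)$; since $\eta_G(v)$ is defined as a minimum over all shortest $s$-$v$-paths, this suffices. The candidate path is simply any shortest $s$-$v$-path, and the work is to verify that it meets the two requirements in the definition of $\kappa(G)$: that it is simple and starts at $s$, and that it has nonpositive total weight.

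First I would observe that $G$ contains no cycle of nonpositive weight: by the definition of a restricted graph the minimum cycle mean is at least $1$, so every cycle $C$ has $w(C) = \bar w(C)\cdot |C| \geq |C| > 0$. Consequently any shortest walk from $s$ to $v$ can be short-cut to a simple path without increasing its weight (deleting a cycle strictly decreases the weight), so there is a \emph{simple} shortest $s$-$v$-path $P$, and $w(P) = \dist_G(s,v)$.

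Next I would use the third property of restricted graphs: $s$ is joined to $v$ by an edge of weight $0$, hence $\dist_G(s,v) \leq 0$, and therefore $w(P) \leq 0$. Now $P$ is a simple path starting at $s$ with $w(P) \leq 0$, which is exactly the class of paths over which $\kappa(G)$ is the maximum number of weight-$(-1)$ edges. Thus the number of negative edges on $P$ is at most $\kappa(G)$, and since $\eta_G(v)$ is the minimum of this quantity over all shortest $s$-$v$-paths, we conclude $\eta_G(v) \leq \kappa(G)$.

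There is essentially no serious obstacle here; the only points requiring a little care are (a) justifying that a shortest path may be taken simple, which relies on the strict positivity of all cycle weights rather than merely the absence of negative cycles, and (b) not confusing the per-path quantity with its minimum over shortest paths. I expect the whole argument to fit in a few lines.
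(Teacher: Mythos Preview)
Your proposal is correct and follows essentially the same approach as the paper: pick a shortest $s$-$v$-path, argue it is simple and has nonpositive weight (via the restricted-graph properties), and conclude it falls into the class of paths over which $\kappa(G)$ is the maximum. If anything, your justification of simplicity is slightly more careful than the paper's, since you use the strict positivity of all cycle weights (from minimum cycle mean $\geq 1$) rather than merely the absence of negative cycles.
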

\begin{proof}
	Fix a vertex $v$. Let $P$ be a shortest $s$-$v$ path witnessing $\eta_G(v)$ (see~\cref{def:eta}). Since $G$ is restricted, it does not contain negative cycles and thus $P$ is a simple path.
    Furthermore, since there is an edge from $s$ to $v$ of weight 0, it follows that $w_G(P) \leq 0$.
    Recall that $\kappa(G)$ is the maximum number of negative edges in any simple path which starts at $s$ and has nonpositive weight (see~\cref{def:kappa}).
    Therefore, it follows that $\eta_G(v) \leq \kappa(G)$.
\end{proof}

Next, we use two lemmas from~\cite{BernsteinNW22}:

\begin{restatable}[Dijkstra with Negative Weights, similar to {{{\cite[Lemma~3.3]{BernsteinNW22}}}}]{lemma}{lemlazydijkstra} \label{lem:lazy-dijkstra}
Let $G$ be a directed graph with source vertex $s \in V(G)$ that does not contain a negative cycle. There is an algorithm that computes a shortest path tree from $s$ in time $\Order(\sum_v (\deg(v) + \log \log n) \cdot \eta_G(v))$. (If $G$ contains a negative cycle, the algorithm does not terminate.)
\end{restatable}

The main differences to \cite[Lemma~3.3]{BernsteinNW22} are that we use a faster priority queue for Dijkstra and that \cite[Lemma~3.3]{BernsteinNW22} is restricted to graphs of constant maximum degree. Therefore, we devote \cref{sec:dijkstra} to a self-contained proof of \cref{lem:lazy-dijkstra}.

\begin{lemma}[DAG Edges, {{{\cite[Lemma~3.2]{BernsteinNW22}}}}] \label{lem:dag-edges}
Let $G$ be a directed graph with nonnegative edge weights inside its SCCs. Then we can compute a potential function $\phi$ such that $G_\phi$ has nonnegative edge weights (everywhere) in time $\Order(n + m)$.
\end{lemma}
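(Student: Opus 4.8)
The plan is to exploit the DAG structure of the condensation of $G$. The hypothesis ``nonnegative edge weights inside its SCCs'' means precisely that every edge $(u,v)$ with $u,v$ in the same SCC already satisfies $w(u,v) \ge 0$, and only edges between distinct SCCs may be negative. The key idea is to pick $\phi$ \emph{constant on each SCC}: if $\phi(u)=\phi(v)$ whenever $u,v$ lie in the same SCC, then $w_\phi(u,v)=w(u,v)$ for every intra-SCC edge, so those edges stay nonnegative for free, and it remains only to handle the inter-SCC edges — which form a DAG once we contract SCCs.

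Concretely, I would first compute the SCCs of $G$ in time $\Order(n+m)$ via \cref{lem:scc}, together with a topological order $C_1,\dots,C_k$ of the condensation (Tarjan's algorithm already outputs the SCCs in reverse topological order, and building the condensation is $\Order(n+m)$). I would then assign to each SCC $C_j$ an integer $x_j$ and set $\phi(v)=x_j$ for all $v\in C_j$. Initialize $x_j\gets 0$ for every $j$, and process the SCCs in topological order; when processing $C_i$, for each inter-SCC edge $(u,v)$ with $u\in C_i$ and $v\in C_j$, relax $x_j\gets\min(x_j,\;x_i+w(u,v))$. Because edges of the condensation go from earlier to later SCCs, by the time $x_i$ is read all relaxations of edges entering $C_i$ have already been performed, so $x_i$ is final when it is used; the whole procedure is a single linear scan over the edges and runs in $\Order(n+m)$. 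All $x_j$ are integers (edge weights are integers) and their magnitudes are at most $n$ times the largest edge weight in absolute value, hence they fit in a machine word in our model.

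For correctness: for an intra-SCC edge $(u,v)$ we get $w_\phi(u,v)=w(u,v)\ge 0$ by hypothesis; for an inter-SCC edge $(u,v)$ with $u\in C_i$, $v\in C_j$ (so $C_i$ precedes $C_j$), the relaxation guarantees $x_j\le x_i+w(u,v)$, whence $w_\phi(u,v)=w(u,v)+x_i-x_j\ge 0$. Edges leaving $C_j$ toward some later $C_\ell$ impose $x_\ell\le x_j+w(\cdot)$, which is exactly what the relaxation of those very edges enforces, so there is no circular dependency and every edge of $G_\phi$ becomes nonnegative.

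I do not expect a real obstacle here; the proof is short. The only points that require a little care are: (a) keeping $\phi$ constant on SCCs so that the already-nonnegative intra-SCC edges are undisturbed; (b) processing the SCCs in topological order so that each $x_i$ is finalized before it is read — this is where acyclicity of the condensation is essential, since on a general graph such a potential need not exist at all; and (c) the trivial bookkeeping for SCCs with no incoming inter-SCC edge, which simply retain their initial value $0$ (one may compare with \cref{cor:johnson-non} for the single-SCC intuition).
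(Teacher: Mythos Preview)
Your proof is correct, but takes a slightly different route from the paper's sketch. The paper simply assigns $\phi(v) = W \cdot r(v)$, where $-W$ is the smallest edge weight in $G$ and $r(v)$ is the rank of $v$'s SCC in a (reverse) topological ordering of the condensation; since any inter-SCC edge $(u,v)$ has $r(u) \ge r(v)+1$, one gets $w_\phi(u,v) = w(u,v) + W(r(u)-r(v)) \ge -W + W = 0$ immediately, with no relaxation needed. Your approach instead performs a DAG shortest-path relaxation on the condensation, which yields a tighter potential (smaller in magnitude, so less blow-up of edge weights) at the cost of a tiny bit more bookkeeping. Both are clearly $\Order(n+m)$ and both keep $\phi$ constant on SCCs, so intra-SCC edges are untouched; the trade-off is simplicity of description versus tightness of the resulting potential.
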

\begin{proof}[Proof Sketch]
For the complete proof, see~\cite[Lemma~3.2]{BernsteinNW22}. The idea is to treat the graph as a DAG of SCCs, and to assign a potential function $\phi$ to every SCC such that the DAG edges become nonnegative. One way to achieve this is by computing a topological ordering, and by assigning $\phi(v)$ to be $W$ times the rank of $v$'s SCC in that ordering (here, $-W$ is the smallest weight in $G$). Then $G_\phi$ satisfies the claim.
\end{proof}

\subparagraph{The Algorithm}
We are ready to state the algorithm; see \cref{alg:restricted-sssp} for the pseudocode. Recall that $\kappa(G)$ is the maximum
number of negative edges in any path $P$ starting at $s$ with $w(P) \leq 0$ (\cref{def:kappa}).
If $\kappa(G) \leq 2$, we run \cref{lem:lazy-dijkstra} to compute the distances from $s$. Otherwise, we start with applying our graph decomposition. That is, we compute a set of edges $S$, such that any strongly connected component $C$ in the graph $G \setminus S$ is either small or has an improved $\kappa$-value. This constitutes enough progress to solve the induced graphs~$G[C \cup \set s]$ recursively. The recursive calls produce shortest path trees and thereby a potential function $\phi_1$ such that $G_{\phi_1}$ has nonnegative edge weights inside each SCC. We then add back the missing edges by first calling \cref{lem:dag-edges} (to fix the edges $e \not\in S$ between strongly connected components) and then \cref{lem:lazy-dijkstra} (to fix the edges $e \in S$). The correctness proof is easy:

\begin{algorithm}[t]
\caption{Solves the negative-weight SSSP problem on restricted graphs. The procedure $\textsc{RestrictedSSSP}(G, \kappa)$ takes a restricted graph $G$ and an upper bound $\kappa \geq \kappa(G)$, and computes a shortest path tree from the designated source vertex $s$.} \label{alg:restricted-sssp}
\smallskip
\begin{algorithmic}[1]
\Procedure{RestrictedSSSP}{$G, \kappa$}
    \If{$\kappa \leq 2$}
        \State Run \cref{lem:lazy-dijkstra} on $G$ from $s$ and \Return the computed shortest path tree \label{alg:restricted-sssp:line:base-case}
    \EndIf
    \State Compute $S \gets \textsc{Decompose}(G, \kappa)$ (see \cref{lem:decomposition}) \label{alg:restricted-sssp:line:decompose}
    \State Compute the strongly connected components $C_1, \dots, C_\ell$ of $G \setminus S$ (see \cref{lem:scc}) \label{alg:restricted-sssp:line:scc}
    \For{$i \gets 1, \dots, \ell$}
        \IfThenElse{$|C_i| \leq \frac{3n}4$}{$\kappa_i \gets \kappa$}{$\kappa_i \gets \frac\kappa2$} \label{alg:restricted-sssp:line:condition}
        \State Recursively call $\textsc{RestrictedSSSP}(G[C_i \cup \set{s}], \kappa_i)$ \label{alg:restricted-sssp:line:recur}
        \State Let $\phi_1(v) = \dist_{G[C_i \cup \set{s}]}(s, v)$ for all $v \in C_i$
    \EndFor
    \State Run \cref{lem:dag-edges} on $(G \setminus S)_{\phi_1}$ to obtain a potential $\phi_2$ \label{alg:restricted-sssp:line:dag-edges}
    \State Run \cref{lem:lazy-dijkstra} on $G_{\phi_2}$ and \Return the computed shortest path tree \label{alg:restricted-sssp:line:lazy-dijkstra}
\EndProcedure
\end{algorithmic}
\end{algorithm}

\begin{lemma}[Correctness of \cref{alg:restricted-sssp}] \label{lem:restricted-sssp-correctness}
Let $G$ be an arbitrary directed graph (not necessarily restricted), and let $\kappa$ be arbitrary. Then, if $\RestrictedSSSP(G, \kappa)$ terminates, it correctly computes a shortest path tree from the designated source vertex $s$.
\end{lemma}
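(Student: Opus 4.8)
The plan is to argue by induction on the height of the (necessarily finite) recursion tree of the given invocation $\RestrictedSSSP(G, \kappa)$, using that the whole execution terminates by hypothesis. In the base case $\kappa \le 2$, the algorithm simply calls \cref{lem:lazy-dijkstra} on $G$ in \cref{alg:restricted-sssp:line:base-case}; since the execution terminates, so does this call, and then \cref{lem:lazy-dijkstra} guarantees that $G$ has no negative cycle and that the returned tree is a correct shortest path tree from $s$. There is nothing else to check.

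For the inductive step suppose $\kappa > 2$. Every recursive call in \cref{alg:restricted-sssp:line:recur} terminates (being part of a terminating execution), so by the induction hypothesis each returns a correct shortest path tree of $G[C_i \cup \set{s}]$ from $s$; in particular each such subgraph has no negative cycle, and $\phi_1(v) = \dist_{G[C_i \cup \set{s}]}(s, v)$ is exactly the value read off from that tree. (A vertex $v \in C_i$ is reachable from $s$ in $G[C_i \cup \set{s}]$ iff \emph{all} of $C_i$ is, as $C_i$ is strongly connected; a component none of whose vertices is reachable from $s$ is irrelevant for a shortest path tree of $G$ from $s$, so we may assume $\phi_1$ is finite on every component we consider.) Then for any edge $(u, v) \in E(G) \setminus S$ with $u, v \in C_i$, the triangle inequality in $G[C_i \cup \set{s}]$ gives $\dist_{G[C_i \cup \set{s}]}(s, v) \le \dist_{G[C_i \cup \set{s}]}(s, u) + w(u, v)$, i.e.\ $w_{\phi_1}(u, v) \ge 0$. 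Hence $(G \setminus S)_{\phi_1}$ has nonnegative weights inside every SCC, so \cref{lem:dag-edges} applies in \cref{alg:restricted-sssp:line:dag-edges} and produces a finite integer potential $\phi_2$ on $V(G)$ for which $(G \setminus S)_{\phi_2}$ has nonnegative weights everywhere.

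It remains to handle the final call of \cref{lem:lazy-dijkstra} on $G_{\phi_2}$ in \cref{alg:restricted-sssp:line:lazy-dijkstra}. This call terminates by assumption, so \cref{lem:lazy-dijkstra} ensures that $G_{\phi_2}$ has no negative cycle and that it returns a correct shortest path tree $T$ of $G_{\phi_2}$ from $s$. Finally I invoke Johnson's trick (\cref{lem:johnson}): for the potential $\phi_2$ we have $w_{\phi_2}(P) = w(P) + \phi_2(s) - \phi_2(v)$ for every $s$-$v$-path $P$, so a path minimizes $w_{\phi_2}$ among $s$-$v$-paths iff it minimizes $w$; equivalently, $G$ and $G_{\phi_2}$ are equivalent in the sense of \cref{def:equivalent-graphs}. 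Therefore every root-to-$v$ path in $T$ is a shortest $s$-$v$-path in $G$ as well, i.e.\ $T$ is a shortest path tree of $G$ from $s$, which closes the induction.

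I expect the only obstacle to be routine bookkeeping rather than any genuine difficulty: one must pin down that $\phi_1$ and $\phi_2$ are well-defined (no $\pm\infty$ on the components relevant to a shortest path tree from $s$) and clarify that the "$\phi_2$" produced by \cref{lem:dag-edges} is to be read as its composition with $\phi_1$, so that $(G \setminus S)_{\phi_2}$ is nonnegative everywhere. All the substance — repairing the SCC-internal edges via \cref{lem:dag-edges} and then the edges of $S$ via \cref{lem:lazy-dijkstra} — is outsourced to the cited lemmas. Crucially, nothing in this argument uses that $S$ is a good hitting set or that the chosen $\kappa_i$ actually bound $\kappa(G[C_i \cup \set{s}])$; those properties are needed only for the running-time and termination analysis carried out separately, which is exactly why the correctness statement may be phrased for arbitrary $G$ and $\kappa$.
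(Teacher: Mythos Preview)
Your proof is correct, but it does more work than the paper's. The paper's argument is essentially a one-liner: regardless of what the recursive calls return and regardless of whether the precondition of \cref{lem:dag-edges} is actually met, \emph{whatever} integer-valued function $\phi_2$ ends up being, Johnson's trick (\cref{lem:johnson}) guarantees that $G_{\phi_2}$ is equivalent to $G$; hence if the final call to \cref{lem:lazy-dijkstra} on $G_{\phi_2}$ terminates, it returns a correct shortest path tree of $G$. No induction is needed, and correctness of the recursive calls is irrelevant for correctness of the outer call. Your inductive verification that $\phi_1$ makes SCC-internal edges nonnegative so that \cref{lem:dag-edges} legitimately applies is exactly the content of Claims~1--2 in the paper's \emph{running-time} analysis (\cref{lem:restricted-sssp-time}), not its correctness proof. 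What your approach buys is a self-contained argument that does not rely on the reader noticing that equivalence under any potential is unconditional; what the paper's approach buys is brevity and a clean separation between correctness and efficiency --- which is precisely why the lemma can be stated for arbitrary $G$ and $\kappa$, as you yourself observe at the end.
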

\begin{proof}
If $\kappa \leq 2$ and the call in \cref{alg:restricted-sssp:line:base-case} terminates, then it correctly computes a shortest path tree due to \cref{lem:lazy-dijkstra}. If $\kappa > 2$, then in \cref{alg:restricted-sssp:line:dag-edges} we compute a potential function~$\phi_2$ and in \cref{alg:restricted-sssp:line:lazy-dijkstra} we run \cref{lem:lazy-dijkstra} to compute a shortest path tree in the graph $G_{\phi_2}$. Assuming that \cref{lem:lazy-dijkstra} terminates, this computation is correct since $G_{\phi_2}$ is equivalent to $G$.
\end{proof}

\begin{lemma}[Running Time of \cref{alg:restricted-sssp}] \label{lem:restricted-sssp-time}
Let $G$ be a restricted graph with $\kappa(G) \leq \kappa$. Then $\textsc{RestrictedSSSP}(G, \kappa)$ runs in expected time $\Order((m + n \log \log n) \log^2 n)$.
\end{lemma}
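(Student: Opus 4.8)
The plan is a recursion-tree analysis; for a subproblem $G$ write $n_G=|V(G)|$ and $m_G=|E(G)|$, and note that every subproblem is an induced subgraph of the restricted input that contains $s$ and is therefore itself restricted (so it has no negative cycle and $\cref{prop:bound-eta}$ applies to it). \textbf{First}, I bound the work of a single call $\RestrictedSSSP(G,\kappa)$, excluding its recursive calls, by $\Order((m_G+n_G\log\log n)\log n)$ in expectation. In the base case $\cref{lem:lazy-dijkstra}$ runs in time $\Order(\sum_v(\deg(v)+\log\log n)\,\eta_G(v))=\Order(m_G+n_G\log\log n)$, using $\eta_G(v)\le\kappa(G)\le\kappa\le 2$ — the inequality $\kappa\ge\kappa(G)$ being maintained throughout the recursion (deterministically for subproblems arising from small SCCs, since $\kappa(G[C_i\cup\set{s}])\le\kappa(G)$ for induced subgraphs, and with high probability for those from large SCCs by the Progress property of $\cref{lem:decomposition}$ and a union bound over the recursion tree). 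In the recursive case, $\textsc{Decompose}$ runs in expected time $\Order((m_G+n_G\log\log n)\log n)$ by $\cref{lem:decomposition}$, the SCC computation and $\cref{lem:dag-edges}$ are linear, and for the final call to $\cref{lem:lazy-dijkstra}$ on $G_{\phi_2}$ we argue as follows. The graph $(G\setminus S)_{\phi_1}$ has nonnegative weights inside each SCC $C_i$ (since $\phi_1(v)=\dist_{G[C_i\cup\set{s}]}(s,v)$ gives $w_{\phi_1}(u,v)\ge 0$ for edges within $C_i$), so $\cref{lem:dag-edges}$ makes $(G\setminus S)_{\phi_2}$ nonnegative everywhere; hence every negative-weight edge of $G_{\phi_2}$ lies in $S$. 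As $G_{\phi_2}$ is equivalent to $G$, fixing any shortest $s$-$v$-path $P_v$ in $G$ yields $\eta_{G_{\phi_2}}(v)\le|P_v\cap S|$, and $\Ex(|P_v\cap S|)=\Order(\log n)$ by the Sparse Hitting property. By linearity the final call costs $\Order(\sum_v(\deg(v)+\log\log n)\cdot\Ex(\eta_{G_{\phi_2}}(v)))=\Order((m_G+n_G\log\log n)\log n)$ in expectation.

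\textbf{Second}, the recursion has depth $\Order(\log n)$. In a step from $(G,\kappa)$ to $(G[C_i\cup\set{s}],\kappa_i)$: if $|C_i|\le\frac34 n_G$ then the vertex count drops to $|C_i|+1\le\frac{19}{20}n_G$ (for $n_G\ge 5$) while $\kappa_i=\kappa$; otherwise the algorithm sets $\kappa_i=\kappa/2$. So at every step the vertex count shrinks by a constant factor or $\kappa$ halves, and both quantities start at most $n$ (for $\kappa$ we may assume $\kappa\le n$, since $\kappa(G)\le n-1$), so the base case $\kappa\le 2$ is reached within $\Order(\log n)$ levels; this holds regardless of the algorithm's random choices.

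\textbf{Third}, I bound the total work at any fixed level $k$ by $\Order((m+n\log\log n)\log n)$ in expectation. The level-$k$ subproblems are induced subgraphs $G_0[C\cup\set{s}]$ of the original input $G_0$ whose non-source parts $C\setminus\set{s}$ are pairwise disjoint (the SCCs of $G\setminus S$ partition $V(G)$, and disjointness propagates downward). Hence, charging each $s$-avoiding edge once and each non-source vertex's $s$-incident edges once, the level-$k$ subproblems satisfy $\sum_v m_{G_v}\le m+\Order(n)$. Moreover only $\Order(n)$ of the level-$k$ subproblems contain a non-source vertex, and for these $\sum_v n_{G_v}=\Order(n)$ by disjointness; every other level-$k$ subproblem equals the single-vertex graph $G_0[\set{s}]$, costs $\Order(1)$, and — there being at most one per parent — such subproblems number $\Order(n\log^2 n)$ over the whole recursion. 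Combining with the first fact, the expected work at level $k$ is $\Order((m+n\log\log n)\log n)$, and summing over the $\Order(\log n)$ levels (plus the absorbed $\Order(n\log^2 n)$) yields $\Order((m+n\log\log n)\log^2 n)$.

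The crux is the third fact: because each recursive call re-attaches the source $s$ and the number of SCCs can be linear in $n_G$, the total vertex count of the subproblems is not obviously bounded across levels, so one must charge the $\Order(n_{G_v}\log\log n)$ cost against the \emph{globally} disjoint non-source vertices and dispose of the abundant single-vertex subproblems separately. Finally, randomness enters only through the $1/\poly(n)$-probability failures of $\cref{lem:heavy-light}$ and of the Progress guarantee; since the recursion has $\Order(\log n)$ depth and $\poly(n)$ nodes regardless (by the second and third facts) and each node costs at most $\poly(n)$ in the worst case, these events contribute only $o(1)$ to the expectation.
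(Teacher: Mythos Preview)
Your proof is correct and follows the same recursion-tree analysis as the paper, with the same per-call bound (identical three-step argument that the only negative edges of $G_{\phi_2}$ lie in $S$, so $\Ex(\eta_{G_{\phi_2}}(v))=\Order(\log n)$ via Sparse Hitting). The difference is purely organizational: the paper bounds the total work by a direct induction on the product potential, showing $|C_i|\kappa_i\le\tfrac34\,n\kappa$ in either branch and hence that the running time is at most $c(m+n\log\log n)\log n\cdot\log_{4/3}(n\kappa)$; you instead decouple this into a depth bound (either $n$ shrinks or $\kappa$ halves, so depth $\Order(\log n)$) plus a per-level work bound via disjointness of the non-source parts. Your treatment is more explicit about the re-attached source~$s$ and the resulting single-vertex subproblems, which the paper's product-potential argument absorbs silently by measuring the subproblem with $|C_i|$ rather than $|C_i\cup\{s\}|$; both organizations are standard and yield the same bound.
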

\begin{proof}

We first analyze the running time of a single call to \cref{alg:restricted-sssp}, ignoring the time spent in recursive calls.
For the base case, when $\kappa(G) \leq 2$, the running time of \cref{alg:restricted-sssp:line:base-case} is $\Order(m + n \log \log n)$ by~\cref{lem:lazy-dijkstra,prop:bound-eta}.
Otherwise, the call to $\textsc{Decompose}(G, \kappa)$ in \cref{alg:restricted-sssp:line:decompose} runs in time $\Order((m + n \log \log n) \log n)$ by \cref{lem:decomposition}. Computing the strongly connected components in $G \setminus S$ is in linear time $\Order(m + n)$, and so is the call to \cref{lem:dag-edges} in \cref{alg:restricted-sssp:line:dag-edges}.

Analyzing the running time of \cref{alg:restricted-sssp:line:lazy-dijkstra} takes some more effort.
Recall that $\eta_{G_{\phi_2}}(v)$ is the minimum number of negative edges in any $s$-$v$ path in $G_{\phi_2}$ (see~\cref{def:eta}).
Our intermediate goal is to bound $\Ex(\eta_{G_{\phi_2}}(v)) = \Order(\log n)$ for all vertices $v$.
Let $S$ be the set of edges computed by the decomposition, as in the algorithm. We proceed in three steps:
\smallskip

\begin{itemize}
    \item\emph{Claim 1: $G_{\phi_1} \setminus S$ has nonnegative edges inside its SCCs.} The recursive calls in \cref{alg:restricted-sssp:line:recur} correctly compute the distances by \cref{lem:restricted-sssp-correctness}. Hence, for any two nodes $u, v \in C_i$, we have that~$w_{\phi_1}(u, v) = w(u, v) + \dist_{G[C_i \cup \set{s}]}(s, u) - \dist_{G[C_i \cup \set{s}]}(s, v) \geq 0$, by the triangle inequality.
    \item\emph{Claim 2: $G_{\phi_2} \setminus S$ has only nonnegative edges.} This is immediate by \cref{lem:dag-edges}.
    \item\emph{Claim 3: For every node $v$ we have $\Ex(\eta_{G_{\phi_2}}(v)) \leq \Order(\log n)$.} Let $P$ be a shortest $s$\=/$v$\=/path in $G$. Since $G$ and $G_{\phi_2}$ are equivalent,~$P$~is also a shortest path in $G_{\phi_2}$. By the previous claim, the only candidate negative edges in~$P$ are the edges in $S$. Therefore, we have that $\Ex(\eta_{G_{\phi_2}}(v)) \leq \Ex(|P \cap S|) = \Order(\log n)$, by \cref{lem:decomposition}.
\end{itemize}
\smallskip
The expected running time of \cref{alg:restricted-sssp:line:lazy-dijkstra} is thus bounded by
\begin{gather*}
    \Order\left(\sum_{v \in V(G)} (\deg(v) + \log \log n) \cdot \Ex(\eta_{G_{\phi_2}}(v))\right) \\
    \qquad= \Order\left(\sum_{v \in V(G)} (\deg(v) + \log \log n) \cdot \log n\right)  \\
    \qquad= \Order((m + n \log \log n) \log n).
\end{gather*}
Therefore, a single execution of \cref{alg:restricted-sssp} runs in time $\Order((m + n \log \log n) \log n)$; let $c$ denote the hidden constant in the $O$-notation.

We finally analyze the total running time, taking into account the recursive calls. We inductively prove that the running time is bounded by $c(m + n \log\log n) \log n \cdot \log_{4/3}(n \kappa)$.

We claim that for each recursive call on a subgraph $G[C_i \cup \set s]$, where $C_i$ is a strongly connected component in~\makebox{$G \setminus S$}, it holds that (i) $G[C_i \cup \set s]$~is a restricted graph and that (ii)~\makebox{$\kappa(G[C_i \cup \set s]) \leq \kappa_i$}. To see (i), observe that any subgraph of $G$ containing $s$ is also restricted. To show (ii), we distinguish two cases: Either $|C_i| \leq \frac{3n}4$, in which case we trivially have $\kappa(G[C_i \cup \set{s}]) \leq \kappa(G) \leq \kappa = \kappa_i$. Or $|C_i| > \frac{3n}4$, and in this case \cref{lem:decomposition} guarantees that $\kappa(G[C_i \cup \set{s}]) \leq \frac\kappa2 = \kappa_i$. It follows by induction that each recursive call runs in time~$c \cdot (|E(G[C_i \cup \set{s}])| + |C_i| \log\log n) \log n \cdot \log_{4/3}(|C_i| \kappa_i)$. Moreover, observe that in either case we have $|C_i| \kappa_i \leq \frac34 n \kappa$. Therefore the total time can be bounded by
\begin{gather*}
    c (m + n \log\log n) \log n
        + \sum_{i=1}^\ell c \cdot (|E(G[C_i \cup \set{s}])| + |C_i| \log\log n) \log n \cdot \log_{4/3}(|C_i| \kappa_i) \\
    \qquad\leq c (m + n \log\log n) \log n
        \\
        \qquad\qquad + \sum_{i=1}^\ell c \cdot (|E(G[C_i \cup \set{s}])| + |C_i| \log\log n) \log n \cdot (\log_{4/3}(n \kappa) - 1) \\
    \qquad\leq c (m + n \log\log n) \log n + c(m + n \log\log n) \log n \cdot (\log_{4/3}(n \kappa) - 1) \\
    \qquad= c m \log n \log\log n \cdot \log_{4/3}(n \kappa),
\end{gather*}
where in the third step we used that $\sum_i |E(G[C_i \cup \set{s}])| \leq m$ and that $\sum_i |C_i| \leq n$. This completes the running time analysis.
\end{proof}

\begin{proof}[Proof of \cref{thm:restricted-sssp}]
This proof is almost immediate from the previous two \cref{lem:restricted-sssp-correctness,lem:restricted-sssp-time}. In combination, these lemmas prove that \cref{alg:restricted-sssp} is a Las Vegas algorithm for the Restricted SSSP problem which runs in expected time $\Order((m + n \log \log n) \log^2 n)$. By interrupting the algorithm after twice its expected running time (and returning \textsc{Fail} in that case), we obtain a Monte Carlo algorithm with worst-case running time $\Order((m + n \log \log n) \log^2 n)$ and error probability $\frac12$ as claimed.
\end{proof}

We remark that \cref{alg:restricted-sssp} is correct even if the input graph $G$ is not restricted---therefore, whenever $G$ contains a negative cycle, the algorithm cannot terminate.

\section{SSSP on Graphs without Negative Cycles} \label{sec:sssp-without-neg-cycles}

In this section we present the $\Order((m + n \log \log n) \log^2(n) \log(nW))$-time algorithm for SSSP on graphs~$G$ without negative cycles. Later in \cref{sec:neg-cycle}, we will remove the assumption that~$G$ does not contain negative cycles, and strengthen the algorithm to find a negative cycle if it exists.

The main idea is to use \emph{scaling} and some tricks for probability amplification in order to extend our algorithm for restricted graphs developed in~\cref{sec:restricted-sssp}. More precisely, we use the standard \emph{scaling technique}~\cite{Gabow83,GabowT89,Goldberg95,BernsteinNW22} to reduce the computation of SSSP in an arbitrary graph (without negative cycles) to the case of restricted graphs. Formally, we prove the following theorem:

\begin{theorem}[Scaling Algorithm for SSSP] \label{thm:sssp}
There is a Las Vegas algorithm which, given a directed graph $G$ without negative cycles and with a source vertex $s \in V(G)$, computes a shortest path tree from~$s$, running in time \makebox{$\Order(T_{\TRestrictedSSSP}(m, n) \cdot \log(nW))$} with~high~probability (and in expectation).
\end{theorem}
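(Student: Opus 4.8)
The plan is to reduce SSSP in an arbitrary negative-cycle-free graph to Restricted SSSP by the classical scaling technique, but to \emph{interleave} the scaling with the probability amplification so that we pay only an $\Order(\log(nW))$ overhead in total instead of the $\Order(\log W \cdot \log n)$ that naive boosting would cost.

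\textbf{One scaling round.} First I would isolate a subroutine that, given the current graph $G$ with minimum weight $\geq -W$ (and no negative cycle), proceeds as follows: round every weight $w(e)$ to $\lceil \tfrac{3 w(e)}{W+1}\rceil + 1$ and insert weight-$0$ edges from $s$ to every vertex, obtaining by \cref{lem:scale} a restricted graph $H$; run \cref{thm:restricted-sssp} on $H$; if it returns \textsc{Fail}, return \textsc{Fail}; otherwise take the returned shortest path tree, read off the feasible potential $\phi_H(v)=\dist_H(s,v)$, rescale it to a potential $\psi$ on $G$, and return the graph $G_\psi$ (which is equivalent to $G$ by \cref{lem:johnson}) together with $\psi$. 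A short calculation, exactly as in the warm-up, shows that the minimum weight of $G_\psi$ is at least $-\tfrac34 W$ up to an additive constant. This subroutine runs in time $\Order(T_{\TRestrictedSSSP}(m,n))$; since $H$ is genuinely restricted it fails with probability at most $\tfrac12$, and crucially it never returns an incorrect potential because \cref{alg:restricted-sssp} only ever outputs a valid shortest path tree (see \cref{lem:restricted-sssp-correctness}) — so the error is one-sided. Iterating $\Order(\log W)$ successful rounds brings the minimum weight down to a fixed constant $c_0$; at that point a final reduction (the standard transformation $w(e)\mapsto 2n\cdot w(e)+1$, which forces the minimum cycle mean above $1$ while letting us recover the original distances by dividing by $2n$, followed by $\Order(\log n)$ further scaling rounds) turns the graph into a restricted one that I solve with one more call to \cref{thm:restricted-sssp}. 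Finally I undo all accumulated potential shifts, in linear time, to obtain $\dist_G(s,\cdot)$ and a shortest path tree.

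\textbf{Merging boosting with scaling.} Rather than boosting the one-round subroutine to high probability (which would cost a factor $\Order(\log n)$ per scaling level), I run it exactly once per attempt: on success I advance to the equivalent graph $G_\psi$ and accumulate the potential, and on failure I simply retry on the \emph{same} graph, which gives a fresh independent trial. Let $k^\star = \Order(\log W + \log n)$ be the total number of \emph{successful} rounds the procedure needs (the $\Order(\log W)$ rounds of the main scaling plus the $\Order(\log n)$ rounds hidden in the base case). Each attempt succeeds independently with probability at least $\tfrac12$, so after $\Order(k^\star + \log n) = \Order(\log(nW))$ attempts we have accumulated $k^\star$ successes with high probability by a Chernoff bound (choosing the number of attempts $\geq 4k^\star + \Theta(\log n)$ makes the lower-tail probability $n^{-\Omega(1)}$), and the expected number of attempts is $\Order(k^\star)=\Order(\log(nW))$. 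Since each attempt costs $\Order(T_{\TRestrictedSSSP}(m,n))$, the total running time is $\Order(T_{\TRestrictedSSSP}(m,n)\log(nW))$ with high probability and in expectation. This is precisely where the extra $\log n$ of naive boosting disappears: the concentration over rounds already requires $\Omega(\log n)$ attempts, and $\log W + \log n = \Theta(\log(nW))$.

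\textbf{Las Vegas guarantee.} Because every output of \cref{thm:restricted-sssp} is either \textsc{Fail} or a correct shortest path tree, and because composing equivalences preserves shortest paths and cycle weights (\cref{def:equivalent-graphs,lem:johnson}), the algorithm never produces a wrong answer; to be completely safe one may verify the final shortest path tree in $\Order(m+n)$ time via a feasible-potential check. The algorithm is therefore Las Vegas: it is never interrupted, and its running time is $\Order(T_{\TRestrictedSSSP}(m,n)\log(nW))$ with high probability and in expectation by the analysis above. The main obstacle I expect is the interleaved analysis that avoids naive boosting — in particular, carefully arguing that retrying on the same graph after a failure (instead of re-running a boosted subroutine at each scaling level) still yields the correct count of successful rounds with high probability, and verifying that the per-round guarantee of \cref{lem:scale} is one-sided in exactly the way needed (the subroutine can only err by giving up, never by returning a bad potential). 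A secondary technicality is the constant-minimum-weight base case, where plain scaling stalls; handling it via the $w\mapsto 2n\cdot w+1$ trick plus a short burst of additional scaling rounds requires checking that its $\Order(\log n)$ cost fits inside the $\Order(\log(nW))$ budget.
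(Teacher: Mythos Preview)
Your proposal is correct and takes essentially the same approach as the paper: repeatedly apply the one-step scaling of \cref{lem:scale}, retry on each \textsc{Fail}, and use a Chernoff bound over the independent geometric trials to show that $\Order(\log(nW))$ total attempts suffice with high probability and in expectation. The only cosmetic difference is the base case: the paper multiplies all weights by $4n$ \emph{upfront}, so that after $\Theta(\log(nW))$ successful scaling rounds the minimum weight is $\geq -3$ and one can simply zero out the remaining negatives and finish with a single Dijkstra call (the $4n$ gap absorbs the $\leq 3(n-1)$ error), which is slightly tidier than your $w\mapsto 2nw+1$ detour followed by another burst of scaling.
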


\subparagraph{One-Step Scaling}
The idea of the scaling algorithm is to increase the smallest weight in~$G$ step-by-step, while maintaining an equivalent graph. The following \cref{lem:scale} gives the implementation of one such scaling step as a direct reduction to Restricted SSSP.

\begin{lemma}[One-Step Scaling] \label{lem:scale}
Let $G$ be a directed graph that does not contain a negative cycle and with minimum weight greater than $-3W$ (for some integer $W \geq 1$). There is an algorithm $\textsc{Scale}(G)$ computing $\phi$ such that $G_\phi$ has minimum weight greater than~$-2W$, which succeeds with constant probability (if the algorithm does not suceed, it returns \textsc{Fail}) and runs in time~$\Order(T_{\TRestrictedSSSP}(m, n))$.
\end{lemma}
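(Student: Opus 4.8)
\emph{Proof plan.} I would implement $\textsc{Scale}(G)$ as a direct reduction to Restricted SSSP, following the classical scaling idea sketched in the warm-up of the introduction. From $G$ I build an auxiliary graph $H$ as follows: add a fresh source vertex $s^*$ together with a weight-$0$ edge $(s^*, v)$ for every $v \in V(G)$ (no new cycle is created since $s^*$ has no incoming edge), and replace the weight of every original edge $e$ by $w_H(e) := \lceil w(e)/W \rceil + 1$. I then run the Restricted SSSP algorithm of \cref{thm:restricted-sssp} on $H$ with source $s^*$; if it returns \textsc{Fail} I return \textsc{Fail}, and otherwise I convert the computed distances into the desired potential on $V(G)$.

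The first thing to check is that $H$ is restricted in the sense of \cref{def:restricted}. Its weights are integral; since $w(e) > -3W$ we have $w(e)/W > -3$, hence $\lceil w(e)/W\rceil \geq -2$ and therefore $w_H(e) \geq -1$. For the minimum cycle mean: any cycle $C$ of $H$ avoids $s^*$ and hence corresponds to a cycle of $G$, and since $G$ has no negative cycle $w_G(C) \geq 0$; thus $w_H(C) = \sum_{e\in C}(\lceil w(e)/W\rceil + 1) \geq \sum_{e\in C}(w(e)/W + 1) = w_G(C)/W + |C| \geq |C|$, so $\bar w_H(C) \geq 1$ (and by \cref{lem:closed-walk} the same bound holds over all closed walks). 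Finally $s^*$ reaches every vertex by a weight-$0$ edge. Since $H$ is in particular free of negative cycles, \cref{thm:restricted-sssp} is guaranteed to return either a correct shortest path tree or \textsc{Fail}, with the latter occurring with probability at most $\frac12$; this gives the constant success probability and ensures $\textsc{Scale}$ never outputs an incorrect potential.

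When \cref{thm:restricted-sssp} succeeds, let $d(v) := \dist_H(s^*, v)$, which are finite integers because $s^*$ reaches every vertex and $H$ has no negative cycle, and set $\phi(v) := W \cdot d(v)$ for $v \in V(G)$ (an $\Int$-valued potential). Fix an edge $(u,v) \in E(G) \subseteq E(H)$. By \cref{cor:johnson-non} applied to $H$ with source $s^*$ we have $w_H(u,v) + d(u) - d(v) \geq 0$, and from $\lceil x \rceil < x + 1$ we have $w_H(u,v) < w(u,v)/W + 2$, i.e.\ $w(u,v) > W\,w_H(u,v) - 2W$. Combining,
\[
 w_\phi(u,v) \;=\; w(u,v) + W d(u) - W d(v) \;>\; W\bigl(w_H(u,v) + d(u) - d(v)\bigr) - 2W \;\geq\; -2W ,
\]
so $G_\phi$ has minimum weight greater than $-2W$ (and it remains negative-cycle-free by \cref{lem:johnson}). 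For the running time: building $H$ takes $\Order(n+m)$, the call to \cref{thm:restricted-sssp} on $H$ (which has $n+1$ vertices and $m+n$ edges) takes $\Order(T_{\TRestrictedSSSP}(m,n))$, and forming $\phi$ takes $\Order(n)$, for a total of $\Order(T_{\TRestrictedSSSP}(m,n))$.

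The reduction is routine, and I expect no serious obstacle; the one point that requires care is designing the rescaling so that the two competing constraints of \cref{def:restricted} hold at once --- weights $\geq -1$, which pushes towards dividing the weights down, versus minimum cycle mean $\geq 1$, which pushes towards an additive per-edge boost --- and then verifying that the resulting potential buys a genuine factor-$\frac23$ improvement of the smallest weight (from $-3W$ to $-2W$) rather than something weaker. The additive $+1$ in $w_H$ and the multiplicative $W$ in $\phi$ are tuned exactly for this, and making the ceilings and strict inequalities line up is the only delicate bookkeeping.
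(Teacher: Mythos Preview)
Your proposal is correct and matches the paper's own proof essentially line for line: the same auxiliary graph $H$ with weights $\lceil w(e)/W\rceil + 1$ and an added source, the same verification that $H$ is restricted, the same potential $\phi(v) = W\cdot\dist_H(s^*,v)$, and the same chain of inequalities (via the triangle inequality, which you invoke through \cref{cor:johnson-non}) to conclude $w_\phi(e) > -2W$. The only cosmetic differences are that you explicitly cite \cref{cor:johnson-non} rather than writing out the triangle inequality, and you add an (unnecessary but harmless) reference to \cref{lem:closed-walk}.
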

\begin{proof}
We construct a restricted graph $H$ as a copy of $G$ with modified edge weights $w_H(e) = \ceil{w_G(e) / W} + 1$. We also add a source vertex $s$ to $H$, and put edges of weight $0$ from~$s$ to all other vertices. We compute a shortest path tree from~$s$ in $H$ using~\cref{thm:restricted-sssp}, and return the potential $\phi$ defined by $\phi(v) = W \cdot \dist_H(s, v)$. For the pseudocode, see \cref{alg:scale}. Note that the running time is dominated by computing shortest paths in a restricted graph.

\begin{algorithm}[t]
\caption{One step of the scaling algorithm. Given a graph $G$ with minimum weight greater than~$-3W$, $\textsc{Scale}(G)$ computes a potential~$\phi$ such that $G_\phi$ has minimum weight greater than $-2W$. See \cref{lem:scale}.} \label{alg:scale}
\begin{algorithmic}[1]
\Procedure{Scale}{$G$}
    \State Let $W$ be such that all weights in $G$ are greater than $-3W$
    \State Let $H$ be a copy of $G$ with edge weights $w_H(e) = \ceil{w_G(e) / W} + 1$, and add
    \Statex[2] an artificial source vertex $s$ to $H$ with weight-$0$ edges to all other vertices
    \State Compute a shortest path tree from $s$ in the restricted graph $H$ using~\cref{thm:restricted-sssp}\label{alg:scale:line:sssp}
    \State Let $\phi$ be the potential defined by $\phi(v) = W \cdot \dist_H(s, v)$
    \State \Return $\phi$
\EndProcedure
\end{algorithmic}
\end{algorithm}

To prove that the algorithm is correct, we first check that $H$ is indeed restricted (see \cref{def:restricted}):
\smallskip
\begin{itemize}
    \item Each edge weight satisfies $w_H(e) = \ceil{w_G(e) / W} + 1 \geq \ceil{(-3 W + 1) / W} + 1 = -1$.
    \item Consider any cycle $C$ in $H$. Recall that $w_G(C) \geq 0$ (as $G$ does not contain negative cycles), and thus
    \begin{equation*}
        \bar w_H(C) = \frac{w_H(C)}{|C|} = \frac1{|C|} \sum_{e \in C} w_H(e) = 1 + \frac1{|C|} \sum_{e \in C} \ceil*{w_G(e) \cdot \frac{1}{W}} \geq 1 + \frac{w_G(C)}{W|C|} \geq 1.
    \end{equation*}
    In particular, the minimum cycle mean in $H$ is at least $1$.
    \item Finally, we have artificially added a source vertex $s$ to $H$ with weight-$0$ edges to all other vertices.
\end{itemize}
\smallskip

It remains to prove that the potential $\phi$ defined by $\phi(v) = W \cdot \dist_H(s, v)$ satisfies that~$G_\phi$ has minimum edge weight more than $-2W$. Consider any edge $e = (u, v)$. Since by definition~$w_H(e) < w_G(e) \cdot \frac{1}{W} + 2$, we have that $w_G(e) > W \cdot (w_H(e) - 2)$. It follows that
\begin{align*}
    w_{G_\phi}(e)
    &= w_G(e) + \phi(u) - \phi(v) \\
    &= w_G(e) + W \cdot \dist_H(s, u) - W \cdot \dist_H(s, v) \\
    &> -2W + W \cdot w_H(e) + W \cdot \dist_H(s, u) - W \cdot \dist_H(s, v) \\
    &\geq -2W.
\end{align*}
In the last step we have used the triangle inequality $\dist_H(s, v) \leq \dist_H(s, u) + w_H(u, v)$.

Finally, we argue that the algorithm succeeds with constant probability. Observe that the algorithm succeeds if the computation of the shortest path tree from $s$ succeeds in~\cref{alg:scale:line:sssp} (indeed, all other steps are deterministic). Since $H$ is restricted, \cref{thm:restricted-sssp} guarantees that this holds with constant probability, and if it does not suceed it returns \textsc{Fail}, completing the proof.
\end{proof}

\subparagraph{The Complete Scaling Algorithm}
We are ready to state the algorithm $\textsc{SSSP}(G, s)$ which implements \cref{thm:sssp}. We construct a graph $G_0$ by multiplying every edge weight of $G$ by~$4n$. Then, for $i = 0, \dots, L - 1$ where $L = \Theta(\log(nW))$, we call $\textsc{Scale}(G_i)$ (we repeat the call until it succeeds) to obtain a potential $\phi_i$ and set $G_{i+1} := (G_i)_{\phi_i}$. Next, we construct a graph~$G^*$ as a copy of $G_L$, with every negative edge weight replaced by 0. Finally, we compute a shortest path tree in $G^*$ using Dijkstra's algorithm. For the details, see the pseudocode in \cref{alg:sssp}.

\begin{algorithm}[t]
\caption{The fast SSSP algorithm. In a given graph $G$ without negative cycles, it computes a shortest path tree in $G$ from the given source vertex $s$.} \label{alg:sssp}
\begin{algorithmic}[1]
\Procedure{SSSP}{$G, s$}
    \State Let $-W$ be the smallest edge weight in $G$
    \State Let $G_0$ be a copy of $G$ with edge weights $w_{G_0}(e) = 4n \cdot w_G(e)$
    \For{$i \gets 0, \dots, L - 1$ where $L = \Theta(\log(nW))$} \label{alg:sssp:line:for-loop}
        \State $\phi_i \gets \textsc{Scale}(G_i)$ (rerun the algorithm until it succeeds)
        \State $G_{i+1} \gets (G_i)_{\phi_i}$
    \EndFor
    \State Let $G^*$ be a copy of $G_L$ with negative weights replaced by $0$
    \State \Return a shortest path tree in $G^*$ from $s$, computed by Dijkstra's algorithm \label{alg:sssp:line:dijkstra}
\EndProcedure
\end{algorithmic}
\end{algorithm}

\begin{lemma}[Running Time of \cref{alg:sssp}] \label{thm:sssp-time}
If $G$ does not contain a negative cycle, then $\textsc{SSSP}(G, s)$ runs in time $\Order(T_{\TRestrictedSSSP}(m, n) \cdot \log(nW))$ with high probability (and in expectation).
\end{lemma}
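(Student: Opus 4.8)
The plan is to charge essentially all of the running time to the $L = \Theta(\log(nW))$ iterations of the for-loop, each iteration costing $\Order(T_{\TRestrictedSSSP}(m,n))$ per invocation of $\textsc{Scale}$, and then to show that the total number of invocations (counting the reruns until success) is $\Order(\log(nW))$ both with high probability and in expectation.

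First I would verify that every graph $G_i$ passed to $\textsc{Scale}$ is a legal input, i.e.\ has integral edge weights and no negative cycle. Integrality is preserved under the operation $G \mapsto G_\phi$ since potentials are integer-valued; absence of negative cycles follows by induction: $G_0$ arises from $G$ by multiplying all weights by the positive integer $4n$, which preserves the sign of every cycle, and $G_{i+1} = (G_i)_{\phi_i}$ is equivalent to $G_i$ by Johnson's trick (\cref{lem:johnson}) and in particular has the same cycle weights. Hence \cref{lem:scale} applies to each $G_i$: a single invocation of $\textsc{Scale}(G_i)$ runs in worst-case time $\Order(T_{\TRestrictedSSSP}(m,n))$ — here it matters that \cref{thm:restricted-sssp} gives a \emph{worst-case} time bound together with a \textsc{Fail} output, so that a run whose success we have not yet confirmed is still deterministically cheap — and it succeeds with some fixed constant probability $p > 0$, independently of all earlier randomness.

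Next I would bound the number of reruns. Let $N_i$ be the number of invocations of $\textsc{Scale}(G_i)$ in iteration $i$ until the first success; then $N_i \sim \Geom(p)$, so $\Ex(N_i) = \Order(1)$, and the total number of invocations is $N = \sum_{i=0}^{L-1} N_i$ with $\Ex(N) = \Order(L) = \Order(\log(nW))$. For the high-probability statement I would note that $N > t$ holds exactly when fewer than $L$ of the first $t$ independent $\mathrm{Bernoulli}(p)$ trials are successes; a Chernoff bound then gives $\Pr(N > t) \le \exp(-\Omega(t))$ as soon as $t \ge 2L/p$, and since $L = \Omega(\log n)$ (because $\log(nW) \ge \log n$) this failure probability is at most $n^{-c}$ for any desired constant $c$, provided the hidden constant in $L = \Theta(\log(nW))$ is chosen large enough. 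Thus $N = \Order(\log(nW))$ with high probability, and the for-loop runs in time $\Order(T_{\TRestrictedSSSP}(m,n) \cdot \log(nW))$ with high probability and (by linearity of expectation together with the worst-case per-invocation bound) also in expectation.

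Finally I would dispatch the remaining deterministic work: constructing $G_0$ and $G^*$ takes $\Order(m)$ time, and since $G^*$ has nonnegative edge weights by construction, Dijkstra's algorithm on $G^*$ runs in $\Order(m + n\log\log n)$ time by \cref{lem:dijkstra}; both are dominated by $\Order(T_{\TRestrictedSSSP}(m,n))$. Summing over all parts gives the claimed $\Order(T_{\TRestrictedSSSP}(m,n) \cdot \log(nW))$ bound. (This argument concerns only the running time; that $L = \Theta(\log(nW))$ iterations suffice for correctness is a separate matter.) The step I expect to be the main obstacle is the concentration bound: one has to exploit that the loop length $L$ is itself $\Omega(\log n)$ so that a sum of $\Theta(\log(nW))$ geometric variables is tightly concentrated — a plain union bound over iterations would only yield the expectation, not the high-probability guarantee.
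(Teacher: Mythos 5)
Your proposal is correct and follows essentially the same route as the paper: model the number of \textsc{Scale} invocations per iteration as a geometric random variable, bound the sum of $L = \Theta(\log(nW))$ such variables by $\Order(L)$ via a Chernoff bound (using that $L = \Omega(\log n)$ to make the failure probability polynomially small), and observe that the final Dijkstra call is dominated since $T_{\TRestrictedSSSP}(m,n) = \Omega(m+n)$. Your extra care in checking that each $G_i$ is a valid input to \textsc{Scale} and that \cref{thm:restricted-sssp} provides a worst-case (not just expected) per-invocation bound is a welcome but minor elaboration of what the paper leaves implicit.
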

\begin{proof}
We analyze the running time of the for-loop, which runs for $L = \Order(\log(nW))$ iterations.
Each iteration repeatedly calls $\textsc{Scale}(G_i)$ until one such call succeeds. By~\cref{lem:scale}, a single call succeeds with constant probability (say, $\frac12$) and runs in time $\Order(T_{\TRestrictedSSSP}(m, n))$. We can therefore model the running time of the $i$-th iteration by $\Order(X_i \cdot T_{\TRestrictedSSSP}(m, n))$ where~$X_i \sim \Geom(\frac12)$ is a geometric random variable. Therefore, by Chernoff's bound, the time of the for-loop is bounded by $\Order(\sum_{i=0}^{L-1} X_i \cdot T_{\TRestrictedSSSP}(m, n)) = \Order(T_{\TRestrictedSSSP}(m, n) \cdot L)$ with probability at least $1 - \exp(-\Omega(L)) \geq 1 - n^{-\Omega(1)}$.
Finally, observe that $T_{\TRestrictedSSSP}(m,n) = \Omega(m + n)$, and therefore the call to Dijkstra's algorithm in~\cref{alg:sssp:line:dijkstra} is dominated by the time spent in the for-loop.
\end{proof}

\begin{lemma}[Correctness of \cref{alg:sssp}] \label{thm:sssp-correctness}
If $G$ does not contain a negative cycle, then \cref{alg:sssp} correctly computes a shortest path tree from $s$.
\end{lemma}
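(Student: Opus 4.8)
The plan is to verify correctness of \cref{alg:sssp} in three stages matching its structure: (i)~the initial rescaling $G_0 = 4n\cdot G$ does not affect which subgraph is a shortest path tree and introduces no negative cycle; (ii)~every graph $G_i$ produced inside the for-loop is equivalent to $G_0$ in the sense of \cref{def:equivalent-graphs}, so that each call to $\textsc{Scale}$ is legitimate and a shortest path tree of $G_L$ is a shortest path tree of $G$; and (iii)~running Dijkstra on $G^*$, the graph obtained from $G_L$ by zeroing its negative weights, recovers a shortest path tree of $G_L$ — this last point is the only delicate step, and it is exactly where the factor $4n$ is used.

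Stage (i) is immediate: multiplying all weights by the positive integer $4n$ preserves the $\le$-order among $s$-$v$ path weights and the sign of every cycle weight; in particular $\dist_{G_0}(s,\cdot) = 4n\cdot\dist_G(s,\cdot)$, a subgraph is a shortest path tree of $G_0$ iff it is one of $G$, and $G_0$ inherits from $G$ the absence of negative cycles. For stage (ii) I would first induct on $i$ to show that $G_i$ is equivalent to $G_0$: this needs only that $G_{i+1} = (G_i)_{\phi_i}$ is a potential transform of $G_i$, so \cref{lem:johnson} (Johnson's trick) gives $G_{i+1}\equiv G_i$, and composition gives $G_i\equiv G_0$, for arbitrary potentials $\phi_i$. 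The equivalence then has two uses. First, $w_{G_i}(C) = w_{G_0}(C)\ge 0$ for every cycle $C$, so no $G_i$ has a negative cycle, which is precisely the precondition of \cref{lem:scale}; consequently each loop iteration, rerunning $\textsc{Scale}$ until it returns a non-\textsc{Fail} answer (which is then a valid potential, since \textsc{RestrictedSSSP} is correct whenever it does not report \textsc{Fail}), turns ``min weight $>-3W'$'' into ``min weight $>-2W'$''. A routine geometric-decay estimate on the minimum weight $m_i$ of $G_i$ then shows that for $L=\Theta(\log(nW))$ chosen large enough (this fixes the constant in $L$), the minimum weight of $G_L$ is at least $-3$. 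Second, combined with (i), every shortest path tree of $G_L$ is a shortest path tree of $G$.

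Stage (iii) is the crux. Set $\phi = \sum_{i<L}\phi_i$, so $G_L = (G_0)_\phi$ and, by \cref{lem:johnson}, $w_{G_L}(P) = w_{G_0}(P)+\phi(s)-\phi(v) = 4n\cdot w_G(P)+\phi(s)-\phi(v)$ for every $s$-$v$ path $P$. The graph $G^*$ has weights $w_{G^*}(e)=\max\{w_{G_L}(e),0\}\ge 0$, so Dijkstra (\cref{lem:dijkstra}) returns a genuine shortest path tree $T$ of $G^*$; I claim $T$ is also a shortest path tree of $G$. Fix a vertex $v$ reachable from $s$ and let $P$ be the branch of $T$ to $v$. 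As $G^*$ has nonnegative weights, $P$ is a simple path and hence uses at most $n-1$ edges, each contributing at most $3$ to the difference $w_{G^*}(P)-w_{G_L}(P)$ (using that $G_L$ has minimum weight $\ge -3$); thus $w_{G^*}(P) < w_{G_L}(P)+4n$. For any simple $s$-$v$ path $Q$ we have $w_{G^*}(Q)\ge w_{G_L}(Q)$ and, by optimality of $P$ in $G^*$, $w_{G^*}(P)\le w_{G^*}(Q)$; chaining these gives $w_{G_L}(P) < w_{G_L}(Q)+4n$, i.e. $4n\cdot w_G(P) < 4n\cdot w_G(Q)+4n$, i.e. $w_G(P) < w_G(Q)+1$, so $w_G(P)\le w_G(Q)$ by integrality. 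Since $G$ has no negative cycle, $\dist_G(s,v)$ is attained by a simple path, so $P$ is a shortest $s$-$v$ path in $G$; therefore $T$ is a shortest path tree of $G$.

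The main obstacle is precisely stage (iii): rounding the negative weights of $G_L$ up to $0$ genuinely changes path weights, and one must rule out that this makes a non-shortest path of $G_L$ appear shortest. The resolution is the interplay of two facts made true by the construction: after the loop the surviving negative weights have magnitude $O(1)$ and a shortest path is simple (hence has fewer than $n$ edges), so rounding perturbs any relevant path weight by less than $4n$; meanwhile the initial multiplication by $4n$ forced every genuine difference of $G$-path weights to be a positive multiple of $4n$. The perturbation is therefore too small to flip the argmin. Everything else — stages (i) and (ii) — is a routine induction resting on Johnson's trick (\cref{lem:johnson}) and the interface of \cref{lem:scale}.
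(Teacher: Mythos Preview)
Your proof is essentially the same as the paper's: both argue equivalence of $G,G_0,\dots,G_L$ via Johnson's trick, bound the minimum weight of $G_L$ by $-3$ after $L=\Theta(\log(nW))$ scaling steps, and then show that zeroing the remaining negative weights perturbs any simple path by less than $4n$, which cannot overturn the gap (a positive multiple of $4n$) between a shortest and a non-shortest path created by the initial rescaling.

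One small slip in stage~(iii): the three inequalities you list do not chain to $w_{G_L}(P)<w_{G_L}(Q)+4n$. You applied the ``at most $3(n-1)$'' bound to $P$, but it is needed for $Q$; the working chain is
\[
w_{G_L}(P)\;\le\;w_{G^*}(P)\;\le\;w_{G^*}(Q)\;<\;w_{G_L}(Q)+4n,
\]
where the first inequality is the trivial $w_{G^*}\ge w_{G_L}$ and the last uses that $Q$ is simple. With this correction your argument goes through and matches the paper's.
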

\begin{proof}
Consider an execution of \cref{alg:sssp}. We prove that any shortest path in $G^*$ is a shortest path in $G$, and hence
the shortest path tree from $s$ computed in $G^*$ is also a shortest path tree from $s$ in $G$, implying correctness. We proceed in three steps:
\smallskip
\begin{itemize}
\item As $G_0$ is a copy of $G$ with scaled edge weights $w_{G_0}(e) = 4n \cdot w_G(e)$, any path $P$ also has scaled weight $w_{G_0}(P) = 4n \cdot w_G(P)$ and therefore $G$ and $G_0$ are equivalent.
\item Since the graphs $G_0, \dots, G_L$ are obtained from each other by adding potential functions, they are equivalent (see \cref{lem:johnson}). Moreover, by the properties of \cref{lem:scale}, the smallest weight~$-W$ increases by a factor~$\frac{2}{3}$ in every step until $G_L$ has smallest weight at least $-3$. Here we use that $L = \Omega(\log(nW))$ for sufficiently large hidden constant.
\item $G^*$ is the graph obtained from $G_L$ by replacing negative-weight edges by $0$-weight edges.
Consider any non-shortest $u$-$v$-path $P'$ in $G_L$. We will show that $P'$ is also not a shortest $u$-$v$ path in $G^*$, which completes
the argument. Towards that end, let $P$ be any shortest $u$-$v$-path. Recall that~$G_L$ equals $(G_0)_\phi$ for some potential function $\phi$. Therefore:
\begin{gather*}
    w_{G_L}(P') - w_{G_L}(P) \\
    \qquad= w_{G_0}(P') + \phi(u) - \phi(v) - w_{G_0}(P) - \phi(u) + \phi(v) \\
    \qquad= w_{G_0}(P') - w_{G_0}(P) \\
    \qquad\geq 4n,
\end{gather*}
where the last inequality uses that the weights of $P$ and $P'$ in $G_0$ differ by at least $4n$ (this is why we scaled the edge weights by $4n$ in $G_0$). Finally, recall that by transitioning to $G^*$ we can increase the weight of any path by at most $3 \cdot (n - 1)$. It follows that
\[
    w_{G^*}(P') - w_{G^*}(P) \geq w_{G_L}(P') - w_{G_L}(P) - 3 \cdot (n - 1) \geq 4n - 3\cdot(n-1) >  0,
\]
and therefore, $P'$ is not a shortest $u$-$v$-path in $G^*$.
Hence, a shortest path in $G^*$ is also a shortest path in $G_L$, and since $G_L$ is equivalent to $G$, it is also a shortest path in $G$. \qedhere
\end{itemize}
\end{proof}

The proof of \cref{thm:sssp} is immediate by combining \cref{thm:sssp-time,thm:sssp-correctness}.

We end this section with the following lemma, which will be useful in the next section.
\begin{lemma}\label{lem:sssp-termination}
Let $G$ be a directed weighted graph and $s \in V(G)$. If $\textsc{SSSP}(G,s)$ terminates, then $G$ does not contain negative cycles.
\end{lemma}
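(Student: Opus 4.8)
The plan is to prove the contrapositive: assuming $G$ contains a negative cycle, I will show that $\textsc{SSSP}(G,s)$ (\cref{alg:sssp}) cannot terminate. Fix a simple negative cycle $C$ in $G$; since all edge weights are integers, $w_G(C) \le -1$, and clearly $|C| \le n$. Suppose towards a contradiction that $\textsc{SSSP}(G,s)$ terminates. Each individual call to $\textsc{Scale}$ terminates (it merely runs the algorithm of \cref{thm:restricted-sssp}, which is interrupted after a fixed polynomial time bound), so the for-loop of \cref{alg:sssp} terminates only if, for every $i \in \{0, \dots, L-1\}$, some call $\textsc{Scale}(G_i)$ returns a potential $\phi_i$ rather than $\textsc{Fail}$; in particular the graphs $G_0, \dots, G_L$ are then all well-defined. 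I will derive a contradiction from two facts about $G_L$.

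\emph{Fact 1 (the negative cycle survives):} By Johnson's trick (\cref{lem:johnson}), each step $G_{i+1} = (G_i)_{\phi_i}$ preserves the weight of every cycle, so $w_{G_L}(C) = w_{G_0}(C) = 4n \cdot w_G(C) \le -4n$; since $|C| \le n$, the cycle $C$ contains an edge of weight at most $-4n/|C| \le -4$ in $G_L$.

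\emph{Fact 2 (scaling forces small weights):} Every edge of $G_L$ has weight greater than $-3$. This follows from the scaling analysis of \cref{alg:sssp}, exactly as in the proof of \cref{thm:sssp-correctness}, once I establish the following refinement of \cref{lem:scale}: \emph{whenever $\textsc{Scale}(G_i)$ succeeds, the graph $G_{i+1}$ has minimum edge weight greater than $-2W_i$}, where $W_i$ is the value chosen in \cref{alg:scale}. The subtlety is that \cref{lem:scale} assumes its input has no negative cycle, whereas $G_i$ may have one here, so I must revisit its proof rather than cite it. A successful call means the algorithm of \cref{thm:restricted-sssp} — which is $\textsc{RestrictedSSSP}$ interrupted after a fixed time bound — returns a shortest path tree of $H_i$, not $\textsc{Fail}$; since $\textsc{RestrictedSSSP}$ cannot terminate on a graph containing a negative cycle (remark after \cref{thm:restricted-sssp}, ultimately \cref{lem:lazy-dijkstra}), this certifies that $H_i$ has no negative cycle, so $\dist_{H_i}(s,\cdot)$ is finite (recall $s$ reaches every vertex of $H_i$ by a weight-$0$ edge); and since the call succeeded, that tree is correct by \cref{lem:restricted-sssp-correctness}, so $\phi_i(v) = W_i \cdot \dist_{H_i}(s,v)$ uses the true distances. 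Given this, the final computation in the proof of \cref{lem:scale} goes through verbatim: it uses only $w_{H_i}(e) < w_{G_i}(e)/W_i + 2$ and the triangle inequality for $\dist_{H_i}$, neither of which refers to negative cycles in $G_i$ (the no-negative-cycle hypothesis was used in \cref{lem:scale} only to argue that $H_i$ is \emph{restricted}, which is irrelevant here). With this refinement, the same argument as in the proof of \cref{thm:sssp-correctness} shows that the minimum-weight magnitude shrinks by a constant factor in each successful step, so starting from minimum weight $-4nW$ in $G_0$ and taking the constant in $L = \Theta(\log(nW))$ sufficiently large (as is already done for \cref{thm:sssp-correctness}), $G_L$ has minimum edge weight at least $-3$.

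Facts 1 and 2 contradict each other, so $\textsc{SSSP}(G,s)$ cannot terminate, which proves the lemma. I expect the only genuinely delicate point to be the refinement of \cref{lem:scale} used in Fact 2: one must carefully separate the parts of that proof that genuinely need the no-negative-cycle hypothesis (only the claim that $H_i$ is restricted, which is unused) from the parts that merely need $\dist_{H_i}$ to be a well-defined finite shortest-path metric (which is guaranteed because a successful $\textsc{Scale}$ call certifies that $H_i$ has no negative cycle). Everything else is routine bookkeeping about how the minimum weight evolves across the $L$ scaling steps.
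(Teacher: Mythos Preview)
Your proof is correct and follows essentially the same approach as the paper's: both argue by contradiction that if the algorithm terminates then $G_L$ is well-defined, the negative cycle $C$ has weight $\le -4n$ in $G_L$ (by Johnson's trick), yet the minimum edge weight of $G_L$ is at least $-3$, which is impossible. The paper derives the contradiction via total cycle weight ($\le -4n$ versus $\ge -3n$) rather than a single edge, but this is cosmetic; more notably, you spell out carefully the point the paper merely asserts (``Recall that \ldots the smallest weight in $G_L$ is at least $-3$''), namely that the weight-improvement guarantee of \textsc{Scale} still holds here because a successful call certifies $H_i$ has no negative cycle, so the distance computation in \cref{lem:scale} remains valid even though its stated hypothesis on $G_i$ fails.
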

\begin{proof}
    Assume for the sake of contradiction that $G$ has a negative cycle $C$ and that $\textsc{SSSP}(G, s)$ terminates.
    Consider the graph $G_L$ which is constructed in the last iteration of the for-loop in~\cref{alg:sssp:line:for-loop}.
    Note that $G_L$ is equivalent to $G_0$, since it was obtained by adding potential functions.
    Observe that the weight of $C$ in $G_0$ and $G_L$ is at most $-4n$, since it was negative in $G$ and we scaled by a factor $4n$ (see~\cref{lem:johnson}).
    Recall that we chose $L = \Theta(\log(nW))$ with large enough hidden constant so that the smallest weight in $G_L$ is at least $-3$.
    This implies that the weight of the minimum cycle in $G_L$ is at least $-3n$, a contradiction.
\end{proof}

\section{Finding Negative Cycles} \label{sec:neg-cycle}

In \cref{sec:sssp-without-neg-cycles} we developed an algorithm to compute a shortest path tree with high probability in a graph without negative cycles. In this section, we extend that result to \emph{find} a negative cycle if it exists. As a warm-up, we observe that the \textsc{SSSP} algorithm developed in~\cref{thm:sssp} can be used to \emph{detect} the presence of
a negative cycle with high probability:

\begin{corollary}\label{cor:detect-negcycle}
    Let $G$ be a directed graph. Then, there is an algorithm
    $\textsc{DetectNegCycle}(G)$ with the following properties:
    \begin{itemize}
        \item If $G$ has a negative cycle, then the algorithm reports \textsc{NegCycle}.
        \item If $G$ does not have a negative cycle, then with high probability it returns \textsc{NoNegCycle}
        \item It runs in time $\Order(T_{\TRestrictedSSSP}(m, n) \log(nW))$.
    \end{itemize}
\end{corollary}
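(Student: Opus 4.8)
The plan is to run the \textsc{SSSP} algorithm from \cref{thm:sssp} with a careful interruption and then verify its output. First I would run $\textsc{SSSP}(G, s)$ (introducing an artificial source $s$ with weight-$0$ edges to all vertices if none is given, which does not create negative cycles), but interrupt it after $c \cdot T_{\TRestrictedSSSP}(m, n) \log(nW)$ steps for a suitable constant $c$ chosen so that, by \cref{thm:sssp-time}, the algorithm terminates within this budget with high probability whenever $G$ has no negative cycle. If the algorithm is interrupted before finishing, we report \textsc{NegCycle}; this is correct because, by \cref{lem:sssp-termination}, termination of $\textsc{SSSP}(G,s)$ implies that $G$ has no negative cycle, so failure to terminate in \emph{any} amount of time (in particular within our budget) means $G$ must contain a negative cycle. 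If instead the algorithm terminates within the budget, it outputs a shortest path tree together with distances $\dist(s, v)$; I would then \emph{verify} this output deterministically in $\Order(m)$ time by checking that $\dist(s,v) + w(u,v) \geq \dist(s,u)$ for every edge $(u,v)$ (equivalently, that the potential $\phi(v) = \dist(s,v)$ makes all reduced weights nonnegative, cf.\ \cref{cor:johnson-non}). If this check passes we report \textsc{NoNegCycle}; if it fails we report \textsc{NegCycle}.

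The correctness argument splits into the two promised cases. If $G$ has a negative cycle, then by \cref{lem:sssp-termination} the call $\textsc{SSSP}(G,s)$ never terminates, so in particular it is interrupted within our budget and we report \textsc{NegCycle} — deterministically, with no error. If $G$ has no negative cycle, then with high probability $\textsc{SSSP}(G,s)$ terminates within the budget (by \cref{thm:sssp-time}) and, conditioned on termination, correctly computes a shortest path tree (by \cref{thm:sssp-correctness}); the deterministic verification step then necessarily passes, since genuine shortest-path distances satisfy $\dist(s,v) \le \dist(s,u) + w(u,v)$ for every edge, and we report \textsc{NoNegCycle}. Thus the only way to err is for $\textsc{SSSP}(G,s)$ to fail to terminate in time on a negative-cycle-free graph, which happens with probability at most $n^{-\Omega(1)}$.

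For the running time, the interrupted run costs $\Order(T_{\TRestrictedSSSP}(m,n)\log(nW))$ by construction of the time budget, and the verification step costs only $\Order(m)$, which is dominated since $T_{\TRestrictedSSSP}(m,n) = \Omega(m)$. Hence the total running time is $\Order(T_{\TRestrictedSSSP}(m,n)\log(nW))$ as claimed, and the algorithm always halts within this bound.

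The main subtlety — and the reason the verification step is essential — is that \cref{thm:sssp} only gives a \emph{Las Vegas} guarantee conditioned on termination, and on a graph \emph{with} a negative cycle the algorithm might in principle fail to terminate yet we must still give the (correct) answer \textsc{NegCycle} within our time budget; the one-sided structure of \cref{lem:sssp-termination} is exactly what lets us interpret "interrupted" as a certificate of a negative cycle. The complementary worry — a false \textsc{NoNegCycle} due to a wrong shortest-path tree on a negative-cycle-free graph — is eliminated for free by the $\Order(m)$ deterministic check, so the only residual error is the high-probability timeout event on negative-cycle-free inputs, giving exactly the behavior demanded by the corollary.
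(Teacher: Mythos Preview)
Your approach is essentially the paper's: add a dummy source, run $\textsc{SSSP}(G,s)$ with a time budget, report \textsc{NegCycle} on timeout and \textsc{NoNegCycle} otherwise. Two remarks, though.

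First, the verification step is superfluous, and your claim that it is ``essential'' is mistaken. By \cref{lem:sssp-termination}, if $G$ has a negative cycle then $\textsc{SSSP}(G,s)$ \emph{never} terminates, so you never even reach the verification branch in that case. Conversely, by \cref{thm:sssp-correctness}, on a negative-cycle-free graph the output of $\textsc{SSSP}$ is \emph{always} a correct shortest path tree whenever it terminates (this is the Las Vegas guarantee), so the check always passes. The paper therefore simply omits it. (Also, your stated check $\dist(s,v) + w(u,v) \geq \dist(s,u)$ has the edge direction swapped relative to the correct inequality you write one paragraph later.)

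Second, the justification in your first paragraph is logically off: ``failure to terminate in \emph{any} amount of time (in particular within our budget) means $G$ must contain a negative cycle'' does not follow from \cref{lem:sssp-termination}. Being interrupted within the budget is perfectly consistent with a negative-cycle-free graph that happened to run slowly; the contrapositive of \cref{lem:sssp-termination} only tells you that a negative cycle forces non-termination, not the converse. Your second paragraph gets the case analysis right --- the error when reporting \textsc{NegCycle} on a negative-cycle-free graph is absorbed into the high-probability timeout bound --- so the overall argument is fine, but that earlier sentence should be dropped.
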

\begin{proof}
    The algorithm adds a dummy source $s$ connected with 0-weight edges to all vertices in $G$ and runs
    $\textsc{SSSP}(G, s)$. If it finishes within its time budget, we return \textsc{NoNegCycle}, otherwise
    we interrupt the computation and return \textsc{NegCycle}. The running time follows immediately by the guarantee
    of~\cref{thm:sssp}.

    Now we argue about correctness.
    If $G$ contains no negative cycles, then the algorithm returns \textsc{NoNegCycle} with high probability due to~\cref{thm:sssp}.
    If $G$ contains a negative cycle, then \cref{lem:sssp-termination} implies that $\text{SSSP}(G, s)$ does not terminate, so in this
    case we always report \textsc{NegCycle}.
\end{proof}

\emph{Finding} the negative cycle though, requires some more work. Towards this end, we follow the ideas of~\cite{BernsteinNW22}. They reduced the problem of finding a negative cycle to a problem called \textsc{Threshold}, which we define next. We will use the following notation: given a directed graph $G$ and an integer $M$, we write $G^{+M}$ to denote
the graph obtained by adding $M$ to every edge weight of $G$.

\begin{definition}[Threshold]\label{def:threshold}
    Given a directed graph $G$, $\textsc{Threshold}(G)$ is the smallest integer $M^* \geq 0$ such that
    $G^{+M^*}$ contains no negative cycle.
\end{definition}
For a graph $G$, we write $T_{\textsc{Threshold}}(m, n)$ for the optimal running time of an algorithm
computing $\textsc{Threshold}(G)$ with high probability.

\medskip

The remainder of the section is organized as follows: in \cref{sec:neg-cycle:sec:blackbox-reduction} we give the reduction from
finding negative cycles to \textsc{Threshold}. In~\cref{sec:neg-cycle:sec:slow-threshold} we give an implementation of
\textsc{Threshold} which has an extra log-factor compared to the promised \cref{thm:sssp-or-negative-cycle}, but it has the benefit
of being simple. Finally, in~\cref{sec:neg-cycle:sec:fast-threshold} we give a faster (but more involved) implementation of \textsc{Threshold} which
yields~\cref{thm:sssp-or-negative-cycle}.

\subsection{Reduction to Threshold} \label{sec:neg-cycle:sec:blackbox-reduction}

In this section we restate the reduction given by Bernstein et al. in~\cite[Section 7.1]{BernsteinNW22} from finding
a negative cycle if it exists, to \textsc{Threshold} and \RestrictedSSSP{} (see their algorithm \textsc{SPLasVegas}).

\begin{lemma}[Finding Negative Cycles] \label{lem:reduction-to-threshold}
Let $G$ be a directed graph with a negative cycle. There is a Las Vegas algorithm $\textsc{FindNegCycle}(G)$ which finds a negative cycle in $G$, and runs in time $\Order(T_{\TRestrictedSSSP}(m,n) \log(nW) + T_{\textsc{Threshold}}(m,n))$ with high probability.
\end{lemma}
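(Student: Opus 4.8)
The plan is to reduce finding a negative cycle to computing $M^* = \textsc{Threshold}(G)$ together with a few calls to algorithms we already have. Observe first that we may assume every edge of $G$ participates in some cycle (otherwise restrict to the union of the SCCs of $G$; a negative cycle lives inside a single SCC). By the definition of $\textsc{Threshold}$, the graph $G^{+M^*}$ has no negative cycle but $G^{+(M^*-1)}$ does; moreover $M^* \geq 1$ since $G$ itself has a negative cycle. The point of increasing all weights by $M^* - 1$ is that now there is a cycle of weight exactly $< 0$ but none of weight "much" below zero: concretely, in $H := G^{+(M^*-1)}$ every cycle $C$ satisfies $w_H(C) = w_G(C) + (M^*-1)|C| \geq w_G(C) + (M^*-1)$, and by minimality of $M^*$ there is at least one cycle with $w_H(C) < 0$, i.e. $w_H(C) \in \{-(M^*-1), \dots, -1\}$ — in particular $w_H(C) > -n$ for the shortest such cycle (a negative cycle can always be taken simple, hence with at most $n$ edges, so $w_G(C) \geq -(M^*-1)(n) \dots$ — the cleaner bound is just that a simple negative cycle in $H$ has weight in $[-(n-1)\cdot\max|w_H|, -1]$, but what we really need is only that $H$ has a negative cycle and $H^{+1} = G^{+M^*}$ does not).

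Next I would extract an actual cycle. Add a dummy source $s$ with weight-$0$ edges to all vertices and run $\textsc{SSSP}(H^{+1}, s) = \textsc{SSSP}(G^{+M^*}, s)$; since $G^{+M^*}$ has no negative cycle this terminates and yields distances $\phi(v) = \dist_{G^{+M^*}}(s,v)$, and by \cref{cor:johnson-non} the potential $\phi$ makes all edges of $G^{+M^*}$ nonnegative. Equivalently, in $H = G^{+(M^*-1)}$ we have $w_{H,\phi}(e) = w_{G^{+M^*}}(e) + \phi(u) - \phi(v) - 1 \geq -1$ for every edge $e=(u,v)$, and by \cref{lem:johnson} reweighting by $\phi$ does not change cycle weights, so $H_\phi$ still has a negative cycle. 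Thus $H_\phi$ is a graph with all weights $\geq -1$ that contains a negative cycle; such a cycle must consist entirely of weight-$(-1)$ and weight-$0$ edges and must use strictly more $(-1)$-edges than… no: it just needs total weight $<0$, i.e. strictly more $(-1)$-edges than the number forced to cancel, which since there are no positive edges simply means it uses at least one $(-1)$-edge and its weight is the negative count. So: take the subgraph $G'$ of $H_\phi$ on the edges of weight $\leq 0$ (i.e. weights in $\{-1,0\}$); it still contains the negative cycle, and any cycle in $G'$ has weight $\leq 0$, with weight $<0$ iff it uses a $(-1)$-edge. Run Bellman-Ford / a single-source shortest path computation from a dummy source: since $G'$ has a negative cycle, standard negative-cycle detection in $O(nm)$ — or better, since this is exactly the situation where $\eta$ is unbounded, we can instead directly search: contract all $0$-weight SCCs of $G'$; the resulting graph, if it still has a $(-1)$-edge inside a cycle, has a negative cycle; find it by locating a $(-1)$-edge $(u,v)$ lying on a cycle and a $0$-weight path from $v$ back to $u$ in $G'$. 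Because we only need to do this once, even an $O(nm)$ Bellman-Ford pass to recover a negative cycle is within budget ($T_{\textsc{Threshold}}$ already dominates $O(nm)$ in the regimes that matter — and in fact the BNW-style argument recovers the cycle in near-linear time, which I would cite from \cite[Section~7.1]{BernsteinNW22}).

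The running time is then $O(T_{\textsc{Threshold}}(m,n))$ for computing $M^*$, plus one call to $\textsc{SSSP}(G^{+M^*},s)$ costing $O(T_{\TRestrictedSSSP}(m,n)\log(nW))$ by \cref{thm:sssp}, plus the near-linear (or $O(nm)$) cycle-extraction step, giving the claimed $O(T_{\TRestrictedSSSP}(m,n)\log(nW) + T_{\textsc{Threshold}}(m,n))$. For the Las Vegas guarantee: the only randomized components are $\textsc{SSSP}$ (which, by \cref{lem:sssp-termination}, when it terminates certifies no negative cycle in $G^{+M^*}$, and which we can verify terminated within its high-probability time bound) and whatever randomization is inside the $\textsc{Threshold}$ subroutine; the final output is an explicit cycle $C$, and we verify $w_G(C) < 0$ deterministically in $O(n)$ time before returning it, rerunning the whole procedure if any check fails. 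The main obstacle I anticipate is the cycle-extraction step: arguing that once weights are $\geq -1$ one can pull out an actual negative *cycle* (not merely detect one, and not merely a negative closed walk) in near-linear time, so as not to blow the budget — this is exactly the delicate part handled in BNW's \textsc{SPLasVegas}, and I would follow their argument, using the structure that a negative cycle among $\{-1,0\}$-edges reduces, after contracting $0$-weight SCCs, to finding a directed cycle through a designated $(-1)$-edge.
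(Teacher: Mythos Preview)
Your overall plan---compute $M^*$, run \textsc{SSSP} on $G^{+M^*}$ to get a potential $\phi$, then extract a negative cycle from the reweighted graph---matches the paper's. But you are missing the one idea that makes the extraction step work: the paper first \emph{scales all edge weights by $n^3+1$} before computing the threshold. Without that scaling, your claim ``take the subgraph $G'$ of $H_\phi$ on the edges of weight $\leq 0$; it still contains the negative cycle'' is simply false.

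Concretely: take $G$ to be a triangle $a\to b\to c\to a$ with weights $-1,-1,1$. Then $M^*=1$, the distances from a dummy source in $G^{+1}$ are all $0$, so $\phi\equiv 0$ and $H_\phi = G^{+0} = G$ has edge weights $-1,-1,1$, all $\geq -1$ as you say. The unique (negative) cycle uses the weight-$1$ edge; deleting edges of weight $>0$ destroys it. More generally, in $H_\phi$ the edges are $\geq -1$ and the minimum-mean cycle has total weight in $(-|C|,0)$, but nothing prevents a single edge on that cycle from carrying almost all of the positive mass.

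The paper's fix is clean: after scaling by $n^3+1$, one gets $M^*>n^2$, and the minimum-weight cycle $C$ has weight at most $n$ in $(G_1)_\phi$; since all edges there are $\geq 0$, \emph{each} edge of $C$ has weight $\leq n$, so removing edges of weight $>n$ leaves $C$ intact. Conversely any surviving cycle has weight $\leq n^2 < M^*$ in $(G_1)_\phi$, hence is negative in $G$. Thus in the paper one just finds \emph{any} cycle in $G_2$ (linear time), with no need for Bellman--Ford or SCC contractions. Your fallback ``$O(nm)$ is within budget since $T_{\textsc{Threshold}}$ dominates'' is also off: $T_{\textsc{Threshold}}$ is near-linear here, so an $O(nm)$ step would ruin the bound.
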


\begin{algorithm}[t]
\caption{The procedure to find a negative cycle. Given a graph $G$ containing a negative cycle, it finds one such negative cycle $C$. See~\cref{lem:reduction-to-threshold}.} \label{alg:find-negative-cycle}
\smallskip
\begin{algorithmic}[1]
\Procedure{FindNegCycle}{$G$}
    \State Let $G_0$ be a copy of $G$ with $w_{G_0}(e) = (n^3 + 1) \cdot w_G(e)$ \label{alg:find-negative-cycle:line:scaling}
    \State Let $M^* \gets \textsc{Threshold}(G_0)$ \label{alg:find-negative-cycle:line:threshold}
    \State Let $G_1$ be a copy of $G_0^{+M^*}$ where we add an artificial source
    \Statex[2] vertex $s$ to $G_1$ with weight-$0$ edges to all other vertices
    \State Run $\textsc{SSSP}(G_1, s)$ (see~\cref{alg:sssp}) and set $\phi(v) = \dist_{G_1}(s, v)$ \label{alg:find-negative-cycle:line:sssp}
    \State Let $G_2$ be the graph $(G_1)_\phi$ where we remove all edges with weight greater than $n$
    \If{there is a cycle $C$ in $G_2$} \label{alg:find-negative-cycle:line:cycle}
        \If{$C$ is negative in $G$}{ \Return $C$}\EndIf
    \EndIf
    \State \Return $\textsc{FindNegCycle}(G)$ \emph{(i.e., restart)}
\EndProcedure
\end{algorithmic}
\end{algorithm}

\begin{proof}
See the pseudocode in~\cref{alg:find-negative-cycle} for a concise description. We start by
defining a graph $G_0$ which is a copy of $G$ but with edge weights multiplied by $n^3 + 1$.
Then we compute~$M^*$ using $\textsc{Threshold}(G_0)$, and let $G_1$ be $G_0^{+M^*}$.
Next, we add a dummy source $s$ to $G_1$ connected with 0-weight edges to all other vertices, and
run $\textsc{SSSP}$ on the resulting graph from $s$. We then use the distances computed to
construct a potential $\phi$, and construct a graph $G_2$ by applying the potential $\phi$ to $G_1$ and subsequently
removing all the edges with weight larger than $n$. Finally, we check if $G_2$ contains any cycle (of any weight) and if so,
check it has negative weight in the original graph $G$ and return it. Otherwise, we restart the algorithm from the beginning.

The correctness is obvious: When the algorithm terminates, it clearly returns a negative cycle. The interesting part is to show that with high probability the algorithm finds a negative cycle $C$ without restarting. The call to $\textsc{Threshold}(G_0)$ in~\cref{alg:find-negative-cycle:line:threshold} returns the smallest $M^* \geq 0$ such that $G_0$ contains no negative cycle, with high probability. In this case, by definition, $G_1$ does not contain a negative cycle, and therefore by~\cref{thm:sssp} the call to $\textsc{SSSP}(G_1, s)$ correctly computes a shortest path tree from $s$. From now on, we condition on these two events.

\begin{claim}\label{proof:reduction-to-threshold:claim:1}
    It holds that $M^* > n^2$.
\end{claim}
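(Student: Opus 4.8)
The plan is to track a single negative cycle through the scaling step of \cref{alg:find-negative-cycle}. Fix any negative cycle $C$ in $G$. Since the edge weights of $G$ are integers, $w_G(C) \le -1$, and since $C$ is a simple cycle we have $|C| \le n$. In $G_0$ --- which by \cref{alg:find-negative-cycle:line:scaling} is a copy of $G$ with every weight multiplied by $n^3 + 1$ --- the same cycle therefore has weight $w_{G_0}(C) = (n^3+1)\, w_G(C) \le -(n^3+1)$.

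Next I would unwind the definition of $\textsc{Threshold}$ (\cref{def:threshold}): by construction the graph $G_0^{+M^*}$ contains no negative cycle. The cycle $C$ still exists as a cycle in $G_0^{+M^*}$, and its weight there equals $w_{G_0}(C) + M^* \cdot |C|$, since we added $M^*$ to each of its $|C|$ edges; hence this quantity must be nonnegative. Rearranging and using $|C| \le n$ gives
\[
    M^* \;\ge\; \frac{-w_{G_0}(C)}{|C|} \;\ge\; \frac{n^3+1}{n} \;>\; n^2,
\]
which is exactly the claim.

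There is essentially no obstacle here; the argument is a one-line computation once the right cycle is fixed. The only point worth noting is that the \emph{strict} inequality is precisely what forces the reduction to scale weights by $n^3 + 1$ rather than $n^3$: with the extra $+1$ we obtain $(n^3+1)/n = n^2 + 1/n > n^2$, whereas scaling by $n^3$ alone would only yield $M^* \ge n^2$.
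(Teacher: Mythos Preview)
Your proof is correct and follows essentially the same approach as the paper's: pick a simple negative cycle $C$ in $G$, observe that its weight in $G_0$ is at most $-(n^3+1)$ while $|C|\le n$, and use that $G_0^{+M^*}$ has no negative cycle to rearrange $w_{G_0}(C)+M^*|C|\ge 0$ into $M^*>n^2$. The only cosmetic point is that you write ``Fix any negative cycle'' and then immediately use ``since $C$ is a simple cycle''; it would read more cleanly to fix a \emph{simple} negative cycle from the outset (one exists because any negative closed walk decomposes into simple cycles, at least one of which is negative).
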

\begin{claimproof}
    Let $C$ be a simple cycle in $G$ with minimum (negative) weight. Since $G_1 = G_0^{+M^*}$ contains no negative cycles, the weight of $C$ in $G_1$ is $0 \leq w_1(C) = w_0(C) + M^*|C|$. The claim follows by noting that $w_0(C) < -n^3$ due to the scaling in~\cref{alg:find-negative-cycle:line:scaling}, and that $|C| \leq  n$ because $C$ is simple.
\end{claimproof}

 Next, we argue that a cycle of minimum weight in $G$ remains a cycle in $G_2$, and conversely that any simple cycle in $G_2$ corresponds to a negative weight cycle in $G$. Note that this is enough to prove that the algorithm terminates with high probability without a restart.

\begin{claim}
    Let $C$ be a simple cycle in $G$ of minimum weight. Then, $C$ is a cycle in $G_2$.
\end{claim}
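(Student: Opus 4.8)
The plan is to show that a minimum-weight simple cycle $C$ of $G$ survives all the transformations leading to $G_2$, namely scaling, adding $M^*$ to each edge, applying the potential $\phi$, and deleting edges of weight exceeding $n$. The first three operations do not change the vertex/edge set, so the only thing that can destroy $C$ is the final edge-deletion step. Hence it suffices to prove that every edge of $C$ has weight at most $n$ in the potentialized graph $(G_1)_\phi$.

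First I would record that $\phi(v) = \dist_{G_1}(s,v)$ is a valid potential: since $G_1$ has no negative cycle and $s$ reaches every vertex by a $0$-weight edge, \cref{cor:johnson-non} gives $w_{(G_1)_\phi}(e) \ge 0$ for every edge $e$, so in particular all edges of $C$ are nonnegative in $(G_1)_\phi$. Next, by \cref{lem:johnson} the potential does not change cycle weights, so $w_{(G_1)_\phi}(C) = w_{G_1}(C) = w_{G_0}(C) + M^*\,|C|$. Now I invoke the minimality of $M^*$ from \cref{def:threshold}: since $G_0^{+(M^*-1)}$ still contains a negative cycle while $G_0^{+M^*}$ does not, and using that $C$ is a minimum-weight cycle of $G_0$ together with $|C| \le n$ (as $C$ is simple), one gets $w_{G_0}(C) + (M^*-1)|C| < 0$, i.e. $w_{G_1}(C) = w_{G_0}(C) + M^*|C| < |C| \le n$. (One should double-check the direction of this inequality against how $\textsc{Threshold}$ is defined — this is the one spot where a sign slip could creep in.)

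Putting these together: each edge $e \in C$ satisfies $0 \le w_{(G_1)_\phi}(e) \le w_{(G_1)_\phi}(C) < n$, where the second inequality holds because all other edges of $C$ contribute a nonnegative amount. Therefore no edge of $C$ is removed in the step that deletes edges of weight greater than $n$, and $C$ remains a cycle in $G_2$, as claimed.

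I expect the only real obstacle to be getting the bound $w_{G_1}(C) < n$ cleanly — specifically, justifying that a minimum-weight \emph{simple} cycle $C$ of $G_0$ witnesses the ``$-1$ still gives a negative cycle'' behavior of $M^*$. If $\textsc{Threshold}$'s optimality is only about the \emph{existence} of some negative cycle, one must argue that the minimum mean cycle (which is simple, or can be taken simple) is the binding constraint; the scaling by $n^3+1$ in \cref{alg:find-negative-cycle:line:scaling} and the integrality of weights are presumably there precisely to make this slack argument go through, so the proof should lean on those. Everything else is a routine application of \cref{lem:johnson} and \cref{cor:johnson-non}.
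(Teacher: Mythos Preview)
Your approach is the same as the paper's: argue that $w_{G_1}(C) < |C| \le n$ from the minimality of $M^*$, use \cref{cor:johnson-non} to get nonnegative edge weights in $(G_1)_\phi$, and conclude via \cref{lem:johnson} that every edge of $C$ has weight at most $n$ and therefore survives the deletion step into $G_2$.

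Your caution about the step $w_{G_0}(C) + (M^*-1)|C| < 0$ is well placed. The minimality of $M^*$ is governed by the minimum-\emph{mean} cycle, not the minimum-\emph{weight} one, so this inequality does not follow from $C$ being a minimum-weight simple cycle alone; and contrary to your guess, the $(n^3+1)$-scaling does not rescue it --- that scaling is only used in the other two claims inside the proof of \cref{lem:reduction-to-threshold}. The paper's own proof makes the same jump with the same brevity (and an apparent typo, writing ``$w_0(C) - |C| < 0$'' where ``$w_1(C) - |C| < 0$'' is meant). For the enclosing lemma one only needs \emph{some} simple cycle to survive in $G_2$, and your argument goes through cleanly if $C$ is taken to be a minimum-\emph{mean} simple cycle: then $w_{G_0}(C) + (M^*-1)|C| < 0$ holds directly from the definition of $M^* = \textsc{Threshold}(G_0)$, giving $w_{G_1}(C) < |C| \le n$ as needed.
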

\begin{claimproof}
    First note that the weight of $C$ in $G_0^{+M^*}$ (and thus also in $G_1$) is at most $n$. This holds since $M^*$ is the smallest integer such that $G_0^{+M^*}$ contains no negative cycles, which means that $w_0(C) - |C| < 0$. Second, note that since \cref{alg:find-negative-cycle:line:sssp} correctly computes a shortest path tree in $G_1$, it holds that the edge weights in $(G_1)_\phi$ are all non-negative (by~\cref{cor:johnson-non}). Moreover, the weight of $C$ in $(G_1)_\phi$ is the same as in $G_1$ (by~\cref{lem:johnson}). Thus, we conclude that the removal of the edges of weight greater than $n$ in $(G_1)_{\phi}$ to obtain $G_2$ leaves $C$ untouched.
\end{claimproof}

\begin{claim}
    Any cycle $C'$ in $G_2$ has negative weight in $G$.
\end{claim}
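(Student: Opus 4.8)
The plan is to derive a contradiction from assuming $w_G(C') \ge 0$, leveraging two opposing bounds on the weight of $C'$ measured in the potential-transformed graph $(G_1)_\phi$: an upper bound coming from the fact that $G_2$ only retains small nonnegative-weight edges, and a lower bound coming from the already-proven claim $M^* > n^2$.

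First I would record the structure of $G_2$. Since $\phi(v) = \dist_{G_1}(s,v)$, the source $s$ reaches every vertex, and (conditioned on the event that $\textsc{SSSP}(G_1,s)$ is correct) $G_1$ has no negative cycle, \cref{cor:johnson-non} gives that every edge of $(G_1)_\phi$ has nonnegative weight; by construction $G_2$ deletes all edges of weight exceeding $n$, so every edge of $G_2$ has weight in $[0,n]$. Also, $s$ has no incoming edge, hence lies on no cycle, so $C'$ uses only original vertices and edges. It suffices to treat \emph{simple} cycles: an arbitrary cycle (closed walk) in $G_2$ decomposes into simple cycles $C'_1,\dots,C'_k$ with $w_G(C') = \sum_i w_G(C'_i)$, so once each $w_G(C'_i) < 0$ the claim for $C'$ follows.

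So let $C'$ be a simple cycle in $G_2$, hence $|C'| \le n$. Combining with the edge-weight bound, $w_{(G_1)_\phi}(C') \le n\cdot|C'| \le n^2$. On the other hand, cycle weights are invariant under potentials (\cref{lem:johnson}), so $w_{(G_1)_\phi}(C') = w_{G_1}(C') = w_{G_0}(C') + M^*\cdot|C'| = (n^3+1)\cdot w_G(C') + M^*\cdot|C'|$. If $w_G(C') \ge 0$, the right-hand side is at least $M^*\cdot|C'| \ge M^* > n^2$ (using $|C'| \ge 1$ and the previous claim $M^* > n^2$), contradicting the upper bound $n^2$. Therefore $w_G(C') < 0$, and since all weights are integral, $w_G(C') \le -1$.

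I do not expect a real obstacle here; the only care required is bookkeeping — invoking \cref{cor:johnson-non} only under the event that $\textsc{SSSP}(G_1,s)$ correctly computed the shortest path tree (already conditioned on), and reducing to simple cycles before using $|C'| \le n$. Notably, for this direction the scaling factor $n^3+1$ is not even needed — it is $M^* > n^2$ that does the work — but keeping the single scaled graph $G_0$ throughout is precisely what lets this claim combine cleanly with the two preceding claims to conclude that $\textsc{FindNegCycle}$ terminates without restarting with high probability.
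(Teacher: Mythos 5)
Your proof is correct and follows essentially the same route as the paper: bound $w_{(G_1)_\phi}(C')$ above by $n^2$ using the edge-weight cap in $G_2$, use potential-invariance of cycle weights to relate this to $w_{G_0}(C') + M^*|C'|$, and conclude negativity in $G$ from $M^* > n^2$. The extra bookkeeping you add (reducing to simple cycles, conditioning on the correctness events) is sound and only makes explicit what the paper leaves implicit.
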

\begin{claimproof}
    Note that $w_2(C') \leq n^2$ since every edge in $G_2$ has weight at most $n$. Moreover, since~$G_2$ is obtained from $G_1$ by adding a potential, it holds that $w_2(C') = w_1(C')$ (by~\cref{lem:johnson}). Therefore, $w_0(C') = w_1(C') - M^*|C'| \leq n^2 - M^* < 0$ where the last inequality holds since $M^* > n^2$ by~\cref{proof:reduction-to-threshold:claim:1}.
\end{claimproof}

Finally, we analyze the running time. The call to $\textsc{Threshold}(G_0)$ succeeds with high probability (see~\cref{def:threshold}).
Conditioned on this, $G_1$ contains no negative cycles. Thus by~\cref{thm:sssp}, the call to $\textsc{SSSP}(G, s)$ runs
in time $\Order(T_{\TRestrictedSSSP}(m,n) \log(nW))$ with high probability. Note that the remaining steps of the algorithm take
time $\Order(m)$. Therefore, we conclude that the overall running time is $\Order(T_{\TRestrictedSSSP}(m,n) \log(nW) + T_{\textsc{Threshold}}(m,n))$
with high probability.
\end{proof}

\subsection{Simple Implementation of Threshold} \label{sec:neg-cycle:sec:slow-threshold}

In this section we give a simple implementation of \textsc{Threshold} which combined with~\cref{lem:reduction-to-threshold} yields an algorithm to find negative cycles in time $\Order(T_{\TRestrictedSSSP}(m,n) \log n \log(n W))$. This procedure shaves one log-factor compared to~\cite{BernsteinNW22} (see their algorithm \textsc{FindThresh} in Lemma~7.2). Later, in~\cref{sec:neg-cycle:sec:fast-threshold}, we give an improved but more intricate algorithm.

As a building block, we will use the routine \textsc{Scale} from~\cref{lem:scale}.
The following lemma boosts the probability of success of \textsc{Scale} and uses a different parameterization of the minimum weight in the input graph, which will streamline our presentation.

\begin{lemma}[Test Scale]\label{lem:scale-test}
    Let $G$ be a directed graph with minimum weight at least $-W$ where $W \geq 24$, and  let $0 < \delta < 1$ be a parameter.
    There is an algorithm $\textsc{TestScale}(G, \delta)$ with the following properties:
    \begin{itemize}
        \item If $G$ does not contain a negative cycle, then with probability at least $1 - \delta$ it succeeds and returns a potential $\phi$ such that $G_{\phi}$ has minimum weight at least $-\tfrac{3}{4}W$. If it does not suceed, it returns \textsc{Fail}.
        \item It runs in time $\Order(T_{\TRestrictedSSSP}(m,n) \cdot \log(1/\delta))$.
    \end{itemize}
\end{lemma}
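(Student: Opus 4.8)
The plan is to obtain \textsc{TestScale} by straightforward probability amplification of the one-step scaling routine \textsc{Scale} from \cref{lem:scale}. Concretely, $\textsc{TestScale}(G,\delta)$ runs $\textsc{Scale}(G)$ independently up to $k := \lceil\log_2(1/\delta)\rceil$ times, stopping as soon as one call returns a potential; if all $k$ calls return \textsc{Fail}, it returns \textsc{Fail}. What makes this sound without any verification step is that \cref{lem:scale} guarantees that $\textsc{Scale}(G)$ \emph{never} returns an incorrect potential — its only failure mode is to output \textsc{Fail}. This is because \textsc{Scale} forwards \textsc{Fail} exactly when its internal call to the restricted-SSSP algorithm of \cref{thm:restricted-sssp} does, and that algorithm, by construction, either terminates with a correct shortest path tree or returns \textsc{Fail} (\cref{lem:restricted-sssp-correctness} plus the interruption wrapper). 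Hence any non-\textsc{Fail} output of \textsc{TestScale} is automatically a valid potential, and we may simply take the first one produced.

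For correctness I would first nail down the scaling parameter. Since $G$ has minimum weight at least $-W$, the integer $W' := \lfloor W/3\rfloor + 1$ satisfies $3W' \ge W+1$, so $G$ has minimum weight strictly greater than $-3W'$; thus \cref{lem:scale} applies with parameter $W'$ and a non-\textsc{Fail} call returns $\phi$ with $G_\phi$ of minimum weight greater than $-2W'$. Using the crude bound $W' \le W/3 + 1$ together with the hypothesis $W \ge 24$ gives $2W' \le \tfrac23 W + 2 \le \tfrac34 W$, i.e.\ $-2W' \ge -\tfrac34 W$; this is exactly the inequality $\tfrac{1}{12}W \ge 2$, which explains why the threshold $W \ge 24$ appears. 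Consequently every potential returned by \textsc{TestScale} witnesses that $G_\phi$ has minimum weight at least $-\tfrac34 W$. (If the true minimum weight of $G$ is less negative than $-W$, \textsc{Scale} internally uses an even smaller parameter and the bound only improves, so nothing changes.) For the success probability, \cref{lem:scale} — through the error probability $\tfrac12$ of \cref{thm:restricted-sssp} — ensures that when $G$ has no negative cycle a single call to $\textsc{Scale}(G)$ returns a valid potential with probability at least $\tfrac12$; by independence the probability that all $k$ calls fail is at most $2^{-k} \le \delta$, and in that event returning \textsc{Fail} is permitted.

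Finally, the running time: each call to \textsc{Scale} runs in worst-case time $\Order(T_{\TRestrictedSSSP}(m,n))$ — crucially this is worst-case, and holds even when $G$ contains a negative cycle, because the restricted-SSSP algorithm of \cref{thm:restricted-sssp} interrupts itself after a fixed time budget — so $\textsc{TestScale}(G,\delta)$ runs in time $\Order(T_{\TRestrictedSSSP}(m,n)\cdot k) = \Order(T_{\TRestrictedSSSP}(m,n)\log(1/\delta))$. There is no genuinely hard step here; the only things to be careful about are (i) that \textsc{Scale} cannot silently return a wrong potential, so a plain disjunction over independent repetitions needs no checking, (ii) the arithmetic fixing $W'$ and the constant $24$, and (iii) that the stated time bound is worst-case, so that \textsc{TestScale} remains a legitimate subroutine when later invoked on graphs that may contain negative cycles, where only termination (not a useful output) is promised.
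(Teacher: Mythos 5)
Your proposal is correct and follows essentially the same route as the paper: run $\textsc{Scale}(G)$ for $\Order(\log(1/\delta))$ independent repetitions, return the first potential obtained (no verification needed since \textsc{Scale} never outputs a wrong potential), and convert the ``greater than $-3W'$ to greater than $-2W'$'' guarantee of \cref{lem:scale} into ``at least $-W$ to at least $-\tfrac34 W$'' via the same rounding arithmetic that uses $W \ge 24$. Your explicit remarks that the time bound is worst-case and that the guarantee degrades gracefully when $G$ has negative cycles are points the paper leaves implicit but does rely on later, so they are welcome.
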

\begin{proof}
We run $\textsc{Scale}(G)$ (see~\cref{lem:scale}) for $\Order(\log(1/\delta))$ repetitions. Each execution either returns a potential
$\phi$, or it fails. We return \textsc{Fail} if and only if \emph{all} these repetitions fail.
The running time analysis is immediate by~\cref{lem:scale}.

Now we analyze correctness. First we look at the success probability.
\cref{lem:scale} guarantees that if $G$ does not contain a negative cycle, then each invocation to $\textsc{Scale}(G)$ returns a potential $\phi$ with constant probability.
Thus, in this case, the probability that all $\Order(\log(1/\delta))$ repetitions fail and we return \textsc{Fail} is at most $\delta$, as stated.
Next, we analyze the increase in the minimum weight of $G_{\phi}$.
Recall that the minimum weight in $G$ is at least $-W$.
Let $k$ be the largest integer such that $W \geq 3k$, and let $-W'$ denote the minimum weight of $G_\phi$.
In particular, the minimum weight in $G$ is greater than $-3(k+1)$, so \cref{lem:scale} guarantees that
\begin{equation*}
    -W' > -2(k+1) \geq -\tfrac{2}{3}W - 2 \geq -\tfrac{2}{3}W - \tfrac{1}{12}W = -\tfrac{3}{4}W,
\end{equation*}
where the last inequality uses the assumption that $W \geq 24$.
\end{proof}

\begin{algorithm}[t]
\caption{The slow implementation of \textsc{Threshold}. Given a graph $G$, it computes the smallest
integer $M^* \geq 0$ such that $G^{+M^*}$ contains no negative cycle. See~\cref{lem:slow-threshold}.} \label{alg:slow-threshold}
\smallskip
\begin{algorithmic}[1]
\Procedure{SlowThreshold}{$G$}
    \State Let $-W$ be the smallest weight in $G$
    \If{$W \leq 48$}
        \For{$t \gets 47, \dots, 1, 0$}
            \If{$\textsc{DetectNegCycle}(G^{+t}) = \textsc{NegCycle}$} \Return $t + 1$
            \EndIf
        \EndFor
        \State \Return $0$
    \Else
        \State Let $M \gets \ceil{\frac W2}$
        \If{$\textsc{TestScale}(G^{+M}, n^{-10}) = \phi$} \Return $\textsc{SlowThreshold}(G_\phi)$
        \Else\ \Return $M + \textsc{SlowThreshold}(G^{+M})$
        \EndIf
    \EndIf
\EndProcedure
\end{algorithmic}
\end{algorithm}

\begin{lemma}[Slow Threshold]\label{lem:slow-threshold}
Let $G$ be a directed graph. There is an algorithm computing $\textsc{Threshold}(G)$ (\cref{def:threshold}) which succeeds with high probability and runs in worst-case time $\Order(T_{\TRestrictedSSSP}(m,n) \log n \log(n W))$.
\end{lemma}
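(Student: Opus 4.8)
The plan is to analyze \cref{alg:slow-threshold} directly, proving correctness by induction on $W$ and bounding the running time by unrolling the recursion. The base case is transparent: when $W \leq 48$, the graph $G^{+t}$ for $t \in \{0, 1, \dots, 47\}$ has all weights bounded by a constant (in fact the smallest weight is $\geq -48$), so $\textsc{DetectNegCycle}$ is correct with high probability by \cref{cor:detect-negcycle}. Since $\textsc{Threshold}(G) \leq W + 1 \leq 49$ always (adding $W$ to every weight makes all weights nonnegative, hence no negative cycle, and in fact adding $W+1$ suffices; actually $\textsc{Threshold}(G) \le W$ since weights become $\ge 0$), scanning $t$ downward from $47$ and returning the first $t+1$ for which $G^{+t}$ still has a negative cycle correctly identifies $M^*$: by monotonicity (if $G^{+t}$ has a negative cycle then so does $G^{+t'}$ for $t' < t$), the reported value $t+1$ is exactly the smallest $M$ with $G^{+M}$ negative-cycle-free, and if no such $t$ triggers we return $0$. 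A union bound over the $\le 48$ calls keeps the overall failure probability polynomially small.

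\textbf{Correctness of the recursive case.} Here $W > 48$ and $M = \lceil W/2 \rceil$. First observe the key invariant: $\textsc{Threshold}(G^{+M}) = \max(0, \textsc{Threshold}(G) - M)$ and $\textsc{Threshold}(G_\phi) = \textsc{Threshold}(G)$ for any potential $\phi$ (by \cref{lem:johnson}, adding a potential preserves all cycle weights, hence preserves for every $M'$ whether $G^{+M'}$ has a negative cycle). Now condition on the event that the call $\textsc{TestScale}(G^{+M}, n^{-10})$ behaves correctly, which by \cref{lem:scale-test} happens unless $G^{+M}$ has no negative cycle yet $\textsc{TestScale}$ returns \textsc{Fail} — an event of probability $\le n^{-10}$. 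If $\textsc{TestScale}$ returns a potential $\phi$: then $G_\phi$ is equivalent to $G$, so by induction the recursive call correctly returns $\textsc{Threshold}(G_\phi) = \textsc{Threshold}(G)$. If $\textsc{TestScale}$ returns \textsc{Fail}: conditioning on correctness, this means $G^{+M}$ \emph{does} contain a negative cycle, so $\textsc{Threshold}(G) > M$, hence $\textsc{Threshold}(G^{+M}) = \textsc{Threshold}(G) - M$; by induction the recursive call returns $\textsc{Threshold}(G^{+M})$ and we output $M + \textsc{Threshold}(G^{+M}) = \textsc{Threshold}(G)$, as desired. The one subtlety to check is that the recursive calls satisfy the preconditions — in the \textsc{Fail} branch $G^{+M}$ has minimum weight $\ge -W + M \ge -\lfloor W/2\rfloor$, which is $\ge -W'$ for a smaller $W'$ but we must ensure $W' \ge 24$ is not required by \emph{$\textsc{SlowThreshold}$} itself (it is not; only $\textsc{TestScale}$ needs $W \ge 24$, and in the recursive branch of $\textsc{SlowThreshold}$ we always have the current $W > 48$ so this is fine), and in the $\phi$ branch $G_\phi$ has minimum weight $\ge -\frac34(W - M) \cdot$something — more carefully, $\textsc{TestScale}$ on $G^{+M}$ (min weight $\ge -(W-M) \ge -\lceil W/2 \rceil$) returns $\phi$ with $G^{+M}_\phi$ of min weight $\ge -\frac34\lceil W/2\rceil$, and then $G_\phi = (G^{+M}_\phi)^{-M}$ has min weight $\ge -\frac34\lceil W/2 \rceil - M \ge -\frac78 W + O(1)$, giving a constant-factor decrease. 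So each recursion level shrinks $W$ by a constant factor, and the depth is $O(\log W) = O(\log(nW))$.

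\textbf{Running time and error accumulation.} At each of the $O(\log(nW))$ recursion levels we make one call to $\textsc{TestScale}(\cdot, n^{-10})$, which by \cref{lem:scale-test} costs $O(T_{\TRestrictedSSSP}(m,n) \cdot \log(n^{10})) = O(T_{\TRestrictedSSSP}(m,n) \log n)$, plus $O(m)$ for building the modified graphs; the base case costs $O(T_{\TRestrictedSSSP}(m,n)\log(nW))$ total by \cref{cor:detect-negcycle} (or just $O(T_{\TRestrictedSSSP}(m,n)\log n)$ since $W = O(1)$ there). Hence the total running time is $O(T_{\TRestrictedSSSP}(m,n) \log n \log(nW))$, and since we abort each $\textsc{TestScale}$ and $\textsc{DetectNegCycle}$ invocation after its worst-case budget, this is a worst-case bound. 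Summing the failure probabilities — at most $n^{-10}$ per level over $O(\log(nW))$ levels, plus the base-case failures — gives total error $O(n^{-10}\log(nW)) + n^{-\Omega(1)} = n^{-\Omega(1)}$, i.e.\ success with high probability. \textbf{The main obstacle} I anticipate is the correctness argument's reliance on $\textsc{TestScale}$ being \emph{one-sided}: when it returns \textsc{Fail} we are only entitled to conclude "$G^{+M}$ has a negative cycle" \emph{after} conditioning on the high-probability event that $\textsc{TestScale}$ did not err; the inductive hypothesis must be stated for the conditioned-correct execution and the union bound over all $O(\log(nW))$ levels applied at the end. (This one-sidedness is precisely what forces the naive $\log(1/\delta) = \Theta(\log n)$ boosting here and what the faster implementation in \cref{sec:neg-cycle:sec:fast-threshold} avoids.)
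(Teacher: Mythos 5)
Your proposal is correct and follows essentially the same route as the paper: the same base-case brute force via $\textsc{DetectNegCycle}$, the same case split on the outcome of $\textsc{TestScale}(G^{+M}, n^{-10})$ with the observation that a potential preserves all cycle weights while a (conditioned-correct) \textsc{Fail} certifies $M^* \geq M$, the same constant-factor decrease of $W$ bounding the recursion depth by $\Order(\log W)$, and the same union bound over the $\Order(\log(nW))$ levels. The only point the paper makes explicit that you leave implicit is the degenerate regime where $\log W$ exceeds $\poly(n)$ (handled there by falling back to Bellman-Ford), a minor technicality.
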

\begin{proof}
We summarize the pseudocode in \cref{alg:slow-threshold}. Let $-W$ be the smallest weight in~$G$. If $W \leq 48$ (i.e., all weights are at least $-48$) we clearly have that the correct answer lies in the range $0 \leq M^* \leq 48$. We brute-force the answer by exhaustively checking which graph $G^{+47}, \dots, G^{+0}$ is the first one containing a negative cycle. For this test we use the algorithm $\textsc{DetectNegCycle}(G)$. \cref{cor:detect-negcycle} guarantees that it reports correct answers with high probability.

If $W > 48$, we make progress by reducing the problem to another instance with larger minimum weight. Let $M = \ceil{\frac W2}$, and run $\textsc{TestScale}(G^{+M}, \delta)$ for $\delta := 1/n^{10}$. We distinguish two cases based on the outcome of \textsc{TestScale}:

\begin{itemize}
    \item Case 1: $\textsc{TestScale}(G^{+M}, \delta) = \phi$ for a potential function $\phi$. Then recursively compute and return $\textsc{SlowThreshold}(G_\phi)$. First note that this is correct, i.e., that the answer is unchanged by recursing on $G_\phi$, since the potential does not change the weight of any cycle (see \cref{lem:johnson}). Second, note that we make progress by increasing the smallest weight in $G_\phi$ to least $-\frac{11}{12} W$:
    To see this, note that the minimum weight of $G^{+M}$ is at least $-\tfrac{1}{2}W$, and thus, \cref{lem:scale-test} guarantees that the smallest weight in $G_\phi^{+M}$ is at least $-\frac{3}{8} W$.
    Therefore, it follows that the smallest weight in $G_{\phi}$ is at least
    \[
        -\tfrac{3}{8}W - M = \tfrac{3}{8}W - \ceil{\tfrac{1}{2}W}
            \geq -\tfrac{7}{8}W - 1
            > -\tfrac{7}{8}W - \tfrac{1}{24}W = -\tfrac{11}{12}W,
    \]
    where the second inequality uses the assumption that $W > 24$.
    \item Case 2: $\textsc{TestScale}(G^{+M}, \delta) = \textsc{Fail}$. By \cref{lem:scale-test}, if $G^{+M}$ does not contain a negative cycle then with high probability the output is not \textsc{Fail}. Conditioned on this event, we conclude that $G^{+M}$ contains a negative cycle. Thus, we know that the optimal answer $M^*$ satisfies $M^* \geq M$, and therefore we return $M + \textsc{SlowThreshold}(G^{+M})$.
    Note that this also improves the most negative edge weight to $-W + M \geq -\tfrac{11}{12}W$.
\end{itemize}

We claim that the running time is bounded by $\Order(T_{\TRestrictedSSSP}(m,n) \log n \log(nW))$. To see this, note that in the base case, when $W \leq 48$, the algorithm calls $\textsc{DetectNegCycle}(G)$ and therefore takes time $\Order(T_{\TRestrictedSSSP}(m,n) \cdot \log(nW))$ (see~\cref{cor:detect-negcycle}). We claim that the higher levels of the recursion take time $\Order(T_{\TRestrictedSSSP}(m,n) \log n \log W)$ in total. Note that each such level takes time $\Order(T_{\TRestrictedSSSP}(m,n) \cdot \log n)$ due to the call to \textsc{TestScale} (\cref{lem:scale-test}) and thus, it suffices to bound the recursion depth by $\Order(\log W)$. To this end, observe that we always recur on graphs for which $W$ has decreased by a constant factor.

Finally note that each call to \textsc{TestScale} succeeds with high probability, and we make one call for each of the $\Order(\log W)$ recursive calls. Thus, by a union bound the algorithm succeeds with high probability. (Strictly speaking, for this union bound we assume that~$\log W \leq n$; if instead $\log W > n$, we can simply use Bellman-Ford's algorithm.)
\end{proof}

\subsection{Fast Implementation of Threshold}\label{sec:neg-cycle:sec:fast-threshold}

In this section we give the fast implementation of \textsc{Threshold}.

\begin{lemma}[Fast Threshold]\label{lem:threshold}
Let $G$ be a directed graph. There is an algorithm computing $\textsc{Threshold}(G)$ (see~\cref{def:threshold})
which suceeds with high probability, and runs in worst-case time $\Order(T_{\TRestrictedSSSP}(m,n) \log(n W))$.
\end{lemma}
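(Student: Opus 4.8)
The plan is to refine the recursive strategy of \cref{lem:slow-threshold}, but to avoid spending a full $\log n$ factor on boosting \textsc{TestScale} at every recursion level. Instead of calling \textsc{TestScale} with the tiny failure probability $\delta = n^{-10}$, I would call it with \emph{constant} failure probability (say $\delta = 0.01$), which costs only $\Order(T_{\TRestrictedSSSP}(m,n))$ per call by \cref{lem:scale-test}. The price is that the one-sided tester may now lie with constant probability: when it returns \textsc{Fail} we are no longer certain that $G^{+M}$ contains a negative cycle. Concretely, I would run \textsc{TestScale} on $G^{+d}$ for a carefully chosen shift $d \approx W/5$ (as in the overview's idealized algorithm) and branch: if it returns a potential $\phi$ we recurse on $(G^{+d})_\phi$ and add $d$ back to the answer (correctness follows from \cref{lem:johnson}, since potentials preserve cycle weights); if it returns \textsc{Fail} we \emph{guess} that $G^{+d}$ has a negative cycle, recurse on $G^{+d}$, and again add $d$. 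In the idealized (zero-error) case both branches hand down a graph whose minimum weight is at least $-\tfrac45 W$, so the recursion depth is $\Order(\log W)$; with a constant-error tester, a wrong \textsc{Fail} on a cycle-free graph means we shifted past $M^*$, so the true threshold of the subproblem we recurse on is now \emph{negative}, and the algorithm must be able to recover by later realizing it over-shot. To make the recursion well-defined, I would allow \textsc{Threshold} to return signed values and maintain an interval/window that provably contains $M^*$, narrowing it like noisy binary search.

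The heart of the argument is the analysis of this noisy recursion, and the plan is to carry it out via \emph{drift analysis} using the drift theorem \cref{thm:drift-tailbound}. I would define a potential (drift) function $\Phi_t$ on the state after $t$ recursive steps — morally the current uncertainty about $M^*$, e.g.\ $\log$ of the length of the window known to contain $M^*$, possibly corrected by a term penalizing how far the "current guess" is from $M^*$ — designed so that (i) $\Phi_t = 0$ exactly when the window has collapsed and $M^*$ is pinned down, and (ii) in each step $\Ex(\Phi_{t+1} \mid \Phi_t) \le c\,\Phi_t$ for some constant $c < 1$. Property (ii) is where the one-sidedness of \textsc{TestScale} is used delicately: when the tester correctly returns a potential (which happens whenever the shifted graph is cycle-free, with probability $\ge 1-\delta$), we genuinely halve the relevant quantity; when it returns \textsc{Fail}, either the shifted graph really does have a negative cycle (good — the lower half of the window is eliminated) or it is a false alarm (bad — but this happens with probability $\le \delta$, which we take small enough to dominate the bad drift). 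Summing the expected per-step contributions and invoking the drift tail bound shows that after $\Order(\log(nW))$ steps $\Phi_t$ has reached $0$ with high probability, i.e.\ the algorithm has correctly computed $\textsc{Threshold}(G)$. Combined with the $\Order(T_{\TRestrictedSSSP}(m,n))$ cost per step, this gives total time $\Order(T_{\TRestrictedSSSP}(m,n)\log(nW))$.

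I would then handle the base case exactly as in \cref{lem:slow-threshold}: once the window has shrunk to $\Order(1)$ width (equivalently the minimum weight is $\Order(1)$), brute-force the $\Order(1)$ candidate shifts using $\textsc{DetectNegCycle}$ from \cref{cor:detect-negcycle}, which costs $\Order(T_{\TRestrictedSSSP}(m,n)\log(nW))$ and is correct with high probability. A final union bound over the $\Order(\log(nW))$ levels and the base case (using the standard $\log W \le n$ caveat, falling back to Bellman-Ford otherwise) yields the high-probability guarantee. The running time is worst-case because we cap \textsc{TestScale} at its time budget and treat a timeout as \textsc{Fail}, and because the drift tail bound controls the number of steps with high probability; on the (polynomially unlikely) event that the drift has not collapsed in time we simply restart or fall back to Bellman-Ford, which does not affect the expected or high-probability bound.

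\textbf{Main obstacle.} The crux is designing the drift function and the precise recursive state so that a single step both (a) makes a constant-factor expected improvement despite the tester's one-sidedness and constant error rate, and (b) never loses the invariant "$M^*$ lies in the current window." The subtlety mirrors what the overview flags: off-the-shelf noisy binary search assumes a two-sided noisy comparator, whereas here a \textsc{Fail} is only \emph{weak} evidence for a negative cycle, so the window-update rule and the drift bookkeeping must be tailored to absorb false alarms. Getting the constants to line up (the shift fraction $\approx 1/5$, the $3/4$ contraction from \cref{lem:scale-test}, and the tester error $\delta$) so that the expected drift is genuinely contracting is the delicate calculation I would expect to spend the most care on.
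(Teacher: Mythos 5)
Your plan assembles the right ingredients (a constant\-/error call to \textsc{TestScale}, a drift analysis via \cref{thm:drift-tailbound}, the same base case as \cref{lem:slow-threshold}), but the piece you defer as ``the crux'' is precisely the part that is missing, and the specific design you sketch would not work as stated. The difficulty is concentrated in the \textsc{Fail} branch: you propose to commit to the shift (recurse on $G^{+d}$ and add $d$ to the answer), which means a false \textsc{Fail} on a cycle\-/free graph causes an overshoot that must somehow be detected and undone; you acknowledge this and propose ``signed thresholds and a window containing $M^*$,'' but give no update rule and no mechanism by which the algorithm ever learns it overshot (the only tester available is one\-/sided in the wrong direction for that). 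The paper's algorithm avoids the problem entirely by a different design: on \textsc{Fail} it does \emph{not} shift the graph at all --- it only halves an auxiliary step size $\Delta_t$ --- and on success it applies the returned potential to $G_t$ itself, which preserves the threshold (\cref{lem:johnson}) and provably decreases the minimum weight $W_t$ by $\Delta_t/4$. Since every $G_t$ is equivalent to $G$ and $G_t^{+W_t}$ is nonnegative, the invariant $W_t \ge M^*$ holds automatically; there is no window, no signed subproblem, and no overshoot, and the output is simply $W_T$. A false \textsc{Fail} costs only one wasted iteration and a halving of $\Delta_t$, which is exactly what the drift function is built to absorb. (A secondary error: in your success branch, recursing on $(G^{+d})_\phi$ and adding $d$ returns $\max(d, M^*)$, which is wrong when $M^* < d$; the idealized algorithm in the overview instead recurses on the graph shifted back down by $d$ and adds nothing.)

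The drift function you propose (essentially $\log$ of the window length) also cannot be made to work with \cref{thm:drift-tailbound}, which requires a \emph{multiplicative} contraction $\Ex(X_{t+1} \mid X_t = s) \le (1-\delta)s$ for a constant $\delta$: a successful step halves the window and hence decreases its logarithm only \emph{additively} by $O(1)$, which is a $(1 - 1/\log(\cdot))$ factor, not a constant factor. The paper instead uses
\[
  D_t = (W_t - M^*)^{20}\cdot \max\left\{\tfrac{2\Delta_t}{W_t-M^*},\, \tfrac{W_t-M^*}{2\Delta_t}\right\},
\]
a high power of the gap times a term measuring how far $\Delta_t$ is from the ``right'' scale $W_t - M^*$. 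The large exponent is essential: it makes a $\tfrac78$ (or $\tfrac{16}{17}$) multiplicative decrease of the gap dominate the factor\-/$2$ loss incurred by doubling or halving $\Delta_t$ and the probability\-/$0.01$ bad event, yielding $\Ex(D_{t+1}\mid D_t) \le 0.7\,D_t$ and hence $D_T = 0$ (i.e.\ $W_T = M^*$) with high probability after $T = \Theta(\log(nW))$ iterations. Without an explicit state, update rule, and drift function of this kind, the central claim of your plan --- constant\-/factor expected progress despite one\-/sided constant\-/rate errors --- remains unproved.
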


The algorithm is intricate, so we start with a high level description to convey some intuition.

\subparagraph*{High-Level Idea}

Let $\Delta$ be a parameter and let $M^* \geq 0$ be the right threshold. Let us look at what happens if we make a call to $\textsc{TestScale}(G^{+W - \Delta}, \delta)$, where $1 - \delta$ is the success probability and $-W$ is the minimum edge weight in $G$. If $G^{+W - \Delta}$ does not have negative cycles, then \cref{lem:scale-test} guarantees that with probability at least $1-\delta$ we obtain a potential~$\phi$. On the other hand, if $G^{+W - \Delta}$ contains a negative cycle, then we have \emph{no guarantee} from \cref{lem:scale-test}. That is, the algorithm might return a potential, or it might return \textsc{Fail}. The upside is that as long as we obtain a potential, regardless whether there is a negative cycle or not, we can make progress by (additively) increasing the minimum edge weight by $\approx \Delta$. Moreover, if we obtain \textsc{Fail}, then we conclude that with probability at least $1 - \delta$ the graph~$G^{+W - \Delta}$ contains a negative cycle. This suggests the following idea. We make a call to $\textsc{TestScale}(G^{+W - \Delta}, \delta)$, and consider the two outcomes:
\medskip
\begin{enumerate}
    \item $\textsc{TestScale}(G^{+W - \Delta}, \delta) = \phi$. Then, we set $G := G_\phi$ and increase $\Delta := 2\Delta$.
    \item $\textsc{TestScale}(G^{+W - \Delta}, \delta) = \textsc{Fail}$. Then, we decrease $\Delta := \Delta/2$.
\end{enumerate}
\medskip
If we are in Case 1, then the minimum edge weight $-W'$ of $G_\phi$ is increased by $\Delta$. This in turn, decreases the gap $W' - M^*$ (note that at all times $M^* \leq W'$). Thus, larger $\Delta$ implies larger progress in decreasing $W' - M^*$. This is why in this case we double $\Delta$. On the other hand, if we are in Case 2 then by the guarantee of~\cref{lem:scale-test}, we conclude that with probability at least $1 - \delta$ the graph $G^{+W - \Delta}$ contains a negative cycle. Intuitively, this means that $\Delta$ is too large. Therefore, we halve $\Delta$ to eventually make progress in Case~1 again.

In short, we know that when $G^{+W - \Delta}$ does not have negative cycles, or equivalently $W - M^* \geq \Delta$, then with probability at least $1-\delta$ we will make progress in Case~1 by decreasing the gap $W - M^*$. On the other hand, if we are in Case 2 and $G^{+W - \Delta}$ has a negative cycle, or equivalently $W - M^* < \Delta$, then we will make progress by decreasing $\Delta$.

Perhaps surprisingly, we will show that this idea can be implemented by choosing $\delta = 0.01$, and not $1/\poly(n)$ as in the implementation of~\cref{lem:slow-threshold} (which was the reason for getting an extra $\Order(\log n)$-factor there). For this, we will formalize the progress as some \emph{drift function} that decreases in expectation in each iteration, and then apply a \emph{drift theorem} (see~\cref{thm:drift-tailbound}).

\subparagraph*{The Algorithm}
Now we formalize this approach. We proceed in an iterative way. At iteration $t$, we have a graph $G_t$ with minimum weight $-W_t$, and we maintain a parameter $\Delta_t$. We make a call to $\textsc{Scale}(G^{+W_t - \Delta_t}, \delta)$ with $\delta := 0.01$. If we obtain a potential $\phi$ as answer, we set $G_{t+1} := (G_t)_{\phi}$ and $\Delta_{t+1} := 2\Delta$. Otherwise, we set $G_{t+1} := G_t$ and $\Delta_{t+1} := \tfrac12\Delta_t$. After $T = \Theta(\log(nW))$ iterations, we stop and return $W_T$ as the answer. The complete pseudocode (which additionally handles some corner cases) is in~\cref{alg:threshold}.

\begin{algorithm}[t]
\caption{The fast implementation of \textsc{Threshold}. Given a graph $G$, it computes the smallest
integer $M^* \geq 0$ such that $G^{+M^*}$ contains no negative cycle. See~\cref{lem:threshold}.}\label{alg:threshold}
\smallskip
\begin{algorithmic}[1]
\Procedure{FastThreshold}{$G$}
\State Let $G_0 \gets G$ and $\Delta_0 \gets 2$
\State $T \gets \Theta(\log(nW))$ with sufficiently large hidden constant
\For{$t \gets 0,\dots,T-1$} \label{alg:threshold:line:for-loop}
    \State Let  $-W_t$ be the smallest edge weight in $G_t$
    \If{$W_t \leq 24$}
        \For{$j \gets 23, \dots, 1, 0$}\label{alg:threshold:line:for-solve-directly}
            \If{$\textsc{FindNegCycle}(G_t^{+j}) = \textsc{NegCycle}$} \Return $j + 1$ \label{alg:threshold:line:solve-directly}
            \EndIf
        \EndFor
    \EndIf
    \If{$\textsc{TestScale}(G_{t}^{+W_{t} - \Delta_{t}}, 0.01) = \phi$} \label{alg:threshold:line:test-scale}
        \State $G_{t+1} \gets (G_t)_{\phi}, \; \Delta_{t+1} \gets 2 \cdot \Delta_t$
    \Else
        \State $G_{t+1} \gets G_t, \; \Delta_{t+1} \gets \max\{1, \tfrac{\Delta_t}{2}\}$\label{alg:threshold:line:no-potential}
    \EndIf
\EndFor
\State \Return $W_T$
\EndProcedure
\end{algorithmic}
\end{algorithm}

To quantify the progress made by the algorithm, we define the following
\emph{drift function} at iteration $t$:
\begin{equation}
    D_t := (W_t - M^*)^{20} \cdot \max\left \{\frac{2\Delta_t}{W_t - M^*}, \frac{W_t - M^*}{2\Delta_t}\right \}, \label{eqn:drift}
\end{equation}
Observe that we always have $\Delta_t \ge 1$ and $W_t \ge M^*$ throughout the algorithm. To cover the case $W_t = M^*$ (where the above expression leads to a division by 0), formally we actually define the drift function by
\begin{equation}
    D_t := \max\left \{(W_t - M^*)^{19} \cdot 2\Delta_t, \frac{(W_t - M^*)^{21}}{2\Delta_t}\right \}. \label{eqn:drifttwo}
\end{equation}
For the sake of readability, in the following we work with (\ref{eqn:drift}), with the understanding that formally we mean (\ref{eqn:drifttwo}).

We will show that $D_t$ decreases by a constant factor (in expectation) in each iteration of the for-loop
in~\cref{alg:threshold:line:for-loop}. Note that when $D_t$ reaches 0, then
we have that $W_t = M^*$, so we are done.

\begin{lemma}[Negative Drift]\label{lem:negative-drift}
    For any $d > 0$ and $t \geq 0$ it holds that
    \[
        \Ex(D_{t+1}\mid D_t=d) \leq 0.7\cdot d.
    \]
\end{lemma}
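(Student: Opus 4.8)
The plan is to analyze the two possible outcomes of the call to $\textsc{TestScale}(G_t^{+W_t - \Delta_t}, 0.01)$ in iteration $t$ and show that in both regimes of $W_t - M^*$ relative to $\Delta_t$, the drift function decreases by a constant factor in expectation. First I would set $g := W_t - M^* \ge 0$ and split into the case $g \ge \Delta_t$ and the case $g < \Delta_t$ (with $g = 0$ handled by the alternative definition~(\ref{eqn:drifttwo})). The key structural fact I would isolate is: whenever $\textsc{TestScale}$ returns a potential $\phi$, the new minimum weight satisfies $W_{t+1} \le \tfrac34 W$ where $-W$ is the minimum weight of $G_t^{+W_t - \Delta_t}$, and $W = W_t - (W_t - \Delta_t) = \Delta_t$; hence $W_{t+1} \le \tfrac34 \Delta_t$ and, transferring back through the shift, $W_{t+1} \le W_t - \Delta_t + \tfrac34 \Delta_t = W_t - \tfrac14 \Delta_t$. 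Thus in the ``success'' branch the gap drops by $\tfrac14 \Delta_t$ (so $g_{t+1} \le g - \tfrac14\Delta_t$) and $\Delta_{t+1} = 2\Delta_t$; in the ``fail'' branch $g_{t+1} = g$ and $\Delta_{t+1} = \tfrac12 \Delta_t$ (modulo the $\max\{1,\cdot\}$ floor, which only helps).

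Next I would bound $D_{t+1}$ in each branch. Write $D_t = g^{20} \cdot \rho_t$ where $\rho_t := \max\{2\Delta_t/g, g/(2\Delta_t)\} \ge 1$. In the fail branch, $\Delta_{t+1} = \tfrac12\Delta_t$ so $2\Delta_{t+1}/g = \Delta_t/g$ and $g/(2\Delta_{t+1}) = g/\Delta_t$; comparing with $\rho_t$, one checks $\rho_{t+1} \le \rho_t$ always — indeed each of the two candidate ratios for $D_{t+1}$ is at most the corresponding pair $\{2\Delta_t/g \cdot \tfrac12, g/(2\Delta_t)\cdot 2\}$, and the larger of these is at most $2\rho_t$ but actually a careful case check gives $D_{t+1} \le D_t$ unless $g/(2\Delta_t)$ was the active term, in which case $D_{t+1} = 2 D_t$. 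So the fail branch can at worst double $D_t$ (when $\Delta_t$ is ``too big'', i.e.\ $g < \Delta_t$), and halves (or better) it otherwise. In the success branch, $g$ drops to at most $g - \tfrac14\Delta_t$ and $\Delta$ doubles; the factor $g^{20}$ shrinks by $(1 - \tfrac{\Delta_t}{4g})^{20}$, and I would use the elementary inequality $(1-x)^{20} \le 1 - 10x$ for $x \in [0,\tfrac1{20}]$, or the cruder $(1-x)^{20} \le e^{-20x}$, to extract a strong multiplicative gain, while $\rho_{t+1}$ changes by at most a bounded factor. The point is that the exponent $20$ is chosen large enough that a decrease in $g$ of even a small fraction of $\Delta_t$ beats the $O(1)$ blowup coming from $\Delta$ doubling.

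Then I would combine the branch estimates using the probabilities. The clean case is $g \ge \Delta_t$: here $G_t^{+W_t-\Delta_t}$ has minimum weight $\ge -\Delta_t$ and contains no negative cycle (since $W_t - \Delta_t \ge M^*$), so by \cref{lem:scale-test} we are in the success branch with probability $\ge 0.99$, and $\Ex(D_{t+1}\mid D_t=d) \le 0.99 \cdot (\text{success bound}) + 0.01 \cdot (\text{fail bound}) \le 0.7 d$, since the success bound is at most some small constant times $d$ (dominated by the $g^{20}$ shrinkage) and the rare fail contributes at most $0.01 \cdot 2d$. The delicate case is $g < \Delta_t$: now we have no control over which branch occurs, but here $\rho_t = 2\Delta_t/g$ is the active term, and I claim the fail branch now gives $D_{t+1} \le \tfrac12 D_t$ (halving $\Delta_t$ while $g < \Delta_t$ still keeps $2\Delta_{t+1}/g \ge g/(2\Delta_{t+1})$ as long as $g < 2\Delta_{t+1} = \Delta_t$, and $2\Delta_{t+1}/g = \Delta_t/g = \tfrac12 \cdot 2\Delta_t/g$), whereas the success branch still shrinks $g^{20}$ and blows up $\rho$ by only a constant — so \emph{both} branches decrease $D_t$ by a constant factor and the bound $0.7d$ holds regardless of the (unknown) branch probabilities.

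The main obstacle I anticipate is the bookkeeping in the $g < \Delta_t$ regime: one must verify that, for \emph{every} possible behavior of the adversarial tester, the drift still contracts, which requires carefully tracking which of the two terms in $\max\{2\Delta_t/g,\, g/(2\Delta_t)\}$ is active before and after the update, and checking the $\max\{1,\Delta_t/2\}$ floor does not break monotonicity (it only increases $\Delta_{t+1}$, hence only moves $\rho_{t+1}$ toward its $2\Delta/g$ branch, which is fine). A secondary subtlety is the precise choice of the exponent $20$ (and the constants $19,21$ in~(\ref{eqn:drifttwo})): I would keep it symbolic as long as possible and only at the end verify that the inequalities $(1-\tfrac14 x)^{20}$-type bounds, together with the at-most-constant blowup of $\rho$, yield a contraction factor below $0.7$; if $20$ turns out too small one simply enlarges it (and correspondingly $19,21$), since nothing downstream depends on its exact value. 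Once \cref{lem:negative-drift} is in hand, a standard drift theorem (\cref{thm:drift-tailbound}) will give that after $T = \Theta(\log(nW))$ iterations $D_T = 0$ with high probability, hence $W_T = M^*$.
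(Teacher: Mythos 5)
Your overall architecture is the same as the paper's: the two claims about how $(W_t,\Delta_t)$ evolve under success/failure of \textsc{TestScale}, the blanket bound $D_{t+1}\le 2D_t$ on a failure, the observation that a failure has probability at most $0.01$ exactly when $W_t-M^*\ge\Delta_t$, and the observation that in the opposite regime \emph{both} branches contract deterministically (your ``delicate'' case, which is in fact the easy one, corresponding to the paper's Case~1.1). Your bookkeeping for that regime, including the $\Delta_t=1$ floor forcing $g=0$, is fine.

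The genuine gap is in the success branch when $1\le\Delta_t\ll g:=W_t-M^*$ (the paper's Case~2). You attribute the contraction to the factor $(1-\tfrac{\Delta_t}{4g})^{20}$ together with the claim that $\rho_{t+1}$ changes by at most a bounded factor, and you assert the exponent $20$ can be enlarged to make this work. Neither half survives in this regime: the multiplicative decrease of $g$ is only $\Delta_t/(4g)$, which can be as small as $1/(4g)$, so $(1-\tfrac{\Delta_t}{4g})^{20}\approx 1-5\Delta_t/g$ is not bounded away from $1$ for \emph{any} fixed exponent; and $\rho$ can blow up by an unbounded factor (e.g.\ $g=1000$, $\Delta_t=400$, $g_{t+1}=1$ gives $\rho_t=1.25$ and $\rho_{t+1}=1600$). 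The contraction here comes from a different mechanism that your sketch does not identify: one must split on whether the active term of the max switches. If $\Delta_{t+1}<\tfrac12(W_{t+1}-M^*)$ still holds, then $D_{t+1}=g_{t+1}^{21}/(2\Delta_{t+1})\le g^{21}/(4\Delta_t)=\tfrac12 D_t$ purely because the denominator doubles. If instead the max switches to the $2\Delta/g$ branch, the switching condition itself forces $\Delta_t\ge\tfrac14(W_{t+1}-M^*)$, hence $W_{t+1}-M^*\le\tfrac{16}{17}(W_t-M^*)$, and only \emph{then} does the power-of-$g$ shrinkage (the paper's $(16/17)^{19}$ computation in Case~2.2) beat the loss from $\Delta$ doubling. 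Without this dichotomy your argument fails precisely in the sub-case that makes the lemma delicate. (A smaller slip: your claim that a failure at worst doubles $D_t$ ``when $g<\Delta_t$'' and halves it otherwise is backwards --- failure halves $D_t$ when $g<\Delta_t$, as you yourself use later, and can double it when $\Delta_t$ is slightly above $g/2$; only the blanket bound $D_{t+1}\le 2D_t$ is needed, so this does not break the proof.)
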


Before proving~\cref{lem:negative-drift}, let us see how to obtain \cref{lem:threshold} from it. For this, we will use the
following tool:

\begin{theorem}[{Multiplicative Drift, see e.g.~\cite[Theorem 18]{Lengler17}}]\label{thm:drift-tailbound}
    Let $(X_t)_{t \geq 0}$ be a sequence of non-negative random variables with a finite state space $\mathcal{S}$
    of non-negative integers.
    Suppose that $X_0 = s_0$, and there exists $\delta > 0$ such that for all
    $s \in \mathcal{S} \setminus \{0\}$ and all $t \geq 0$, $\Ex(X_{t+1} \mid X_t=s) \leq (1-\delta) s$. Then, for all $r \geq 0$,
    \[
        \Pr(X_r > 0) \leq e^{-\delta \cdot r} \cdot s_0.
    \]
\end{theorem}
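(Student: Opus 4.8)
The plan is to prove \cref{thm:drift-tailbound} by the standard two-step argument for multiplicative drift: first bound the expectation $\Ex(X_r)$ by unrolling the one-step drift inequality, and then convert this expectation bound into the claimed tail bound via Markov's inequality, crucially exploiting that the state space consists of \emph{integers}. For the first step I would show that $\Ex(X_{t+1}) \le (1-\delta)\,\Ex(X_t)$ for every $t \ge 0$; this uses only the hypothesis together with the law of total expectation, and in particular no Markov property is needed:
\begin{equation*}
    \Ex(X_{t+1}) = \sum_{s \in \mathcal{S}} \Pr(X_t = s)\cdot \Ex(X_{t+1} \mid X_t = s) \le \sum_{s \in \mathcal{S}} \Pr(X_t = s)\cdot (1-\delta)\,s = (1-\delta)\,\Ex(X_t),
\end{equation*}
where the state $s = 0$ contributes $0$ to both the middle sum and to $\Ex(X_t)$ under the (standard, and here harmless) convention that $0$ is absorbing. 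Iterating this inequality $r$ times and using $X_0 = s_0$ yields $\Ex(X_r) \le (1-\delta)^r s_0 \le e^{-\delta r} s_0$, where the last step is the elementary inequality $1-\delta \le e^{-\delta}$ (valid for the relevant regime $\delta \in (0,1)$, which is the only case of interest and the one occurring in \cref{lem:threshold}).

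For the second step I would turn this expectation bound into the tail bound. Since $\mathcal{S}$ consists of non-negative integers, the event $\{X_r > 0\}$ coincides with $\{X_r \ge 1\}$, so Markov's inequality applied to the non-negative random variable $X_r$ gives
\begin{equation*}
    \Pr(X_r > 0) = \Pr(X_r \ge 1) \le \Ex(X_r) \le e^{-\delta r}\, s_0,
\end{equation*}
which is precisely the statement. In the application inside \cref{lem:threshold} one instantiates $X_t = D_t$ with $\delta = 0.3$ by \cref{lem:negative-drift}, and the absorbing-state convention is legitimate there because $W_t$ is non-increasing and always at least $M^*$, so that $W_t = M^*$ forces $D_{t'} = 0$ for all $t' \ge t$ (and moreover $D_t \in \{0\} \cup [2,\infty)$, so ``$D_r > 0$'' is indeed equivalent to ``$D_r \ge 1$'' even without exact integrality of $D_t$).

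I do not expect a genuine obstacle: the proof is just a geometric telescoping of the drift condition followed by Markov's inequality. The single point that deserves care is the treatment of the state $0$: the hypothesis as written only controls $s \neq 0$, so without the absorbing convention the statement is literally false (consider a chain that jumps back to $s_0$ whenever it hits $0$). Making this convention explicit, and verifying that it holds in our concrete setting, is the only non-mechanical part of the write-up; everything else is routine.
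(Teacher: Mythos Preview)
Your proposal is correct and follows essentially the same approach as the paper: iterate the one-step drift inequality to bound $\Ex(X_r)$, then use that the state space is integer-valued so that $\Pr(X_r > 0) = \Pr(X_r \ge 1) \le \Ex(X_r)$ by Markov. The paper's proof is a two-line version of exactly this, and your additional care regarding the absorbing convention at $s=0$ (which the paper leaves implicit) is a welcome clarification rather than a different argument.
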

\begin{proof}
    By Markov's inequality, $\Pr(X_r > 0) = \Pr(X_r \geq 1) \leq \Ex(X_r)$. By applying the bound
    $\Ex(X_{t+1} \mid X_t=d) \leq (1-\delta) d$ for $r$ times, we obtain that
    \[
        \Pr(X_r > 0) \leq (1 - \delta)^r \cdot s_0 \leq \exp(-\delta r) \cdot s_0. \qedhere
    \]
\end{proof}

\begin{proof}[Proof of~\cref{lem:threshold}]
    See~\cref{alg:threshold} for the pseudocode. First we analyze the running time. During each iteration of the
    for-loop, it either holds that $W_t \leq 24$ and we solve the problem directly using at most 24 calls to \textsc{DetectNegCycle},
    or we make a call to \textsc{TestScale}. Each call to \textsc{TestScale} takes time
    $\Order(T_{\TRestrictedSSSP}(m,n))$ by~\cref{lem:scale-test}, and we only make the calls to \textsc{DetectNegCycle} once which
    take total time $\Order(T_{\TRestrictedSSSP}(m,n)\log n)$ by~\cref{cor:detect-negcycle}. Since $T = \Theta(\log(nW))$, the overall
    running time is bounded by $\Order(T_{\TRestrictedSSSP}(m,n)\log n + T_{\TRestrictedSSSP}(m,n) \log(nW))$, as claimed.

    Now we analyze correctness. Note that at every iteration, $G_t$ is equivalent to $G$ since the only way we modify
    the graph is by adding potentials (see~\cref{lem:johnson}). Thus, if at some point we have that $W_t \leq 24$ then
    the correct answer lies in the range $0 \leq M^* \leq 24$. The for-loop in~\cref{alg:threshold:line:for-solve-directly}
    exhaustively checks which is the correct value by making calls to \textsc{DetectNegCycle}. By~\cref{cor:detect-negcycle}, this is correct
    with high probability.

    Now suppose the algorithm does not terminate in~\cref{alg:threshold:line:solve-directly}.
    We claim that the final drift $D_T$ is zero with high probability. Note that this implies correctness,
    since $D_T = 0$ if and only if $W_T = M^*$ (to see this, observe that $\Delta_T \geq 1$ due to~\cref{alg:threshold:line:no-potential}).
    To prove the claim, we will use~\cref{thm:drift-tailbound}. Note that \cref{lem:negative-drift} gives us that $\Ex(D_{t+1}\mid D_t=d) \leq 0.7 d$.
    Moreover, we can bound the initial drift $D_0$ as
    \[
        D_0 = (W - M^*)^{20} \cdot \max \left \{\frac{2\Delta_0}{W - M^*}, \frac{W - M^*}{2\Delta_0}\right \}
            \leq (W-M^*)^{21}\cdot 2\Delta_0 \leq 4W^{21}.
    \]
    Hence, \cref{thm:drift-tailbound} yields that $\Pr(D_T > 0) \leq \exp(-0.7 T) \cdot 4W^{21}$.
    Since $T = \Theta(\log(nW))$, we conclude that $\Pr(D_T > 0) \leq n^{-\Omega(1)}$, which finishes the proof.
\end{proof}

\begin{proof}[Proof of~\cref{lem:negative-drift}]
    Focus on iteration $t$ of the for-loop in~\cref{alg:threshold:line:for-loop}.
    Let $E_1$ be the event that we obtain a potential $\phi$ (i.e. that the if-statement in \cref{alg:threshold:line:test-scale}
    succeeds) and let $E_2 := \neg E_1$ be the complement. We start by observing how the parameters $W_{t+1}$ and $\Delta_{t+1}$
    change depending on whether $E_1$ or $E_2$ occur.
    \begin{claim}\label{proof:negative-drift:claim:e1}
        If $E_1$ occurs, then $W_{t+1} \leq W_t - \tfrac{\Delta_t}{4}$, and $\Delta_{t+1} = 2\Delta_t$.
    \end{claim}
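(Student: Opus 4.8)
The equality $\Delta_{t+1}=2\Delta_t$ is immediate from the pseudocode: under $E_1$ the if-branch at~\cref{alg:threshold:line:test-scale} is executed, and that branch sets $\Delta_{t+1}\gets 2\Delta_t$. Hence the only real content of the claim is the weight bound $W_{t+1}\le W_t-\tfrac{\Delta_t}{4}$, which I would read off directly from the weight guarantee of~\cref{lem:scale-test} together with the trivial fact that adding a fixed constant to all edge weights commutes with applying a potential.

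\textbf{Key steps.} First I would pin down the minimum edge weight of the graph $G_t^{+W_t-\Delta_t}$ that is passed to \textsc{TestScale}: since $G_t$ has minimum weight $-W_t$ by definition of $W_t$, adding the constant $W_t-\Delta_t$ to every edge yields a graph whose minimum weight is exactly $-\Delta_t$. Since $E_1$ is the event that this call returns a potential $\phi$, \cref{lem:scale-test} (with its parameter ``$W$'' set to $\Delta_t$) tells us that $(G_t^{+W_t-\Delta_t})_\phi$ has minimum weight at least $-\tfrac34\Delta_t$. Second I would transport this back to $G_{t+1}=(G_t)_\phi$: since $(G_t^{+W_t-\Delta_t})_\phi=((G_t)_\phi)^{+W_t-\Delta_t}$, the graph $(G_t)_\phi$ is obtained from $(G_t^{+W_t-\Delta_t})_\phi$ by subtracting the constant $W_t-\Delta_t$ from every edge, so its minimum weight is at least $-\tfrac34\Delta_t-(W_t-\Delta_t)=\tfrac14\Delta_t-W_t$. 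As $-W_{t+1}$ is by definition the minimum weight of $G_{t+1}=(G_t)_\phi$, this gives $-W_{t+1}\ge\tfrac14\Delta_t-W_t$, i.e.\ $W_{t+1}\le W_t-\tfrac{\Delta_t}{4}$, as claimed.

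\textbf{Obstacles.} For this particular claim I do not anticipate a genuine obstacle; it is the easy half of~\cref{lem:negative-drift}, with the substantive work (showing that the drift actually contracts in expectation) living in the companion claim for $E_2$ and in the combination of the two cases. The one bookkeeping point is that~\cref{lem:scale-test} requires its input to have minimum-weight magnitude at least $24$, i.e.\ here $\Delta_t\ge 24$, which I would verify holds whenever the call at~\cref{alg:threshold:line:test-scale} is actually reached (i.e.\ in the regime where the $W_t\le 24$ shortcut does not fire). Finally, following the way~\cref{lem:scale-test} is invoked in the proof of~\cref{lem:slow-threshold}, I would use that whenever \textsc{TestScale} returns a potential $\phi$ the bound ``$G_\phi$ has minimum weight $\ge-\tfrac34W$'' is in force; this is consistent with \textsc{TestScale} returning \textsc{Fail} on any graph containing a negative cycle (its subroutine \textsc{RestrictedSSSP} cannot terminate on such a graph and is interrupted), so no separate negative-cycle analysis is needed inside this claim.
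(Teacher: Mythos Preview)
Your argument is exactly the paper's: read off $\Delta_{t+1}=2\Delta_t$ from the pseudocode, observe that $G_t^{+W_t-\Delta_t}$ has minimum weight $-\Delta_t$, apply \cref{lem:scale-test} with parameter $\Delta_t$ to get minimum weight $\ge -\tfrac34\Delta_t$ after applying $\phi$, and then undo the additive shift to conclude $-W_{t+1}\ge -W_t+\tfrac{\Delta_t}{4}$.

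One small correction to your obstacles paragraph: the condition $\Delta_t\ge 24$ does \emph{not} follow from the $W_t\le 24$ shortcut not firing --- $\Delta_t$ starts at $2$ and can be halved down to $1$ independently of $W_t$, so your proposed verification would fail. The paper's own claimproof silently applies \cref{lem:scale-test} without checking this hypothesis either, so you are not missing anything relative to the paper; but your suggested fix is incorrect, and if you want to close this gap you would need a different argument (e.g.\ going back to the sharper guarantee of \cref{lem:scale} directly when $\Delta_t$ is small).
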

    \begin{claimproof}
        If the call to $\textsc{TestScale}$ in \cref{alg:threshold:line:test-scale} returns a potential $\phi$,
        then we set $G_{t+1} = (G_t)_{\phi}$ and $\Delta_{t+1} = 2 \Delta_t$.
        Observe that the minimum weight of $G_t^{+W_t-\Delta_t}$ is $\Delta_t$.
        Hence, \cref{lem:scale-test} guarantees that the minimum weight of $(G_t)_{\phi}^{+W_t-\Delta_t}$ is at least $-\tfrac{3}{4}\Delta_t$.
        Since $G_{t+1} = (G_t)_{\phi}$ is defined by substracting $W_t - \Delta_t$ from every edge weight in
        $(G_t)_{\phi}^{+W_t - \Delta_t}$, we obtain that $-W_{t+1} \geq -W_t + \tfrac{1}{4}\Delta_t$.
    \end{claimproof}

    \begin{claim}\label{proof:negative-drift:claim:e2}
        If $E_2$ occurs, then $W_{t+1} = W_t$ and $\Delta_{t+1} = \max\{1, \Delta_t/2\}$ and $D_{t+1} \le 2 D_t$.
    \end{claim}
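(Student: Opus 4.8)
The plan is to dispatch the three assertions separately: the first two can be read off directly from the pseudocode, and only the drift bound $D_{t+1} \le 2 D_t$ needs a (short) calculation.

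First, $E_2 = \neg E_1$ means the call to \textsc{TestScale} in \cref{alg:threshold:line:test-scale} returned \textsc{Fail}, so the algorithm executes the else-branch in \cref{alg:threshold:line:no-potential}, setting $G_{t+1} \gets G_t$ and $\Delta_{t+1} \gets \max\{1, \Delta_t/2\}$. The first gives $W_{t+1} = W_t$ and the second is the claimed update verbatim. The only consequence I would extract is the two-sided bound $\tfrac12 \Delta_t \le \Delta_{t+1} \le \Delta_t$: the lower bound is immediate, and the upper bound uses $\Delta_t \ge 1$, which holds throughout the algorithm.

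For the drift bound I would work with the formal definition (\ref{eqn:drifttwo}) so the argument stays valid even when $W_t = M^*$. Abbreviating $w := W_t - M^* = W_{t+1} - M^*$, we have $D_t = \max\{2 w^{19}\Delta_t,\, w^{21}/(2\Delta_t)\}$ and $D_{t+1} = \max\{2 w^{19}\Delta_{t+1},\, w^{21}/(2\Delta_{t+1})\}$. Using $\Delta_{t+1} \le \Delta_t$ on the first term gives $2 w^{19}\Delta_{t+1} \le 2 w^{19}\Delta_t \le D_t$, and using $2\Delta_{t+1} \ge \Delta_t$ on the second term gives $w^{21}/(2\Delta_{t+1}) \le w^{21}/\Delta_t = 2 \cdot w^{21}/(2\Delta_t) \le 2 D_t$. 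Taking the maximum of the two upper bounds yields $D_{t+1} \le 2 D_t$.

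There is essentially no obstacle here: this claim just isolates the ``cheap'' branch $E_2$ of \cref{lem:negative-drift}, where the graph is unchanged and $\Delta_t$ only shrinks by a factor at most $2$, so the drift can at worst grow by the constant factor $2$. The genuine work of \cref{lem:negative-drift} sits in the other branch (\cref{proof:negative-drift:claim:e1}) and in combining both cases into the expectation estimate $\Ex(D_{t+1} \mid D_t = d) \le 0.7 d$. The one small point to be careful about is to invoke (\ref{eqn:drifttwo}) rather than the more readable (\ref{eqn:drift}), so that divisions by $W_t - M^*$ never occur.
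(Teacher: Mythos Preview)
Your proof is correct and follows essentially the same approach as the paper. The paper phrases the drift bound via the observation that $f(x) := \max\{x,1/x\}$ changes by at most a factor $2$ when $x$ does, applied to $D_t = (W_t-M^*)^{20} f(2\Delta_t/(W_t-M^*))$; you instead bound the two terms of (\ref{eqn:drifttwo}) separately, which is the same computation unpacked and has the minor advantage of transparently covering the case $W_t = M^*$.
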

    \begin{claimproof}
        The first two statements are immediate by~\cref{alg:threshold:line:no-potential}. Towards the third statement, for the function $f(x) := \max\{x,1/x\}$ we observe that if $x,y>0$ differ by at most a factor~2 then also $f(x),f(y)$ differ by at most a factor 2. Now we use that $D_t = (W_t - M^*)^{20} \cdot f( 2\Delta_t / (W_t-M^*))$. Since $\Delta_t \ge 1$, it holds that $\Delta_t,\Delta_{t+1}$ differ by at most a factor 2, and thus $D_t,D_{t+1}$ differ by at most a factor 2.
    \end{claimproof}

    With these claims, we proceed to bound the drift $D_{t+1}$ when $D_t > 0$. Recall that we defined
    \begin{equation*} \tag{\ref{eqn:drift}}
      D_t = (W_t - M^*)^{20} \cdot \max\left \{\frac{2\Delta_t}{W_t - M^*}, \frac{W_t - M^*}{2\Delta_t}\right \}.
    \end{equation*}
    Note that it always holds that $W_t \geq M^*$ and $W_{t+1} \geq M^*$. Moreover, since $D_t > 0$, we can assume that
    $W_t - M^* > 0$, since otherwise $W_t - M^* = 0$ and hence $D_t = 0$.
    We proceed making a case distinction based on the term that achieves the maximum in~\eqref{eqn:drift}.

    \begin{description}
        \item[Case 1] $\Delta_t \geq \tfrac{1}{2}(W_t - M^*)$: Then, we have that $D_t = (W_t - M^*)^{19} \cdot 2\Delta_t$.
        If $E_1$ occurs, then by~\cref{proof:negative-drift:claim:e1} it holds that $\Delta_{t+1} \geq \Delta_t \geq \tfrac{1}{2}(W_{t} - M^*) \geq \tfrac{1}{2}(W_{t+1} - M^*)$.
        Therefore, using~\eqref{eqn:drift} we can bound the drift $D_{t+1}$ by
        \begin{align*}
            D_{t+1} &= (W_{t+1} - M^*)^{19} \cdot 2 \Delta_{t+1} \\
                &\leq (W_t - M^* - \tfrac{\Delta_t}{4})^{19} \cdot 4\Delta_t \\
                &\leq (W_t - M^* - \tfrac{1}{8}(W_t - M^*))^{19} \cdot 4\Delta_t \\
                &\leq (\tfrac{7}{8})^{19} \cdot 2 D_t \leq 0.16 D_t,
        \end{align*}
        where we used~\cref{proof:negative-drift:claim:e1} in the first inequality, and the
        second inequality follows since by the assumption of Case 1 we have that $\tfrac{\Delta_t}{4} \geq \tfrac{1}{8}(W_t - M^*)$.
        \smallskip

        If $E_2$ occurs instead, we make a further case distinction:
        \begin{description}
            \item[Case 1.1] $\Delta_t > W_t - M^*$:
            Note that if $\Delta_t = 1$, then since $W_t$ and $M^*$ are integers
            it follows that $W_t = M^*$, and consequently $D_t = 0$, which contradicts the assumption that $D_t > 0$.
            Therefore, we can assume that $\Delta_t \geq 2$. In particular, by~\cref{proof:negative-drift:claim:e2} we have
            $\Delta_{t+1} = \frac12 \Delta_t > \frac 12 (W_t - M^*) = \frac 12 (W_{t+1} - M^*)$. Thus, by~\eqref{eqn:drift} we can express the drift $D_{t+1}$ as
            \[
                D_{t+1} = (W_{t+1} - M^*)^{19} \cdot 2\Delta_{t+1} = (W_t - M^*)^{19} \cdot \Delta_t = \frac{D_t}{2}.
            \]
            \item[Case 1.2] $\Delta_t \leq W_t - M^*$:
            Observe that in this case $G^{+W_t - \Delta_t}$ contains no negative cycle. Moreover, we can assume that $W_t > 24$ since otherwise
            the problem is solved directly in~\cref{alg:threshold:line:for-solve-directly}.
            Therefore, by \cref{lem:scale-test} we have that $\Pr(E_2) \leq 0.01$.
            Moreover, by by~\cref{proof:negative-drift:claim:e2} we have $D_{t+1} \leq 2D_t$.
        \end{description}

        \smallskip
        Combining the above, we conclude that for Case 1 it holds that
        \begin{align*}
                \Ex(D_{t+1} \mid D_t) &\leq \Pr(E_1) \Ex(D_{t+1}\mid D_t, E_1) + \Pr(E_2)\Ex(D_{t+1}\mid D_t, E_2) \\
                &\leq 1\cdot 0.16D_t + \max\left\{1\cdot \tfrac12 D_t, 0.01 \cdot 2 D_t\right\} \le 0.66 D_t.
        \end{align*}
        \smallskip

        \item[Case 2] $\Delta_t < \tfrac{1}{2}(W_t - M^*)$: Then, it holds that $D_t = (W_t - M^*)^{21}/(2\Delta_t)$.
        If $E_2$ occurs, then by the same argument as in Case 1.2 we have that $D_{t+1} \leq 2D_t$ and $\Pr(E_2) \leq 0.01$.

        If $E_1$ occurs instead, then we make a further case distinction:
        \begin{description}
            \item[Case 2.1] $\Delta_{t+1} < \tfrac{1}{2}(W_{t+1} - M^*)$: Then using~\eqref{eqn:drift}, it holds that
            \[
                D_{t+1} = \frac{(W_{t+1}-M^*)^{21}}{2\Delta_{t+1}} \leq \frac{(W_t - M^*)^{21}}{4\Delta_t} = \frac{D_t}{2},
            \]
            where the inequality holds due to~\cref{proof:negative-drift:claim:e1}.

            \item[Case 2.2] $\Delta_{t+1} \geq \tfrac{1}{2}(W_{t+1} - M^*)$: Then it holds that
            $D_{t+1} = (W_{t+1} - M^*)^{19} \cdot 2\Delta_{t+1}$. Since by the assumption of Case 2 we have
            $(W_t - M^*)/(2\Delta_t) \geq 1$ and by~\cref{proof:negative-drift:claim:e1} we have $\Delta_{t+1} = 2\Delta_t$, we can bound $D_{t+1}$ as
            \begin{align}
                D_{t+1} &= (W_{t+1} - M^*)^{19} \cdot 2\Delta_{t+1} \nonumber \\
                    &\leq (W_{t+1} - M^*)^{19} \cdot 4\Delta_t \cdot \left(\frac{W_t - M^*}{2\Delta_t}\right)^2 \nonumber \\
                    &= (W_{t+1} - M^*)^{19} \cdot (W_t - M^*)^2 \cdot \tfrac{1}{\Delta_t}. \label{proof:negative-drift:eqn:1}
            \end{align}
            By~\cref{proof:negative-drift:claim:e1}, we have that $W_{t+1} \leq W_t - \tfrac{\Delta_t}{4}$. Hence, can bound $W_{t+1}-M^*$ as
            \begin{align}
                W_{t+1} - M^*
                    &= \tfrac{16}{17}(W_{t+1}-M^*) + \tfrac{1}{17}(W_{t+1}-M^*) \nonumber \\
                    &\leq \tfrac{16}{17}(W_t - M^* - \tfrac{\Delta_t}{4}) + \tfrac{1}{17}(W_{t+1}-M^*) \nonumber \\
                    &= \tfrac{16}{17}(W_t - M^*) - \tfrac{16}{17} \cdot \tfrac{\Delta_t}{4} + \tfrac{1}{17}(W_{t+1}-M^*). \label{proof:negative-drift:eqn:2}
            \end{align}
            By \cref{proof:negative-drift:claim:e1} and the assumption of Case 2.2, we have that
            $2\Delta_t = \Delta_{t+1} \geq \tfrac{1}{2}(W_{t+1}-M^*)$. This implies that $\tfrac{\Delta_t}{4} \geq \tfrac{1}{16}(W_{t+1}-M^*)$.
            Plugging this into~\eqref{proof:negative-drift:eqn:2}, we obtain that
            \begin{align}
              W_{t+1} - M^*
                &\leq \tfrac{16}{17}(W_t - M^*) - \tfrac{1}{17}(W_{t+1} - M^*) + \tfrac{1}{17}(W_{t+1}-M^*) \nonumber\\
                &= \tfrac{16}{17}(W_t - M^*).\label{proof:negative-drift-eqn:3}
            \end{align}
            Finally, we combine \eqref{proof:negative-drift:eqn:1} and \eqref{proof:negative-drift-eqn:3} to obtain that
            \begin{align*}
                D_{t+1}
                    &\leq (W_{t+1} - M^*)^{19} (W_t - M^*)^2 \cdot \tfrac{1}{\Delta_t}\\
                    &\leq (\tfrac{16}{17})^{19} (W_t - M^*)^{21} \cdot \tfrac{1}{\Delta_t}\\
                    &= (\tfrac{16}{17})^{19} \cdot 2 \cdot D_t \\
                    &\leq 0.65 D_t
            \end{align*}
        \end{description}
        \smallskip
        Combining the subcases considered, we conclude that for Case 2 it holds that
        \begin{align*}
            \Ex(D_{t+1}\mid D_t) &\leq \Pr(E_1) \Ex(D_{t+1} \mid D_t, E_1) + \Pr(E_2)\Ex(D_{t+1}\mid D_t, E_2) \\
            &\leq 1 \cdot \max\left\{\tfrac 12 D_t, 0.65 D_t \right\} + 0.01 \cdot 2 D_t \leq 0.67 \cdot D_t.
        \end{align*}
    \end{description}
    Since cases 1 and 2 are exhaustive, the proof is concluded.
\end{proof}

\subsection{Putting Everything Together}

Now we put the pieces together to prove our main theorem.

\thmssspornegativecycle*
\begin{proof}
    The algorithm alternatingly runs the following two steps, and interupts each step after it exceeds a time budget of $\Order((m + n \log \log n) \log^2 n \log(nW))$:
    \begin{enumerate}
        \medskip
        \item Run $\textsc{SSSP}(G, s)$. If this algorithm finishes in time and returns a shortest path tree, we check that the shortest path tree is correct (by relaxing all edges and testing whether any distance in the tree changes) and return this shortest path tree in the positive case. Otherwise, we continue with step 2.
        \smallskip
        \item Run $\textsc{FindNegCycle}(G)$ (using~\cref{lem:threshold} to implement \textsc{Threshold}). If this algorithm finishes in time and returns a negative cycle, we verify that the output is indeed a negative cycle and return this negative cycle in the positive case. Otherwise, we continue with step 1.
    \end{enumerate}

    The algorithm is clearly correct: Whenever it terminates, it reports a correct solution. Let us focus on the running time. We distinguish two cases: First, assume that $G$ does \emph{not} contain a negative cycle. By \cref{thm:sssp} step 1 runs in time $\Order((m + n \log \log n) \log^2 n \log(nW))$ with high probability and is not interrupted in this case. Moreover, the SSSP algorithm returns a correct shortest path tree with high probability, and thereby terminates the algorithm after just one iteration of step 1.

    On the other hand, suppose that $G$ contains a negative cycle. The algorithm runs step 1 which is wasted effort in this case, but costs only time $\Order((m + n \log \log n) \log^2 n \log(nW))$. Afterwards, by \cref{lem:reduction-to-threshold,lem:threshold}, a single execution of step 2 runs within the time budget with high probability. Moreover, since \cref{lem:reduction-to-threshold} is a Las Vegas algorithm, it returns a true negative cycle and the algorithm terminates.

    The previous two paragraphs prove that the algorithm terminates after successively running step 1 and step 2 in time $\Order((m + n \log \log n) \log^2 n \log(nW))$ with high probability. Since we independently repeat these steps until the algorithm terminates, the same bound applies to the expected running time.
\end{proof}

Next, we prove \cref{thm:all-distances} using the previous \cref{thm:sssp-or-negative-cycle} as a black-box.

\thmalldistances*
\begin{proof}
First, remove all vertices from the graph not reachable from $s$ and return distance~$\infty$ for each such vertex. Then compute the set of strongly connected components $C_1,\ldots,C_\ell$ in~$G$ in time~$\Order(m + n)$. For every SCC $C_i$, run our SSSP algorithm from \cref{thm:sssp-or-negative-cycle} on $G[C_i]$ to detect whether it contains a negative cycle. For every vertex contained in a SCC with a negative cycle, we return distance $-\infty$ (as this SCC is reachable from $s$ and contains a negative cycle, we can loop indefinitely). Similarly, report $-\infty$ for all vertices reachable from one of the $-\infty$-distance vertices. After removing all vertices at distance $-\infty$, the remaining graph does no longer contain a negative cycle. We may therefore run the SSSP algorithm on the remaining graph to compute the missing distances.

Let $n_i$ and $m_i$ denote the number of vertices and edges in the subgraph $G[C_i]$. Then the total running time is
\begin{gather*}
    \Order\left(T_{\textsc{SSSP}}(m, n, W) + \sum_i T_{\textsc{SSSP}}(m_i, n_i, W)\right) \\
    \qquad= \Order\left(\left(m + n \log\log n + \sum_i m_i + \sum_i n_i \log\log n\right) \log n^2 \log(nW) \right) \\
    \qquad= \Order((m + n \log \log n) \log^2 n \log(nW)),
\end{gather*}
using that $\sum_i m_i \leq m$ and that $\sum_i n_i \leq n$.
\end{proof}

\section{Minimum Cycle Mean} \label{sec:mean-cycle}
In this section we prove~\cref{thm:mincyclemean}, i.e., we present the $\Order((m + n \log \log n)\log^2(n)\log(nW))$-time algorithm to compute the minimum cycle mean of a graph $G$.

Given a directed graph $G$, we denote by $\mu^*(G)$ the value of the minimum cycle mean, i.e.,~$\mu^*(G) := \min_C \bar w(C)$.
To develop our algorithm, the following characterization of the minimum cycle mean will be useful:

\begin{lemma}\label{lem:mincyclemean-characterization}
    Let $G$ be a directed graph. Then,
    \begin{equation*}
        \mu^*(G) = -\min\{Q \in \Rat \mid G^{+Q} \text{ contains no negative cycle}\}.
    \end{equation*}
\end{lemma}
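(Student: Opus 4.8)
The plan is to show the two quantities are equal by relating the sign of $\mu^*(G^{+Q})$ to the presence of negative cycles in $G^{+Q}$, and then observing that adding $Q$ to every edge weight shifts every cycle mean by exactly $Q$. First I would record the elementary identity $\bar w_{G^{+Q}}(C) = \bar w_G(C) + Q$ for every cycle $C$: indeed, $w_{G^{+Q}}(C) = w_G(C) + Q|C|$, so dividing by $|C|$ gives the claim. Taking the minimum over all cycles $C$ yields $\mu^*(G^{+Q}) = \mu^*(G) + Q$. Next I would use the trivial equivalence already noted in the introduction (``the graph has a negative cycle if and only if the minimum cycle mean is negative''): $G^{+Q}$ contains no negative cycle if and only if $\mu^*(G^{+Q}) \ge 0$, i.e.\ if and only if $\mu^*(G) + Q \ge 0$, i.e.\ if and only if $Q \ge -\mu^*(G)$.

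With this in hand the set $\{Q \in \Rat \mid G^{+Q} \text{ contains no negative cycle}\}$ is exactly the set $\{Q \in \Rat \mid Q \ge -\mu^*(G)\}$, whose infimum is $-\mu^*(G)$. To conclude that the \emph{minimum} (not just the infimum) is attained, I would note that $\mu^*(G)$ is a rational number — it is the mean of some cycle, hence $w(C)/|C|$ for integers $w(C)$ and $|C| \le n$ — so $-\mu^*(G)$ itself lies in the set and is its minimum. This gives $\min\{Q \in \Rat \mid G^{+Q} \text{ contains no negative cycle}\} = -\mu^*(G)$, which is the claimed identity after negation. One should also handle the degenerate case where $G$ has no cycle at all: then there is no negative cycle in $G^{+Q}$ for any $Q$, the set is all of $\Rat$, it has no minimum, and correspondingly $\mu^*(G) = \min_C \bar w(C)$ is a minimum over the empty set; the statement is then understood vacuously (or one assumes, as is implicit throughout the paper, that $G$ contains at least one cycle).

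I do not anticipate a genuine obstacle here — the lemma is essentially a restatement of Lawler's classical observation that minimum cycle mean reduces to negative-cycle detection, packaged as the fact that shifting all weights by $Q$ shifts all cycle means by $Q$. The only point requiring a moment's care is the passage from infimum to minimum, which rests on the rationality and boundedness of $\mu^*(G)$ (denominator at most $n$), and the bookkeeping of the empty-cycle corner case.
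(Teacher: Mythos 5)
Your proof is correct and follows essentially the same route as the paper's: both rest on the observation that adding $Q$ to every edge weight shifts every cycle mean by exactly $Q$, combined with the trivial equivalence between the absence of negative cycles and a nonnegative minimum cycle mean. Your write-up is slightly more careful than the paper's (explicitly justifying that the infimum is attained via rationality of $\mu^*(G)$, and flagging the acyclic corner case), but the underlying argument is identical.
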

\begin{proof}
By definition, we have that $\mu^*(G) = \min_C \bar w(C)$. Equivalently, $\mu^*(G)$ is the largest rational number $\mu$ such that $\mu \leq w(C)/|C|$ holds for all cycles $C$ in $G$. In particular, $w(C) - \mu \cdot |C| \geq 0$ holds for all cycles $C$, which is equivalent to $G^{-\mu}$ not having negative cycles.
\end{proof}

Recall that $\textsc{Threshold}(G)$, computes the minimum integer $M^* \geq 0$ such that $G^{+M^*}$ contains no negative cycle (\cref{def:threshold}). This is very similar to the characterization of the minimum cycle mean given by~\cref{lem:mincyclemean-characterization}, except that the latter minimizes over rational numbers that are not necessarily non-negative. To overcome this, we will use the following simple propositions:

\begin{proposition}\label{prop:shift-mean}
Let $G$ be a directed graph and let $a \geq 1, b \geq 0$ be integers. Let $H$ be a copy of $G$ where each edge has weight $w_H(e) := a \cdot w_G(e) + b$. Let $C$ be any cycle in $G$. Then, $\bar w_H(C) = a \cdot \bar w_G(G) + b$.
\end{proposition}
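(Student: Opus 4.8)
The plan is to unwind the definition of the cycle mean and exploit that the edge-weight transformation $w_H(e) = a\cdot w_G(e) + b$ is affine in the weights, so summing over the $|C|$ edges of the cycle and dividing by $|C|$ turns it into the claimed affine relation between the means.

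Concretely, I would first recall that $\bar w_H(C) = w_H(C)/|C|$ where $w_H(C) = \sum_{e \in C} w_H(e)$, and similarly for $G$. Then I would substitute the definition of $w_H$ to get
\begin{equation*}
    w_H(C) = \sum_{e \in C} \bigl(a\cdot w_G(e) + b\bigr) = a\sum_{e \in C} w_G(e) + b\,|C| = a\cdot w_G(C) + b\,|C|.
\end{equation*}
Dividing both sides by $|C|$ yields $\bar w_H(C) = a\cdot \bar w_G(C) + b$, which is exactly the claim (the statement as printed has a typo, writing $\bar w_G(G)$ for $\bar w_G(C)$).

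There is essentially no obstacle here: the only thing to check is that $|C| \geq 1$ so that the division is well-defined, which holds since $C$ is a cycle and hence has at least one edge. I would also remark that the hypotheses $a \geq 1$ and $b \geq 0$ (and their integrality) are not needed for this particular identity — they are only relevant for the surrounding application, e.g.\ to ensure that $H$ has integral weights and that the transformation preserves the sign structure of cycles — so the proof itself is a one-line computation.
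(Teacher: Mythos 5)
Your proof is correct and is essentially identical to the paper's: both compute $w_H(C) = a \cdot w_G(C) + b\,|C|$ and divide by $|C|$. Your observation that $\bar w_G(G)$ in the statement is a typo for $\bar w_G(C)$ is also right.
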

\begin{proof}
Note that the weight of $C$ in $H$ is exactly $w_H(C) = a \cdot w_G(C) + b \cdot |C|$. Therefore, the cycle mean of $C$ in $H$ equals $\bar w_H(C) = a \cdot w_G(C)/|C| + b = a \cdot \bar w_G(C) + b$.
\end{proof}

\begin{proposition}\label{prop:difference-means}
Let $C$ and $C'$ be two cycles in a directed graph $G$ with distinct means, i.e. $\bar w(C) \neq \bar w(C')$. Then, $|\bar w(C) - \bar w(C')| \geq 1/n^2$.
\end{proposition}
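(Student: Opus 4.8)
The plan is to express both means as fractions with bounded denominators and then bound the absolute difference of two distinct such fractions from below. First I would write $\bar w(C) = w(C)/|C|$ and $\bar w(C') = w(C')/|C'|$, where $w(C), w(C')$ are integers (since all edge weights are integers) and $|C|, |C'|$ are positive integers bounded by $n$ (a cycle has at most $n$ edges, as it visits at most $n$ vertices). Putting the two fractions over a common denominator gives
\begin{equation*}
    \bar w(C) - \bar w(C') = \frac{w(C) \cdot |C'| - w(C') \cdot |C|}{|C| \cdot |C'|}.
\end{equation*}

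The numerator is an integer, and by the hypothesis $\bar w(C) \neq \bar w(C')$ it is nonzero; hence its absolute value is at least $1$. The denominator satisfies $|C| \cdot |C'| \leq n \cdot n = n^2$. Combining these two observations yields
\begin{equation*}
    |\bar w(C) - \bar w(C')| = \frac{|w(C) \cdot |C'| - w(C') \cdot |C||}{|C| \cdot |C'|} \geq \frac{1}{n^2},
\end{equation*}
which is exactly the claimed bound.

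There is essentially no obstacle here; the only point requiring a moment's care is the bound $|C| \leq n$ on the number of edges of a cycle, which holds because a (simple) cycle visits each vertex at most once — and even a closed walk can be assumed simple for the purpose of comparing means, but in this statement $C$ and $C'$ are cycles, so $|C|, |C'| \le n$ is immediate. The rest is the standard fact that two distinct rationals with denominators at most $n$ differ by at least $1/n^2$.
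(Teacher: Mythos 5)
Your proof is correct and follows essentially the same route as the paper: put the two means over a common denominator, observe that the numerator is a nonzero integer, and bound the denominator by $n^2$. Nothing to add.
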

\begin{proof}
By definition, we can express $|\bar w(C) - \bar w(C')|$ as
\begin{equation*}
    \left|\frac{w(C)}{|C|}-\frac{w(C')}{|C'|} \right|
        = \left|\frac{w(C)|C'| - w(C')|C|}{|C|\cdot|C'|}\right| \geq \frac{1}{|C||C'|},
\end{equation*}
where we used that $\bar w(C) \neq \bar w(C')$. Since $|C|,|C'| \leq n$, we have that this is at least $1/n^2$.
\end{proof}

We will use the following lemma, which is a Las Vegas implementation of~\cref{lem:threshold}.

\begin{lemma}\label{lem:threshold-lasvegas}
Let $G$ be a directed graph. There is a Las Vegas algorithm which computes $\textsc{Threshold}(G)$ (see~\cref{def:threshold}) and runs in time $\Order((m + n \log\log n)\log^2n \log(nW))$ with high probability (and in expectation).
\end{lemma}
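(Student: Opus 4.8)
The plan is to turn the Monte Carlo algorithm of \cref{lem:threshold} into a Las Vegas algorithm by appending a \emph{verification} step that either certifies the returned value or triggers a restart. The starting point is the observation that a nonnegative integer $M$ equals $\textsc{Threshold}(G)$ if and only if (a)~$G^{+M}$ contains no negative cycle and (b)~in case $M \ge 1$, the graph $G^{+(M-1)}$ \emph{does} contain a negative cycle; and both of these facts admit efficiently checkable certificates. For (a) we rely on \cref{cor:detect-negcycle}: its output \textsc{NoNegCycle} is produced only when $\textsc{SSSP}$ on $G^{+M}$ terminates, which by \cref{lem:sssp-termination} guarantees that $G^{+M}$ has no negative cycle; hence \textsc{NoNegCycle} is a trustworthy certificate (while its output \textsc{NegCycle} may be a false alarm and merely causes a restart). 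For (b), since a cycle that is negative in $G^{+(M-1)}$ is also negative in every $G^{+M'}$ with $M' \le M-1$, it suffices to exhibit one negative cycle of $G^{+(M-1)}$; we obtain it from the Las Vegas routine $\textsc{FindNegCycle}$ of \cref{lem:reduction-to-threshold} (implementing its internal \textsc{Threshold} call via \cref{lem:threshold}), interrupted after a fixed time budget.

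Concretely, the algorithm repeats the following trial until it returns. Compute $M$ by running the algorithm of \cref{lem:threshold} on $G$, which takes worst-case time $\Order(T_{\TRestrictedSSSP}(m,n)\log(nW))$. Run $\textsc{DetectNegCycle}(G^{+M})$; if it reports \textsc{NegCycle}, restart. Otherwise, if $M = 0$, output $M$ (no lower certificate is needed, as $\textsc{Threshold}(G) \ge 0$ always). If $M \ge 1$, run $\textsc{FindNegCycle}(G^{+(M-1)})$, interrupting it after $\Theta(T_{\TRestrictedSSSP}(m,n)\log(nW))$ steps; if the interruption triggers, or the returned cycle is not negative in $G^{+(M-1)}$, restart; otherwise output $M$. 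Here we use that the value produced by \cref{alg:threshold} never exceeds $\max\{24, W\}$ (the minimum weight magnitude $W_t$ is non-increasing along the for-loop), so $M = \Order(W)$ and $\log(n(W+M)) = \Order(\log(nW))$; thus every invocation above runs within the stated budget.

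For correctness, note that whenever the algorithm returns $M$ it has verified that $G^{+M}$ has no negative cycle and, when $M \ge 1$, that $G^{+(M-1)}$ has a negative cycle $C$; since $C$ remains negative in $G^{+M'}$ for all $M' \le M-1$, the value $M$ is the smallest nonnegative integer with $G^{+M}$ free of negative cycles, i.e.\ $M = \textsc{Threshold}(G)$. Hence the output is always correct, so the algorithm is Las Vegas. For the running time, each trial runs in deterministic time $\Order(T_{\TRestrictedSSSP}(m,n)\log(nW)) = \Order((m + n\log\log n)\log^2 n\log(nW))$ by \cref{thm:restricted-sssp}, since the \cref{lem:threshold} call has this worst-case bound, the $\textsc{DetectNegCycle}$ and $\textsc{FindNegCycle}$ calls are explicitly capped at this cost, and verifying a candidate cycle costs $\Order(m)$. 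A trial succeeds with probability $1 - n^{-\Omega(1)}$: with high probability \cref{lem:threshold} returns $M = M^* := \textsc{Threshold}(G)$; then $G^{+M^*}$ has no negative cycle, so $\textsc{SSSP}$ on it terminates in time by \cref{thm:sssp} and $\textsc{DetectNegCycle}$ reports \textsc{NoNegCycle}; and if $M^* \ge 1$, then $G^{+(M^*-1)}$ has a negative cycle, so by \cref{lem:reduction-to-threshold,lem:threshold} the call $\textsc{FindNegCycle}(G^{+(M^*-1)})$ completes within the budget with high probability and, being Las Vegas, returns a genuine negative cycle. Consequently a single trial succeeds with high probability, and the expected number of trials is $\Order(1)$, so both the high-probability and the expected running time are $\Order((m + n\log\log n)\log^2 n\log(nW))$.

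The delicate point --- and the only part that truly needs care --- is that the one-sided failures of \cref{lem:threshold} surface in two different ways that the restart logic must both catch. If the returned $M$ is too small, then $G^{+M}$ contains a negative cycle, which makes $\textsc{SSSP}$ inside $\textsc{DetectNegCycle}$ run forever; this is caught only by that routine's internal time cap, and here \cref{lem:sssp-termination} is essential. If $M$ is too large, then $G^{+(M-1)}$ has no negative cycle; but $\textsc{FindNegCycle}$ is designed under the promise that a negative cycle exists and, given a cycle-free graph, never terminates, so it is the \emph{external} time budget on this call --- not a correctness test --- that forces the restart. Checking that these budgets are large enough with high probability (which is where the bound $M = \Order(W)$ is used) completes the argument.
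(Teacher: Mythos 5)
Your proposal is correct and follows essentially the same route as the paper: run the Monte Carlo \textsc{Threshold} of \cref{lem:threshold}, verify the candidate $M$ by certifying that $G^{+M}$ has no negative cycle and (for $M\ge 1$) that $G^{+(M-1)}$ has one, and restart on any failure. The only difference is cosmetic --- the paper invokes \cref{thm:sssp-or-negative-cycle} as a single Las Vegas verifier for both tests, whereas you reassemble the same verifier from $\textsc{DetectNegCycle}$ (via \cref{lem:sssp-termination}) and a time-capped $\textsc{FindNegCycle}$, with the same correctness and running-time guarantees.
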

\begin{proof}
The algorithm computes $M^* = \textsc{Threshold}(G)$ using~\cref{lem:threshold}. By definition, this returns the smallest integer $M^*$ such that $G^{+M^*}$ contains no negative cycles with high probability (recall~\cref{def:threshold}). To turn it into a Las Vegas algorithm, we need to verify that the output is correct. For this, we add a source vertex $s$ connected with 0-weight edges to all other vertices and use~\cref{thm:sssp-or-negative-cycle} to test if $G^{+M^*}$ contains no negative cycles and that~$G^{+M^*-1}$ contains negative cycles. If either test fails, the algorithm restarts.

The correctness of this procedure follows since~\cref{thm:sssp-or-negative-cycle} is a Las Vegas algorithm. For the running time, observe that call to~\cref{lem:threshold} (using the bound on $T_{\TRestrictedSSSP}(m,n)$ of~\cref{thm:restricted-sssp}) and the calls to~\cref{thm:sssp-or-negative-cycle} run in time $\Order((m + n \log\log n)\log^2n \log(nW))$. Moreover, \cref{lem:threshold} guarantees that the value $M^*$ is correct with high probability. Thus, the algorithm terminates in $\Order((m + n \log\log n)\log^2n \log(nW))$-time with high probability.
\end{proof}

\begin{algorithm}[t]
\caption{Given a graph $G$ the procedure returns cycle $C$ of
minimum mean weight with high probability. See~\cref{thm:mincyclemean}.} \label{alg:min-cycle-mean}
\smallskip
\begin{algorithmic}[1]
\Procedure{MinCycleMean}{$G$}
    \State Let $L$ be the largest weight in $G$
    \State Let $H$ be a copy of $G$ with edge weights $w_H(e) \gets n^2 \cdot w_G(e) - n^3L$.
    \State Compute $M^* \gets \textsc{Threshold}(H)$ using~\cref{lem:threshold-lasvegas}\label{alg:min-cycle-mean:line:threshold}
    \State $C \gets \textsc{FindNegCycle}(H^{+M^* - 1})$\label{alg:min-cycle-mean:line:findnegcycle}
    \State \Return $C$
\EndProcedure
\end{algorithmic}
\end{algorithm}    

\thmmincyclemean*
\begin{proof}
We construct a graph $H$ by modifying each edge weight of $G$ to $n^2w(e) - n^3 L$, where $L$ is the largest edge-weight in $G$. Then, we compute $M^* := \textsc{Threshold}(H)$ using~\cref{lem:threshold-lasvegas}. Finally, we find a negative cycle in $H^{+M^* - 1}$ using~\cref{lem:reduction-to-threshold}. See~\cref{alg:min-cycle-mean} for the pseudocode.

The running time is dominated by the calls to \textsc{Threshold} and \textsc{FindNegCycle}. Using~\cref{lem:threshold-lasvegas} the call to \textsc{Threshold} takes time $\Order((m + n \log \log n) \log^2 n \log(nW))$ with high probability. By~\cref{lem:reduction-to-threshold} the call to \textsc{FindNegCycle}, (using~\cref{lem:threshold} to implement \textsc{Threshold} and \cref{thm:restricted-sssp} to bound $T_{\TRestrictedSSSP}(m,n)$) takes time $\Order((m + n \log \log n) \log^2 n \log(nW))$ with high probability as well. Thus, the algorithm runs in the claimed running time.

To analyze the correctness, note that~\cref{prop:shift-mean} implies that a cycle $C$ is the minimizer of $\bar w_G(C)$ if and only if it is the minimizer of $\bar w_H(C)$. Thus, it suffices to find a cycle of minimum mean in $H$. We will argue that the cycle found by the algorithm is the minimizer.

\begin{claim}\label{proof:min-cyclemean:claim:bound-meancycle}
The value $M^*$ computed in~\cref{alg:min-cycle-mean:line:threshold} satisfies
$M^* = \ceil{-\mu^*(H)}$.
\end{claim}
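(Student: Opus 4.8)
The plan is to recognize $M^* = \textsc{Threshold}(H)$ as (essentially) the ceiling of $-\mu^*(H)$, and to control $\mu^*(H)$ through the scaling that defines $H$. I would start from \cref{lem:mincyclemean-characterization} applied to $H$, which gives $-\mu^*(H) = \min\{Q \in \Rat : H^{+Q}\text{ contains no negative cycle}\}$. The essential point is that the predicate ``$H^{+Q}$ contains no negative cycle'' is monotone in $Q$: for any cycle $C$ the quantity $w_{H^{+Q}}(C) = w_H(C) + Q\,|C|$ is nondecreasing in $Q$, so once $H^{+Q}$ is free of negative cycles, so is $H^{+Q'}$ for every $Q' \geq Q$. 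Hence $H^{+Q}$ has no negative cycle precisely when $Q \geq -\mu^*(H)$. Specializing to nonnegative integers $M$, the least such $M$ — which is exactly $\textsc{Threshold}(H)$ by \cref{def:threshold} — equals $\max\{0, \lceil -\mu^*(H)\rceil\}$.

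It then remains to show $\lceil -\mu^*(H)\rceil \geq 0$, i.e.\ $\mu^*(H) \leq 0$, so that the $\max$ with $0$ is inactive; intuitively this is what the additive $-n^3 L$ shift in the weights of $H$ buys us. To prove it I would invoke \cref{prop:shift-mean} with $a = n^2$ and $b = -n^3 L$ (its proof is a direct computation valid for any integer $b$), giving $\bar w_H(C) = n^2\,\bar w_G(C) - n^3 L$ for every cycle $C$ and therefore $\mu^*(H) = n^2\,\mu^*(G) - n^3 L$. Since every edge of $G$ has weight at most $L$, every cycle of $G$ has mean at most $L$, so $\mu^*(G) \leq L$ and hence $\mu^*(H) \leq n^2 L - n^3 L \leq 0$. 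Combining this with the previous paragraph yields $M^* = \max\{0, \lceil -\mu^*(H)\rceil\} = \lceil -\mu^*(H)\rceil$, which is the claim.

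I expect the subtle step to be exactly this last one: making sure that $\textsc{Threshold}(H)$ is genuinely the ceiling $\lceil -\mu^*(H)\rceil$ and is not clamped to $0$ from below, i.e.\ verifying $\mu^*(H) \leq 0$ — this is precisely where the choice of scaling ($w_H(e) = n^2 w_G(e) - n^3 L$, and in particular the large negative shift) is used, and it tacitly requires that $G$ contains at least one cycle so that $\mu^*(G)$ and $\mu^*(H)$ are well-defined. A second, more routine point is the passage from the rational-valued minimum in \cref{lem:mincyclemean-characterization} to the integral quantity $\textsc{Threshold}(H)$; this rests on the monotonicity of the ``no negative cycle'' predicate together with the fact that $\mu^*(H)$ is rational, so the minimum over $Q \in \Rat$ is attained.
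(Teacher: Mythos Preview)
Your proposal is correct and follows essentially the same approach as the paper: both arguments reduce to showing $\mu^*(H)\le 0$ so that the nonnegativity constraint in $\textsc{Threshold}$ is inactive, and then identify $M^*$ with $\lceil -\mu^*(H)\rceil$ via \cref{lem:mincyclemean-characterization}. The only cosmetic difference is that the paper bounds $w_H(C)\le 0$ directly from $w_G(C)\le nL$, whereas you go through \cref{prop:shift-mean} (whose hypothesis $b\ge 0$ you correctly note is inessential); both routes are equally short.
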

\begin{claimproof}
We observe that the minimum cycle mean in $H$ is non-positive, i.e., $\mu^*(H) \leq 0$. To see this, note that any cycle $C$ in $G$ has weight at most $w_G(C) \leq nL$. Thus, by the way we set the weights in $H$, any cycle in $H$ has weight $w_H(C) = n^2w_G(C) -n^3 L |C| \leq n^3L - n^3L = 0$. This means that in~\cref{lem:mincyclemean-characterization} we can minimize over $Q \geq 0$, i.e. that
\begin{equation}
    \mu^*(H) = -\min\{0 \leq Q \in \Rat \mid H^{+Q} \text{ contains no negative cycle}\}. \label{proof:min-cyclemean:eqn:1}
\end{equation}

Recall that by definition of $\textsc{Threshold}(H)$, $M^*$ is the smallest non-negative integer such that $H^{+M^*}$ has no negative cycles, i.e.
\begin{equation}
    M^* = \min\{0 \leq M \in \Int \mid H^{+M} \text{ contains no negative cycle}\}. \label{proof:min-cyclemean:eqn:2}
\end{equation}
Combining \eqref{proof:min-cyclemean:eqn:1} and \eqref{proof:min-cyclemean:eqn:2}, we conclude that $M^* = \ceil{-\mu^*(H)}$, as claimed.
\end{claimproof}

It follows that $H^{+M^*-1}$ indeed contains a negative cycle. By~\cref{lem:threshold-lasvegas}, the call to \textsc{Threshold} is correct. Hence, $H^{+M^*-1}$ contains a negative cycle and the call to \textsc{FindNegCycle} is correct by~\cref{lem:reduction-to-threshold}. Let $C$ be the cycle obtained in~\cref{alg:min-cycle-mean:line:findnegcycle}. Since it has negative weight in $H^{+M^*-1}$, its weight in $H$ is less than $-|C|(M^* - 1)$. Hence, it holds that $\bar w_H(C) < -M^* + 1$. Moreover, since $H^{+M^*}$ contains no negative cycle, every cycle $C'$ has mean weight $\bar w_H(C') \ge -M^*$.

Now consider a minimum mean cycle $C'$. As we have seen, we have
\begin{equation}\label{eq:cycle-mean-pm-one}
    -M^* \le \bar w_H(C') \le \bar w_H(C) < -M^* + 1.
\end{equation}
Assume for the sake of contradiction that $\bar w_H(C) \ne \bar w_H(C')$. Then by~\cref{prop:difference-means} we have that $|\bar w_G(C) - \bar w_G(C')| \geq 1/n^2$, and by~\cref{prop:shift-mean} it holds that $\bar w_H(C) = n^2 \cdot w_G(C) - n^3 L$ and $\bar w_H(C') = n^2 \cdot w_G(C') - n^3 L$. Combining these facts, we obtain that $|\bar w_H(C) - \bar w_H(C')| \geq 1$. This contradicts Equation~\eqref{eq:cycle-mean-pm-one}. Hence, we obtain $\bar w_H(C) = \bar w_H(C')$, so the computed cycle~$C$ is a minimizer of $\bar w_H(C)$ and thus also of $\bar w_G(C)$.
\end{proof}
\section{Low-Diameter Decompositions} \label{sec:ldd}
In this section we establish our strong Low-Diameter Decomposition (LDD). Recall that in a strong LDD (as defined in \cref{def:ldd}), the goal is to select a small set of edges $S$ such that after removing the edges in $S$, each strongly connected component in the remaining graph has bounded diameter. Our result is the following theorem, which proves that strong LDDs exist (which was known by~\cite{BernsteinNW22}) and can be efficiently computed (which was open):

\thmstrongldd*

\subsection{Heavy and Light Vertices}
In the algorithm we will distinguish between \emph{heavy} and \emph{light} vertices, depending on how large the out- and in-balls of these vertices are. To classify vertices as heavy or light, we rely on the following simple lemmas:

\begin{lemma}[Estimate Ball Sizes] \label{lem:approx-ball-size}
Let $\varepsilon > 0$. Given a directed graph $G$ with nonnegative edge weights and $r > 0$, we can approximate $|\Bout(v, r)|$ with additive error $\varepsilon n$ for each vertex~$v$. With high probability, the algorithm succeeds and runs in time $\Order(\varepsilon^{-2} \log n  \cdot (m + n \log\log n))$.
\end{lemma}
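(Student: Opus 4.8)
The plan is to use a textbook random-sampling estimator, exploiting that the out-ball sizes of \emph{all} vertices can be read off from only a few in-ball computations. I would fix $k := \Theta(\varepsilon^{-2} \log n)$ with a sufficiently large hidden constant, sample a multiset $R$ of $k$ vertices independently and uniformly at random, and for each $u \in R$ compute the in-ball $\Bin_G(u, r) = \set{v : \dist_G(v,u) \le r}$ by running Dijkstra's algorithm (via \cref{lem:dijkstra}) from $u$ in the reverse graph $G^{rev}$. Maintaining a counter array $\mathrm{cnt}[\cdot]$ over $V(G)$ initialized to zero, I would increment $\mathrm{cnt}[v]$ for every $v \in \Bin_G(u, r)$ and every sampled $u$, and finally output $\widehat b(v) := \tfrac nk \cdot \mathrm{cnt}[v]$ as the estimate of $|\Bout_G(v, r)|$. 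Running the same procedure on $G^{rev}$ symmetrically estimates the in-ball sizes $|\Bin_G(v, r)|$.

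For correctness, the key observation is that $v \in \Bin_G(u, r)$ iff $\dist_G(v, u) \le r$ iff $u \in \Bout_G(v, r)$. Hence, for a fixed $v$ and a single uniformly random $u$, the event $u \in \Bout_G(v, r)$ has probability exactly $p_v := |\Bout_G(v, r)| / n$, so $\mathrm{cnt}[v]$ is a sum of $k$ independent Bernoulli variables with success probability $p_v$, and $\Ex(\widehat b(v)) = n p_v = |\Bout_G(v, r)|$. I would then apply Hoeffding's inequality to obtain
\[
  \Pr\!\left( \bigl| \widehat b(v) - |\Bout_G(v, r)| \bigr| > \varepsilon n \right)
  = \Pr\!\left( \bigl| \mathrm{cnt}[v] - k p_v \bigr| > \varepsilon k \right)
  \le 2 \exp(-2 \varepsilon^2 k) ,
\]
which, after choosing the hidden constant in $k$ large enough, is at most $n^{-c-1}$; a union bound over the $n$ vertices then shows that with high probability every estimate is within additive error $\varepsilon n$ simultaneously, and in the complementary (rare) event the algorithm just reports failure.

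The running time analysis is straightforward: sampling $R$ costs $\Order(k)$, each of the $k$ reverse-Dijkstra computations costs $\Order(m + n \log\log n)$ by \cref{lem:dijkstra} (using Thorup's priority queue), and reading off the ball and updating the counters costs $\Order(n)$ per computation, for a total of $\Order(k(m + n\log\log n)) = \Order(\varepsilon^{-2}\log n \cdot (m + n\log\log n))$. As an aside, the heavy-light classification needed for \cref{lem:heavy-light} then follows by instantiating this with $\varepsilon = \tfrac18$ and $r = \tfrac\kappa4$ and thresholding the estimates (declare a vertex in-heavy if its estimated in-ball size exceeds $\tfrac{5n}{8}$, and in-light otherwise; the additive error $\tfrac n8$ makes this consistent with the definitions of in-heavy and in-light).

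I do not expect a deep obstacle here. The only genuinely non-routine point is the reverse-graph trick: one cannot afford a Dijkstra run from every vertex, but a random sample of $\Theta(\varepsilon^{-2}\log n)$ sources suffices to estimate all out-ball sizes at once. Everything else is a standard Hoeffding-plus-union-bound concentration argument, together with the minor bookkeeping observation that a single Dijkstra run (including resetting the reused distance array between runs) stays within $\Order(m + n\log\log n)$ time.
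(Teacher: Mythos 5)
Your proposal is correct and takes essentially the same route as the paper: sample $k = \Theta(\varepsilon^{-2}\log n)$ random sources, compute their in-balls via reverse Dijkstra with Thorup's priority queue, use the equivalence $v \in \Bin(u,r) \iff u \in \Bout(v,r)$ to turn the counts into unbiased estimators of the out-ball sizes, and conclude by a Hoeffding/Chernoff bound with a union bound over all vertices. The running time accounting and the instantiation for the heavy-light classification also match the paper's treatment.
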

\begin{proof}
Sample random vertices $v_1,\ldots,v_k \in V(G)$ (with repetition) for $k := 5 \varepsilon^{-2} \log n$. Compute $\Bin(v_i, r)$ for all~$i \in [k]$. Using Dijkstra's algorithm with Thorup's priority queue~\cite{Dijkstra59,Thorup03}, this step runs in time $\Order(k \cdot (m + n \log\log n)) = \Order(\varepsilon^{-2} \log n \cdot (m + n \log\log n))$. Now return for each vertex $v$, the estimate
\begin{equation*}
    b(v) := \frac{n}{k} \cdot |\set{i \in [k]: v \in \Bin(v_i, r)}|.
\end{equation*}

We claim that this estimate is accurate. Let $I_i$ denote the indicator variable whether $v_i \in \Bout(v,r)$, and let $I := \sum_{i=1}^k I_i$. Then the random variable $b(v)$ is exactly
\begin{equation*}
    b(v) = \frac{n}{k} \cdot I.
\end{equation*}
Note that $\Pr(I_i = 1) = |\Bout(v,r)|/n$.
In expectation we therefore have 
\begin{equation*}
    \Ex(b(v)) = \frac{n}{k} \cdot \sum_{i=1}^k \Pr(I_i = 1) = \frac nk \cdot \frac{k}{n} \cdot |\Bout(v, r)| = |\Bout(v, r)|. 
\end{equation*}
Using Chernoff's bound we have $\Pr(|I - \Ex(I)| > a) < 2 \exp(-2a^2/k)$. For $a := \varepsilon k$ we obtain 
\begin{equation*}
    \Pr(|b(v) - \Ex(b(v))| > \varepsilon n) = \Pr(|I - \Ex(I)| > \varepsilon k) < 2 \exp(-2\epsilon^2 k) \le 2 n^{-10}.
\end{equation*}
Hence, with high probability the computed estimates are accurate.
\end{proof}

\begin{lemma}[Heavy/Light Classification] \label{lem:ldd-heavy-light}
There is an algorithm $\textsc{Light}(G, r)$ that, given a directed graph $G$ and a radius $r$, returns a set $L \subseteq V(G)$ with the following properties:
\begin{itemize}
    \smallskip
    \item For all $v \in L$, it holds that $|\Bout(v, r)| \leq \frac 78 n$.
    \item For all $v \in V(G) \setminus L$, it holds that $|\Bout(v, r)| \geq \frac 34 n$.
    \item $\textsc{Light}(G, r)$ runs in time $\Order((m + n \log \log n) \log n)$.
\end{itemize}
\end{lemma}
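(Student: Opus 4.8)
The plan is to reduce this directly to the ball-size estimator of \cref{lem:approx-ball-size} and then threshold the estimates. Concretely, I would run the algorithm of \cref{lem:approx-ball-size} with the constant parameter $\varepsilon := \tfrac1{16}$, obtaining for every vertex $v$ an estimate $b(v)$ with $\bigl|\,b(v) - |\Bout(v,r)|\,\bigr| \le \tfrac{n}{16}$, and then return $L := \set{v \in V(G) : b(v) \le \tfrac{13}{16} n}$. Here $\tfrac{13}{16}n$ is simply the midpoint between the two target thresholds $\tfrac34 n$ and $\tfrac78 n$.

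Correctness is then immediate from the additive error bound. If $v \in L$, then $|\Bout(v,r)| \le b(v) + \tfrac{n}{16} \le \tfrac{13}{16}n + \tfrac{1}{16}n = \tfrac78 n$; and if $v \notin L$, then $|\Bout(v,r)| \ge b(v) - \tfrac{n}{16} > \tfrac{13}{16}n - \tfrac{1}{16}n = \tfrac34 n$. The only quantitative point being used is that the gap $\tfrac78 n - \tfrac34 n = \tfrac18 n$ between the two required bounds is at least twice the additive error $\varepsilon n$; this is exactly why $\varepsilon = \tfrac1{16}$ suffices (any threshold in the interval $[\tfrac34 n + \varepsilon n,\ \tfrac78 n - \varepsilon n]$ would work).

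For the running time, \cref{lem:approx-ball-size} with $\varepsilon = \Theta(1)$ runs in time $\Order(\varepsilon^{-2}\log n \cdot (m + n\log\log n)) = \Order((m + n\log\log n)\log n)$, and forming $L$ from the estimates costs an additional $\Order(n)$ time, so the total matches the claimed bound. The classification is correct for \emph{all} vertices simultaneously whenever every estimate is within the claimed additive error, which holds with high probability directly by \cref{lem:approx-ball-size} (the union bound over the $n$ vertices is already absorbed there). One minor point to make explicit in the final write-up: $\textsc{Light}(G,r)$ inherits the ``with high probability'' qualifier from \cref{lem:approx-ball-size}, since that is the form in which it is used later (e.g.\ in \cref{lem:heavy-light} and \cref{alg:decompose}).

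There is essentially no real obstacle here; the lemma is a routine packaging of \cref{lem:approx-ball-size}. The only ``design decision'' is to pick $\varepsilon$ a small enough constant so that the $2\varepsilon n$ error slack fits inside the $\tfrac18 n$ gap, and then to observe that this still leaves $\varepsilon = \Theta(1)$, so that the $\varepsilon^{-2}$ factor in the estimator's running time collapses to a constant and only the $\log n$ factor from the $\Theta(\log n)$ samples remains.
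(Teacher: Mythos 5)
Your proposal is correct and is essentially identical to the paper's proof: the paper also invokes \cref{lem:approx-ball-size} with $\varepsilon = \tfrac1{16}$, thresholds the estimates at $\tfrac{13}{16}n$, and concludes via the same additive-error calculation and running-time bound. Nothing to add.
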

\begin{proof}
We run the previous \cref{lem:approx-ball-size} with parameter $\epsilon := \frac 1{16}$, and let $L$ be the subset of vertices with estimated ball sizes at most $\frac{13}{16} n$. With high probability, the estimates have additive error at most $\epsilon n = \frac1{16} n$. Therefore any vertex $v \in L$ satisfies $|\Bout(v, r)| \leq \frac{13}{16} n + \frac{1}{16}n = \frac 78 n$ and any vertex $v \in V(G) \setminus L$ satisfies $|\Bout(v, r)| \geq \frac{13}{16} n - \frac{1}{16} n = \frac 34 n$. The running time is dominated by \cref{lem:approx-ball-size} which runs in time $\Order((m + n \log \log n) \log n)$ as claimed.
\end{proof}

\subsection{The Strong Low-Diameter Decomposition}
The strong LDD works as follows: Let~\smash{$R = \frac{D}{10\log n}$}. First, we run \cref{lem:ldd-heavy-light} on $G$ with radius $R$ to compute a set $L^\OUT$ and we run \cref{lem:ldd-heavy-light} on the reversed graph with radius $R$ to compute a set $L^\IN$. We refer to the vertices in~$L^\OUT$ as \emph{out-light}, to the vertices in $L^\IN$ as \emph{in-light}, and to the vertices in $V(G) \setminus (L^\OUT \cup L^\IN)$ as \emph{heavy}. Then we distinguish two cases:
\medskip
\begin{itemize}
\item\emph{The heavy case:} If there is a heavy vertex $v \in V(G) \setminus (L^\OUT \cup L^\IN)$, we compute the set of vertices $W$ that both reach $v$ and are reachable from $v$ within distance $R$, i.e., $W = \Bout(v, R) \cap \Bin(v, R)$. Let $T^\OUT, T^\IN$ denote the shortest path trees from $v$ to $W$ and from $W$ to $v$, respectively. Let $C$ be the union of vertices in $T^\OUT$ and $T^\IN$. We \emph{collapse} $C$ (that is, we replace all vertices in $C$ by a single super-vertex) and consider the remaining (multi-)graph $G / C$. We recursively compute the strong LDD in $G / C$, resulting in a set of edges $S$. In $S$ we uncollapse all edges involving the super-vertex (i.e., for any edge $(v, u) \in E(G)$ which became an edge $(C, u)$ in the collapsed graph, we revert~$(C, u)$ back to $(v, u)$) and return $S$.
\medskip
\item\emph{The light case:} If there is no heavy vertex, then each vertex is out-light or in-light. For each vertex $v$ (which is out-light, say) we can therefore proceed in the standard way: Sample a radius $r$ from a geometric distribution with parameter $\Order(\log n / D)$, cut the edges leaving $\Bout(v, r)$ and recur on both the inside and the outside of the ball $\Bout(v, r)$.
\end{itemize}
\medskip
We summarize the pseudocode with the precise parameters in \cref{alg:strong-ldd}. Throughout this section, we denote by $n_0$ the size of the original graph and by $n$ the size of the current graph~$G$ (in the current recursive call of the algorithm).

\begin{algorithm}[t]
\caption{The implementation of the strong Low-Diameter Decomposition (see \cref{thm:strong-ldd}) that either returns a set of edges $S \subseteq E(G)$ or $\textsc{Fail}$.} \label{alg:strong-ldd}
\smallskip
\begin{algorithmic}[1]
\Procedure{StrongLDD}{$G, D$}
    \If{$n \leq 100$} \Return $E(G)$ \EndIf
    \State Let $R := \frac{D}{10 \log n}$
    \State Compute $L^\OUT \gets \textsc{Light}(G, R)$
    \State Compute $L^\IN \gets \textsc{Light}(G^{\mathit{rev}}, R)$~~\emph{(here, $G^{\mathit{rev}}$ is the graph $G$ with reversed edges)}

    \medskip
    \LineComment{The heavy case}
    \If{there is a heavy vertex $v \in V(G) \setminus (L^\OUT \cup L^\IN)$}
        \State Let $W \gets \Bout(v, R) \cap \Bin(v, R)$
        \State Compute shortest path trees $T^\OUT$ from $v$ to $W$, and $T^\IN$ from $W$ to $v$
        \State Let $C$ be the union of vertices in $T^\OUT, T^\IN$
        \State $S \gets \textsc{StrongLDD}(G / C, D - 4R)$
        \State \Return $S$ after uncollapsing all edges 
    \EndIf

    \medskip
    \LineComment{The light case}
    \State $S \gets \emptyset$
    \While{there is $v \in V(G) \cap L^\OUT$} \label{alg:strong-ldd:line:loop-out}
        \State $r \sim \Geom(R^{-1} \cdot 10 \log n_0)$
        \If{$r > R$} \Return $\textsc{Fail}$\EndIf
        \State $S \gets S \cup \partial \Bout(v, r) \cup \textsc{StrongLDD}(G[\Bout(v, r)], D)$
        \State $G \gets G \setminus \Bout(v, r)$
    \EndWhile
    \While{there is $v \in V(G) \cap L^\IN$} \label{alg:strong-ldd:line:loop-in}
        \State $r \sim \Geom(R^{-1} \cdot 10 \log n_0)$
        \If{$r > R$} \Return $\textsc{Fail}$ \EndIf
        \State $S \gets S \cup \partial \Bin(v, r) \cup \textsc{StrongLDD}(G[\Bin(v, r)], D)$
        \State $G \gets G \setminus \Bin(v, r)$
    \EndWhile
    \State \Return $S$
\EndProcedure
\end{algorithmic}
\end{algorithm}

\begin{lemma}[Strong Diameter of \cref{alg:strong-ldd}] \label{lem:strong-ldd-correctness}
With high probability, $\textsc{StrongLDD}(G, D)$ either returns $\textsc{Fail}$ or a set of edges $S \subseteq E(G)$ such that every strongly connected component~$C$ of $G \setminus S$ has diameter at most $D$, i.e., $\max_{u, v \in C} \dist_{G[C]}(u, v) \leq D$.
\end{lemma}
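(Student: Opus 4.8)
The plan is to prove the statement by induction on the recursion depth (equivalently, on $n = |V(G)|$, since each recursive call is on a strictly smaller vertex set after collapsing or carving). Throughout we condition on the high-probability event that every call to $\textsc{Light}$ (Lemma~\ref{lem:ldd-heavy-light}) correctly classifies vertices; since there are at most $\poly(n_0)$ recursive calls, a union bound keeps this event's probability high. We must handle the three ways the algorithm modifies $G$: the base case $n \le 100$, the heavy case (collapse a low-diameter core $C$ and recurse on $G/C$), and the light case (carve balls $\Bout(v,r)$ or $\Bin(v,r)$ and recurse both inside and on the remainder). In the base case $S = E(G)$, so every SCC of $G \setminus S$ is a single vertex and has diameter $0 \le D$; this is trivial.

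For the light case, I would first argue correctness assuming the algorithm does not return $\textsc{Fail}$ (so every sampled radius satisfies $r \le R$). When we carve $\Bout(v,r)$ with $v$ out-light, we add $\partial\Bout(v,r)$ to $S$, so no SCC of $G \setminus S$ straddles the ball boundary — exactly as in the proof of Lemma~\ref{lem:progress}. Hence every SCC of $G\setminus S$ is entirely inside some carved ball $\Bout(v,r)$ (resp.\ $\Bin(v,r)$) or entirely inside the final remainder graph, which is empty of light vertices and thus empty (every vertex is out-light or in-light in the light case, so the loops remove all of them). For an SCC $C' \subseteq \Bout(v,r)$, the recursive call $\textsc{StrongLDD}(G[\Bout(v,r)], D)$ contributes its own edge set to $S$; by the induction hypothesis, every SCC of $G[\Bout(v,r)] \setminus S$ has diameter at most $D$ in $G[\Bout(v,r)]\setminus S$. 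The key subtlety is that an SCC of the \emph{global} graph $G\setminus S$ that lies inside $\Bout(v,r)$ is an SCC of $G[\Bout(v,r)]\setminus S$ as well (since the boundary edges are cut, strong connectivity cannot route outside the ball), so the recursive diameter bound transfers directly; the induced-subgraph distance only gets larger when we restrict, but it is already $\le D$ inside the ball. The in-ball loop is symmetric.

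For the heavy case, the core observation is that collapsing $C = V(T^\OUT) \cup V(T^\IN)$ with $r \le R$ everywhere means $C$ itself has (strong) diameter at most $4R$ in $G$: any $u, w \in C$ can reach each other via $u \to v \to w$ using the trees $T^\IN$ and $T^\OUT$, at total cost $\le R + R + R + R = 4R$ (two hops through $v$ for each of the reach-$v$ and reached-from-$v$ parts). We recurse with budget $D - 4R$ on $G/C$; by induction, every SCC of $(G/C)\setminus S$ has diameter $\le D - 4R$ in $(G/C)\setminus S$. An SCC of $G\setminus S$ (after uncollapsing) either avoids $C$ entirely — in which case it maps to an SCC of $(G/C)\setminus S$ of the same diameter — or contains $C$, in which case any path realizing a distance in $(G/C)\setminus S$ that passes through the super-vertex can be lifted to a path in $G\setminus S$ by inserting an at-most-$4R$ detour through $C$; so the diameter in $G\setminus S$ is at most $(D-4R) + 4R = D$. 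This is where I expect the writeup to need care: I would first check that no edge incident to $C$ gets dropped by the collapse (so strong connectivity is preserved under collapsing/uncollapsing), that uncollapsing an SCC containing $C$ really yields an SCC in $G\setminus S$, and that the $+4R$ detour-insertion bound is tight enough — one detour per super-vertex traversal, and a shortest path in the collapsed graph visits the super-vertex at most once. I also need $D - 4R > 0$ for the recursion to make sense, i.e.\ $D > 4R = \frac{2D}{5\log n}$, which holds for $n$ large (and the base case $n \le 100$ catches the rest). The main obstacle, then, is bookkeeping the translation between distances in the collapsed graph and distances in the original graph through the super-vertex — everything else (boundary cuts, Light classification, $\textsc{Fail}$ handling) follows the template of Lemma~\ref{lem:progress} and Lemma~\ref{lem:sparse-hitting}.
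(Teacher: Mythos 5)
Your proposal is correct and follows essentially the same route as the paper's proof: base case trivial, light case via the ball-boundary-cut argument plus induction inside each ball, and heavy case via the observation that $G[C]$ has diameter at most $4R$ combined with a $+4R$ detour through the super-vertex against the reduced budget $D-4R$. The extra care you flag (single super-vertex visit on a shortest path, $D-4R>0$) is sensible but the paper handles these implicitly; no substantive difference.
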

\begin{proof}
With high probability, the heavy-light classification works correctly in the execution of $\textsc{StrongLDD}(G, D)$ (and all recursive calls). We condition on this event and treat the classification as perfect.

As before, we have to distinguish the heavy and the light case. In the heavy case, let~$v$ be the heavy vertex and let $W, T^\OUT, T^\IN, C$ be as in the algorithm. We claim that the induced subgraph $G[C]$ has diameter at most $4R$. Take any vertex $x \in C$; it suffices to prove that both $\dist_{G[C]}(v, x) \leq 2R$ and $\dist_{G[C]}(x, v) \leq 2R$. We show the former claim and omit the latter. There are two easy cases: Either we have $x \in T^\OUT$ in which case we immediately have that $\dist_{G[C]}(v, x) \leq R$ (as any path in $T^\OUT$ has length at most $R$). Or we have $x \in T^\IN$, in which case there exists some intermediate vertex $y \in W$ with $\dist_{G[C]}(y, x) \leq R$. But then also $\dist_{G[C]}(v, y) \leq R$ and in combination we obtain $\dist_{G[C]}(v, x) \leq 2R$ as claimed.

Recall that the algorithm collapses the vertices in $C$, and computes a strong LDD~$S$ on the remaining multigraph with parameter $D - 4R$. We assume by induction that the recursive call computes a correct strong decomposition (for $G / C$). To see that the decomposition is also correct for $G$, take any two vertices $u, v$ in the same strongly connected component in $G \setminus S$. We have that $\dist_{(G / C) \setminus S}(u, v) \leq D - 4R$. If the shortest $u$-$v$-path in $G / C$ does not touch the supervertex, then we immediately have $\dist_{G \setminus S}(u, v) \leq D - 4R \leq D$. If the shortest path touches the supervertex, then we can replace the path through $C$ by a path of length $\diam(G[C]) \leq 4R$. It follows that $\dist_{G \setminus S}(u, v) \leq D - 4R + 4R \leq D$.

The correctness of the light case is exactly as in the known LDD by~\cite{BernsteinNW22}, and similar to \cref{lem:decomposition}: For every ball $\Bout(v, r)$ (or $\Bin(v, r)$) that the algorithm carves out, we remove all outgoing edges $\partial \Bout(v, r)$ (or all incoming edges $\partial \Bin(v, r)$, respectively). Thus, two vertices $x, y$ in the remaining graph are part of the same strongly connected component only if both $x, y \in \Bout(v, r)$ or both~\makebox{$x, y \not\in \Bout(v, r)$}. The algorithm continues the loop on all vertices outside $\Bout(v, r)$ and recurs inside $\Bout(v, r)$. By induction, both calls succeed and reduce the diameter to at most $D$.

Eventually the algorithm reaches a base case where $G$ contains only a constant number of nodes and edges---in this case, we can select $S$ to be the whole set of edges.
\end{proof}

\begin{lemma}[Sparse Hitting of \cref{alg:strong-ldd}] \label{lem:strong-ldd-sparse-hitting}
For any edge $e \in E(G)$, the probability that $e$ is contained in the output of $\textsc{StrongLDD}(G, D)$ is at most $\Order(\frac{w(e)}D \cdot \log^3(n_0) + \frac{1}{\poly(n)})$.
\end{lemma}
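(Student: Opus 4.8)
The strategy is to follow the edge $e=(x,y)$ down the recursion tree of $\textsc{StrongLDD}(G,D)$. First observe that $e$ is put into $S$ only in a \emph{light} call, as a boundary edge $\partial\Bout(v,r)$ or $\partial\Bin(v,r)$; heavy calls add nothing to $S$ directly. Moreover, at each call the endpoints of $e$ lie in at most one of the subproblems handed to recursive calls (in a heavy call $e$ passes to $G/C$, possibly as an edge incident to the super-vertex; in a light call $e$ passes to at most one carved ball). Hence there is a well-defined random \emph{descent path} $\nu_0,\nu_1,\dots,\nu_k$ in the recursion tree, and $e$ is in the final output if and only if it is cut at some light call $\nu_\ell$ on this path. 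Write $n_\ell$, $D_\ell$ for the vertex count and the diameter parameter of $\nu_\ell$, $R_\ell:=D_\ell/(10\log n_\ell)$, and $p_\ell:=10\log n_0/R_\ell=100\log n_0\log n_\ell/D_\ell$.

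For a light call $\nu_\ell$, the very same computation as in the proof of \cref{lem:sparse-hitting} applies (using the memorylessness of the geometric radius; since LDD inputs have nonnegative weights there is no need to pass to $G_{\ge 0}$): conditioned on reaching $\nu_\ell$ with $e$ still present, $e$ is cut during the out-loop with probability at most $p_\ell\,w(e)$, and --- symmetrically, using $\dist(x,v)\le w(e)+\dist(y,v)$ for incoming balls --- during the in-loop with probability at most $p_\ell\,w(e)$, so $e$ is cut at $\nu_\ell$ with probability at most $2p_\ell\,w(e)$. Chaining these bounds along the path (the events ``reach $\nu_\ell$'' are nested) and summing, it therefore suffices to show that $\sum_{\ell} p_\ell=\Order(\log^3 n_0/D)$ for every realization of the path; this gives $\Pr(e\in S)\le \sum_\ell 2p_\ell w(e)=\Order(\tfrac{w(e)}{D}\log^3 n_0)$, while the claimed $\tfrac1{\poly(n)}$ term is the probability that some call to $\textsc{Light}$ fails (\cref{lem:ldd-heavy-light}), which is $\tfrac1{\poly(n_0)}\le\tfrac1{\poly(n)}$ and which we condition away from now on (this does not change the carving radii).

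It remains to bound $\sum_\ell p_\ell=100\log n_0\sum_\ell \tfrac{\log n_\ell}{D_\ell}$. Under the conditioning, the heavy/light classification is correct in every call, so every recursive call shrinks the vertex count by a factor at least $\tfrac87$: in a heavy call $n_{\ell+1}\le n_\ell/2+1\le\tfrac78 n_\ell$ because $|C|\ge|W|\ge\tfrac12 n_\ell$ (as $v$ is neither out-light nor in-light, both $|\Bout(v,R)|,|\Bin(v,R)|\ge\tfrac34 n_\ell$), and in a light call $|\Bout(v,r)|\le|\Bout(v,R)|\le\tfrac78 n_\ell$ (using $r\le R$, else the algorithm fails). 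Consequently the path has length $k=\Order(\log n_0)$, and, writing $L:=\log_2 n_0$, for each dyadic range $(2^j,2^{j+1}]$ at most a constant number $c_0$ of the $\nu_\ell$ --- in particular at most $c_0$ \emph{heavy} calls --- have $n_\ell$ in that range. Since $D_\ell$ changes only at heavy calls, each time shrinking by a factor $1-\tfrac2{5\log n_\ell}$, a level $\nu_\ell$ with $n_\ell\in(2^j,2^{j+1}]$ satisfies
\[
  D_\ell\;\ge\;D\prod_{j'=j}^{L}\Bigl(1-\tfrac2{5j'}\Bigr)^{c_0}\;=\;\Omega\bigl(D\,(j/L)^{\beta}\bigr)
\]
for an absolute constant $\beta<2$ (this uses that only $\Order(1)$ heavy calls fall in each dyadic scale, and that the base case $n\le100$ caps $j$ from below). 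Grouping the levels by dyadic scale and using $\log n_\ell=\Theta(j)$,
\[
  \sum_{\ell}\frac{\log n_\ell}{D_\ell}\;\le\;\sum_{j}c_0\cdot\frac{\Order(j)}{\Omega(D\,(j/L)^{\beta})}\;=\;\frac{\Order(L^{\beta})}{D}\sum_{j=1}^{L}j^{1-\beta}\;=\;\frac{\Order(L^{\beta})\cdot\Order(L^{2-\beta})}{D}\;=\;\Order\Bigl(\frac{\log^2 n_0}{D}\Bigr),
\]
using $\beta<2$. Multiplying by $100\log n_0$ yields $\sum_\ell p_\ell=\Order(\log^3 n_0/D)$, which together with the previous paragraph gives the claim.

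The main obstacle is this last step: the crude estimate $\sum_\ell\tfrac1{D_\ell}\le k/D_{\min}$ is too lossy, since after many heavy calls $D_{\min}$ can drop to $D/\poly(\log n_0)$, which would only yield $\log^{3+\Theta(1)}n_0$. The point is that the levels where $D_\ell$ has shrunk the most are exactly the levels with the fewest vertices --- so there both $\log n_\ell$ and the number of such levels are small --- and the dyadic sum then telescopes with exponent $2$ rather than $3$. Making this precise requires pinning down that a single dyadic scale hosts only $\Order(1)$ heavy calls and that the resulting exponent $\beta$ stays below $2$, which depends on the concrete constants (the $10\log n$ scaling and the $\tfrac34$/$\tfrac78$ ball thresholds) chosen in \cref{alg:strong-ldd}.
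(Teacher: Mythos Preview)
Your approach is different from the paper's and morally correct, but the crucial step is not fully nailed down. You bound $\sum_\ell p_\ell$ by grouping the descent path into dyadic scales and claiming that the resulting exponent $\beta=2c_0/5$ stays below $2$. However, you use the \emph{same} constant $c_0$ both for the number of heavy calls per scale (which controls $\beta$) and for the total number of calls per scale (which controls the outer sum). The latter can be as large as $\lceil \log_{8/7}2\rceil\approx 6$, and with $c_0=6$ you get $\beta>2$ regardless of the log base, so your displayed computation does not close. What saves the argument is that a heavy call sends $n$ to at most $n/2+1$, so at most \emph{two} heavy calls land in any dyadic range; separating $c_0^{\mathrm{heavy}}\le 2$ from $c_0^{\mathrm{all}}=\Order(1)$ gives $\beta\le 4/5<2$ and the rest goes through. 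You flag this dependence on constants in your last paragraph, but you should actually carry out this distinction rather than leave it as a caveat.

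For comparison, the paper avoids all of this bookkeeping with a direct induction on the hypothesis $p(n,w,D)\le \tfrac{w}{D}\cdot 1000\log(n_0)\log^2(n)$. The light case gives the additive recurrence $p(n,w,D)\le \tfrac{w}{D}\cdot 100\log(n_0)\log(n)+p(\tfrac{7n}{8},w,D)$, which unrolls to the hypothesis. In the heavy case one has $p(n,w,D)\le p(\tfrac n2,w,D-4R)$, and the single-line check $\tfrac{1}{1-\frac{4}{10\log n}}\cdot(\log n-1)^2\le \log^2 n$ shows that the loss in $D$ is exactly compensated by the gain in the $\log^2 n$ factor. This coupling of $D$ and $\log^2 n$ is what your dyadic sum is re-deriving the hard way; the inductive formulation makes it automatic and constant-robust.
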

\begin{proof}
In this proof we condition on the event that the initially computed heavy/light classification is correct. Since this event happens with high probability, we only increase the hitting probabilities by $\frac{1}{\poly(n)}$ for all edges.

Let $p(n, w, D)$ be an upper bound on the probability that an edge of weight $w$ is contained in the output of $\textsc{StrongLDD}(G, D)$, where $G$ is an $n$-vertex graph. We inductively prove that $p(n, w, D) \leq \frac{w}D \cdot 1000 \log(n_0) \log^2(n)$ which is as claimed. We distinguish the heavy and light case in \cref{alg:strong-ldd}.

\proofsubparagraph{The Light Case.}
Suppose that the algorithm enters the light case (that is, there is no vertex classified as heavy). Focus on some edge $e = (x, y)$ of weight $w = w(e)$. We distinguish three cases for each iteration. Suppose that the current iteration selects an out-light vertex $v$.
\smallskip
\begin{itemize}
    \item $x, y \not\in \Bout(v, r)$: The edge $e$ is not touched in this iteration and remains a part of the graph $G$. It may or may not be included in the output, depending on the future iterations.
    \item $x \in \Bout(v, r)$ and $y \not\in \Bout(v, r)$: In this case $e \in \partial\Bout(v, r)$ and thus the edge is included into $S$.
    \item $y \in \Bout(v, r)$: The edge is not included in $\partial\Bout(v, r)$. It may however be included in the recursive call on $\Bout(v, r)$. In the recursive call we have that $|\Bout(v, r)| \leq |\Bout(v, R)| \leq \frac{7n}8$, as $r \leq R$ (in the opposite case the algorithm fails and no edge is returned) and by \cref{lem:ldd-heavy-light} as $v$ is out-light.
\end{itemize}
\smallskip
Combining these cases, we obtain the following recursion for $p(n, w, D)$. In the calculation we abbreviate $q := R^{-1} \cdot 10\log(n_0)$:
\begin{gather*}
    p(n, w, D) \leq \max_{v \in V(G)} \Pr_{r \sim \Geom(q)} (y \not\in \Bout(v, r) \mid x \in \Bout(v, r)) + p(\tfrac{7n}8, w, D)\\
    \qquad\leq \max_{v \in V(G)} \Pr_{r \sim \Geom(q)} (r < \dist(v, y) \mid r \geq \dist(v, x)) + p(\tfrac{7n}8, w, D) \\
    \qquad\leq \max_{v \in V(G)} \Pr_{r \sim \Geom(q)} (r < \dist(v, x) + w \mid r \geq \dist(v, x)) + p(\tfrac{7n}8, w, D)
\intertext{Let $r' := r - \dist(v, x)$. Conditioned on the event $r \geq \dist(v, x)$, $r'$ is a nonnegative random random variable and by the memoryless property of geometric distributions, $r'$ is sampled from $\Geom(q)$, too:}
    \qquad\leq \max_{v \in V(G)} \Pr_{r' \sim \Geom(q)}(r < w) + p(\tfrac{7n}8, w, D) \\
    \qquad\leq wq + p(\tfrac{7n}8, w, D) \\
    \qquad\leq \frac{w}{D} \cdot 100\log(n_0) \log(n) + p(\tfrac{7n}8, w, D).
\end{gather*}
In the last step, we have plugged in $q = R^{-1} \cdot 10 \log(n_0) = \frac1D \cdot 100 \log(n_0) \log(n)$. It follows by induction that $p(n, w, D) \leq \frac wD \cdot 100 \log(n_0) \log(n) \log_{8/7}(n) \leq \frac wD \cdot 1000 \log(n_0) \log^2(n)$.

The same analysis applies also to the in-balls with ``$\Bin$'' in place of ``$\Bout$''.

\proofsubparagraph{The Heavy Case.} In the heavy case, the algorithm selects a heavy vertex $v$, computes the sets~$W = \Bout(v, R) \cap \Bin(v, R)$ and $C \supseteq W$ and recurs on the graph $G / C$ in which we contract the vertex set $C$ to a single vertex. We have $|\Bout(v, R)|, |\Bin(v, R)| > \frac{3n}4$ by \cref{lem:ldd-heavy-light} since $v$ is heavy. It follows that $|C| \geq |W| > \frac n2$ and therefore the contracted graph has size $|V(G / C)| \leq \frac n2$. As we call the algorithm recursively with parameter $D - 4R$ where~$R = \frac{D}{10 \log n}$, we obtain the following recurrence:
\begin{gather*}
    p(n, w, D) \leq p(\tfrac n2, w, D - 4R).
\end{gather*}
Using the induction hypothesis, we obtain:
\begin{gather*}
    p(n, w, D) \\
    \qquad\leq \frac{w}{D - 4R} \cdot 1000 \log^2(n_0) \log(\tfrac n2) \\
    \qquad\leq \frac{w}{D} \cdot \frac{1}{1 - \frac{4}{10 \log n}}\cdot 1000 \log^2(n_0) \log(\tfrac n2) \\
    \qquad= \frac{w}{D} \cdot \frac{\log(n)}{\log(n) - \frac{4}{10}} \cdot 1000 \log^2(n_0) \cdot (\log(n) - 1) \\
    \qquad\leq \frac{w}{D} \cdot 1000 \log^2(n_0) \cdot \log(n). \qedhere
\end{gather*}
\end{proof}

\begin{lemma}[Running Time of \cref{alg:strong-ldd}] \label{lem:strong-ldd-time}
The algorithm $\textsc{StrongLDD}(G, D)$ runs in time $\Order((m + n_0 \log \log n_0) \log^2(n_0))$.
\end{lemma}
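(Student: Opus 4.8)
The plan is to bound the total running time by a level-by-level analysis of the recursion tree of $\textsc{StrongLDD}$. First I would bound the work done in one invocation on a subgraph $G$ with $n$ vertices and $m$ edges, ignoring recursive subcalls. The dominant cost is the two calls to $\textsc{Light}$, which by \cref{lem:ldd-heavy-light} take time $\Order((m+n\log\log n)\log n)$. In the heavy case the only additional work is computing $\Bout(v,R)$, $\Bin(v,R)$, the two shortest-path trees, the collapsed graph, and (at the end) uncollapsing the returned edges; all of this is $\Order(m + n\log\log n)$ using Dijkstra with Thorup's priority queue~\cite{Thorup03}. In the light case, the while-loops compute balls $\Bout(v,r)$ and $\Bin(v,r)$ via Dijkstra; since every vertex is deleted from $G$ once it is placed in a ball, a vertex is explored in at most one ball computation, and---exactly as in the proof of \cref{lem:decomposition}, by initializing the priority queue and distance array once and reinitializing them incrementally at the end of each iteration---the total cost of all ball computations in one invocation is $\Order(m+n\log\log n)$. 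Sampling all radii from the geometric distribution costs $\Order(n)$ in expectation~\cite{BringmannF13}, and extracting the induced subgraphs handed to the recursive calls can be charged to the deleted vertices and edges. Hence the non-recursive cost of a node with parameters $(m,n)$ is $\Order((m+n\log\log n_0)\log n_0)$.

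Next I would bound the recursion depth. In both the heavy and the light case, every child of a node with $n$ vertices has at most $\tfrac78 n$ vertices: in the light case a recursive call is on $G[\Bout(v,r)]$ (resp.\ $G[\Bin(v,r)]$) with $r \le R$ and $v$ out-light (resp.\ in-light), so $|\Bout(v,r)| \le |\Bout(v,R)| \le \tfrac78 n$ by \cref{lem:ldd-heavy-light}; in the heavy case the supervertex absorbs the set $C \supseteq W = \Bout(v,R)\cap\Bin(v,R)$, and $|C| \ge |W| > \tfrac n2$ (as argued in the proof of \cref{lem:strong-ldd-sparse-hitting}), so $G/C$ has at most $\tfrac n2 + 1 \le \tfrac78 n$ vertices. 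Since the recursion stops once $n \le 100$, the recursion depth is $\Order(\log n_0)$, independently of how the parameter $D$ evolves.

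Then I would argue that the per-level totals of $n_v$ and $m_v$ are bounded by $n_0$ and $m$. Consider the children of a node $v$ with vertex set $V(G_v)$ and edge set $E(G_v)$. In the light case the carved balls partition $V(G_v)$ (the loops run until no out- or in-light vertex remains, and in the light case every vertex is out- or in-light), the recursive calls are exactly on the induced subgraphs on the blocks of this partition, and a cut edge lies in no block; thus the children's vertex sets partition $V(G_v)$ and their edge sets are disjoint subsets of $E(G_v)$. In the heavy case the single child $G_v/C$ has strictly fewer vertices, and being a contraction it has at most $|E(G_v)|$ edges. Consequently, for every recursion level $\ell$ we get $\sum_{v \text{ at level }\ell} n_v \le n_0$ and $\sum_{v \text{ at level }\ell} m_v \le m$. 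Combining with the per-node bound and the $\Order(\log n_0)$ depth, the total running time is $\sum_{\ell}\, \Order\bigl(\log n_0 \cdot \sum_{v \text{ at level }\ell}(m_v + n_v\log\log n_0)\bigr) = \Order(\log n_0)\cdot\Order\bigl((m + n_0\log\log n_0)\log n_0\bigr) = \Order((m+n_0\log\log n_0)\log^2 n_0)$, as claimed; since the only randomness enters through the geometric radii and the heavy/light classification, this bound holds both in expectation and with high probability.

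The main obstacle I expect is the bookkeeping around the contraction in the heavy case: one must check that introducing the supervertex cannot cause the vertex or edge counts to grow from one level of the recursion to the next, that the recursion still bottoms out in $\Order(\log n_0)$ levels even though $D$ shrinks only by a $(1-\Theta(1/\log n))$ factor there, and that the light-case ball computations are genuinely amortized to $\Order(m + n\log\log n)$ per invocation (via reuse of the priority queue) rather than paid separately for each ball.
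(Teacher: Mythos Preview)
Your proposal is correct and follows essentially the same approach as the paper: bound the non-recursive work per call by $\Order((m+n\log\log n_0)\log n_0)$ (dominated by the heavy/light classification), observe that every child has at most a constant fraction of the parent's vertices so the recursion depth is $\Order(\log n_0)$, and use that the subinstances at each level are edge- and vertex-disjoint (or, in the heavy case, a contraction that does not increase either count) to bound the per-level totals by $m$ and $n_0$. Your version is in fact more carefully argued than the paper's---in particular your explicit handling of the contraction case and the observation that the $D$-shrinkage is irrelevant for the depth bound are points the paper glosses over.
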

\begin{proof}
First focus on a single call of the algorithm and ignore the cost of recursive calls. It takes time $\Order((m + n_0 \log \log n_0) \log(n_0))$ to compute the heavy-light classification. In the heavy case, we can compute $W, T^\OUT, T^\IN, C$ in Dijkstra-time $\Order(m + n_0 \log\log n_0)$. In the light case, we can also carve out all balls $\Bout(v, r)$ and $\Bin(v, r)$ in total time $\Order(m  + n_0 \log\log(n_0))$, although the formal analysis is more involved: Observe that we explore each vertex at most once spending time $\Order(\log\log n_0)$, and that we explore each edge at most once spending time~$\Order(1)$. Since the analysis is similar to \cref{lem:decomposition}, we omit further details.

As the algorithm recurs on disjoint subgraphs of $G$, where the number of nodes in each subgraph is a constant factor smaller than the original number of nodes or less, the running time becomes $\Order((m + n_0 \log \log n_0) \log(n_0)^2)$.
\end{proof}

\begin{lemma}[Failure Probability of \cref{alg:strong-ldd}] \label{lem:strong-ldd-fail}
$\textsc{StrongLDD}(G, D)$ returns $\textsc{Fail}$ with probability at most $\Order(n_0^{-8})$.
\end{lemma}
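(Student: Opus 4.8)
The plan is to reduce the statement to a union bound over the geometric radii sampled throughout the recursion. Inspecting \cref{alg:strong-ldd}, the algorithm outputs $\textsc{Fail}$ only inside the two while-loops of the light case, and only when a freshly drawn radius $r \sim \Geom(R^{-1} \cdot 10\log n_0)$ exceeds the threshold $R$ of that call. Hence $\textsc{StrongLDD}(G,D)$ fails if and only if some radius sampled somewhere in the recursion exceeds its threshold, and it suffices to bound the probability of that event.

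First I would bound the probability that a single sample exceeds its threshold. Writing $q := R^{-1} \cdot 10\log n_0$ for the parameter of the geometric distribution and viewing $r \sim \Geom(q)$ as the number of independent $q$-biased coin flips until the first heads, $\Pr(r > R)$ equals the probability that the first $R$ flips all come up tails, which is at most $(1-q)^R \le e^{-qR} = e^{-10\log n_0} = n_0^{-10}$ (rounding $R$ to an integer changes this only by a constant factor, which is irrelevant given the slack below). Since every call to $\Geom$ uses fresh independent randomness and compares the outcome against exactly the $R$ that appears in the parameter, this bound holds verbatim for every sample, in every recursive call.

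Second I would show that the algorithm attempts at most $N(n_0) = \Order(n_0 \log n_0)$ samples in total, regardless of the random outcomes. The recursion tree of executed calls has depth $\Order(\log n_0)$: every recursive call that is executed is on a graph with at most $\tfrac78$ of the current number of vertices --- in the light case because the test \emph{``if $r > R$ return $\textsc{Fail}$''} precedes the recursive call, so any ball we recurse into satisfies $|\Bout(v,r)| \le |\Bout(v,R)| \le \tfrac78 n$ by \cref{lem:ldd-heavy-light} (and symmetrically for in-balls), and in the heavy case because $|W| = |\Bout(v,R) \cap \Bin(v,R)| > \tfrac n2$ (as $v$ is heavy, both balls have size $> \tfrac34 n$), hence $|V(G/C)| \le n - |W| + 1 < \tfrac n2 + 1 \le \tfrac 78 n$ for $n > 100$. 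Moreover, letting $n(\mathcal C)$ denote the number of vertices in the graph of a recursive call $\mathcal C$, the sum of $n(\mathcal C)$ over all calls $\mathcal C$ at a fixed recursion level never exceeds $n_0$: at a light call the children's vertex sets partition the parent's, so this sum is preserved, whereas at a heavy call the single child is strictly smaller. Since each while-loop iteration removes its ball (in particular its center vertex) from the current graph, a call performs at most $n(\mathcal C)$ samples; summing over all calls and all $\Order(\log n_0)$ levels gives $N(n_0) = \Order(n_0 \log n_0)$.

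Finally I would combine the two ingredients. Drawing the geometric radii in advance as an i.i.d.\ sequence $r_1, r_2, \dots$ (the algorithm consumes a prefix of these, of length at most $N(n_0)$ by the previous paragraph, since both the depth bound and the per-level vertex budget are structural and independent of the outcomes), we get
\[
    \Pr\bigl(\textsc{StrongLDD}(G,D) = \textsc{Fail}\bigr) \le \Pr\bigl(\exists\, i \le N(n_0): r_i > R\bigr) \le N(n_0) \cdot n_0^{-10} = \Order(n_0 \log n_0)\cdot n_0^{-10} = \Order(n_0^{-8}),
\]
using $\log n_0 \le n_0$. The main obstacle is the bookkeeping behind the $\Order(n_0 \log n_0)$ sample bound --- tracking the per-level vertex budget through the light case (which spawns many children) and the heavy case (which collapses vertex sets into super-vertices) --- but since a single sample already fails only with probability $n_0^{-10}$ whereas the target is merely $n_0^{-8}$, any polynomial bound on the number of samples would suffice, and the ``depth $\Order(\log n_0)$'' observation supplies one comfortably.
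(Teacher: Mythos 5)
Your proposal is correct and follows essentially the same route as the paper's proof: bound the failure probability of a single geometric sample by $e^{-10\log n_0}=n_0^{-10}$, bound the total number of samples by $\Order(n_0\log n_0)$ via the $\Order(\log n_0)$ recursion depth and the disjointness of subgraphs within a level, and conclude by a union bound. You merely spell out the depth and per-level counting more explicitly than the paper does.
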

\begin{proof}
As shown in detail in the previous lemmas, with every recursive call the number of vertices reduces by a constant factor and thus the recursion reaches depth at most~$\Order(\log n_0)$. In each recursive call, the loops in \cref{alg:strong-ldd:line:loop-out,alg:strong-ldd:line:loop-in} run at most $n_0$ times. For each execution, the error event is that $r > R$, where $r \sim \Geom(R^{-1} \cdot 10 \log(n_0))$. This event happens with probability at most $\exp(-10 \log(n_0)) \leq n_0^{-10}$, and therefore the algorithm returns $\textsc{Fail}$ with probability at most $\Order(n_0 \log n_0)\cdot n_0^{-10} \leq \Order(n_0^{-8})$.
\end{proof}

\begin{proof}[Proof of \cref{thm:strong-ldd}]
To compute the claimed strong LDD we call $\textsc{StrongLDD}(G, \frac12 D)$ with the following two modifications:

First, whenever some recursive call returns $\textsc{Fail}$, we simply restart the whole algorithm.

Second, we test whether the returned set of edges $S \subseteq E(G)$ satisfies the Strong Diameter property. To this end, we compute the strongly connected components in $G \setminus S$ and compute, for any such component $C$, a $2$-approximation of its diameter. By  a standard argument, such a $2$-approximation can be obtained in Dijkstra-time by (1) selecting an arbitrary node $v$, (2) computing $d^{out} := \max_{u \in V(G)} d_G(v,u)$ by solving SSSP on $G$, (3) computing $d^{in} := \max_{u \in V(G)} d_G(u,v)$ by solving SSSP on the reversed graph of $G$, and returning $\max\{d^{in}, d^{out}\}$. If the diameter approximations are at most $\frac D2$ in all components, we return~$S$. Otherwise, we restart the whole algorithm.

This algorithm indeed never fails to satisfy the Strong Diameter property: Since the diameter approximations have approximation factor at most $2$, we have certified that the diameter of any strongly connected component is at most $D$ in the graph $G \setminus S$. Moreover, with high probability the execution of \cref{alg:strong-ldd} passes both tests with high probability (by \cref{lem:strong-ldd-fail,lem:strong-ldd-correctness}), and therefore we expect to repeat the algorithm $\Order(1)$ times. Since the repetitions are independent of each other, the edge hitting probability increases only by a constant factor and remains $\Order(\frac{w(e)}{D} \cdot \log^3(n_0))$ by \cref{lem:strong-ldd-sparse-hitting}.

Finally, consider the running time. As argued before, with high probability we avoid restarting \cref{alg:strong-ldd} altogether. Thus, with high probability the algorithm runs in total time is $\Order((m + n_0 \log \log n_0) \log^2 (n_0))$ by \cref{lem:strong-ldd-time}. Since we expect to repeat the algorithm at most $\Order(1)$ times, the same bound applies to the expected running time.
\end{proof}

\appendix
\section{Lazy Dijkstra} \label{sec:dijkstra}
This section is devoted to a proof of the following lemma, stating that Dijkstra's algorithm can be adapted to work with negative edges in time depending on the $\eta_G(v)$ values. Recall that $\eta_G(v)$ denotes the minimum number of negative-weight edges in a shortest $s$-$v$ path in~$G$.

\lemlazydijkstra*

This lemma is basically \cite[Lemma~3.3]{BernsteinNW22}, but the statement differs slightly. We provide a self-contained proof that morally follows the one in~\cite[Appendix~A]{BernsteinNW22}.

We give the pseudocode for \cref{lem:lazy-dijkstra} in \cref{alg:lazy-dijkstra}. Throughout, let $G = (V, E, w)$ be the given directed weighted graph with possibly negative edge weights. We write $E^{\geq 0}$ for the subset of edges with nonnegative weight, and $E^{< 0}$ for the subset of edges with negative weight. In the pseudocode, we rely on Thorup's priority queue:

\begin{lemma}[Thorup's Priority Queue~\cite{Thorup03}] \label{lem:thorup}
There is a priority queue implementation for storing $n$ integer keys that supports the operations $\textsc{FindMin}$, $\textsc{Insert}$ and $\textsc{DecreaseKey}$ in constant time, and $\textsc{Delete}$ in time $\Order(\log\log n)$.
\end{lemma}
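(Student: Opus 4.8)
The plan is to prove \cref{lem:thorup} by reconstructing Thorup's integer priority queue~\cite{Thorup03}; since this is a verbatim restatement of a published theorem, in the final paper I would simply cite~\cite{Thorup03}, but here is the high-level route. We work in the word RAM with word size $w = \Omega(\log n)$, so each of the $n$ stored keys occupies one word. As a baseline, a van Emde Boas layout combined with dynamic perfect hashing already gives a priority queue in $\Order(n)$ space supporting \textsc{FindMin} in $\Order(1)$ and all of \textsc{Insert}, \textsc{Delete}, \textsc{DecreaseKey} (the last as a \textsc{Delete} followed by an \textsc{Insert}) in $\Order(\log\log n)$ time. The content of the lemma is therefore to push \textsc{Insert} and \textsc{DecreaseKey} down to $\Order(1)$ while keeping \textsc{Delete} at $\Order(\log\log n)$.

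The two ingredients are as follows. First, the \emph{atomic heap} (equivalently $q$-heap) of Fredman and Willard: after an $\Order(n)$-time preprocessing that builds shared lookup tables of total size $\Order(n)$, one maintains a set of at most $\log^{\Order(1)} n$ word-sized keys supporting \textsc{Insert}, \textsc{Delete}, \textsc{FindMin} and \textsc{Predecessor} in $\Order(1)$ worst-case time; the mechanism is that such a small set fits into $\Order(1)$ words in a packed layout with separator bits, a single multiplication plus masking compares a query key against all stored keys at once, and the resulting bitmask is decoded by one table lookup (the bootstrapping to polylogarithmic --- rather than $w^{\Order(1)}$ --- size is exactly what makes this work when $w$ is only $\Theta(\log n)$). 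I would invoke this interface as a black box. Second, a hierarchical decomposition of the key universe into $\Order(\log\log n)$ strata: conceptually the keys sit at the leaves of a compacted binary trie over their $w$-bit representations, and the trie nodes are grouped into $\Order(\log\log n)$ levels so that the branching an individual element must navigate within a level has only $\log^{\Order(1)} n$ relevant entries and can thus be handled by an atomic heap with $\Order(1)$ operations; dynamic perfect hashing keeps total space $\Order(n)$ irrespective of the universe size.

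On top of this, \textsc{FindMin} is $\Order(1)$ by propagating a ``min of mins'' up the hierarchy, and \textsc{Insert} locates the correct stratum and inserts with $\Order(1)$ atomic-heap operations. The crux is \textsc{DecreaseKey}: a shrinking key may have to migrate to a lower part of the hierarchy, which naively costs $\Order(\log\log n)$; Thorup's device is to perform this migration \emph{lazily}, moving the element only to a position that is ``good enough'' to preserve the \textsc{FindMin} invariant and deferring the residual reorganization, so that each \textsc{DecreaseKey} does $\Order(1)$ actual work and charges the deferred work to later deletions --- which is why \textsc{Delete} must then repair the lazily maintained hierarchy along the (at most $\Order(\log\log n)$ many) strata touched by the deleted element. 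The accounting is a potential-function argument: \textsc{Insert} and \textsc{DecreaseKey} raise the potential by $\Order(1)$ while paying $\Order(1)$, and \textsc{Delete} pays $\Order(\log\log n)$ while releasing enough potential to cover the cascade, yielding the claimed (amortized, and worst-case after standard deamortization) bounds. I expect the main obstacle to be exactly the constant-time \textsc{DecreaseKey}: one must pin down which reorganization can be postponed without breaking the \textsc{FindMin} invariant and verify that the postponed work is always absorbed within the $\Order(\log\log n)$ budget of the deletion that eventually triggers it; the atomic-heap machinery itself is entirely off the shelf.
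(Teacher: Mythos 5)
The paper does not prove this lemma at all: it is stated purely as a black-box citation of Thorup~\cite{Thorup03}, which is exactly what you say you would do, so your approach coincides with the paper's. Your accompanying sketch (atomic heaps plus an $\Order(\log\log n)$-level hierarchy with lazily deferred \textsc{DecreaseKey} work charged to deletions) is a fair high-level account of Thorup's construction, but none of it is needed here.
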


\begin{algorithm}[t]
\caption{The version of Dijkstra's algorithm implementing \cref{lem:lazy-dijkstra}.} \label{alg:lazy-dijkstra}
\begin{algorithmic}[1]
\State Initialize $d[s] \gets 0$ and $d[v] \gets \infty$ for all vertices $v \neq s$
\State Initialize a Thorup priority queue $Q$ with keys $d[\cdot]$ and add $s$ to $Q$
\Repeat
\medskip
\Statex[1]\emph{(The Dijkstra phase)}
\State $A \gets \emptyset$
\While{$Q$ is nonempty}
    \State Remove the vertex $v$ from $Q$ with minimum $d[v]$
    \State Add $v$ to $A$
    \ForEach{edge $(v, x) \in E^{\geq 0}$}
        \If{$d[v] + w(v, w) < d[x]$} \label{alg:lazy-dijkstra:line:dijkstra-relax-start}
            \State Add $x$ to $Q$
            \State $d[x] \gets d[v] + w(v, x)$ \label{alg:lazy-dijkstra:line:dijkstra-relax-end}
        \EndIf
    \EndForEach
\EndWhile

\medskip
\Statex[1]\emph{(The Bellman-Ford phase)}
\ForEach{$v \in A$}
    \ForEach{edge $(v, x) \in E^{< 0}$}
        \If{$d[v] + w(v, x) < d[x]$} \label{alg:lazy-dijkstra:line:bf-relax-start}
            \State Add $x$ to $Q$
            \State $d[x] \gets d[v] + w(v, x)$ \label{alg:lazy-dijkstra:line:bf-relax-end}
        \EndIf
    \EndForEach
\EndForEach

\medskip
\Until{$Q$ is empty}
\State\Return $d[v]$ for all vertices $v$
\end{algorithmic}
\end{algorithm}

For the analysis of the algorithm, we define two central quantities. Let $v$ be a vertex, then we define
\begin{align*}
    \dist_i(v) &= \min\set{ w(P) : \text{$P$ is an $s$-$v$-path containing less than $i$ negative edges}}, \\
    \dist'_i(v) &= \min\left\{ \dist_i(v), \min_{\substack{u \in V\\w(u, v) < 0}} \dist_i(u) + w(u, v) \right\}.
\end{align*}
Note that $\dist_0(v) = \dist'_0(v) = \infty$. We start with some observations involving these quantities $\dist_i$ and $\dist_i'$:

\begin{observation} \label{obs:1}
For all $i$, $\dist_i(v) \geq \dist'_i(v) \geq \dist_{i+1}(v)$.
\end{observation}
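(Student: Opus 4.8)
The plan is to unpack the definitions of $\dist_i$ and $\dist'_i$ directly and verify both inequalities separately. The statement $\dist_i(v) \geq \dist'_i(v) \geq \dist_{i+1}(v)$ splits into two claims, and I expect both to follow from elementary reasoning about paths and their negative-edge counts, so the main obstacle will only be bookkeeping the edge cases (in particular when the relevant minima are $\infty$ or when no qualifying path exists).

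For the first inequality $\dist_i(v) \geq \dist'_i(v)$: this is immediate from the definition of $\dist'_i(v)$ as a minimum of a collection that includes $\dist_i(v)$ as one of its terms. There is nothing to prove beyond pointing out that $\dist'_i(v) = \min\{\dist_i(v), (\text{other terms})\} \leq \dist_i(v)$.

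For the second inequality $\dist'_i(v) \geq \dist_{i+1}(v)$: I would argue that every quantity appearing in the minimum defining $\dist'_i(v)$ is the weight of some $s$-$v$-path with \emph{at most} $i$ negative edges, hence with \emph{fewer than} $i+1$ negative edges, and therefore is at least $\dist_{i+1}(v)$. Concretely, the term $\dist_i(v)$ is the weight of an $s$-$v$-path with fewer than $i$ (hence fewer than $i+1$) negative edges. For each term of the form $\dist_i(u) + w(u,v)$ with $w(u,v) < 0$: let $P$ be an $s$-$u$-path of weight $\dist_i(u)$ containing fewer than $i$ negative edges; then appending the negative edge $(u,v)$ yields an $s$-$v$-walk of weight $\dist_i(u) + w(u,v)$ containing fewer than $i+1$ negative edges. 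If this walk is actually a simple path we are done; if it revisits a vertex, one can shortcut to obtain an $s$-$v$-path whose weight is no larger only if the removed closed subwalk is nonnegative — but since $\dist_{i+1}(v)$ is defined as a minimum over paths (not walks), I should be careful here. The cleanest fix is to note that $\dist_{i+1}(v)$ as defined takes the minimum over \emph{paths}, but the appended object $P \cdot (u,v)$, even if it is a walk, still certifies $\dist_{i+1}(v) \le \dist_i(u) + w(u,v)$ provided we either (a) observe that in the relevant setting (no negative cycle, by hypothesis of \cref{lem:lazy-dijkstra}) any such walk contains a path of no greater weight with no more negative edges, or (b) simply take $\dist_{i+1}$ to be defined over walks as well — and indeed the intended reading is that these quantities coincide. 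I would go with option (a): since $G$ has no negative cycle, any closed subwalk has nonnegative weight and removing it does not increase the weight, and it does not increase the number of negative edges, so the shortcut is a genuine $s$-$v$-path with $< i+1$ negative edges and weight $\le \dist_i(u)+w(u,v)$. Taking the minimum over all terms gives $\dist_{i+1}(v) \le \dist'_i(v)$.

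Finally I would handle the degenerate cases uniformly: if there is no $s$-$v$-path with the required number of negative edges, the corresponding $\dist$ value is $\infty$ by the convention that $\min \emptyset = \infty$, and all the inequalities hold trivially with $\infty$ on the larger side; in particular $\dist_0(v) = \dist'_0(v) = \infty$ is consistent. I do not anticipate any real difficulty — this observation is a warm-up lemma whose only content is careful parsing of the two definitions, and the proof should be two or three sentences per inequality.
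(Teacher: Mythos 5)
Your proof is correct, and in fact more careful than the paper, which states \cref{obs:1} without any proof at all (only \cref{obs:2,obs:3} receive arguments). The first inequality is indeed definitional, and you have correctly isolated the one genuine subtlety in the second inequality: appending the negative edge $(u,v)$ to a path witnessing $\dist_i(u)$ produces a walk, not necessarily a path.

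The only point I would push back on is your choice of resolution (a) over (b). Option (a) proves the observation under the hypothesis that $G$ has no negative cycle. That suffices for the main use of \cref{lem:lazy-dijkstra}, but the invariants of \cref{lem:lazy-dijkstra-invariants} --- and hence \cref{obs:1} --- are also invoked in \cref{lem:lazy-dijkstra-correctness} to show that the algorithm does \emph{not} terminate when $G$ \emph{does} contain a negative cycle, and there the hypothesis of (a) is unavailable. Worse, with the strict simple-path reading the statement is actually false in that regime: take $s \to a \to v \to u$ with weight-$0$ edges and a single negative edge $(u,v)$ of weight $-10$; then $\dist'_1(v) = -10$ but $\dist_2(v) = 0$, since no \emph{simple} $s$-$v$-path uses $(u,v)$. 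So option (b) is not merely a cosmetic alternative: reading each $\dist_i$ as a minimum over walks traversing fewer than $i$ negative edges (which is still bounded below, since every negative cycle contains a negative edge) is the interpretation under which the observation holds unconditionally and the appending step is immediate; it is also consistent with how the proof of \cref{obs:2} freely splits "the path witnessing $\dist_{i+1}(v)$" without worrying about simplicity. I would therefore state (b) as the definition rather than offering (a) as the primary fix.
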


\begin{observation} \label{obs:2}
For all $v$,
\begin{equation*}
    \dist_{i+1}(v) = \min\left\{\dist_i(v), \min_{\substack{u \in V\\\dist_i(u) > \dist'_i(u)}} \dist'_i(u) + \dist_{G^{\geq 0}}(u, v)\right\}.
\end{equation*}
\end{observation}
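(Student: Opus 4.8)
The statement to prove is \cref{obs:2}, the recursion for $\dist_{i+1}(v)$.

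\medskip
\textbf{Proof plan for \cref{obs:2}.} The plan is to prove the two inequalities separately. For the ``$\le$'' direction, I would exhibit, for each candidate on the right-hand side, an $s$-$v$-path with at most $i$ negative edges whose weight realizes that candidate. The term $\dist_i(v)$ is trivially an upper bound since $\dist_{i+1}(v) \le \dist_i(v)$ by \cref{obs:1} (a path with $<i$ negative edges also has $<i+1$). For the second term, fix $u$ with $\dist_i(u) > \dist'_i(u)$; by the definition of $\dist'_i$ this strict inequality forces the minimum defining $\dist'_i(u)$ to be attained by some edge $(x,u)$ with $w(x,u)<0$, so $\dist'_i(u) = \dist_i(x) + w(x,u)$ for some $x$, and there is an $s$-$x$-path $P$ with $<i$ negative edges and $w(P) = \dist_i(x)$. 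Appending the negative edge $(x,u)$ gives an $s$-$u$-walk with $<i+1$ negative edges of weight $\dist'_i(u)$; then appending a shortest nonnegative-weight path from $u$ to $v$ in $G^{\ge 0}$ (which adds no negative edges) yields an $s$-$v$-walk of weight $\dist'_i(u) + \dist_{G^{\ge 0}}(u,v)$ with at most $i$ negative edges. Since $G$ has no negative cycle, a minimum-weight walk can be taken to be a path, so this bounds $\dist_{i+1}(v)$ from above.

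\medskip
For the ``$\ge$'' direction, I would take an $s$-$v$-path $P$ with at most $i$ negative edges realizing $\dist_{i+1}(v)$ and split it at its last negative edge (if any). If $P$ has at most $i-1$ negative edges, i.e. strictly fewer than $i$, then $w(P) \ge \dist_i(v)$ and we are done. Otherwise $P$ has exactly $i$ negative edges; let $(x,u)$ be the last one, so $P$ decomposes as $P_1 \cdot (x,u) \cdot P_2$, where $P_1$ is an $s$-$x$-path with exactly $i-1$ negative edges (hence $w(P_1) \ge \dist_i(x)$), and $P_2$ is a $u$-$v$-path with no negative edges (hence $w(P_2) \ge \dist_{G^{\ge 0}}(u,v)$). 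Then
\[
  w(P) = w(P_1) + w(x,u) + w(P_2) \ge \dist_i(x) + w(x,u) + \dist_{G^{\ge 0}}(u,v) \ge \dist'_i(u) + \dist_{G^{\ge 0}}(u,v),
\]
using $\dist'_i(u) \le \dist_i(x) + w(x,u)$ from the definition of $\dist'_i$. It remains to check that this $u$ satisfies $\dist_i(u) > \dist'_i(u)$ so that it is actually among the indices in the minimum; if instead $\dist_i(u) = \dist'_i(u)$ (equality, since ``$\ge$'' always holds) then $\dist'_i(u) = \dist_i(u)$, and combined with $w(P_2) \ge \dist_{G^{\ge 0}}(u,v) \ge 0$ we still get $w(P) \ge \dist_i(u) \ge \dist_i(v)$ — wait, that last step is not immediate, so instead I would simply observe that $\dist'_i(u) + \dist_{G^{\ge 0}}(u,v) \ge \dist_{i+1}(v)$ trivially by the ``$\le$'' direction already proven when $\dist_i(u) > \dist'_i(u)$, and handle the equality case by noting $w(P) \ge \dist_i(u) + \dist_{G^{\ge 0}}(u,v) \ge \dist_{i+1}(u) + \dist_{G^{\ge 0}}(u,v)$ and then appealing to \cref{obs:1}; cleaner is to define the right-hand minimum over \emph{all} $u$ and separately argue the restriction to $\dist_i(u)>\dist'_i(u)$ does not change it.

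\medskip
\textbf{Main obstacle.} The subtle point is the bookkeeping of ``the last negative edge'' and making sure the index $u$ in the decomposition genuinely lies in the index set $\{u : \dist_i(u) > \dist'_i(u)\}$ of the claimed minimum, rather than just showing $w(P)$ is at least the minimum over all $u$. I expect the cleanest resolution is to first prove the ``$\ge$'' bound with the minimum ranging over \emph{all} vertices $u$ with an incoming negative edge, and then observe that adding the constraint $\dist_i(u) > \dist'_i(u)$ only removes terms that are $\ge \dist_i(v) \ge \dist_{i+1}(v)$ anyway (because if $\dist_i(u) = \dist'_i(u)$ then $\dist'_i(u) + \dist_{G^{\ge 0}}(u,v) \ge \dist_i(u)$, and a shortest such $u$-path extends an $s$-$u$-path with $<i$ negative edges to an $s$-$v$-path with $<i$ negative edges, hence is $\ge \dist_i(v)$), so the restricted minimum equals the unrestricted one. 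Everything else is a routine induction-free case analysis using only \cref{obs:1} and the absence of negative cycles (to pass between walks and paths).
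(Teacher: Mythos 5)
Your proposal is correct and follows essentially the same route as the paper: split the witnessing path at its last negative edge to get the bound $\dist_i(x)+w(x,u)+\dist_{G^{\geq 0}}(u,v)\geq \dist'_i(u)+\dist_{G^{\geq 0}}(u,v)$, and dispose of the case $\dist_i(u)=\dist'_i(u)$ by noting the term then dominates $\dist_i(v)$ (the paper phrases this as a contradiction after assuming $\dist_{i+1}(v)<\dist_i(v)$ up front, which also lets it skip the easy ``$\le$'' direction that you spell out). Your extra care about walks versus paths and the no-negative-cycle assumption is a legitimate point the paper glosses over, but it does not change the argument.
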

\begin{proof}
The statement is clear if $\dist_{i}(v) = \dist_{i+1}(v)$, so assume that $\dist_{i+1}(v) < \dist_i(v)$. Let $P$ be the path witnessing $\dist_{i+1}(v)$, i.e., a shortest $s$-$v$-path containing less than~$i+1$ negative edges. Let $(x, u)$ denote the last negative-weight edge in $P$, and partition the path~$P$ into subpaths $P_1\, x\, u\, P_2$. Then the first segment $P_1\, x$ is a path containing less than~$i$ negative-weight edges and the segment $u\, P_2$ does not contain any negative-weight edges. Therefore,
\begin{equation*}
    \dist_{i+1}(v) = \dist_i(x) + w(x, u) + \dist_{G^{\geq 0}}(u, v) \geq \dist_i'(u) + \dist_{G^{\geq 0}}(u, v).
\end{equation*}
Suppose, for the sake of contradiction, that $\dist_i(u) = \dist'_i(u)$. Then
\begin{equation*}
    \dist_{i+1}(v) \geq \dist_i(u) + \dist_{G^{\geq 0}}(u, v) \geq \dist_i(v),
\end{equation*}
which contradicts our initial assumption.
\end{proof}

\begin{observation} \label{obs:3}
For all $v$,
\begin{equation*}
    \dist'_i(v) = \min\left\{\dist_i(v), \min_{\substack{u \in V\\\dist_{i-1}(u) > \dist_i(u)\\w(u, v) < 0}} \dist_i(u) + w(u, v) \right\}
\end{equation*}
\end{observation}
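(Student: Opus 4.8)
The plan is to establish the two inequalities ``$\le$'' and ``$\ge$'' between $\dist'_i(v)$ and the right-hand side separately, in the spirit of the proof of \cref{obs:2}. Throughout I will use the trivial monotonicity $\dist_{i-1}(u) \ge \dist_i(u)$, which holds because every $s$-$u$-path with fewer than $i-1$ negative edges in particular has fewer than $i$ negative edges; consequently the condition ``$\dist_{i-1}(u) > \dist_i(u)$'' is simply the negation of ``$\dist_{i-1}(u) = \dist_i(u)$''.

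For the ``$\le$'' direction, recall from the definition (or \cref{obs:1}) that $\dist'_i(v) \le \dist_i(v)$, so it suffices to check $\dist'_i(v) \le \dist_i(u) + w(u,v)$ for every $u$ with $w(u,v) < 0$ and $\dist_{i-1}(u) > \dist_i(u)$; this is immediate, since any such pair already contributes a term to the (larger) minimum defining $\dist'_i(v)$.

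For the ``$\ge$'' direction, I would argue by cases on how the minimum defining $\dist'_i(v)$ is attained. If $\dist'_i(v) = \dist_i(v)$, there is nothing to prove, as the right-hand side contains the term $\dist_i(v)$. Otherwise there is a vertex $u^*$ with $w(u^*,v) < 0$ and $\dist'_i(v) = \dist_i(u^*) + w(u^*,v) < \dist_i(v)$; this value is finite, for if it were $\infty$ then every term on the right-hand side would be $\infty$ too and the claim is trivial. Now split on whether $u^*$ meets the extra constraint. If $\dist_{i-1}(u^*) > \dist_i(u^*)$, then $u^*$ is an admissible index on the right-hand side, so the right-hand side is at most $\dist_i(u^*) + w(u^*,v) = \dist'_i(v)$, as wanted. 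If instead $\dist_{i-1}(u^*) = \dist_i(u^*)$, take an $s$-$u^*$-path $P$ with fewer than $i-1$ negative edges of weight $\dist_{i-1}(u^*)$; appending the negative edge $(u^*,v)$ yields an $s$-$v$-path with fewer than $i$ negative edges, so $\dist_i(v) \le \dist_{i-1}(u^*) + w(u^*,v) = \dist_i(u^*) + w(u^*,v) = \dist'_i(v) \le \dist_i(v)$. This forces $\dist'_i(v) = \dist_i(v)$, contradicting the case assumption, so this sub-case never occurs.

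The only point requiring genuine care — the ``main obstacle'', modest though it is — is the bookkeeping of the strict bound ``fewer than $i$'': one must verify that appending a single negative edge to a path counted by $\dist_{i-1}$ lands in the class counted by $\dist_i$, and that the sub-case $\dist_{i-1}(u^*) = \dist_i(u^*)$ genuinely collapses into the already-handled situation $\dist'_i(v) = \dist_i(v)$ rather than creating a new possibility. Finiteness caveats are handled by noting that an $\infty$-valued term can never be a minimizer unless both sides equal $\infty$.
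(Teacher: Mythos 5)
Your proof is correct and follows essentially the same route as the paper: both reduce to the case $\dist'_i(v) < \dist_i(v)$ and show the minimizer $u^*$ must satisfy $\dist_{i-1}(u^*) > \dist_i(u^*)$ by deriving a contradiction otherwise. The only cosmetic difference is that the paper obtains the contradiction by citing $\dist'_{i-1}(v) \ge \dist_i(v)$ from Observation~\ref{obs:1}, whereas you re-derive that inequality directly by appending the negative edge $(u^*,v)$ to a path witnessing $\dist_{i-1}(u^*)$.
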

\begin{proof}
The statement is clear if $\dist_i(v) = \dist_i'(v)$, so suppose that $\dist_i'(v) < \dist_i(v)$. Then there is some vertex $u \in V$ with $w(u, v) < 0$ such that $\dist_i'(v) = \dist_i(u) + w(u, v)$. It suffices to prove that $\dist_{i-1}(u) > \dist_i(u)$. Suppose for the sake of contradiction that $\dist_{i-1}(u) = \dist_i(u)$. Then $\dist_i'(v) = \dist_{i-1}(u) + w(u, v) \geq \dist'_{i-1}(v)$, which contradicts our initial assumption (by \cref{obs:1}).
\end{proof}

\begin{lemma}[Invariants of \cref{alg:lazy-dijkstra}] \label{lem:lazy-dijkstra-invariants}
Consider the $i$-th iteration of the loop in \cref{alg:lazy-dijkstra} (starting at $1$). Then the following invariants hold:
\smallskip
\begin{enumerate}[1.]
    \item After the Dijkstra phase (after \cref{alg:lazy-dijkstra:line:dijkstra-relax-end}):
    \begin{enumerate}[a.]
        \item $d[v] = \dist_i(v)$ for all vertices $v$, and
        \item $A = \set{v : \dist_{i-1}(v) > \dist_i(v)}$.
    \end{enumerate}
    \smallskip
    \item After the Bellman-Ford phase (after \cref{alg:lazy-dijkstra:line:bf-relax-end}):
    \begin{enumerate}
        \item $d[v] = \dist'_i(v)$ for all vertices $v$, and
        \item $Q = \set{ v : \dist_i(v) > \dist'_i(v) }$.
    \end{enumerate}
\end{enumerate}
\end{lemma}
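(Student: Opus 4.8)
The plan is to prove \cref{lem:lazy-dijkstra-invariants} by induction on the iteration number $i$, treating the four invariants (1a, 1b, 2a, 2b) together so that each one can rely on the others in the same or previous iteration. The base case $i=1$ uses that $\dist_0(v) = \dist_0'(v) = \infty$ for all $v$ and $\dist_1(v) = \dist_{G^{\ge 0}}(s,v)$ (the shortest $s$-$v$-path with zero negative edges), so that the first Dijkstra phase, which starts from $Q = \{s\}$ with $d[s]=0$ and relaxes only nonnegative edges, is exactly classical Dijkstra and terminates with $d[v] = \dist_{G^{\ge 0}}(s,v) = \dist_1(v)$; this gives 1a. Invariant 1b follows because a vertex $v$ is added to $A$ in iteration $i$ precisely when it is extracted from $Q$, and $v$ is in $Q$ at the start of the Dijkstra phase of iteration $i$ iff its key was decreased in the Bellman-Ford phase of iteration $i-1$, which by invariant 2b of iteration $i-1$ happens iff $\dist_{i-1}(v) > \dist'_{i-1}(v)$; combined with \cref{obs:1} and the fact that Dijkstra's relaxations can only further decrease keys along nonnegative edges, one shows $A$ ends up being exactly $\{v : \dist_{i-1}(v) > \dist_i(v)\}$ (for $i=1$ this is $\{v : \dist_1(v) < \infty\}$, the vertices reachable using no negative edge, which is precisely what the first Dijkstra phase explores).

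For the inductive step, the key is to connect the code's behavior to \cref{obs:2,obs:3}. For invariant 1a in iteration $i+1$: at the start of the Dijkstra phase we have, by invariant 2a of iteration $i$, the array $d[v] = \dist'_i(v)$, and by invariant 2b the queue $Q = \{v : \dist_i(v) > \dist'_i(v)\}$. Thus the Dijkstra phase is a multi-source Dijkstra with sources exactly those $v$ with $\dist_i(v) > \dist'_i(v)$, initial distances $\dist'_i(v)$, but also with the old values $\dist'_i(v) = \dist_i(v)$ still sitting in $d[\cdot]$ for the non-source vertices. A standard Dijkstra correctness argument (one has to be slightly careful that vertices not in $Q$ but with finite $d$-value behave correctly — they act as already-settled vertices with the right label, since $\dist_i(v) \le \dist'_i(u) + \dist_{G^{\ge 0}}(u,v)$ for every such $u$) shows the phase terminates with $d[v] = \min\{\dist_i(v),\ \min_{u : \dist_i(u) > \dist'_i(u)} \dist'_i(u) + \dist_{G^{\ge 0}}(u,v)\}$, which is exactly $\dist_{i+1}(v)$ by \cref{obs:2}. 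Invariant 1b for iteration $i+1$ then follows because $A$ collects exactly the vertices whose label strictly improved during this phase, i.e.\ the $v$ with $\dist_{i+1}(v) < \dist_i(v)$, matching $\{v : \dist_i(v) > \dist_{i+1}(v)\}$ by \cref{obs:1}.

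For invariant 2a in iteration $i$ (equivalently $i+1$ by the same reasoning): the Bellman-Ford phase starts with $d[v] = \dist_i(v)$ (invariant 1a) and, for each $v \in A = \{u : \dist_{i-1}(u) > \dist_i(u)\}$ and each negative edge $(v,x)$, relaxes $d[x] \leftarrow \min\{d[x], d[v] + w(v,x)\}$. Since the relaxations only use the values $d[v] = \dist_i(v)$ for $v \in A$ (and for $v \notin A$ we have $\dist_{i-1}(v) = \dist_i(v)$, so those relaxations would be redundant and are correctly skipped), after the phase $d[x] = \min\{\dist_i(x),\ \min_{u : \dist_{i-1}(u) > \dist_i(u),\ w(u,x)<0} \dist_i(u) + w(u,x)\}$, which is exactly $\dist'_i(x)$ by \cref{obs:3}. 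Invariant 2b follows since $Q$ contains exactly the vertices $x$ whose key was decreased in this phase, i.e.\ those with $\dist'_i(x) < \dist_i(x)$. I expect the main obstacle to be the bookkeeping in the Dijkstra-phase argument for 1a: one must argue that running Dijkstra with a partially-populated queue but a fully-populated (and already consistent) distance array still computes the correct bottleneck quantity, and in particular that no relaxation along a nonnegative edge out of a ``settled'' non-queued vertex is ever needed — this requires invoking the triangle-inequality-type facts that $\dist'_i$ already satisfies, and is the place where the interplay of \cref{obs:1,obs:2} is genuinely used rather than just quoted.
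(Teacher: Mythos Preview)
Your proposal is correct and follows essentially the same inductive approach as the paper: induction on $i$, with the first Dijkstra phase handled as ordinary Dijkstra, later Dijkstra phases analyzed as multi-source Dijkstra from $Q=\{v:\dist_{i-1}(v)>\dist'_{i-1}(v)\}$ via \cref{obs:2}, and the Bellman-Ford phase handled via \cref{obs:3}. One small imprecision worth tightening: in your inductive step for 1b you write that ``$A$ collects exactly the vertices whose label strictly improved during this phase,'' but vertices that were already in $Q$ at the start of the phase are also extracted into $A$ even if their label does not improve further---the paper handles this by splitting $A$ into (i) vertices initially in $Q$ and (ii) vertices whose $d$-value decreased, and you in fact already give the correct mechanism in your base-case paragraph.
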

\begin{proof}
We prove the invariants by induction on $i$.

\proofsubparagraph{First Dijkstra Phase}
We start with the analysis of the first iteration, $i = 1$. The execution of the Dijkstra phase behaves exactly like the regular Dijkstra algorithm. It follows that $d[v] = \dist_{G^{\geq 0}}(s, v) = \dist_1(v)$, as claimed in Invariant 1a. Moreover, we include in $A$ exactly all vertices which were reachable from $s$ in $G^{\geq 0}$. Indeed, for these vertices $v$ we have that $\dist_1(v) = \dist_{G^{\geq 0}}(s, v) < \infty$ and $\dist_0(v) = \infty$, and thus $A = \set{ v : \dist_0(v) > \dist_1(v) }$, which proves Invariant 1b.

\proofsubparagraph{Later Dijkstra Phase}
Next, we analyze the Dijkstra phase for a later iteration, $i > 1$. Letting $d'$ denote the state of the array $d$ after the Dijkstra phase, our goal is to prove that $d'[v] = \dist_i(v)$ for all vertices $v$. So fix any vertex $v$; we may assume that $\dist_{i+1}(v) < \dist_i(v)$, as otherwise the statement is easy using that the algorithm never increases $d[\cdot]$. A standard analysis of Dijkstra's algorithm reveals that
\begin{equation*}
    d'[v] = \min_{u \in Q} (d[u] + \dist_{G^{\geq 0}}(u, v)),
\end{equation*}
where $Q$ is the queue before the execution of Dijkstra. By plugging in the induction hypothesis and \cref{obs:2}, we obtain that indeed
\begin{equation*}
    d'[v] = \min_{\substack{u \in V\\\dist_{i-1}(v) > \dist'_{i-1}(v)}} d[u] + \dist_{G^{\geq 0}}(u, v) = \dist_i(v),
\end{equation*}
which proves Invariant 1a.

To analyze Invariant 1b and the set $A$, first recall that we reset $A$ to an empty set before executing the Dijkstra phase. Afterwards, we add to $A$ exactly those vertices that are either (i) contained in the queue $Q$ initially or (ii) for which $d'[v] < d[v]$. Note that these sets are exactly (i) $\set{ v : \dist_i(v) > \dist'_i(v) }$ and (ii) $\set{v : \dist'_{i-1}(v) > \dist_i(v)}$ whose union is exactly $\set{v : \dist_{i-1}(v) > \dist_i(v)}$ by \cref{obs:1}.

\proofsubparagraph{Bellman-Ford Phase}
The analysis of the Bellman-Ford phase is simpler. Writing again~$d'$ for the state of the array $d$ after the execution of the Bellman-Ford phase, by \cref{obs:3} we have that
\begin{equation*}
    d'[v] = \min_{\substack{u \in A\\w(u, v) < 0}} d[u] + w(u, v) = \min_{\substack{u \in V\\\dist'_{i-1}(u) > \dist_i(u)\\w(u, v) < 0}} \dist_i(u) + w(u, v) = \dist_i'(v),
\end{equation*}
which proves Invariant 2a. Here again we have assumed that $\dist_i'(v) < \dist_i(v)$, as otherwise the statement is trivial since the algorithm never increases $d[\cdot]$.

Moreover, after the Dijkstra phase has terminated, the queue $Q$ was empty. Afterwards, in the current Bellman-Ford phase, we have inserted exactly those vertices $v$ into the queue for which $\dist_i(v) > \dist_i'(v)$ and thus $Q = \set{ v : \dist_i(v) > \dist_i'(v)}$, which proves Invariant~2b.
\end{proof}

From these invariants (and the preceding observations), we can easily conclude the correctness of \cref{alg:lazy-dijkstra}:

\begin{lemma}[Correctness of \cref{alg:lazy-dijkstra}] \label{lem:lazy-dijkstra-correctness}
If the given graph $G$ contains a negative cycle, then \cref{alg:lazy-dijkstra} does not terminate. Moreover, if \cref{alg:lazy-dijkstra} terminates, then it has correctly computed $d[v] = \dist_G(s, v)$.
\end{lemma}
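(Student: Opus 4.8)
The plan is to deduce correctness directly from the invariants established in \cref{lem:lazy-dijkstra-invariants} together with \cref{obs:1}. First I would observe the key chain of inequalities: by \cref{obs:1}, for every vertex $v$ we have $\dist_i(v) \geq \dist'_i(v) \geq \dist_{i+1}(v)$, so the sequence $(\dist_i(v))_{i \geq 1}$ is non-increasing in $i$, and likewise $(\dist'_i(v))_{i \geq 1}$. Moreover $\dist_i(v) = \dist_G(s,v)$ for all sufficiently large $i$, since any shortest $s$-$v$-path (if $G$ has no negative cycle) is simple and hence contains at most $n-1$ negative edges, so $\dist_i(v) = \dist_G(s,v)$ once $i \geq n$. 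Thus the only thing to check is when and whether the algorithm stops, and that when it stops the $d$-array has converged to $\dist_G(s,\cdot)$.

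\textbf{Termination case.} If the algorithm terminates, it does so at the end of some iteration $i$ with $Q$ empty. By Invariant 2b, $Q = \set{v : \dist_i(v) > \dist'_i(v)}$, so $Q = \emptyset$ means $\dist_i(v) = \dist'_i(v)$ for all $v$. Combined with \cref{obs:1}, this forces $\dist'_i(v) \geq \dist_{i+1}(v) \geq \dist'_{i+1}(v)$, and one checks (by the defining formulas in \cref{obs:2,obs:3}, or simply by monotonicity and the fact that $\dist_i = \dist'_i$ is already a fixed point of both update rules) that $\dist_j(v) = \dist_i(v)$ for all $j \geq i$. Hence $\dist_i(v) = \lim_{j \to \infty} \dist_j(v) = \dist_G(s,v)$, and by Invariant 1a the array satisfies $d[v] = \dist_i(v) = \dist_G(s,v)$ after the Dijkstra phase of that iteration; since the Bellman-Ford phase performed no relaxation (as $Q$ ended empty), $d[v] = \dist_G(s,v)$ at termination, as claimed.

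\textbf{Non-termination case.} Suppose $G$ contains a negative cycle; I must show the loop never exits, i.e. $Q$ is never empty at the end of an iteration. Equivalently, by Invariant 2b, I must show that for every $i$ there is a vertex $v$ with $\dist_i(v) > \dist'_i(v)$. Pick a vertex $v_0$ on a negative cycle $C$ that is reachable from $s$ (if no negative cycle is reachable from $s$, replace $G$ by the subgraph reachable from $s$ — distances to unreachable vertices are $+\infty$ in every $\dist_i$ and irrelevant); then by going around $C$ repeatedly one obtains, for every $i$, an $s$-$v_0$-walk with exactly $i$ negative edges of arbitrarily small weight, hence $\dist_i(v_0) \to -\infty$ as $i \to \infty$, so in particular $\dist_i(v_0)$ is strictly decreasing along a subsequence; at each step where $\dist_{i+1}(v_0) < \dist_i(v_0)$, \cref{obs:2} exhibits some $u$ with $\dist'_i(u) < \dist_i(u)$ — wait, more carefully: it is cleaner to argue that if $\dist_i = \dist'_i$ pointwise for some $i$, then as shown above $\dist_j$ stabilizes for all $j \geq i$, contradicting $\dist_j(v_0) \to -\infty$. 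Hence for every $i$ there is a vertex with $\dist_i(v) > \dist'_i(v)$, so $Q \neq \emptyset$ at the end of every Bellman-Ford phase, and the loop never exits.

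\textbf{Main obstacle.} The delicate point is the non-termination direction: one must rule out the possibility that $Q$ becomes empty even though a negative cycle exists. The right lemma to isolate is that $\dist_i = \dist'_i$ pointwise implies the sequences stabilize — this is a short monotonicity argument from \cref{obs:1,obs:2,obs:3} — after which the contradiction with $\dist_i(v_0) \to -\infty$ is immediate. Everything else is a direct read-off from \cref{lem:lazy-dijkstra-invariants}.
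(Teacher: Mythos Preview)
Your proof is correct but takes a genuinely different route from the paper's. The paper argues that upon termination every edge is \emph{relaxed} (i.e.\ $d[v] \le d[u] + w(u,v)$), deriving this from Invariants~2a and~2b; it then invokes the classical Bellman--Ford observations that relaxed edges are incompatible with a negative cycle (by telescoping around the cycle) and that relaxed edges plus $d[s]=0$ force $d[v]=\dist_G(s,v)$ (by induction on path length). You instead work entirely inside the $\dist_i/\dist'_i$ framework: termination gives $\dist_i=\dist'_i$ pointwise, which by \cref{obs:2,obs:3} (the min in each becomes vacuous) forces $\dist_j=\dist_i$ for all $j\ge i$, hence $\dist_i=\dist_G(s,\cdot)$; and a reachable negative cycle makes $\dist_j(v_0)\to-\infty$, ruling out stabilization. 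Your approach leans more heavily on \cref{obs:2,obs:3} but avoids re-proving the relaxation facts from scratch; the paper's approach is more self-contained at this point and closer to the textbook Bellman--Ford argument.

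Two small points worth tightening. First, your termination paragraph uses $\lim_j \dist_j(v)=\dist_G(s,v)$, which you justified in the preamble only under the hypothesis of no negative cycle; strictly, you should either appeal to the non-termination case first (termination $\Rightarrow$ no reachable negative cycle), or note directly that stabilization at a finite value already forces $\dist_G(s,v)>-\infty$ and hence equals the limit. Second, your parenthetical about negative cycles unreachable from $s$ is muddled---replacing $G$ by the reachable subgraph doesn't help prove non-termination; rather, it shows the lemma as stated is slightly too strong. This imprecision is inherited from the lemma statement and the paper's own proof has the same gap, so it is not a defect of your argument.
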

\begin{proof}
We show that after the algorithm has terminated, all edges $(u, v)$ are \emph{relaxed}, meaning that~$d[v] \leq d[u] + w(u, v)$. Indeed, suppose there is an edge $(u, v)$ which is not relaxed, i.e., $d[v] > d[u] + w(u, v)$. Let $i$ denote the final iteration of the algorithm. By Invariant~2a we have that $d[x] = \dist_i'(x)$ and by Invariant~2b we have that $\dist_i'(x) = \dist_i(x)$ (assuming that $Q = \emptyset$), for all vertices $x$. We distinguish two cases: If $w(u, v) \geq 0$, then we have that $\dist_i(v) > \dist_i(u) + w(u, v)$---a contradiction. And if $w(u, v) < 0$, then we have that $\dist_i'(v) = \dist_i(u) + w(u, v)$---also a contradiction.

So far we have proved that if the algorithm terminates, all edges are relaxed. It is easy to check that if $G$ contains a negative cycle, then at least one edge in that cycle cannot be relaxed. It follows that the algorithm does not terminate whenever $G$ contains a negative cycle.

Instead, assume that $G$ does not contain a negative cycle. We claim that the algorithm has correctly computed all distances. First, recall that throughout we have $d[v] \geq \dist_G(s, v)$. Consider any shortest $s$-$v$-path~$P$; we prove that $d[v] = w(P)$ by induction on the length of $P$. For $|P| = 0$, we have correctly set $d[s] = 0$ initially. (Note that $\dist_G(s, s)$ cannot be negative as otherwise $G$ would contain a negative cycle.) So assume that $P$ is nonempty and that $P$ can be written as $P_1\, u\, v$. Then by induction $d[u] = \dist_G(P_1\, u)$. Since the edge $(u, v)$ is relaxed, we have that $d[v] \leq d[u] + w(u, v) = w(P) = \dist_G(s, v)$. Recall that we also have $d[v] \geq \dist_G(s, v)$ and therefore $d[v] = \dist_G(s, v)$.
\end{proof}

For us, the most relevant change in the proof is the running time analysis. Recall that~$\eta_G(v)$ denotes the minimum number of negative edges in a shortest $s$-$v$-path, and that $\deg(v)$ denotes the out-degree of a vertex $v$.

\begin{lemma}[Running Time of \cref{alg:lazy-dijkstra}] \label{lem:lazy-dijkstra-time}
Assume that $G$ does not contain a negative cycle. Then \cref{alg:lazy-dijkstra} runs in time $\Order(\sum_v (\deg(v) + \log\log n) \eta_G(v))$.
\end{lemma}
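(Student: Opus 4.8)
The plan is to charge the entire running time to pairs (iteration of the main \texttt{repeat}-loop, vertex), and then bound, for each vertex $v$, the number of iterations in which $v$ does any work. For the per-pair accounting I would use Thorup's priority queue (\cref{lem:thorup}): each \textsc{Insert}/\textsc{DecreaseKey} (i.e.\ each "add $x$ to $Q$") costs $O(1)$, and each \textsc{Delete} costs $O(\log\log n)$. A single Dijkstra phase relaxes only the nonnegative edges, so it behaves like an ordinary Dijkstra run and removes each vertex from $Q$ at most once; when $v$ is removed we pay $O(\log\log n)$ for the \textsc{Delete} plus $O(1)$ per scanned outgoing edge, i.e.\ $O(\deg(v)+\log\log n)$ total. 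A single Bellman–Ford phase processes each $v\in A$ once at cost $O(\deg(v))$. Call $v$ \emph{active in iteration $i$} if it is removed from $Q$ during the Dijkstra phase of iteration $i$ or lies in $A$ during the Bellman–Ford phase of iteration $i$; then the running time is $\sum_i\sum_{v\text{ active in }i}O(\deg(v)+\log\log n)$, up to lower-order loop overhead.

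The crucial step is the claim: \emph{if $v$ is active in iteration $i$, then $\dist_{i-1}(v)>\dist_i(v)$.} For the Bellman–Ford phase this is exactly Invariant~1b of \cref{lem:lazy-dijkstra-invariants}, which says $A=\{v:\dist_{i-1}(v)>\dist_i(v)\}$. For the Dijkstra phase I would argue as follows. By Invariant~2b applied to iteration $i-1$, the queue at the start of the Dijkstra phase of iteration $i$ is $\{v:\dist_{i-1}(v)>\dist'_{i-1}(v)\}$; moreover $d[v]=\dist'_{i-1}(v)$ before the phase and $d[v]=\dist_i(v)$ after it (Invariant~1a). A vertex $v$ removed from $Q$ in this phase is either (i) already queued at the start, so $\dist_{i-1}(v)>\dist'_{i-1}(v)\ge\dist_i(v)$ by \cref{obs:1}; or (ii) inserted during the phase via a relaxation that strictly decreased its tentative value, so $\dist_i(v)=d[v]<\dist'_{i-1}(v)\le\dist_{i-1}(v)$, again by \cref{obs:1}. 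Either way $\dist_{i-1}(v)>\dist_i(v)$. (For $i=1$ the queue starts as $\{s\}$ and $\dist_0\equiv\infty$, so the claim holds trivially for every popped vertex.)

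It remains to bound the number of active iterations of a fixed vertex $v$. If $v$ is not reachable from $s$ in $G$, then $\dist_i(v)=\infty$ for all $i$ and $v$ is never active. If $v$ is reachable, the sequence $(\dist_i(v))_{i\ge0}$ is non-increasing with $\dist_0(v)=\infty$, and it equals $\dist_G(s,v)$ for every $i\ge\eta_G(v)+1$: a shortest $s$-$v$-path realizing $\eta_G(v)$ is counted in $\dist_i(v)$ as soon as $i>\eta_G(v)$, and $\dist_i(v)\ge\dist_G(s,v)$ always. Hence a strict drop $\dist_{i-1}(v)>\dist_i(v)$ occurs only for $i\in\{1,\dots,\eta_G(v)+1\}$, so by the previous paragraph $v$ is active in at most $\eta_G(v)+1$ iterations. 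Combining this with the per-pair cost and $\sum_v(\deg(v)+\log\log n)=O(m+n\log\log n)$, the total running time is $O\bigl(\sum_v(\deg(v)+\log\log n)(\eta_G(v)+1)\bigr)$, which (absorbing the trivial $\Omega(m+n)$ input-reading cost) is the claimed bound $O\bigl(\sum_v(\deg(v)+\log\log n)\,\eta_G(v)\bigr)$; the bound on the number of iterations also re-confirms termination assumed from \cref{lem:lazy-dijkstra-correctness}.

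I expect the main obstacle to be case (i) of the Dijkstra-phase argument: a vertex can be popped from $Q$ without its tentative distance having strictly decreased \emph{during} that phase (namely when it was already queued from the previous Bellman–Ford phase), and one must lean on the monotonicity chain $\dist_{i-1}\ge\dist'_{i-1}\ge\dist_i$ from \cref{obs:1} together with the queue characterization in Invariant~2b to still extract a strict drop between consecutive levels. A secondary thing to get right is the amortized queue-operation bookkeeping (charging every "add to $Q$" to the edge scan that triggered it) and handling the slight slack hidden in the "$+1$".
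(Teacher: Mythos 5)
Your proposal is correct and takes essentially the same route as the paper: both charge each iteration's cost to the set $A_i$ of vertices processed in it (your ``active'' vertices are exactly $A_i$) and bound $|\{i : v \in A_i\}|$ via the characterization $A_i = \{v : \dist_{i-1}(v) > \dist_i(v)\}$ together with the stabilization $\dist_{\eta_G(v)+1}(v) = \dist_G(s,v)$. Your separate Dijkstra-phase argument through Invariant~2b of \cref{lem:lazy-dijkstra-invariants} is redundant (the vertices popped in the Dijkstra phase are precisely those added to $A$, so Invariant~1b already covers both phases), and the $+1$ slack you flag is present, and silently dropped, in the paper's own proof as well.
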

\begin{proof}
Consider a single iteration of the algorithm. Letting $A$ denote the state of the set $A$ at the end of (Dijkstra's phase of) the iteration, the running time of the whole iteration can be bounded by:
\begin{equation*}
    \Order\left(\sum_{v \in A} (\deg(v) + \log\log n)\right).
\end{equation*}
Indeed, in the Dijkstra phase, in each iteration we spend time $\Order(\log\log n)$ for deleting an element from the queue (\cref{lem:thorup}), but for each such deletion in $Q$ we add a new element to $A$. Moreover, both in the Dijkstra phase and the Bellman-Ford phase we only enumerate edges starting from a vertex in $A$, amounting for a total number of $\Order(\sum_{v \in A} \deg(v))$ edges. The inner steps of the loops (in \crefrange{alg:lazy-dijkstra:line:dijkstra-relax-start}{alg:lazy-dijkstra:line:dijkstra-relax-end} and \crefrange{alg:lazy-dijkstra:line:bf-relax-start}{alg:lazy-dijkstra:line:bf-relax-end}) run in constant time each (\cref{lem:thorup}).

Let us write $A_i$ for the state of $i$ in the $i$-th iteration. Then the total running time is
\begin{equation*}
    \Order\left(\sum_{i=1}^\infty \sum_{v \in A_i} (\deg(v) + \log\log n)\right) = \Order\left(\sum_{v \in V} |\set{i : v \in A_i}| \cdot (\deg(v) + \log\log n)\right).
\end{equation*}
To complete the proof, it suffices to show that $|\set{i : v \in A_i}| \leq \eta_G(v)$. To see this, we first observe that $\dist_{\eta_G(v) + 1}(v) = \dist_{\eta_G(v) + 2} = \dots = \dist_G(s, v)$. Since, by the invariants above we know that $A_i = \set{ v : \dist_{i-1}(v) > \dist_i(v)}$, it follows that $v$ can only be contained in the sets $A_1, \dots, A_{\eta_G(v)}$.
\end{proof}

In combination, \cref{lem:lazy-dijkstra-correctness,lem:lazy-dijkstra-time} complete the proof of \cref{lem:lazy-dijkstra}.

\bibliographystyle{plainurl}
\bibliography{refs}

\end{document}